 \def\arXiv{1} 
\newcommand{\notarxiv}[1]{foo}
\newcommand{\arxiv}[1]{ba}
	\renewcommand{\arxiv}[1]{#1}%
	\renewcommand{\notarxiv}[1]{\ignorespaces}%
	\renewcommand{\arxiv}[1]{\ignorespaces}%
	\renewcommand{\notarxiv}[1]{#1}%
\definecolor{ForestGreen}{rgb}{0.1333,0.5451,0.1333}
\newcommand{\showccc}[0]{0}
\newcommand{\ccc}[2][nothing]{%
	\ifthenelse{\showccc=0}{}{
		\ensuremath{^{\Lsh\Rsh}}\marginpar{\raggedright\tiny\textsf{%
				\ifthenelse{\equal{#1}{nothing}}{}{\textbf{#1}\\}#2}}}}
\newcounter{hours}\newcounter{minutes}
\newcommand{\hhmm}{%
	\setcounter{hours}{\time/60}%
	\setcounter{minutes}{\time-\value{hours}*60}%
	\ifthenelse{\value{hours}<10}{0}{}\thehours:%
	\ifthenelse{\value{minutes}<10}{0}{}\theminutes}
\newtheorem{proposition}{Proposition}
\newtheorem{corollary}{Corollary}
\newtheorem{definition}{Definition}
\newtheorem{remark}{Remark}
\newtheorem{lemma}{Lemma}
\newcommand{\defeq}{:=}
\newcommand{\norm}[1]{\left\lVert#1\right\rVert}
\newcommand{\inprod}[2]{\left\langle#1, #2\right\rangle}
\newcommand{\eps}{\epsilon}
\newcommand{\lam}{\lambda}
\newcommand{\argmin}{\textup{argmin}} 
\newcommand{\R}{\mathbb{R}}
\newcommand{\N}{\mathbb{N}}
\newcommand{\diag}[1]{\textbf{\textup{diag}}\left(#1\right)}
\newcommand{\half}{\frac{1}{2}}
\newcommand{\thalf}{\tfrac{1}{2}}
\newcommand{\1}{\mathbf{1}}
\newcommand{\E}{\mathbb{E}}
\newcommand{\Var}{\textup{Var}}
\newcommand{\opt}{\textup{OPT}}
\newcommand{\xset}{\mathcal{X}}
\newcommand{\yset}{\mathcal{Y}}
\newcommand{\ma}{\mathbf{A}}
\newcommand{\mb}{\mathbf{B}}
\newcommand{\mr}{\mathbf{R}}
\newcommand{\mx}{\mathbf{X}}
\newcommand{\mm}{\mathbf{M}}
\newcommand{\ai}{\ma_{i:}}
\newcommand{\aj}{\ma_{:j}}
\newcommand{\id}{\mathbf{I}}
\newcommand{\tO}{\widetilde{O}}
\newcommand{\nnz}{\textup{nnz}}
\newcommand{\mw}{\mathbf{W}}
\newcommand{\Par}[1]{\left(#1\right)}
\newcommand{\Brack}[1]{\left[#1\right]}
\newcommand{\Brace}[1]{\left\{#1\right\}}
\newcommand{\mzero}{\mathbf{0}}
\newcommand{\tx}{\tilde{x}}
\newcommand{\bu}{\bar{u}}
\newcommand{\bx}{\bar{x}}
\newcommand{\by}{\bar{y}}
\newcommand{\bv}{\bar{v}}
\newcommand{\blam}{\bar{\lam}}
\newcommand{\Sherman}{\mathsf{ACMirrorProx}}
\newcommand{\LSSherman}{\mathsf{LowSpaceACMirrorProx}}
\newcommand{\AltMin}{\mathsf{AltMin}}
\newcommand{\RedSize}{\mathsf{VertexReduction}}
\newcommand{\Feasible}{\mathsf{RoundToFeasible}}
\newcommand{\x}{^\mathsf{x}}
\newcommand{\y}{^\mathsf{y}}
\newcommand{\0}{\mathbf{0}}
\newcommand{\rsum}{r_{\textup{sum}}}
\newcommand{\tmx}{\tilde{\mx}}
\newcommand{\lsum}{l_{\textup{sum}}}
\newcommand{\lscale}{l_{\textup{scale}}}
\newcommand{\rscale}{r_{\textup{scale}}}
\newcommand{\bM}{\bar{M}}
\newcommand{\RandomSample}{\mathsf{RandomSample}}
\newcommand{\hx}{\hat{x}}
\newcommand{\cmax}{C_{\max}}
\newcommand{\tG}{\widetilde{G}}
\newcommand{\tmb}{\widetilde{\mb}}
\newcommand{\tV}{\widetilde{V}}
\newcommand{\tE}{\widetilde{E}}
\newcommand{\tw}{\tilde{w}}
\newcommand{\oracle}{\mathcal{O}}
\newcommand{\dmcm}{d_{\textup{MCM}}}
\newcommand{\Overflow}{\textup{Overflow}}
\newcommand{\Cut}{\mathsf{Cut}}
\newcommand{\Link}{\mathsf{Link}}
\newcommand{\LCA}{\mathsf{LCA}}
\newcommand{\ChangeRoot}{\mathsf{ChangeRoot}}
\newcommand{\Min}{\mathsf{Min}}
\newcommand{\Add}{\mathsf{Add}}
\newcommand{\Sum}{\mathsf{Sum}}
\newcommand{\otilde}{\widetilde{O}}
\newcommand{\poly}{\mathrm{poly}}
\begin{document}

	\begin{titlepage}
		\def\thepage{}
		\thispagestyle{empty}
		
		\title{Semi-Streaming Bipartite Matching \\in Fewer Passes and Optimal Space}

		\date{}
		\author{
			Sepehr Assadi\thanks{Rutgers University, {\tt sepehr.assadi@rutgers.edu}}
			\and
			Arun Jambulapati\thanks{Stanford University, {\tt \{jmblpati, yujiajin, sidford, kjtian\}@stanford.edu}}
			\and
			Yujia Jin\footnotemark[2]
			\and
			Aaron Sidford\footnotemark[2]
			\and
			Kevin Tian\footnotemark[2]
		}
		
		\maketitle

\arxiv{
\abstract{We provide $\otilde(\epsilon^{-1})$-pass semi-streaming algorithms for computing $(1-\epsilon)$-approximate maximum cardinality matchings in bipartite graphs. Our most efficient methods are deterministic and use optimal, $O(n)$, space, improving upon the space complexity of the previous state-of-the-art $\otilde(\epsilon^{-1})$-pass 
algorithm of \cite{AhnG18}. To obtain our results we provide semi-streaming adaptations of more general continuous optimization tools. Further, we leverage these techniques to obtain improvements for streaming variants of approximate linear programming, 
optimal transport, exact matching, transshipment, and shortest path problems.
}
}

	\end{titlepage}
}

\notarxiv{

\title{Semi-Streaming Bipartite Matching in Fewer Passes and Optimal Space} 

\titlerunning{Semi-Streaming Bipartite Matching in Fewer Passes and Optimal Space} 

\author{Sepehr Assadi}{Rutgers University, United States}{sepehr.assadi@rutgers.edu}{}{}

\author{Arun Jambulapati}{Stanford University, United States}{jmblpati@stanford.edu}{}{}

\author{Yujia Jin}{Stanford University, United States}{yujiajin@stanford.edu}{}{}

\author{Aaron Sidford}{Stanford University, United States}{sidford@stanford.edu}{}{}

\author{Kevin Tian}{Stanford University, United States}{kjtian@stanford.edu}{}{}

\authorrunning{S. Assadi, A. Jambulapati, Y. Jin, A. Sidford and K. Tian} 

\Copyright{Sepehr Assadi, Arun Jambulapati, Yujia Jin, Aaron Sidford and Kevin Tian}

\ccsdesc[500]{Theory of computation~Graph algorithms analysis}
\ccsdesc[500]{Theory of computation~Streaming, sublinear and near linear time algorithms}

\keywords{semi-streaming, maximum cardinality matching, optimal transport} %

\relatedversion{A previous version whose results are subsumed by this version is hosted on arXiv.} \relatedversiondetails[linktext={arXiv:2011.03495}]{Outdated version}{https://arxiv.org/abs/2011.03495} %

\acknowledgements{We thank Alireza Farhadi for helpful conversations and for introducing the authors to the literature on semi-streaming matching. Researchers are supported in part by a Microsoft Research Faculty Fellowship, NSF CAREER Award CCF-1844855, NSF Grant CCF-1955039, a PayPal research gift, a Sloan Research Fellowship, and a Stanford Graduate Fellowship.}%

\SeriesVolume{00}
\ArticleNo{00}

\begin{document}

\maketitle

\arxiv{
\abstract{We provide $\otilde(\epsilon^{-1})$-pass semi-streaming algorithms for computing $(1-\epsilon)$-approximate maximum cardinality matchings in bipartite graphs. Our most efficient methods are deterministic and use optimal, $O(n)$, space, improving upon the space complexity of the previous state-of-the-art $\otilde(\epsilon^{-1})$-pass 
algorithm of \cite{AhnG18}. To obtain our results we provide semi-streaming adaptations of more general continuous optimization tools. Further, we leverage these techniques to obtain improvements for streaming variants of approximate linear programming, 
optimal transport, exact matching, transshipment, and shortest path problems.
}
}

}

\arxiv{
\cleardoublepage
\pagenumbering{gobble}
\hypersetup{linkcolor=black}
\tableofcontents
\hypersetup{linkcolor=ForestGreen}
\cleardoublepage
\pagenumbering{arabic}
}
\newpage 

\section{Introduction}
\label{sec:intro}

We study the fundamental problem of designing semi-streaming algorithms for  maximum cardinality matching in bipartite graphs (MCM). We consider the insertion-only problem where an unknown $n$-vertex $m$-edge undirected bipartite graph is presented as a stream of edge insertions. Our goal is to compute an $\eps$-approximate MCM (a matching with value $\ge 1 - \eps$ times the optimum) in the semi-streaming model~\cite{FeigenbaumKMSZ05}, i.e.\ using $\otilde(n)$ space,\footnote{Throughout, we measure space in machine words of size $\Theta(\log{n})$ and use $\otilde$ to hide factors polylogarithmic in $n$, $\eps^{-1}$, and $\cmax$, the largest edge weight of the relevant problem. In all our applications, without loss of generality, $\eps^{-1}$ and $\cmax$ are $O(\text{poly}(n)$). Consequently, in our applications an $\otilde(n)$ space algorithm is a  semi-streaming, i.e.\ $O(n \cdot \text{poly}(\log(n))$, algorithm for our applications.}  and as few passes as possible.

Due to its canonical and prevalent nature, MCM is  well-studied in the graph streaming literature. In a single pass, a simple greedy algorithm achieves a $\frac12$-approximation in $O(n)$ space (cf.\ Lemma~\ref{lem:greedy}) and obtaining better than a $(\frac1{1+\ln{2}})\approx 0.59$ approximation requires $n^{1+\Omega(1/\log\log{(n)})}$ space~\cite{Kapralov21} (see also \cite{GoelKK12,Kapralov13}). Correspondingly, there is a line of research on designing multi-pass algorithms for  $\eps$-approximating MCM. In this paper, we focus on  $\otilde(\poly(\eps^{-1}))$-pass semi-streaming algorithms. Here, the state-of-the-art includes the $O(\eps^{-2})$-pass $O(n)$-space algorithm of~\cite{AssadiLT21} (see also~\cite{AhnG13})
and the $O({\eps^{-1}}\log{n})$-pass $\otilde(n \cdot \text{poly}(\eps^{-1}))$-space algorithm of~\cite{AhnG18}.\footnote{Streaming MCM has been studied under several other variants and parameter settings; see Section~\ref{sec:prev} for details.}

For different ranges of $\epsilon$, these semi-streaming MCM algorithms provide non-trivial trade-offs between space and pass complexity. While~\cite{AhnG18} provides an improved $\otilde(\eps^{-1})$ passes
compared to $O(\eps^{-2}$) in~\cite{AssadiLT21}, the space complexity of~\cite{AhnG18} is larger by  $\text{poly}(\eps^{-1})$ factors\footnote{We remark that the introduction and main theorem statements of \cite{AhnG18} are written for constant $\eps$. However, for subconstant $\eps$ it is straightforward to see their space requirement incurs a $\text{poly}(\eps^{-1})$ dependence.} and therefore only adheres to the semi-streaming restriction of using $\otilde(n)$ space when 
$\eps \geq (\text{poly}\!\log{(n)})^{-1}$. The central question we address in this paper is whether this trade-off is necessary:

\begin{quote}
\centering{\emph{Is it possible to obtain an $\otilde(\epsilon^{-1})$-pass semi-streaming algorithm for $\eps$-approximating maximum bipartite matchings for the entire range of $\epsilon > 0$?}}
\end{quote}

In this paper we answer this question in the affirmative. We provide multiple ways to leverage recent advances in continuous and combinatorial optimization for this goal. For example, in Appendix~\ref{sec:box} we show how to obtain this result by a careful application of recent results on the box-constrained Newton method~\cite{CMTV17} and streaming sparsification~\cite{McGregor14}.

Our main result is a simple algorithm that further improves the space complexity. We provide an  $O(n)$-space and $O(\eps^{-1}\cdot\log{(\eps^{-1})\cdot \log{n})}$-pass algorithm based on recent advances in area convexity and extragradient methods \cite{Sherman17, JambulapatiST19, CohenST21}. This space-dependency is optimal, as $\Omega(n)$ space is needed to simply output the final matching. Further, this algorithm is deterministic, in contrast to the previous $\otilde(\eps^{-1})$-pass
algorithm of~\cite{AhnG18}, and the method's runtime is $\otilde(m \cdot \eps^{-1})$. %

To obtain this result we introduce a more general algorithmic framework of independent interest. We design semi-streaming algorithms which can approximately determine the value of and produce low-space \emph{implicit} fractional solutions to linear programs, given in the form of box-simplex games, when rows of the constraint matrix are presented in a stream. We obtain our MCM results by applying this optimization method to a linear programming representation of the problem and show that this implicit solution can be converted into a sparse solution in low space.

We believe our results demonstrate the power of recent optimization advances for solving streaming problems. Beyond  resolving the space complexity of $\otilde(\epsilon^{-1})$-pass algorithms, we show that our techniques extend to yield the following additional results.

\newpage
\begin{itemize}
\item \textbf{Exact MCM in $o(n)$ passes for all densities.} Recent work~\cite{LiuSZ20} asked whether the $O(n\log{n})$-pass semi-streaming algorithm for \emph{exact} MCMs following from a careful implementation of the classical Hopcroft-Karp algorithm~\cite{HopcroftK73} is improvable. The authors designed an $\otilde(n)$-space algorithm based on interior-point methods using $\otilde(\sqrt{m})$ passes, which is $o(n)$ passes except for in dense graphs. By combining our approximate MCM method with recent advances in streaming reachability \cite{LiuJS19}, we obtain an $O(n^{\frac34+o(1)})$-pass, $\otilde(n)$-space algorithm for exact MCM, thereby answering the main open problem of~\cite{LiuSZ20} for all densities. Interestingly, this improvement is a direct byproduct of resolving our motivating question as the space-pass trade-offs in~\cite{AhnG18} and~\cite{AssadiLT21} prevent either method from obtaining nontrivial semi-streaming exact MCM algorithms even when combined with~\cite{LiuJS19} (see Remark~\ref{rem:space-pass-efficiency}).

\item \textbf{Optimal transport.} Closely-related to computing MCMs, the discrete optimal transport problem has received widespread recent interest due to applications in machine learning~\cite{Cuturi13, AltschulerWR17}. Several recent works have designed different optimization algorithms (influenced by developments for matching)~\cite{Quanrud19, blanchet2018towards,JambulapatiST19} solving the problem on a support size of $n$ using $\widetilde{O}(n^2\eps^{-1})$ total work. We provide the first semi-streaming algorithm in this setting within $\widetilde{O}(\eps^{-1})$ passes, $O(n)$ space, and comparable work.  

\item \textbf{Transshipment and shortest path.} Yet another application of our method is to solving the transshipment problem, a type of uncapacitated minimum cost flow problem, on undirected graphs. 
Several recent works have focused on this problem in streaming and parallel models~\cite{BeckerKKL17,Li20,AndoniSZ20} and used this to obtain approximate shortest path algorithms. 
We provide a semi-streaming algorithm that achieves an $\eps$-approximation to transshipment within $\widetilde{O}(\eps^{-1})$ passes, $\widetilde{O}(n)$ space, and $\widetilde{O}(m\eps^{-1})$ work. 
As a direct corollary of this result, we obtain $\eps$-approximate semi-streaming algorithms for the $s$-$t$ shortest path problem with same complexities. Our results on transshipment and shortest path improve upon the previous best semi-streaming algorithms of~\cite{BeckerKKL17} by $\widetilde{O}(\eps^{-1})$ factors in passes and work. 
\end{itemize}

\subsection{Problem setup}
\label{sec:setup}

The basic (and motivating) problem we consider in this work is that of computing an approximate MCM in bipartite graph, $G = (V, E)$,  given as an insertion-only stream defined as follows.

\begin{definition}[Semi-streaming graph model]\label{def:semistream}
In the \emph{semi-streaming graph model}, a graph $G = (V, E)$ with vertex set $V = [n]$ is presented to the algorithm as an arbitrarily ordered stream of edges $(u,v)$ for $u, v \in V$ (the tuple contains the weight for weighted graphs). 
The algorithm can read this stream in sequential \emph{passes} and is constrained to use $\otilde(n)$ space.
\end{definition}
We typically let $n = |V|$ and $m = |E|$. $M^*$ denotes the the size of the MCM in $G$, and $L, R \subseteq V$ with $L \cap R = \emptyset$ and $L \cup R = V$ denotes the vertex bipartition. A key goal of this paper is to efficiently compute approximate MCMs in the semi-streaming graph model. Formally, we refer to  any flow $x \in [0, 1]^E$ such that the total flow adjacent to any $v \in V$ is $\le 1$ as a \emph{fractional matching} and call the matching integral if $x \in \{0, 1\}^E$; we say such a matching is a $\eps$-approximate MCM if $\norm{x}_1\geq (1 - \epsilon) M^*$. 
The outputs of our algorithms typically include both \emph{the approximate problem value} and \emph{an approximate solution}. For MCM problems with value $M^*$, the algorithm returns this value within a $(1-\eps)$ factor and an integral matching that achieves this approximate value.

To obtain our main result, we develop streaming algorithms for
  a more general set of problems. These problems are linear programming in the form of \emph{box-simplex games}, i.e. bilinear minimax games between a \emph{box}, or $\ell_\infty$-constrained player, and \emph{simplex}, or nonnegative $\ell_1$-constrained player:
\begin{equation}\label{eq:boxsimplex}
	\min_{x \in \Delta^m} \max_{y \in [-1, 1]^n} y^\top\ma^\top x + c^\top x -b^\top y.
\end{equation}
Maximizing over $y$ for a fixed $x$, problem \eqref{eq:boxsimplex} is equivalent to the following variant of  $\ell_1$-regression:
\begin{equation}\label{eq:boxsimplexprimal}
\min_{x \in \Delta^m} c^\top x + \norm{\ma^\top x - b}_1.
\end{equation}
We call $x \in \Delta^m$ an $\eps$-approximate minimizer for \eqref{eq:boxsimplexprimal} if it satisfies $
c^\top x + \norm{\ma^\top x - b}_1 \le \min_{x' \in \Delta^m} c^\top x' + \norm{\ma^\top x' - b}_1 + \eps$.
We relate \eqref{eq:boxsimplex}, \eqref{eq:boxsimplexprimal} to MCM via reduction: we construct an appropriate instance of \eqref{eq:boxsimplex}, \eqref{eq:boxsimplexprimal} such that any approximate minimizer to \eqref{eq:boxsimplexprimal} can be efficiently converted into an approximate MCM. We note the semi-streaming solver we develop in Section~\ref{sec:value} attains an approximate solution for \eqref{eq:boxsimplex}, but in applications we only ever use that the $x$ block approximately minimizes \eqref{eq:boxsimplexprimal}.
 
 Interestingly, we provide a general streaming algorithm for approximating the value of \eqref{eq:boxsimplex}, \eqref{eq:boxsimplexprimal} (and implicitly representing an optimal solution) in the following access model. 

\begin{definition}[Semi-streaming matrix model]\label{def:semistream_matrix}
	In the \emph{semi-streaming matrix model}, a matrix $\ma \in \R^{m \times n}$ and a vector $c \in \R^m$ are presented to the algorithm respectively as an arbitrarily ordered stream 
	of rows $\{\ai\}_{i \in [m]}$, and a similarly ordered stream of coordinates $\{c_i\}_{i \in [m]}$ (the ordering on $[m]$ is arbitrary, but each $\{\ai, c_i\}$ pair is given together). The algorithm can read this stream in sequential \emph{passes} and is constrained to use $\otilde(n)$ space.
\end{definition}

By choosing $\ma$ to be the incidence matrix of a graph and setting $c$ to be the weights if applicable, Definition~\ref{def:semistream_matrix} generalizes Definition~\ref{def:semistream}. Note that obtaining semi-streaming algorithms for this problem is nontrivial only when $m = \omega(n)$. 

\medskip
\noindent
\textbf{Further applications.} In \notarxiv{Appendix~\ref{sec:app}}\arxiv{Section~\ref{ssec:further}}, we study additional combinatorial optimization problems which are also reducible to box-simplex games. In the problem of \emph{discrete optimal transportation}, a complete bipartite graph $G = (V, E)$ where $E = L \times R$ has associated demands $\ell \in \Delta^L$, $r \in \Delta^R$, and edge costs $c\in \R^E$. The demands $\ell$, $r$ are probability distributions on domains of equal size, and the goal is to compute the minimum cost transport plan which attains the prescribed marginals on the vertex sets $L$ and $R$. The semi-streaming access model gives the cost of each edge as it is presented; note that the marginals $\ell$, $r$ can be stored explicitly in $O(n)$ space. In this setting, our designed algorithm gives an approximate value of the optimal transportation, as well as a fractional transportation plan $x\in\Delta^{E}$ that meets the demands exactly (has the correct marginals at every vertex) and achieves an approximately optimal value.
Using our techniques we also give a result for computing maximum weight matchings (MWMs), which is competitive with the state-of-the-art when the optimal matching is highly saturated (e.g.\ when the MWM value is within a constant factor of $\norm{w}_\infty n$ in a graph with weights $w$, the largest possible value). 

Finally, in~Section~\ref{sec:tran} we go beyond ``bipartite matching-type'' applications of our framework and consider the transshipment problem. Given a demand vector $d \in \R^V$ on vertices of an undirected graph with non-negative edge weights, the goal of transshipment is to route a flow satisfying this demand while minimizing the sum of weighted flow magnitude over all edges. Among other applications, 
setting demands of any pairs of vertices $s,t$ in the transshipment problem to $1$ and $-1$ respectively, and other vertices to zero, reduces the $s$-$t$ shortest path problem to transshipment. Combining our techniques 
with standard tools, we give a semi-streaming algorithm for transshipment improving the state-of-the-art by roughly an $\eps^{-1}$-factor in the number of passes and work.

\subsection{Our results}
\label{sec:results}

Here we state several key results of our paper. The following is our main result.

\begin{restatable}[Approximate MCM] {theorem}{restatemcmdown}\label{thm:match-rounding}
	There is a deterministic semi-streaming algorithm which given any bipartite $G = (V, E)$ with $|V|=n, |E|=m$, finds a $\eps$-multiplicatively approximate MCM in $O\!\Par{\log n \cdot \log(\eps^{-1}) \cdot \eps^{-1}}$ passes, $O({n})$ space, and $O({ m\log^2 n  \cdot \eps^{-1})}$ total work. 
\end{restatable}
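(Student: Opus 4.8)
The plan is to reduce $(1-\eps)$-approximate MCM to approximately solving a scaled box-simplex game \eqref{eq:boxsimplex}--\eqref{eq:boxsimplexprimal} whose constraint matrix is the edge--vertex incidence matrix, solve that game in the semi-streaming matrix model (Definition~\ref{def:semistream_matrix}) via the low-space area-convexity solver $\LSSherman$ of Section~\ref{sec:value} to obtain an \emph{implicitly represented} near-optimal fractional matching, and finally round this implicit solution to an integral matching of comparable size using $O(n)$ space. The reduction itself is: one greedy pass (Lemma~\ref{lem:greedy}) gives a maximal matching of size $\hat M$ with $\hat M \le M^* \le 2\hat M$; for a guessed value $M$, a fractional matching of size $M$ exists iff some $x \in \Delta^E$ has $\mb x \le \1/M$ entrywise, and since $\1^\top \mb x = 2$ identically on $\Delta^E$, the overflow $\norm{(\mb x - \1/M)_+}_1$ equals $\tfrac12\Par{\norm{\mb x - \1/M}_1 + 2 - n/M}$, so minimizing it is exactly the $\ell_1$-regression \eqref{eq:boxsimplexprimal} with $\ma = \mb^\top$, $c = \0$, $b = \1/M$. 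The minimum overflow is nondecreasing in $M$, so I would binary search over the geometric grid $\{(1+\eps)^k\hat M\} \cap [\hat M,2\hat M]$ of $O(\eps^{-1})$ points --- hence $O(\log\eps^{-1})$ solves --- to find the largest $M$ for which an approximate minimizer of \eqref{eq:boxsimplexprimal} has overflow $O(\eps)$; this $M$ satisfies $M \ge (1-O(\eps))M^*$, and such a minimizer, rescaled by $M$, is a flow of $\ell_1$-norm $M$ violating the matching constraints by total mass $O(\eps M)$.

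The second ingredient is that each such solve runs in the semi-streaming matrix model; this is the content of Section~\ref{sec:value}. Running $\LSSherman$ on $\ma = \mb^\top$ streamed as its rows ($=$ edges), the simplex ($x$, edge-indexed) block need never be stored, since the entropic update of the area-convex extragradient iteration makes each coordinate $x_e$ a fixed closed-form function of the two endpoint entries of a running sum of the dual ($y$, vertex-indexed) iterates, which fits in $O(n)$ words; thus each iteration is $O(1)$ passes that recompute $\mb^\top x = \sum_e x_e \cdot (\text{endpoints of }e)$ on the fly and update the $O(n)$-word dual state. Area convexity gives $O(\eps^{-1}\log(mn)) = O(\eps^{-1}\log n)$ iterations for additive accuracy $\Theta(\eps)$, hence $O(\eps^{-1}\log n)$ passes and $\otilde(m)$ work per solve, and an implicit description of an $O(\eps)$-approximate minimizer $x \in \Delta^E$ of \eqref{eq:boxsimplexprimal} (we use only the $x$-block, as noted after \eqref{eq:boxsimplexprimal}).

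Third, after the final solve at the chosen $M$, I would round the implicit $x$ to an integral matching. First $\Feasible$ turns $x$ into a genuinely feasible fractional matching by locally rescaling flow at over-saturated vertices, repairing the $O(\eps M)$ violation at $O(\eps M^*)$ cost in value, in $O(1)$ passes and $O(n)$ memory; then $\Integral$ extracts an integral matching of size $\ge (1-\eps)$ times the fractional value, in $O(\log n)$ passes and $O(n)$ space. I expect \emph{this} step to be the main obstacle: a generic $\eps$-approximate point of $\Delta^E$ has no $o(m)$-support approximation, so one cannot just threshold $x$ and store its support; instead one must exploit bipartite matching structure --- that the support of any fractional matching contains an integral matching of at least its value, together with the per-vertex degree bounds, which control how much value is lost by restricting to ``heavy enough'' edges, committing a small maximal sub-matching, peeling the matched vertices, and recursing on the residual (optionally preceded by the vertex-reduction preprocessing $\RedSize$ to a saturated sub-instance) --- and to carry all of this out using only implicit, $O(n)$-word access to $x$, recomputing the relevant edge classes afresh in each pass. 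This is where I would expect the proof to spend most of its effort.

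Finally, combining the three ingredients: the pass count is $O(\log\eps^{-1}) \cdot O(\eps^{-1}\log n) + O(\log n) = O(\eps^{-1}\log n\log\eps^{-1})$, the space is $O(n)$ throughout, and the work is $\otilde(m)$ per pass, for a total of $O(m\eps^{-1}\log^2 n)$ after using $\eps^{-1} = \poly(n)$ without loss of generality. Greedy, the area-convex solver, and the scale-peeling rounding are all deterministic, which yields the stated deterministic algorithm.
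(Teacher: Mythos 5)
Your overall template matches the paper's: reduce MCM to an $\ell_1$-regression/box-simplex problem over the incidence matrix, solve it implicitly in $O(n)$ space with the area-convexity solver, then round. (The paper avoids your binary search by penalizing overflow directly, solving $\min_{x \in \Delta^{\tE}} -2M\norm{x_E}_1 + \Overflow_{\dmcm}(2Mx)$ once on a graph augmented with a dummy edge; this is a cosmetic difference.) The genuine gap is in your third step. You correctly identify that the solver's output is a fully dense, implicitly represented point of $\Delta^E$, and that extracting an integral matching from it in $O(n)$ space is the crux --- but your proposed fix (threshold into ``heavy enough'' edge classes, commit a maximal sub-matching, peel and recurse in $O(\log n)$ passes) is only a sketch with no argument that it loses at most an $\eps$-fraction of value rather than a constant fraction per level, nor that the intermediate state fits in $O(n)$ words (edges with $x_e \ge \tau$ can number $n/\tau$, so thresholding at $\tau = \Theta(\eps)$ already blows the space budget). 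The paper's missing ingredient here is Proposition~\ref{prop:cc}: a link/cut-tree cycle-cancelling data structure that consumes the implicit solution as a stream of $1$-sparse edge-flow additions (Theorem~\ref{thm:lss} supplies exactly such a stream) and maintains, in $O(n)$ space, a flow on a \emph{forest} with identical vertex marginals and no smaller value. Once the support is a stored forest, Lemma~\ref{lem:overflowremoval} shows it contains a $(1-O(\eps))$-approximate fractional matching, and an exact MCM on a forest is found greedily. Without this (or an equivalent support-sparsification argument), your rounding step does not go through, and the feasibility repair via $\mathsf{RemoveOverflow}$-style rescaling alone does not help, since the repaired matching is still dense.

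Two smaller accounting issues. First, your claim that each extragradient iteration costs $O(1)$ passes because the simplex block has a closed form is not supported by the solver you invoke: the area-convex regularizer couples $x$ and $y$, so each outer step requires the alternating-minimization subroutine (Algorithm~\ref{alg:altmin}) with $K = O(\log(\max(\norm{\ma}_\infty,\norm{b}_1)/\eps))$ inner passes (Corollary~\ref{corr:lss}); combined with your $O(\log \eps^{-1})$ binary-search solves this overshoots the stated pass bound by a $\log(\eps^{-1})$ factor. Second, the vertex reduction of Proposition~\ref{prop:vertexreduce} is not optional for the stated bound: it is what makes $\norm{b}_1 = O(M^* \log \eps^{-1})$, so that the inner-loop factor is $\log(\eps^{-1})$ rather than $\log n$ when $M^* \ll n$.
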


The result most closely related to our Theorem~\ref{thm:match-rounding} is the semi-streaming algorithm of~\cite{AhnG18} that achieves a $\eps$-approximate MCM using $O(\eps^{-1} \cdot \log{n})$ passes and $O(n \cdot \poly{(\log{n},\eps^{-1})})$ space.
Theorem~\ref{thm:match-rounding} improves the space dependency to the optimal bound of $O(n)$ at a cost of a $\log \eps^{-1}$ factor in the number of passes. Additionally, 
our algorithm is arguably more straightforward than~\cite{AhnG18} and is deterministic (the randomized algorithm of~\cite{AhnG18} works with high probability). 

An interesting feature of our algorithm in Theorem~\ref{thm:match-rounding} its space complexity has no dependence on the parameter $\eps$. Consequently, this algorithm can obtain a \emph{very accurate} approximation of MCM in $O(n)$ space, albeit at a cost of a large pass complexity. We leverage this feature and complement the algorithm with standard augmenting path approaches for MCM, implemented efficiently using recent advances on PRAM (and semi-streaming) algorithms for directed reachability problem in~\cite{LiuJS19}. This
yields the following result for exactly computing an \emph{exact} MCM, resolving an open problem of~\cite{LiuSZ20} on obtaining $o(n)$-pass semi-streaming algorithms for exact MCM (note however that~\cite{LiuSZ20} handles weights, while we primarily focus on MCM). 

\begin{restatable}[Exact MCM]{theorem}{restateexact}\label{thm:mcm-exact}
There is a randomized semi-streaming algorithm which given any bipartite $G=(V,E)$ with $|V|=n$, finds an exact MCM with high probability in $O(n^{\frac{3}{4}+o(1)})$ passes. 
\end{restatable}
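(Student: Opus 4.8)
The plan is to reduce exact matching to the approximate problem of Theorem~\ref{thm:match-rounding} plus a short sequence of augmenting-path computations, each realized as a single directed-reachability query answered by the streaming reachability primitive of~\cite{LiuJS19}. First I would run the algorithm of Theorem~\ref{thm:match-rounding} with accuracy $\eps = \Theta(n^{-3/4})$ to obtain an integral matching $M$ with $|M| \ge (1-\eps)M^*$; since $M^* \le n$ this leaves deficiency $M^* - |M| \le \eps M^* \le n^{1/4}$. By Theorem~\ref{thm:match-rounding} this first phase uses $O(\log n \cdot \log(\eps^{-1}) \cdot \eps^{-1}) = n^{3/4 + o(1)}$ passes and $O(n)$ space, and $M$ can be stored explicitly in $O(n)$ space.

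In the second phase I would augment $M$ one path at a time. Given the current matching, form the directed residual graph in the usual way --- orient each edge of $M$ from $R$ to $L$ and each edge of $E \setminus M$ from $L$ to $R$, add a source $s$ with an arc to every $M$-unsaturated vertex of $L$, and a sink $t$ with an arc from every $M$-unsaturated vertex of $R$; then $M$ is a maximum matching if and only if $t$ is not reachable from $s$, and otherwise any $s$-$t$ path is an augmenting path. Since $M$ fits in memory, the orientation of each edge scanned during a pass over the stream is determined, so this residual graph is accessible in the semi-streaming model. Hence one reachability query --- together with the reachability/shortest-path tree from which an explicit $s$-$t$ path is read off --- is answered by~\cite{LiuJS19} in $n^{1/2 + o(1)}$ passes and $\otilde(n)$ space. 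We repeat until a query reports $t$ unreachable, at which point $M$ is certified maximum; each successful query increases $|M|$ by one, so at most $n^{1/4}$ queries occur, contributing $n^{1/4} \cdot n^{1/2 + o(1)} = n^{3/4 + o(1)}$ passes and $\otilde(n)$ space.

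Summing the two phases yields $n^{3/4 + o(1)}$ passes and $\otilde(n)$ space overall; the choice $\eps = \Theta(n^{-3/4})$ is exactly the one balancing the $\eps^{-1}$ pass cost of the approximation phase against the $\eps M^* \cdot n^{1/2}$ pass cost of the augmentation phase. As in Remark~\ref{rem:space-pass-efficiency}, this is where one crucially needs an approximation routine that is simultaneously $\otilde(\eps^{-1})$-pass and $O(n)$-space --- substituting~\cite{AssadiLT21} or~\cite{AhnG18} here fails to give an $o(n)$-pass bound. Randomness enters only through the reachability primitive of~\cite{LiuJS19}, which succeeds with high probability, and a union bound over the $\le n^{1/4}$ calls preserves a high-probability guarantee for the whole algorithm. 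Note also that the algorithm never needs to know $M^*$: it simply augments until an $s$-$t$ reachability query fails, and the a priori deficiency bound from the first phase guarantees this happens within $n^{1/4}$ iterations.

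I expect the main obstacle to be the faithful streaming realization of the augmentation step: one must verify that~\cite{LiuJS19} runs correctly over the implicitly specified residual graph within $\otilde(n)$ space, and that it returns enough structure (a reachability/shortest-path tree rather than merely a yes/no answer) to extract an augmenting path without extra passes beyond the claimed $n^{1/2 + o(1)}$. Granting this interface, the rest is the standard augmenting-path argument (Berge's lemma / K\"onig duality) together with the pass- and space-accounting above.
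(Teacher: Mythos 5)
Your proposal is correct and follows essentially the same route as the paper: run the $O(n)$-space approximate algorithm with $\eps = \Theta(n^{-3/4})$ to leave deficiency at most $n^{1/4}$, then close the gap by at most $n^{1/4}$ augmenting-path searches, each implemented as one $s$-$t$ reachability query on the residual graph via the $n^{1/2+o(1)}$-pass primitive of \cite{LiuJS19} (whose path-reporting interface is exactly what Proposition~\ref{prop:JLS19} provides, with the paper sketching the needed modification of \cite{LiuJS19}). Your residual-graph construction is in fact slightly more careful than the paper's (attaching $s$ and $t$ only to unsaturated vertices), and the pass accounting and balancing of $\eps$ match the paper's proof.
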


Our techniques in obtaining Theorem~\ref{thm:match-rounding} yields results for several other semi-streaming combinatorial optimization problems, such as the following.

\begin{restatable}[Optimal transport] {theorem}{restateotthm}\label{thm:OT-rounding}
There is a deterministic semi-streaming algorithm which given any optimal transport instance on a complete bipartite graph on $V = L \cup R$, costs $c \in \R^{E}_{\ge 0}$, and two sets of demands $\ell \in \Delta^L$, $r \in \Delta^R$, finds an $\eps \norm{c}_\infty$-additive approximate optimal transport plan using $O\Par{\eps^{-1} \log n\log \eps^{-1}}$ passes, $O\Par{n}$ space, and $O\Par{n^2 \eps^{-1} \log^2 n }$ work.
\end{restatable}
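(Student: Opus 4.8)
The plan is to cast discrete optimal transport as a box-simplex game of the form~\eqref{eq:boxsimplex}, run the semi-streaming solver of Section~\ref{sec:value} to obtain a low-space implicit near-feasible fractional plan, and then round it to an exactly feasible plan using $O(1)$ extra passes and $O(n)$ space; this mirrors the template used for Theorem~\ref{thm:match-rounding}.

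First, in a preliminary pass I rescale $c$ so that $\norm{c}_\infty = 1$ (this changes only the target error, to $\eps$), and set up the instance: let $\mathbf{B} \in \R^{E \times V}$ be the (unsigned) edge--vertex incidence matrix of the complete bipartite graph, $b = (\ell; r) \in \R^V_{\ge 0}$ (so $\norm{b}_\infty \le 1$), and fix $\lambda = 2$. The transportation polytope $\{x \in \Delta^E : \mathbf{B} x = b\}$ admits $\lambda = 2\norm{c}_\infty$ as an exact $\ell_1$-penalty parameter, i.e. $\min_{x \in \Delta^E} c^\top x + \lambda\norm{\mathbf{B} x - b}_1 = \opt$, the optimal transport cost; equivalently, every $x \in \Delta^E$ has a feasible $\bar x$ with $\norm{\bar x - x}_1 \le 2\norm{\mathbf{B} x - b}_1$ and hence $c^\top \bar x \le c^\top x + 2\norm{c}_\infty\norm{\mathbf{B} x - b}_1$. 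Thus the instance is exactly the box-simplex primal~\eqref{eq:boxsimplexprimal} with constraint matrix $\lambda\mathbf{B}$, cost $c$, and target $\lambda b$: we have $\nnz(\lambda\mathbf{B}) = 2n^2$, $\norm{\lambda\mathbf{B}}_{\max} = O(1)$, and $\norm{\lambda b}_\infty = O(1)$, so the induced game has $O(1)$ width, and — crucially — the rows of $\mathbf{B}$ are exactly the edges of $G$ (two unit entries each), so presenting the OT costs edge-by-edge is precisely the semi-streaming matrix model of Definition~\ref{def:semistream_matrix} with box dimension $|V| = n$ and $\ell, r$ stored explicitly in $O(n)$ space.

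Next I run the Section~\ref{sec:value} solver on this instance to accuracy $\eps' = \Theta(\eps)$. It delivers, in $O(\eps^{-1}\log n\log\eps^{-1})$ passes, $O(n)$ space, and $O(\nnz \cdot \eps^{-1}\log^2 n) = O(n^2\eps^{-1}\log^2 n)$ work, an $O(n)$-space implicit description of an $\eps'$-approximate minimizer $\hat x \in \Delta^E$ of~\eqref{eq:boxsimplexprimal}. Combining $\hat x$'s near-optimality with the exact-penalty inequality above yields both $\norm{\mathbf{B}\hat x - b}_1 = O(\eps)$ and $c^\top\hat x \le \opt + O(\eps)$.

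Finally I convert $\hat x$ to a genuine transport plan via the rounding of~\cite{AltschulerWR17} ($\FeasibleOT$): rescale each row of $\hat x$ down to at most its $\ell$-marginal, then each column down to at most its $r$-marginal, then add a rank-one correction supported on the residual marginal deficits. This produces $\bar x \in \Delta^E$ with exact marginals and $\norm{\bar x - \hat x}_1 = O(\norm{\mathbf{B}\hat x - b}_1) = O(\eps)$, so $c^\top\bar x \le c^\top\hat x + \norm{c}_\infty\norm{\bar x - \hat x}_1 = \opt + O(\eps)$, which is the claimed bound after undoing the normalization of $c$ and rescaling $\eps$; reporting $c^\top\bar x$ (computed in the same pass) gives the approximate value. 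The step to verify carefully — and the main obstacle — is that this rounding respects the resource budget given only implicit access to $\hat x$: the row scalings, the column scalings, and the two rank-one correction vectors are each indexed by $V$ and hence fit in $O(n)$ space, and each is determined by $O(n)$-size aggregates ($\mathbf{B}\hat x$ and partial column sums) computable in $O(1)$ extra passes over the edge stream, after which every entry $\bar x_{uv}$ is a closed form in $\hat x_{uv}$ and this $O(n)$-size state — so $\bar x$ can be emitted edge-by-edge in one final pass and need never be stored. Keeping the error of this step at $O(\eps\norm{c}_\infty)$ is exactly what forces the choices $\lambda = \Theta(\norm{c}_\infty)$ and solver accuracy $\eps' = \Theta(\eps\norm{c}_\infty)$, and relies on the exact-penalty ``nearby feasible point'' property rather than a crude infeasibility bound (the latter would only give $O(1)$ marginal error and would not close the argument).
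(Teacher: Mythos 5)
Your reduction is essentially the paper's: penalize the marginal constraints at scale $\Theta(\norm{c}_\infty)$ to get a box-simplex instance of the form \eqref{eq:boxsimplexprimal} with $O(1)$ width (after normalization), solve it with the Section~\ref{sec:value} solver, and finish with the row/column-scaling plus rank-one correction of \cite{AltschulerWR17}; the cost accounting via the ``nearby feasible point'' property is also the paper's argument (as an aside, your intermediate claim $\norm{\mb \hat x - b}_1 = O(\eps)$ does not actually follow when the penalty constant is exactly the rounding constant --- only the combined bound $c^\top \bar x \le c^\top \hat x + \norm{c}_\infty\norm{\bar x - \hat x}_1 \le \opt + O(\eps)$ does --- but since you only use the combined bound this is harmless).

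The genuine gap is in the resource accounting of the final rounding step, and it is exactly the point where the paper invokes its cycle-cancelling tool (Proposition~\ref{prop:cc}, via Corollary~\ref{cor:gen}). The approximate minimizer is $\hat x = \frac1T \sum_{t} x'_t$ with each $x'_t$ represented by an $O(n)$-size parameter tuple, so ``a pass over the edge stream'' does not give you access to the entries of $\hat x$ unless you either store all $T = \Theta(\eps^{-1}\log n)$ tuples --- which is $\Theta(n\eps^{-1}\log n)$ space and blows the $O(n)$ budget --- or re-run the (deterministic) solver from scratch. Your row marginals $\mb^\top \hat x$ can indeed be accumulated online during the solve in $O(n)$ space, but the column marginals of the \emph{row-scaled} plan depend on scalings that are only known after the solve, so they cannot be accumulated in the same run; hence your ``$O(1)$ extra passes, then emit $\bar x$ edge-by-edge'' plan is not implementable as stated within $O(n)$ space. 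Relatedly, the theorem (and the paper's proof) delivers a transport plan stored in $O(n)$ space, i.e.\ an $O(n)$-sparse plan, whereas your $\bar x$ is dense and never held anywhere. The paper resolves both issues at once: it feeds the streamed entries of $\hat x$ (Theorem~\ref{thm:lss}) online into the link/cut-tree cycle-cancelling structure of Proposition~\ref{prop:cc}, obtaining an explicit $O(n)$-sparse flow with the same marginals and no worse cost; the AWR17-style scaling and rank-one correction are then applied to this sparse object (the rank-one part stored as two $n$-vectors), followed by one more cycle-cancelling pass to output a sparse exact-marginal plan. Alternatively, your approach can be repaired by re-running the deterministic solver two or three times to regenerate the stream of $\hat x$ when each aggregate is needed (only a constant-factor pass overhead), but even then you must say how the final plan is represented within $O(n)$ space, which again points back to a sparsification step like Proposition~\ref{prop:cc}.
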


To our knowledge, this is the first non-trivial semi-streaming algorithm for optimal transport (however, it is plausible one can use 
	existing semi-streaming algorithms for $b$-matching such as~\cite{AhnG13} in conjunction with the reduction of~\cite{blanchet2018towards} to obtain non-trivial semi-streaming algorithms.) We further give an algorithm for  MWM using similar techniques (see \notarxiv{Appendix~\ref{sec:app}.}\arxiv{Section~\ref{ssec:randsam-ot}}).

Finally, yet another application of our techniques is to the transshipment problem. We define transshipment and prove the following result in Section~\ref{sec:tran}.

\begin{restatable}[Approximate transshipment and shortest path]{theorem}{restatetransshipthm}
\label{thm:l1main}
There is a randomized semi-streaming algorithm which given weighted undirected $G=(V,E,w)$ and  demand vector $d \in \R^n$, finds an $\eps$-multiplicatively approximate minimizer to the minimum transshipment cost in $O(\log^{O(1)}n \cdot \eps^{-1})$ passes, $O(n \log^{O(1)} n)$ space, and $O(m\log^{O(1)}n \cdot \eps^{-1})$ total work with high probability in $n$. This, in particular, yields a semi-streaming algorithm that $\eps$-multiplicative approximates the $s$-$t$ shortest path problem with the same pass, space, and work complexities.
\end{restatable}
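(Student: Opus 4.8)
The plan is to reduce $\eps$-approximate transshipment to solving a single box-simplex game \eqref{eq:boxsimplex}--\eqref{eq:boxsimplexprimal} in the semi-streaming matrix model (Definition~\ref{def:semistream_matrix}), after which Theorem~\ref{thm:l1main} follows by feeding that instance to the solver of Section~\ref{sec:value} (the engine behind Theorem~\ref{thm:match-rounding}) and reusing its low-space rounding. Let $B\in\R^{E\times V}$ be the signed incidence matrix of $G$ and $W=\diag(w)$; transshipment is $\min\{\norm{Wf}_1 : B^\top f = d\}$, and the substitution $x = Wf$ rewrites it as $\min\{\norm{x}_1 : \bar{\ma}^\top x = d\}$ with $\bar{\ma}\defeq W^{-1}B$. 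First-order methods cannot be applied to this $\ell_1$-minimization directly because it is badly conditioned, so, following~\cite{Sherman17,BeckerKKL17}, I would first build an $\ell_1$-\emph{preconditioner}: a matrix $\mpack$ with $\otilde(n)$ rows, compactly represented (e.g.\ as a composition of low-stretch or hierarchical tree routings) and applicable to any vector in $\otilde(n)$ space and $\otilde(m)$ work, such that for every demand $d'$ one has $\norm{\mpack d'}_1 \le \opt(d') \le \kappa\,\norm{\mpack d'}_1$ with $\kappa = \mathrm{polylog}(n)$, where $\opt(d')$ is the optimal transshipment cost for demand $d'$. I would port the streaming preconditioner constructions of~\cite{BeckerKKL17,Li20,AndoniSZ20} into our model (or build one), using $\mathrm{polylog}(n)$ passes, $\otilde(n)$ space, and $\otilde(m)$ work; this is the sole source of randomization, while everything downstream is deterministic.

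Given $\mpack$, I would pass to the penalized formulation $\min_x \norm{x}_1 + \alpha\,\norm{\mpack(\bar{\ma}^\top x - d)}_1$. Choosing $\alpha = \Theta(\kappa)$, a standard argument shows this has optimal value within a constant factor of $\opt(d)$ and that any near-minimizer $x$ can be made exactly feasible by routing its residual demand $d - \bar{\ma}^\top x$ along a fixed low-stretch spanning tree, at additional cost $O(\eps)\cdot\opt(d)$. The quantity $\norm{\mpack d}_1$ itself already gives a $\mathrm{polylog}(n)$-factor overestimate $R \ge \opt(d)$ bounding $\norm{x}_1$ at optimality. Splitting $x = R(x^+ - x^-)$ with $x^+,x^-\ge 0$ and adjoining a slack coordinate places the primal variable on a scaled simplex of dimension $O(m)$, and the penalized objective becomes an instance of \eqref{eq:boxsimplexprimal} whose constraint matrix is a fixed rescaling of $\mpack\bar{\ma}^\top[\id,\ -\id]$ (each ``row'' corresponds to an edge, generated on the fly during a pass from the stored representation of $\mpack$), with $c$ and $b$ read off from the tree-routing term and from $\mpack d$. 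By the preconditioner guarantee, the appropriately normalized operator norm of this matrix is $\mathrm{polylog}(n)$, so the semi-streaming box-simplex solver of Section~\ref{sec:value} attains an $\Par{\eps/\mathrm{polylog}(n)}$-accurate solution in $\otilde(\eps^{-1})$ passes; each pass applies $\bar{\ma}$ and $\bar{\ma}^\top$ by one sweep over the edge stream and applies $\mpack$ in $\otilde(m)$ work within $\otilde(n)$ space, and the $O(m)$-dimensional primal iterate is kept implicit exactly as in the proof of Theorem~\ref{thm:match-rounding}. Composing the solver's accuracy, the choice of $\alpha$, the $R = \mathrm{polylog}(n)\cdot\opt$ scaling, and the feasibility fixup yields an $\eps$-multiplicative approximate transshipment cost together with an implicitly represented optimal flow, within $O(\mathrm{polylog}(n)\cdot\eps^{-1})$ passes, $\otilde(n)$ space, and $\otilde(m\eps^{-1})$ work.

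For the shortest-path corollary I would take $d = e_s - e_t$, so that $\opt(d) = \mathrm{dist}_G(s,t)$, and run the above to get an $\eps$-approximate distance. To also return an actual $s$-$t$ path of length at most $(1+\eps)\,\mathrm{dist}_G(s,t)$ within the same budget, I would use the dual potentials $y\in\R^V$ produced by the game (these are only $n$-dimensional, hence storable): they approximately satisfy $|y_u - y_v|\le(1+\eps)w_{uv}$ for all edges with $y_s - y_t$ close to $\mathrm{dist}_G(s,t)$, and a standard flow-decomposition of the near-optimal fractional $s$-$t$ flow --- or, equivalently, a potential-guided greedy walk over the stream --- extracts an integral path with at most a further $(1+\eps)$ factor loss. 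The main obstacle is the preconditioner in the first paragraph: obtaining a $\mathrm{polylog}(n)$-quality $\ell_1$ embedding of the weighted shortest-path metric that is computable in $\mathrm{polylog}(n)$ passes and $\otilde(n)$ space and applicable in $\otilde(m)$ work; once that is in hand, the remaining work is a careful parameter match ensuring the box-simplex solver's $\otilde(\eps^{-1})$ pass complexity and the paper's low-space rounding carry over.
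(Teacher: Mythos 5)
Your proposal follows essentially the same route as the paper: build an $\ell_1$-stretch approximator (preconditioner) in semi-streaming from the same ingredients (the $O(\log n)$-spanner of \cite{BeckerKKL17} plus the construction of \cite{Li20} run offline on the spanner), compose it with $\mw^{-1}\mb$ and split flows into positive and negative parts to obtain a box-simplex game of polylogarithmic width, solve it implicitly with the solver of Section~\ref{sec:value}, sparsify with the cycle-cancelling toolkit (on the double cover, since flows are signed), repair feasibility by routing the small residual demand through a stored sparse subgraph, and for shortest path read a path off the support of the $O(n)$-sparse flow. The one genuine difference is how you control the flow scale: you use a penalized formulation $\min_x \norm{x}_1 + \alpha\norm{\mpack(\bar{\ma}^\top x - d)}_1$ with $\alpha = \Theta(\kappa)$ and a scale estimate $R = \Theta(\kappa\norm{\mpack d}_1)$, whereas the paper keeps a pure $\ell_1$-regression objective with no cost vector, constrains $\norm{f}_1 \le t$, and binary-searches over $t$ via Lemma~\ref{lem:valueconstrained}. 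Both are sound and give the same complexities up to polylogarithmic factors; your version avoids the binary search at the price of carrying a cost vector and a width proportional to $R$, which you correctly cancel against the target additive accuracy $\eps\cdot\opt(d)/\mathrm{polylog}(n)$.

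Two points need repair. First, what you label the ``main obstacle'' --- the preconditioner --- is precisely the new technical content of Section~\ref{ssec:transshipment-approximator}: beyond the stretch bound $\kappa = \mathrm{polylog}(n)$, the construction must also certify a degree-weighted column-sparsity bound, $\sum_{v\in V}\deg(v)\norm{\mr_{:v}}_0 = \otilde(m)$ (checked in one streaming pass via Markov plus $O(\log n)$ repetitions), since that is what makes simulating the rows of $\mpack\bar{\ma}^\top$ on the fly cost only $\otilde(m)$ work per pass; without it your per-pass work bound does not follow from the stretch guarantee alone. Second, routing the residual demand ``along a fixed low-stretch spanning tree'' is not sufficient: a single LSST only bounds \emph{average} stretch, so a worst-case residual demand can incur polynomial overhead on one tree. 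Replace it, as the paper does, by a $2$-approximate transshipment computed offline on the stored $O(\log n)$-spanner, which is $O(\log n)$-competitive for every demand; since the solver guarantees the residual's optimal cost is at most $\eps\cdot\opt(d)/\mathrm{polylog}(n)$, this loses only an $O(\eps)$ fraction. Finally, for shortest path, the per-edge potential bound $|y_u - y_v| \le (1+\eps)w_{uv}$ does not follow from an approximate saddle point (approximate optimality yields only aggregate guarantees), but your fallback --- decomposing the sparse near-optimal $s$-$t$ flow and returning the shortest path in its support --- is exactly the paper's argument and suffices.
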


\subsection{Our techniques} 
\label{sec:approach}

In this section, we overview our approach. We first give an overview of a natural reformulation of MCM as an appropriate $\ell_1$-regression problem. We then discuss the two main components of our semi-streaming MCM implementation, which are representative of our overall framework in other applications. The first component is a low-space solver for general $\ell_1$ regression problems in the form of box-simplex games, which returns an implicit representation of an approximate (possibly dense) solution. The second component is a rounding procedure which sparsifies implicit approximate solutions and rounds them to feasibility and  (in certain contexts) integrality. 

Our main contribution is to give efficient semi-streaming implementations and applications of each of these components. Our semi-streaming methods for optimal transport, MWM, and transshipment similarly follow from our framework, via appropriately formulating these applications as regression problems, applying our semi-streaming tools to approximate the regression problem, and rounding the approximate solution to an explicit sparse solution of no worse objective value.

\textbf{MCM as $\ell_1$-regression.} 
 The first piece of our MCM algorithm is to formulate an appropriate $\ell_1$-regression problem, whose approximate solution also implies an approximately optimal matching. Variants of this (fairly natural) reduction have found previous use, but its proof of correctness is suggestive of our overall approach, so we describe it here for completeness.  To obtain this reduction, we follow a ``penalizing overflow'' approach which has also found recent applications to approximate maximum flow \cite{Sherman13, KelnerLOS14, Peng16, Sherman17, SidfordT18} and optimal transport \cite{JambulapatiST19}. 
 
 Recall that the (fractional) MCM problem asks to find a flow $x \in \R^E_{\ge 0}$ with maximal $\ell_1$ norm, such that no vertex has more than $1$ unit of flow adjacent to it (the ``matching constraints''). In linear-algebraic terms, letting $\mb \in \{0, 1\}^{E \times V}$ be the (unsigned) incidence matrix for the graph, the problem asks to maximize $\norm{x}_1$ subject to the matching constraints, $\mb^\top x \le \1$ entrywise. The penalizing overflow approach relaxes this problem to an appropriate \emph{unconstrained} optimization problem, by appropriately penalizing matching constraint violations (rather than treating them as hard constraints), and using combinatorial structure to argue solving the unconstrained problem suffices. We now describe how to instantiate this reduction for the MCM problem.
 
 We begin by obtaining $M$, a $2$-approximation to the MCM value $M^*$, in a single pass using a greedy algorithm (Lemma~\ref{lem:greedy}), which is used to parameterize our $\ell_1$-regression problem. The next step in our reduction is a rounding procedure adapted from \cite{AltschulerWR17}, which gives an algorithm showing that any flow $\hx \in \R^E_{\ge 0}$ (not necessarily a feasible matching) can be converted into a feasible (fractional) matching $\tx$ on the same support, with size loss proportional to the total amount of constraint violations $\Overflow_d(\hx) \defeq \norm{(\mb^\top \hx- d)_+}_1$, where subscripting $+$ denotes the nonnegative part and $\mb^\top \hx - d$ encodes constraint violations (cf.\ Lemma~\ref{lem:overflowremoval}). Using this observation, we demonstrate that to find a large fractional matching it suffices to solve the following $\ell_1$-regression problem
\[\min_{x \in \Delta^{\tE}} -\inprod{\1_E}{2Mx} + 2M\norm{(\mb^\top \hx - d)_+}_1\]
to $\eps M$ additive accuracy, where the graph $\tG \defeq (\tV, \tE)$ modifies our original $G = (V, E)$ by adding one dummy edge, and the restriction of $2Mx$ to the original edges $E$ plays the role of our desired matching; these complications are to fix the $\ell_1$ norm of our $x$ variable, to put it in a form suitable for canonical optimization techniques. Further, to explicitly return an approximate MCM, we show that we can take an implicit representation of our approximate optimizer $x$, and use dynamic data structures in $O(n)$ space to return $x' \in \Delta^{\tE}$ on a sparse support satisfying $\mb^\top x = \mb^\top x'$. We can then explicitly compute an approximately optimal MCM on this sparse support.

We now discuss the technical tools which go into developing both of these pieces: our semi-streaming $\ell_1$-regression solver, and obtaining an approximately optimal matching on a sparse support.
\newpage

\textbf{Iterative methods in low space.} Our semi-streaming algorithm for approximately solving  \eqref{eq:boxsimplex}, \eqref{eq:boxsimplexprimal} follows recent improved algorithms solving these box-simplex problems in the standard (non-streaming) setting. Classical first-order methods for solving these problems, such as smoothing \cite{Nesterov05} or decoupled extragradient methods \cite{Nemirovski04, Nesterov07} have iteration counts incurring either a quadratic dependence on the error ratio $\norm{\ma}_\infty \eps^{-1}$ or growing polynomially with the dimension (each iteration running in time linear in the sparsity of the matrix $\ma$). A line of work, beginning with a breakthrough by \cite{Sherman17} (which solved a more general problem), has designed improved first-order methods bypassing both of these bottlenecks, with an iteration count of $\tO(\norm{\ma}_\infty \eps^{-1})$. The \cite{Sherman17} algorithm was analyzed in the box-simplex setting we study using a different regularizer by \cite{JambulapatiST19}, and later \cite{CohenST21} demonstrated that the classical methods of \cite{Nemirovski04, Nesterov07} could also be analyzed using the \cite{JambulapatiST19} regularizer and obtain a matching rate. 

Our contribution to this line of work is the crucial observation that the \cite{JambulapatiST19, CohenST21} algorithms can deterministically be implemented implicitly in $O(n)$ space under the semi-streaming matrix model of Definition~\ref{def:semistream_matrix}. This observation is perhaps surprising, since the simplex variable throughout the algorithm is typically fully-dense, and we do not subsample to preserve sparsity. We instead utilize the recursive structure of the box-simplex algorithms to give a low-space representation of simplex iterates, such that the actual iterate can be accessed in $O(1)$ passes. 

We begin by giving a high-level overview of the algorithm of \cite{CohenST21}; we remark that we focus on this algorithm (as opposed to \cite{Sherman17, JambulapatiST19}) because its analysis is convenient for obtaining tighter dependencies on problem parameters for our applications. The \cite{CohenST21} algorithm runs in $\tO(\norm{\ma}_\infty \eps^{-1})$ iterations, each computing a box-simplex pair $(x_t, y_t) \in \Delta^m \times [-1, 1]^n$. Each box-simplex pair is an approximate solution to a regularized linear objective of the form (hereinafter we let $|\cdot|$ applied to matrix or vector arguments act entrywise)
\begin{equation}\label{eq:regsubproblem}\min_{x \in \Delta^m, y \in [-1, 1]^n} \inprod{g\x}{x} + \inprod{g\y}{y} + r(x, y),\text{ where } r(x, y) \defeq x^\top |\ma |(y^2) + 10\norm{\ma}_\infty \sum_{i \in [m]} x_i \log x_i.\end{equation}
Typically, closed-form solutions to the above displayed problem are difficult to compute, because the regularizer $r$ couples the two blocks. However, \cite{JambulapatiST19} provided an alternating minimization subroutine which rapidly converges to an approximate solution, which suffices for our purposes.

Our first observation is that in every iteration, the simplex variable $x_t$, is always proportional to a vector $\exp(\ma v_t + |\ma|u_t + \lam_t c)$ for $n$-dimensional vectors $v_t$, $u_t$ and a scalar $\lam_t$. This is because throughout the alternating subroutine for solving \eqref{eq:regsubproblem}, every simplex iterate is the minimizer to a entropy-regularized linear objective, where the linear term is with respect to a linear combination of two pieces. The first is a gradient operator of the box-simplex problem, of the form $\Par{\ma y + c,\; b - \ma^\top x}$ for iterate $x, y$, and the second is the gradient of our regularizer (which on the $x$ side is either a linear combination of $\ma$ and $|\ma|$ applied to $n$-dimensional vectors or an entropic gradient; the latter yields a logarithm of previous iterates, which have an implicit representation inductively). Combining shows all linear terms have the form $\ma v + |\ma| u+ \lam c$, enabling our recursion.

Our second observation is that for any $(v_t, \lam_t)$ pair, computing the vectors $\ma^\top \exp(\ma v_t + |\ma|u_t+ \lam_t c)$ and $|\ma|^\top \exp(\ma v_t + |\ma|u_t+ \lam_t c)$ (required in the gradient operator) can be performed in a single semi-streaming pass, and $O(n)$ space, because the resulting matrix-vector product decouples by edge. Combining these pieces implies a deterministic semi-streaming iterative method for approximating the value of box-simplex games, as stated in the following (cf.\ Section~\ref{sec:prelims} for definitions).
\begin{restatable}{theorem}{restatelss}\label{thm:lss}
	There is a deterministic algorithm obtaining an $\eps$-additive approximation to the value of \eqref{eq:boxsimplex}, \eqref{eq:boxsimplexprimal} in $O(\frac{\norm{\ma}_\infty} \eps \cdot  \log(m) \log(\max\Par{\norm{\ma}_\infty, \norm{b}_1} \cdot \eps^{-1}))$ passes and $O(n)$ space.
	Moreover, the algorithm outputs a stream of nonnegative $1$-sparse vectors in $\R^m$ of length $O(m \cdot \frac{\norm{\ma}_\infty \log m}{\eps})$
	whose sum is $\bx$, an $\eps$-approximate optimizer to \eqref{eq:boxsimplexprimal}.
	\end{restatable}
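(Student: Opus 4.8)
The plan is to instantiate the (non-streaming) box-simplex solver of \cite{CohenST21} — run for $T = O(\norm{\ma}_\infty \eps^{-1} \log m)$ outer iterations it produces box-simplex pairs $(x_t,y_t) \in \Delta^m \times [-1,1]^n$ whose prescribed uniform average $(\bx,\by)$ has duality gap at most $\eps$ for \eqref{eq:boxsimplex}, so in particular $\bx$ is an $\eps$-approximate minimizer of \eqref{eq:boxsimplexprimal} — and to show it runs in $O(n)$ space under Definition~\ref{def:semistream_matrix}, spending $O(1)$ passes per ``matrix touch''. The only generically-dense object is the simplex iterate; the crux is to never store $x_t$ but instead maintain the invariant that \emph{every simplex iterate arising in the algorithm, including all intermediate iterates of the alternating-minimization subroutine of \cite{JambulapatiST19} used to solve each subproblem \eqref{eq:regsubproblem}, has the form $x \propto \exp(\ma v + \ba u + \lam c)$ for some $v,u \in \R^n$, $\lam \in \R$, with the triple $(v,u,\lam)$ kept explicitly in $O(n)$ space} (recall $\ba = |\ma|$). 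Given the invariant, the only operations ever performed on a simplex iterate are (i) applying $\ma^\top$ or $\ba^\top$, (ii) forming $\log x$, and (iii) averaging over iterations, none of which require materializing $x$: for (i) the products $\ma^\top \exp(\ma v + \ba u + \lam c)$, $\ba^\top \exp(\ma v + \ba u + \lam c)$ and the normalizer $\sum_{i \in [m]} \exp(\inprod{\ai}{v} + \inprod{|\ai|}{u} + \lam c_i)$ decouple by row, so one pass over $\{\ai,c_i\}_{i \in [m]}$ accumulating $O(n)$-dimensional running sums computes them exactly; for (ii), $\log x = \ma v + \ba u + \lam c$ up to an additive multiple of $\1$, which is irrelevant since $\log x$ only ever feeds a subsequent entropy-regularized step whose normalization annihilates the shift; (iii) is handled in the output step below. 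The box iterates $y_t$ are $n$-dimensional and stored explicitly, and each $y$-update is a coordinatewise closed-form minimization once $\ma^\top x$ and $\ba^\top x$ are available.

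\textbf{Verifying the invariant.} I would unfold one outer iteration of \cite{CohenST21}. Each iterate solves a subproblem of the form \eqref{eq:regsubproblem} in which the linear terms are $g\x = \eta(\ma y_{\mathrm{base}} + c) - \nabla_x r(x_{\mathrm{base}}, y_{\mathrm{base}})$ (and analogously $g\y$) for a previous iterate $(x_{\mathrm{base}}, y_{\mathrm{base}})$ and step size $\eta>0$. Since $\nabla_x r(x,y) = \ba(y^2) + 10\norm{\ma}_\infty(\1 + \log x)$ and, by the inductive hypothesis, $\log x_{\mathrm{base}} = \ma v' + \ba u' + \lam' c + (\text{const})\1$, the vector $g\x$ lies in $\{\ma v + \ba u + \lam c\} + \R\1$. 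The \cite{JambulapatiST19} subroutine alternates between (a) minimizing over $x \in \Delta^m$ for fixed $y$, which yields $x \propto \exp(-(g\x + \ba(y^2))/(10\norm{\ma}_\infty))$ — again of the invariant form after absorbing the $\1$-shift — and (b) minimizing over $y \in [-1,1]^n$ for fixed $x$, a coordinatewise one-dimensional problem that uses only $\ma^\top x$, $\ba^\top x$, and the explicit vector $g\y$ (which involves $\ma^\top x_{\mathrm{base}}$, $\ba^\top x_{\mathrm{base}}$). Thus every simplex iterate stays of the claimed form, each subroutine step costs $O(1)$ passes and $O(n)$ space, and (by the linear convergence of \cite{JambulapatiST19}) $O(\log(\max\Par{\norm{\ma}_\infty,\norm{b}_1} \eps^{-1}))$ inner steps bring each subproblem to the accuracy \cite{CohenST21} requires. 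Multiplying by $T$ outer iterations gives the stated pass count; at no point are more than $O(1)$ triples and a constant number of $O(n)$-vectors (including the accumulators needed for the value and the output) in memory, so space is $O(n)$.

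\textbf{Output and value.} Before running, compute $\norm{\ma}_\infty$ in one pass and hence $T$, so the output combination $\bx = \tfrac1T \sum_t x_t$ is fixed in advance (or whichever known convex combination \cite{CohenST21} prescribes). During outer iteration $t$, once the triple $(v_t,u_t,\lam_t)$ and its normalizer $Z_t$ are in hand (both byproducts of passes already spent that iteration), make one additional pass over $\{\ai,c_i\}$ and for each $i \in [m]$ emit the nonnegative $1$-sparse vector $\tfrac1T Z_t^{-1}\exp(\inprod{\ai}{v_t} + \inprod{|\ai|}{u_t} + \lam_t c_i)\, e_i$. The concatenation over $t \in [T]$, $i \in [m]$ has length $O(mT) = O(m \norm{\ma}_\infty \eps^{-1} \log m)$ and sums exactly to $\bx$, which by \cite{CohenST21} is an $\eps$-approximate minimizer of \eqref{eq:boxsimplexprimal}. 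The value estimate is $c^\top \bx + \norm{\ma^\top \bx - b}_1$, an $\eps$-additive overestimate of the common optimum of \eqref{eq:boxsimplex} and \eqref{eq:boxsimplexprimal}; both $\ma^\top \bx$ and $c^\top \bx$ are recovered in $O(n)$ space as running sums of the per-iteration $\ma^\top x_t$ and $c^\top x_t$ computed during the run (and one may cross-check against the lower bound $\min_{i \in [m]}(\ma\by + c)_i - b^\top \by$ from the dual iterate, whose gap is $\le \eps$).

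\textbf{Main obstacle.} The genuinely delicate part is the invariant step: carefully opening both the outer extragradient loop of \cite{CohenST21} and the inner alternating-minimization loop of \cite{JambulapatiST19} to confirm that the form $x \propto \exp(\ma v + \ba u + \lam c)$ is preserved by \emph{every} update — i.e.\ that the sole $x$-dependence entering any simplex update is through $\ma^\top x$, $\ba^\top x$, and $\log x$, and that all stray additive $\1$-shifts (from the entropy gradient $\1 + \log x$ and from renormalizations) are immaterial because they live in the kernel of the projection onto $\Delta^m$. Alongside this is the routine but necessary check that the coefficients $(v,u,\lam)$ stay $\poly(m, n, \eps^{-1}, \norm{\ma}_\infty, \norm{b}_1)$-bounded so arithmetic remains $O(1)$ machine words; everything else reduces to bookkeeping against the known convergence guarantees of \cite{Sherman17, JambulapatiST19, CohenST21}.
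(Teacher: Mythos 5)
Your proposal is correct and takes essentially the same route as the paper: the same implicit representation $x \propto \exp(\ma v + |\ma| u + \lam c)$ maintained through the outer extragradient loop of \cite{CohenST21} and the inner alternating-minimization loop of \cite{JambulapatiST19} (this is exactly Lemma~\ref{lem:implsimplex}), the same one-pass, $O(n)$-space computation of the row-decoupled products $\ma^\top \exp(\cdot)$, $|\ma|^\top \exp(\cdot)$ and the normalizer (Lemma~\ref{lem:implicitx}), and the same one-extra-pass-per-iteration emission of the $1$-sparse stream summing to $\bx$ together with running sums for $c^\top \bx$ and $\ma^\top \bx$. The only detail you gloss over is the paper's one-pass preprocessing of $c$ and $b$ (Lines 1--2 of Algorithm~\ref{alg:sherman}, Lemma~\ref{lem:truncateb}), which enforces $\norm{c}_\infty \le 2\norm{\ma}_\infty$ and is what makes the inner-iteration count---and hence the stated pass bound---depend only on $\max\Par{\norm{\ma}_\infty, \norm{b}_1}$ rather than on $\norm{c}_\infty$.
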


Theorem~\ref{thm:lss} gives an $O(n)$ space, $\tO\Par{\norm{\ma}_\infty\eps^{-1}}$-pass algorithm for obtaining $\eps$-additive approximations to the value of box-simplex games. The remainder of our work is in applying this method to matching problems and providing semi-streaming implementations for rounding the solution.

\textbf{Detecting a sparse support via cycle cancelling.} So far, we have given a way of running a solver for the problems \eqref{eq:boxsimplex}, \eqref{eq:boxsimplexprimal} entirely in $O(n)$ space. However, it remains to discuss how to extract an MCM from these iterates; although we have a low-space implicit representation, the iterates themselves remain fully dense, and to write them down explicitly for postprocessing is too expensive. To get around this obstacle, we give a data structure which is compatible with our implicit representation, and can take a flow $x$ given as a sequence of edge additions and obtain an $O(n)$-sparse $x'$ such that $\norm{x}_1 = \norm{x'}_1$ and $\mb^\top x = \mb^\top x'$ (i.e.\ the ``quality of the matching'' is unaffected), but we can afford to explicitly write down $x'$. Our data structure is based on cycle cancelling via link/cut trees, a standard technique in the literature following e.g.\ \cite{SleatorT83, GoldbergT89}.

Ultimately, we show our framework of implicitly solving a box-simplex game and feeding the implicit representation of its solution into a link/cut tree data structure to detect a sparse support containing a good-quality approximate matching is quite flexible. We leverage this framework to also solve semi-streaming variants of optimal transport, weighted matching, and transshipment problems.

\subsection{Previous work}\label{sec:prev}

\smallskip
\noindent
\textbf{Maximum matching.} MCM and its many variants are arguably the most extensively studied problems in the graph streaming literature; see, e.g.~\cite{FeigenbaumKMSZ05,McGregor05,GoelKK12,KonradMM12,EggertKMS12,Kapralov13,AhnG13,KapralovKS14,AssadiKLY16,PazS17,AssadiKL17,KaleT17,Konrad18,Tirodkar18,AhnG18,AssadiBBMS19,GamlathKMS19,FarhadiHMRR20,Bernstein20,KapralovMNT20,Kapralov21} and references therein.

The first multi-pass MCM algorithm was given in~\cite{FeigenbaumKMSZ05}---alongside the introduction of the semi-streaming model itself---achieving a $(\frac23-\eps)$-approximation in $O(\eps^{-1} \log\eps^{-1})$ passes. 
The first $(1-\eps)$-approximation was then achieved by~\cite{McGregor05}, for both bipartite and non-bipartite graphs, using $(\eps^{-1})^{O(\eps^{-1})}$ passes.
The pass-complexity of this result for bipartite graphs was improved in~\cite{EggertKMS12,AhnG13,Kapralov13,AhnG18,AssadiLT21}, 
leading to the aforementioned state-of-the-art results of ~\cite{AhnG18, AssadiLT21}. More recently,~\cite{LiuSZ20} designed a semi-streaming algorithm for this problem with
$\otilde(\sqrt{m} \cdot \log{(\eps^{-1})})$ passes, which achieves better pass-dependence on $\eps$, at a cost of a polynomial dependence on $n,m$ in the number of passes. Finally,~\cite{FischerMU21} 
very recently gave the first $\poly(\eps^{-1})$-pass algorithm for this problem for non-bipartite graphs.

The aforementioned works focus on insertion-only streams. When allowing deletions to the stream, i.e.\ in the turnstile streaming model, the problem becomes significantly harder. In particular, 
the best approximation ratio possible to MCM in a single-pass is provably only $\Omega(n^{1/3})$~\cite{AssadiKLY16,DarkK20} which is achieved by the algorithms of~\cite{AssadiKLY16,ChitnisCEHMMV16} (see also~\cite{Konrad15}). Nevertheless, many multi-pass streaming algorithms for MCM can be extended to turnstile streams at the cost of an additional $O(\log{n})$ multiplicative factor in their number of passes~\cite{AhnGM12}. 
It is worth noting that the algorithm of~\cite{AhnG18}  directly works in the turnstile streaming model.

Alongside MCM, maximum weight matching (MWM)  has also been studied extensively. Some key results include $(\frac12-\eps)$-approximation in a single pass~\cite{PazS17}, 
and $(1-\eps)$-approximation in $(\eps^{-1})^{O(\eps^{-2})}$~\cite{GamlathKMS19}, $O(\eps^{-2} \log\eps^{-1})$~\cite{AhnG13}, and $O(\eps^{-1} \log n)$~\cite{AhnG18} passes. The exact
algorithm of~\cite{LiuSZ20} in~$\otilde(\sqrt{m})$ also works for the more general MWM problem.

We also note that several other streaming variants of MCM and MWM have also been studied, including random-order streams~\cite{KonradMM12,FarhadiHMRR20,Konrad18,KapralovKS14,AssadiBBMS19,Bernstein20,AssadiB21}, $n^{1+\Omega(1)}$-space algorithms \cite{AssadiBBMS19,AhnG18,BehnezhadDETY17}, or \emph{size} estimation in $o(n)$-space~\cite{KapralovKS14,EsfandiariHLMO15,AssadiKL17,McGregorV18}. Reviewing this vast literature  is beyond the scope of this paper and we refer the interested reader to these references. 

As in many computational models, MCM and MWM have been a testbed for development and adaptation of various algorithmic tools for streaming
including local-ratio algorithms~\cite{PazS17,GamlathKMS19}, augmenting paths~\cite{FeigenbaumKMSZ05,McGregor05,EggertKMS12}, water-filling or auction algorithms~\cite{Kapralov13,AssadiLT21}, 
iterative methods e.g.\ multiplicative weights~\cite{AhnG13,AhnG18} and interior-point methods~\cite{LiuSZ20}.

\smallskip
\noindent
\textbf{Optimal transport.} The design of efficient algorithms for discrete optimal transportation has received widespread attention in recent years~\cite{Cuturi13,AltschulerWR17,AltschulerBRW19,LinHJ19}, due in large part to its many applications in modern statistical modeling and machine learning. To our knowledge, ours is the first result which applies to this problem in the semi-streaming setting (though as we remark earlier, it may be possible to combine prior-work to achieve non-trivial results). Our approach for optimal transport follows straightforwardly from our algorithms for solving the fractional maximum matching problem, via known reductions in the literature.

\smallskip
\noindent
\textbf{Transshipment and shortest path.} The transshipment problem in the semi-streaming was previously studied in~\cite{BeckerKKL17} which designed an $\widetilde{O}(\eps^{-2})$ pass semi-streaming algorithm for this problem using 
gradient descent. This algorithm in turn allowed~\cite{BeckerKKL17} to obtain $\eps$-approximate semi-streaming algorithms for the (single-source) shortest path problem in $\widetilde{O}(\eps^{-2})$ passes, improving 
upon the $n^{o(1)}$-pass algorithm of~\cite{HenzingerKN16} that required $n^{1+o(1)}$-space (more than the restriction of semi-streaming algorithms). More recently,~\cite{ChangFHT20} gave a semi-streaming algorithm for finding exact single-source 
shortest paths in $\widetilde{O}(\sqrt n)$ passes. On the lower bound front,~\cite{GuruswamiO13} showed that $\eps$-approximate semi-streaming algorithms for the $s$-$t$ shortest path problem require $\Omega(\eps^{-1})$ passes (see also~\cite{AssadiR20,KPS0Y21} for more  advances on this front).

\section{Preliminaries}
\label{sec:prelims}

\medskip
\noindent
\textbf{General notation.} We use $[n]$ to denote $\{1, ... , n\}$ and  $\0_n$ and $\1_n$ to denote the all-zeros and all-ones vectors in $\R^n$. An event holds with high probability if it holds with probability $\ge 1 - n^{-c}$ for $c > 0$ (in our results, $c$ can be chosen to be arbitrarily large affecting guarantees by constant factors). We use $\nnz(\ma)$ to denote the number of nonzero entries in a matrix $\ma$, and its row $i$ and column $j$ are denoted $\ai$ and $\aj$. We say scalar $\alpha > 0$ is an $\eps$-multiplicative approximation to scalar $\beta > 0$ if $|\tfrac{\alpha}{\beta} - 1| \le \eps$; similarly, we say it is an $\eps$-additive approximation if $|\alpha - \beta| \le \eps$. For a vector $v$, we use $v_+$ or $\Par{v}_+$ to denote the vector which entrywise has $\Brack{v_+}_i = \max\Par{v_i, 0}$, and use $v_S\in\R^{n}$ to denote the vector which zeroes out entries of $v$ in $[n] \setminus S$ for $S \subseteq [n]$.

\medskip
\noindent
\textbf{Norms.} We let $\norm{v}_p$ denote the $\ell_p$ norm of vector $v$, and $\norm{\mm}_p$ denote the $\ell_p$ operator norm of a matrix. In particular, $\norm{\mm}_\infty$ is the maximum $\ell_1$ norm over rows of $\mm$, and $\norm{\mm}_1$ is the maximum $\ell_1$ norm over columns. We let $\Delta^m \subset \R^m_{\ge 0}$ denote the nonnegative simplex, so $x \in \Delta^m \iff \norm{x}_1 = 1$ and entrywise $x\ge 0$. Finally, we let $\norm{v}_0$ be the number of nonzero entries of a vector $v$.

\medskip
\noindent
\textbf{Graphs.} We denote a graph by $G = (V, E)$, and define $n = |V|$ and $m = |E|$  when context is clear. We refer to the independent vertex subsets of a bipartite graph by $L$ and $R$. We denote the (unsigned edge-vertex) incidence matrix of a graph by $\mb \in \R_{\ge 0}^{E \times V}$, where $\mb_{ev} = 1$, if and only if $v \in V$ is an endpoint of $e$ and is zero otherwise. We refer to any assignment of values to edges $f \in \R^E$ as a \emph{flow}. When the graph is weighted (with associated edge weights $w \in \R_{\ge 0}^E$), we refer to the graph by $G = (V, E, w)$. 

\medskip
\noindent
\textbf{Computation model.} Throughout, we operate in the standard word RAM model of computation, where basic arithmetic operations on $O(\log n)$-bit words can be performed in constant time. For weighted graphs, we assume all weights can be stored in $O(1)$ words, and all results concerning additional memory overhead count the number of additional words necessary for algorithms. We also assume always that $\eps^{-1} = O(\text{poly}(n))$. We measure space overhead complexity by the number of words used. In other related computational models, this may increase our space or work complexities by a logarithmic factor. 	%

\section{Box-simplex games in low space}
\label{sec:value}

In this section, we provide semi-streaming algorithms (cf.\ Definition~\ref{def:semistream_matrix}) for approximately solving box-simplex bilinear games of the form \eqref{eq:boxsimplex}, which will yield approximate minimizers for the $\ell_1$ regression problem. 
 The complexity of our algorithms depend on the quantity $\norm{\ma}_\infty$. We prove in this section that we can implement recent improved algorithms for problems of the form \eqref{eq:boxsimplex} based on \emph{area-convex regularization} \cite{Sherman17, JambulapatiST19, CohenST21} in the semi-streaming model in low memory, and give the guarantees as Theorem~\ref{thm:lss}.

Our algorithm for Theorem~\ref{thm:lss} is a low-space implementation of a primal-dual algorithm inspired by \cite{Sherman17}, specialized to box-simplex games (as analyzed in \cite{JambulapatiST19, CohenST21}). Specifically, the algorithms and analyses of \cite{Sherman17, JambulapatiST19, CohenST21} are examples of \emph{extragradient methods}, and are based on creating custom analyses for the classical algorithms of \cite{Nemirovski04, Nesterov07} tailored to the specific structure of box-simplex games. Our main algorithm, Algorithm~\ref{alg:sherman}, follows the framework of \cite{CohenST21}, who analyzed the convergence of a mirror prox scheme for solving \eqref{eq:boxsimplex}. Lines 5 and 6 of the algorithm are implemented with $\AltMin$ (Algorithm~\ref{alg:altmin}), an alternating minimization subroutine. Finally, we note that we peform pre-processing and post-processing of the problem in Lines 1 and 2, so that our final guarantees only depend on properties of $\ma$ (and not $b$ or $c$). The result of these steps only affect the problem by a constant scalar shift, and can be implemented in one semi-streaming pass (a more detailed statement and discussion is given as Lemma~\ref{lem:truncateb} in Appendix~\ref{app:deferredsherman}).
 
\notarxiv{
\begin{algorithm}[ht!]
	\caption{$\Sherman(\ma, b, c, \epsilon, T)$}
	\begin{algorithmic}[1]\label{alg:sherman}
		\STATE \textbf{Input:} $\ma \in \R^{m \times n}_{\ge 0}$, $c \in \R^m$, $b \in \R^n$, $0 \le \eps \le \norm{\ma}_\infty$, $T \in \N$
		\STATE $c \gets \max((\cmax- 2\norm{\ma}_\infty)\1, c) - (\cmax - 2\norm{\ma}_\infty)\1$ entrywise, where $\cmax \defeq \max_{i \in [m]} c_i$.
		\COMMENT{\emph{This ensures $\norm{c}_\infty \le 2\norm{\ma}_\infty$ as required by Lemma 5 of \cite{CohenST21} for general box-simplex games, so $c$ to lies in $[0, 2\norm{\ma}_\infty]$. It is implementable by storing $\norm{c}_\infty$ for semi-streaming applications. For all our combinatorial applications this step is unnecessary.}}
		\STATE $r(x, y) \defeq x^\top \ma (y^2) + 10\norm{\ma}_\infty \sum_{i \in [m]} x_i \log x_i$, $g(x, y) \defeq (\ma y + c, -\ma^\top x + b)$
		\STATE $z_0 \gets (\tfrac{1}{m}\1_m, \mzero_n)$ 
		\FOR{$0 \le t < T$}
		\STATE $w_t \defeq (x'_t, y'_t) \gets \tfrac{\eps}{2}$-approx.\ minimizer to $\inprod{\tfrac 1 3 g(z_t) - \nabla r(z_t)}{w} + r(w)$ in $\Delta^m \times [-1, 1]^n$ calling $\AltMin(\gamma\x, \gamma\y, \ma, K, x_t, y_t)$ with $K = O(\log\tfrac{\max(\norm{\ma}_\infty, \norm{c}_\infty, \norm{b}_1)}{\eps})$,
		$\gamma\x = \tfrac 1 3 (\ma y_t + c) - 10\norm{\ma}_\infty\log x_t - \ma (y_t^2)$,	$\gamma\y = \frac 1 3 \Par{b - \ma^\top x_t} - 2\diag{y_t}\ma^\top x_t.$
		\STATE $z_{t + 1} \defeq (x_{t + 1}, y_{t + 1}) \gets \tfrac{\eps}{2}$-approx.\ minimizer to $\inprod{\tfrac 1 3 g(w_t) - \nabla r(z_t)}{z} + r(z)$ in $\Delta^m \times [-1, 1]^n$ calling $\AltMin(\gamma\x, \gamma\y, \ma, K, x_t, y_t)$ with $K = O(\log\tfrac{\max(\norm{\ma}_\infty, \norm{c}_\infty, \norm{b}_1)}{\eps})$,
		$\gamma\x = \tfrac 1 3 (\ma y'_t + c) - 10\norm{\ma}_\infty\log x_t - \ma (y_t^2)$, $\gamma\y = \frac 1 3 \Par{b - \ma^\top x'_t} - 2\diag{y_t}\ma^\top x_t.$
		\ENDFOR
	\end{algorithmic}
\end{algorithm}
}

\arxiv{
\begin{algorithm}
	
	\DontPrintSemicolon
	\KwInput{$\ma \in \R^{m \times n}_{\ge 0}$, $c \in \R^m$, $b \in \R^n$, $0 \le \eps \le \norm{\ma}_\infty$, $T \in \N$}
	Remove all coordinates $i \in [m]$ where $c_i > \min_{i^* \in [m]} c_{i^*} + 2\norm{\ma}_\infty$ from $c$ and corresponding rows of $\ma$ from the problem (ignoring the entries from the stream), then set $c \gets c - (\min_{i^* \in [m]} c_{i^*})\1$ entrywise
	\label{alg:sherman-pre-1}\;
	$b\gets  \min\left(\max\left(b,-\|\ma\|_\infty\right),\|\ma\|_\infty\right)$\label{alg:sherman-pre-2}\;
	$r(x, y) \defeq x^\top |\ma| (y^2) + 10\norm{\ma}_\infty \sum_{i \in [m]} x_i \log x_i$, $g(x, y) \defeq (\ma y + c, -\ma^\top x + b)$\;
	$z_0 \gets (\tfrac{1}{m}\1_m, \mzero_n)$\;
	\For{$0 \le t < T$}
	{
		
		$\gamma\x \gets \tfrac 1 3 (\ma y_t + c) - 10\norm{\ma}_\infty\log x_t - |\ma| (y_t^2)$\;
		$\gamma\y \gets \frac 1 3 \Par{b - \ma^\top x_t} - 2\diag{y_t}|\ma|^\top x_t$\; 
		$w_t \defeq (x'_t, y'_t) \gets \AltMin(\gamma\x, \gamma\y, \ma, K, x_t, y_t)$ with $K = O(\log\tfrac{\max(\norm{\ma}_\infty, \norm{b}_1)}{\eps})$ giving $\tfrac{\eps}{2}$-approx.\ minimizer to $\inprod{\tfrac 1 3 g(z_t) - \nabla r(z_t)}{w} + r(w)$ in $\Delta^m \times [-1, 1]^n$ \label{alg:sherman-sub-1}\;
			$\gamma\x \gets \tfrac 1 3 (\ma y'_t + c) - 10\norm{\ma}_\infty\log x_t - |\ma| (y_t^2)$\;
			$\gamma\y \gets \frac 1 3 \Par{b - \ma^\top x'_t} - 2\diag{y_t}|\ma|^\top x_t$\; 
		$z_{t + 1} \defeq (x_{t + 1}, y_{t + 1}) \gets \AltMin(\gamma\x, \gamma\y, \ma, K, x_t, y_t)$ with $K = O(\log\tfrac{\max(\norm{\ma}_\infty, \norm{b}_1)}{\eps})$ giving $\tfrac{\eps}{2}$-approx.\ minimizer to $\inprod{\tfrac 1 3 g(w_t) - \nabla r(z_t)}{z} + r(z)$ in $\Delta^m \times [-1, 1]^n$ \label{alg:sherman-sub-2}\;
	}
	\caption{$\Sherman(\ma, b, c, \epsilon, T)$}\label{alg:sherman}
\end{algorithm}
}

\notarxiv{
\begin{algorithm}[ht!]
	\caption{$\AltMin(\gamma\x, \gamma\y, \ma, \delta, K, x_{\text{init}}, y_{\text{init}})$}
	\begin{algorithmic}[1]\label{alg:altmin}
		\STATE \textbf{Input:} $\ma \in \R^{m \times n}_{\ge 0}$, $\gamma\x \in \R^m$, $\gamma\y \in \R^n$, $K \in \N$
		\STATE \textbf{Output:} Approx.\ minimizer to $\inprod{(\gamma\x, \gamma\y)}{z} + r(z)$ for $r(z)$ in Line 2, Algorithm~\ref{alg:sherman}
		\STATE $x^{(0)} \gets x_{\text{init}}$, $y^{(0)} \gets y_{\text{init}}$
		\FOR{$0 \le k < K$}
		\STATE $x^{(k + 1)} \gets \argmin_{x \in \Delta^m}\Brace{\inprod{\gamma\x}{x} + r(x, y^{(k)})}$
		\STATE $y^{(k + 1)} \gets \argmin_{y \in [-1, 1]^n}\Brace{\inprod{\gamma\y}{y} + r(x^{(k + 1)}, y)}$
		\ENDFOR
		\RETURN $(x^{(K)}, y^{(K)})$
	\end{algorithmic}
\end{algorithm}}
\arxiv{
\begin{algorithm}[ht!]
		\KwInput{$\ma \in \R^{m \times n}_{\ge 0}$, $\gamma\x \in \R^m$, $\gamma\y \in \R^n$, $K \in \N$}
		\KwOutput{ Approx.\ minimizer to $\inprod{(\gamma\x, \gamma\y)}{z} + r(z)$ for $r(z)$ in Line~\ref{alg:sherman-sub-1} and~\ref{alg:sherman-sub-2}, Algorithm~\ref{alg:sherman}}
		 $x^{(0)} \gets x_{\text{init}}$, $y^{(0)} \gets y_{\text{init}}$\;
		\For{$0 \le k < K$}
		{
	 $x^{(k + 1)} \gets \argmin_{x \in \Delta^m}\Brace{\inprod{\gamma\x}{x} + r(x, y^{(k)})}$\;
		$y^{(k + 1)} \gets \argmin_{y \in [-1, 1]^n}\Brace{\inprod{\gamma\y}{y} + r(x^{(k + 1)}, y)}$\;
		}
		\Return $(x^{(K)}, y^{(K)})$
		\caption{$\AltMin(\gamma\x, \gamma\y, \ma, \delta, K, x_{\text{init}}, y_{\text{init}})$}\label{alg:altmin}
\end{algorithm}
}

The guarantees of Algorithms~\ref{alg:sherman} and~\ref{alg:altmin} were respectively demonstrated by \cite{CohenST21} and \cite{JambulapatiST19}, and summarized in Proposition~\ref{prop:shermancorrect}. The main technical modification in it is a tighter characterization of the alternating minimization subroutine, based on bounding the initial error independently of the number of iterations (whereas \cite{JambulapatiST19} assumed a worst-case bound on initial error). 

\begin{restatable}{proposition}{restateshermancorrect}\label{prop:shermancorrect}
By taking $T = O(\frac{\norm{\ma}_\infty \log m}{\eps})$ for a sufficiently large constant, Algorithm~\ref{alg:sherman} results in iterates $\{(x'_t, y'_t)\}_{0 \le t < T}$ so that $\bar{x} \defeq \tfrac{1}{T}\sum_{0 \le t < T} x'_t$ is an $\eps$-approximate minimizer to \eqref{eq:boxsimplexprimal}.
\end{restatable}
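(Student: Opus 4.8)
The plan is to derive this proposition by combining the area-convex mirror-prox convergence analysis of \cite{CohenST21} (which governs the outer loop of Algorithm~\ref{alg:sherman}) with the alternating-minimization analysis of \cite{JambulapatiST19} (which governs Algorithm~\ref{alg:altmin}), plus one new ingredient: a bound on the suboptimality of the warm start fed to $\AltMin$ that does not grow with $T$. Concretely I would proceed in four steps: (i) normalize away the dependence on $b$ and $c$; (ii) run the mirror-prox regret telescoping allowing inexact prox steps; (iii) certify the accuracy of the inner loop; (iv) read off the primal guarantee for $\bar x = \tfrac1T\sum_{t} x'_t$ from the resulting saddle-point gap.

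For (i), I would invoke Lemma~\ref{lem:truncateb} to assume that after Lines~\ref{alg:sherman-pre-1}–\ref{alg:sherman-pre-2} we have $c \in [0,2\norm{\ma}_\infty]^m$ and $\norm{b}_\infty \le \norm{\ma}_\infty$: deleting a coordinate $i$ with $c_i > \min_{i^*} c_{i^*} + 2\norm{\ma}_\infty$ leaves the optimum of \eqref{eq:boxsimplexprimal} unchanged (such a coordinate is dominated by the minimizer of $c$), and shifting $c$ / clipping $b$ changes the value of \eqref{eq:boxsimplex}, \eqref{eq:boxsimplexprimal} only by an explicit constant, both computable in one pass. Under this normalization, \cite{JambulapatiST19} established that the regularizer $r$ of \eqref{eq:regsubproblem} is area-convex for the coupling $y^\top\ma^\top x$ over $\Delta^m\times[-1,1]^n$ at the scaling used in Lines~\ref{alg:sherman-sub-1}–\ref{alg:sherman-sub-2}. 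For (ii), Lines~\ref{alg:sherman-sub-1}–\ref{alg:sherman-sub-2} form exactly one mirror-prox step with Bregman divergence $V^r$, extrapolation point $w_t = (x'_t,y'_t)$, and update $z_{t+1}$; area-convexity yields, for every $u = (x,y) \in \Delta^m \times [-1,1]^n$, the single-step inequality $\inprod{g(w_t)}{w_t - u} \le V^r_{z_t}(u) - V^r_{z_{t+1}}(u) + \delta_t$, where $g = (\ma y + c,\ b - \ma^\top x)$ and $\delta_t = O(\eps)$ absorbs the fact that the two subproblems are only solved to additive value-accuracy $\tfrac\eps2$. Summing over $0\le t<T$, telescoping, and dividing by $T$ gives $\tfrac1T\sum_t \inprod{g(w_t)}{w_t-u} \le \tfrac{\Theta_r}{T} + O(\eps)$, where $\Theta_r \defeq \sup\Brace{V^r_{z'}(z) : z,z' \in \Delta^m\times[-1,1]^n}$ is $O(\norm{\ma}_\infty\log m)$ (the entropy term contributes $\le 10\norm{\ma}_\infty\log m$, and $x^\top|\ma|(y^2) \le \norm{\ma}_\infty$ since $\norm{x}_1=1$, $\norm{y}_\infty\le1$, and row sums of $|\ma|$ are $\le\norm{\ma}_\infty$). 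Hence $T = O(\norm{\ma}_\infty\log m\cdot\eps^{-1})$ with a large enough constant makes the bound $O(\eps)$; since $f(x,y)\defeq y^\top\ma^\top x + c^\top x - b^\top y$ is bilinear with gradient operator $g$, the left-hand side dominates $f(\bar x,y)-f(x,\bar y)$, so taking the supremum over $u$ shows $(\bar x,\bar y)$ has $O(\eps)$ saddle-point gap.

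The main obstacle is step (iii): showing that $K = O(\log(\max(\norm{\ma}_\infty,\norm{b}_1)\,\eps^{-1}))$ iterations of $\AltMin$, warm-started at $(x_t,y_t)$, suffice to solve each subproblem $\min_{z\in\Delta^m\times[-1,1]^n}\inprod{\gamma}{z} + r(z)$ to additive accuracy $\tfrac\eps2$ in value. Block-wise this objective is an entropy-regularized linear program in $x$ and a separable strongly convex quadratic in $y$, so \cite{JambulapatiST19} supplies closed forms for each alternating step and a constant-factor per-step contraction of the objective suboptimality; it therefore remains to bound the suboptimality of the warm start by a $T$-independent polynomial in the problem parameters — precisely the place where \cite{JambulapatiST19}'s worst-case assumption on initial error is replaced. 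I would do this by bounding $\norm{\gamma}_\infty$: the potentially large piece of $\gamma\x$ is the term $10\norm{\ma}_\infty\log x_t$, whose size is controlled once one shows inductively that no simplex iterate ever has a coordinate below $\exp(-\text{poly}(m,\norm{\ma}_\infty,\norm{b}_1))$ — which follows from the closed-form multiplicative updates together with $z_0 = (\tfrac1m\1_m, \mzero_n)$ and the $10\norm{\ma}_\infty$-strong entropy term — while the remaining terms of $\gamma$ are $O(\norm{\ma}_\infty)$; since the warm start is feasible, its objective gap to the optimum is at most the variation of the now polynomially bounded objective over the bounded domain. Geometric contraction then drives this $\text{poly}$ initial error below $\tfrac\eps2$ in $O(\log(\cdot\,\eps^{-1}))$ steps.

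Finally, for (iv), $\max_{y\in[-1,1]^n}f(x,y) = c^\top x + \norm{\ma^\top x - b}_1$ (maximizing the bilinear term coordinatewise over the box recovers the $\ell_1$ norm), and the saddle value of $f$ over $\Delta^m\times[-1,1]^n$ equals the optimum of \eqref{eq:boxsimplexprimal}; plugging the $O(\eps)$ gap bound on $(\bar x,\bar y)$ into these identities gives $c^\top\bar x + \norm{\ma^\top\bar x - b}_1 \le \min_{x'\in\Delta^m}\Par{c^\top x' + \norm{\ma^\top x' - b}_1} + O(\eps)$. After absorbing the $b,c$-normalization shift of Lemma~\ref{lem:truncateb} and adjusting the constant in $T$, $\bar x = \tfrac1T\sum_{0\le t<T}x'_t$ is an $\eps$-approximate minimizer of \eqref{eq:boxsimplexprimal}, as claimed.
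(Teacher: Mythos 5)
Your steps (i), (ii), and (iv) follow the same route as the paper (normalize $b,c$ via Lemma~\ref{lem:truncateb}, cite/re-derive the $T = O(\norm{\ma}_\infty \log m \cdot \eps^{-1})$ mirror-prox bound of \cite{CohenST21} with domain range $O(\norm{\ma}_\infty \log m)$, and convert the saddle-point gap of $(\bar x,\bar y)$ into primal suboptimality for \eqref{eq:boxsimplexprimal}). The gap is in step (iii), which you correctly flag as the crux. You propose to control the warm-start suboptimality of each $\AltMin$ call by bounding $\norm{\gamma\x}_\infty$, which forces you to lower bound every coordinate of every simplex iterate by $\exp(-\poly(m,\norm{\ma}_\infty,\norm{b}_1))$ \emph{independently of $T$}. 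That inductive claim is not justified and, without invoking the global assumption $\eps^{-1}=\poly(n)$, is false in the worst case: the iterates satisfy $x_t \propto \exp(\ma v_t + |\ma| u_t + \lam_t c)$ with $v_{t+1} = v_t - y'_t/(30\norm{\ma}_\infty)$ and $\lam_{t+1} = \lam_t - 1/(30\norm{\ma}_\infty)$, so the spread of the exponent vector can grow by $\Theta(1)$ per outer iteration and coordinates of $x_t$ can shrink like $\exp(-\Omega(t))$, i.e.\ down to $\exp(-\Omega(\norm{\ma}_\infty\log m\cdot\eps^{-1}))$ (this is precisely why Lemma~\ref{lem:implsimplex} includes a numerical padding step). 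Even if you absorb this into a polynomial using $\eps^{-1}=\poly(n)$, your warm-start error bound becomes $\norm{\ma}_\infty\cdot\poly(m,\norm{\ma}_\infty,\norm{b}_1,\eps^{-1})$, so the number of alternating steps you can certify is $O(\log m + \log(\norm{\ma}_\infty\eps^{-1}))$ — more than the $K=O(\log(\max(\norm{\ma}_\infty,\norm{b}_1)\cdot\eps^{-1}))$ hard-coded in Algorithm~\ref{alg:sherman}, which is what the proposition asserts suffices; downstream this extra additive $\log m$ would also degrade, e.g., the pass count of Theorem~\ref{thm:match-rounding} from $\log(\eps^{-1})$ to $\log(\eps^{-1})+\log n$ per $\eps^{-1}\log n$.

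The paper's argument sidesteps the magnitude of $\log x_t$ entirely. After subtracting $\nabla r(z_t)$, the entropic part of the subproblem objective is the relative entropy $10\norm{\ma}_\infty\sum_{i\in[m]} x_i\log(x_i/[x_t]_i)$, which is $0$ at the warm start $x^{(0)}=x_t$, and at the subproblem \emph{minimizer} it is $O(\norm{\ma}_\infty)$ because Theorem~2 of \cite{CohenST21} guarantees the minimizer's $x$-block lies entrywise within a multiplicative factor $2$ of $x_t$. Combined with the $O(\max(\norm{\ma}_\infty,\norm{b}_1))$ range of the remaining (linear and $x^\top|\ma|(y^2)$) terms over $\Delta^m\times[-1,1]^n$ (using $\norm{c}_\infty\le 2\norm{\ma}_\infty$ after Line~\ref{alg:sherman-pre-1}), the warm-start gap is $O(\max(\norm{\ma}_\infty,\norm{b}_1))$ regardless of $T$ or of how small the coordinates of $x_t$ have become, and the per-step geometric contraction of \cite{JambulapatiST19} then yields exactly the stated $K$. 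Replacing your coordinate-lower-bound argument with this multiplicative-stability argument closes the gap; the rest of your outline is consistent with the paper's proof.
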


We defer a proof of Proposition~\ref{prop:shermancorrect} to Appendix~\ref{app:deferredsherman}, as it largely stems from ideas used in prior work. We next turn to our main technical contribution, the development of semi-streaming algorithms for implementing Algorithms~\ref{alg:sherman} and~\ref{alg:altmin} (Lemma~\ref{lem:implsimplex}), the primary innovation of this section. Concretely, our implementation has the following key property: in every iteration all box variables are maintained explicitly, and all simplex variables are maintained implicitly as points proportional to $\exp(\ma v + |\ma| u + \lam c)$, for some vectors $v,u \in \R^n$ and scalar $\lam$ computed and stored in $O(n)$ space. We first make a crucial technical observation regarding computing gradients with a single pass.

\begin{lemma}\label{lem:implicitx}
	For $v, u \in \R^n$ and $\lam \in \R$ there is a semi-streaming algorithm computing $\norm{\exp(\ma v + |\ma|u + \lam c)}_1$, $\ma^\top \exp(\ma v + |\ma|u + \lam c)$, $|\ma|^\top \exp(\ma v + |\ma|u + \lam c)$ in one pass, $O(n)$ %
	space, and $\nnz(\ma)$ work.
\end{lemma}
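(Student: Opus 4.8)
The plan is to exploit the fact that the vector $\exp(\ma v + |\ma| u + \lam c)$ has an $i$-th coordinate that depends only on the $i$-th row $\ai$ of $\ma$ and the $i$-th coordinate $c_i$ of $c$, which are exactly the data revealed together at step $i$ of the stream (by Definition~\ref{def:semistream_matrix}). Concretely, write $\beta_i \defeq \exp(\inprod{\ai}{v} + \inprod{|\ai|}{u} + \lam c_i)$ for each $i \in [m]$; then $\norm{\exp(\ma v + |\ma|u + \lam c)}_1 = \sum_{i \in [m]} \beta_i$, while $\ma^\top \exp(\ma v + |\ma|u + \lam c) = \sum_{i \in [m]} \beta_i \ai$ and $|\ma|^\top \exp(\ma v + |\ma|u + \lam c) = \sum_{i \in [m]} \beta_i |\ai|$, where $|\ai|$ denotes the entrywise absolute value of $\ai$.

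First I would maintain, in $O(n)$ space, three accumulators: a scalar $S$ (for the $\ell_1$ norm), and two $n$-dimensional vectors $p, q \in \R^n$ (for $\ma^\top(\cdot)$ and $|\ma|^\top(\cdot)$ respectively), all initialized to zero. We make one pass over the stream. When the pair $(\ai, c_i)$ arrives, we have $v, u \in \R^n$ and $\lam \in \R$ already stored, so we compute the two inner products $\inprod{\ai}{v}$ and $\inprod{|\ai|}{u}$ in $O(\nnz(\ai))$ arithmetic operations, form $\beta_i$ with one exponentiation, and then perform the updates $S \gets S + \beta_i$, $p \gets p + \beta_i \ai$, $q \gets q + \beta_i |\ai|$, each touching only the $\nnz(\ai)$ nonzero coordinates of $\ai$. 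At the end of the pass, $(S, p, q)$ are exactly the three desired quantities. The space is $O(n)$ (the three accumulators plus $v, u, \lam$), the pass count is one, and the total work is $\sum_{i \in [m]} O(\nnz(\ai)) = O(\nnz(\ma))$ as claimed.

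The only subtlety, rather than a genuine obstacle, is numerical: the entries $\beta_i$ could be exponentially large or small, so one should note (as the paper does via its word-RAM / $\poly(n)$-bit-precision conventions in Section~\ref{sec:prelims}, and the boundedness of $v, u, \lam$ guaranteed by the analysis of Algorithm~\ref{alg:sherman}) that all intermediate values have polynomially many bits and hence fit in $O(1)$ machine words up to the problem's stated precision; thus each arithmetic operation is $O(1)$ time and the accumulators never exceed $O(n)$ words. No further ingredients are needed — the lemma is essentially the observation that the map $w \mapsto \ma^\top w$ (and $w \mapsto |\ma|^\top w$) ``decouples by row'' and so can be streamed coordinate-by-coordinate, combined with the fact that the weight $\beta_i$ applied to row $i$ is itself computable from row $i$ alone given the $O(n)$-size state $(v, u, \lam)$.
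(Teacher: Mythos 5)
Your proof is correct and follows essentially the same route as the paper's: both observe that the $i$-th coordinate $\exp(\inprod{\ai}{v} + \inprod{|\ai|}{u} + \lam c_i)$ depends only on the streamed pair $(\ai, c_i)$, and accumulate the scalar sum and the two $n$-dimensional sums $\sum_i \beta_i \ai$, $\sum_i \beta_i |\ai|$ row-by-row in $O(n)$ space, one pass, and $O(\nnz(\ma))$ work. Your extra remark on numerical precision is a fine addition (the paper defers this discussion to the stability paragraph in the proof of Lemma~\ref{lem:implsimplex}), but it is not needed here.
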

\begin{proof}
	First, to compute $\norm{\exp(\ma v + |\ma|u +\lam c)}_1$, for each $i \in [m]$ the algorithm sequentially computes each $\exp(\inprod{\ma_{i:}}{v} + \inprod{|\ma_{i:}|}{u} + \lam c_i)$ through the pass and sums these values. Second, note
\begin{align*}
 	\ma^\top \exp(\ma v + |\ma|u + \lam c) & = \sum_{i \in [m]} \ma_{i:} \exp(\inprod{\ma_{i:}}{v} + \inprod{|\ma_{i:}|}{u} + \lam c_i),\\
 	|\ma|^\top \exp(\ma v + |\ma|u + \lam c) & = \sum_{i \in [m]} |\ma_{i:}| \exp(\inprod{\ma_{i:}}{v} + \inprod{|\ma_{i:}|}{u} + \lam c_i).
\end{align*}
Throughout the pass, the algorithm adds $\exp(\inprod{\ma_{i:}}{v} + \inprod{|\ma_{i:}|}{u} + \lam c_i)$ to the coordinates of the resulting vector in $\ma$ and $|\ma|$ it contributes to (specified by the row given in the stream).
\end{proof}

Using ideas from Lemma~\ref{lem:implicitx}, we demonstrate that the entirety of Algorithm~\ref{alg:sherman} and its subroutine Algorithm~\ref{alg:altmin} can be implemented in low space. \notarxiv{We defer the proof to Appendix~\ref{app:helpersherman}.}

\begin{restatable}{lemma}{restateimplicitx}\label{lem:implsimplex}
Consider an instance of Algorithm~\ref{alg:sherman} using Algorithm~\ref{alg:altmin} to implement Lines 5 and 6. Throughout, the following invariants on the iterates hold:
\[x_t \propto \exp(\ma v_t +  |\ma|u_t + \lam_t c),\; x'_t \propto \exp(\ma v'_t + |\ma|u'_t+  \lam'_t c) \text{ for some } v_t,  v'_t, u_t,u'_t \in \R^n,\; \lam_t, \lam'_t \in \R.\]
Given their values in the previous iteration, in every iteration of Lines $6$-$11$ of Algorithm~\ref{alg:sherman} we can compute such $(v_t, v'_t, u_t, u'_t, \lam_t, \lam'_t, y_t, y'_t)$ in $O(n)$ space. Each iteration takes $O(K)$ passes and $O(\nnz(\ma) \cdot K)$ total work, where $K$ is the input parameter to Algorithm~\ref{alg:altmin}.
\end{restatable}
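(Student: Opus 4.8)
The plan is to establish the invariant by induction on the iteration counter, maintaining throughout that every \emph{box} iterate (namely $y_t$, $y'_t$, and the intermediate $y^{(k)}$ produced inside $\AltMin$) is stored explicitly as an $O(n)$-dimensional vector, while every \emph{simplex} iterate ($x_t$, $x'_t$, and the intermediate $x^{(k)}$) is stored only as a tuple $(v,u,\lam) \in \R^n \times \R^n \times \R$ through the relation $x \propto \exp(\ma v + |\ma| u + \lam c)$. The base case is immediate: $z_0 = (\tfrac1m\1_m, \0_n)$ and $\tfrac1m\1_m \propto \exp(\0)$, so $(v_0,u_0,\lam_0) = (\0,\0,0)$ and $y_0 = \0_n$. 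The inductive step processes one pass through the body of the outer loop of Algorithm~\ref{alg:sherman}.

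The key structural observation is that both gradient-construction lines yield an $m$-dimensional vector $\gamma\x$ that lies in $\mathrm{span}\Par{\{\text{columns of }\ma\}\cup\{\text{columns of }|\ma|\}\cup\{c\}}$ up to an additive multiple of $\1_m$. Indeed, $\ma y_t$ is literally $\ma$ applied to an $n$-vector, $|\ma|(y_t^2)$ is $|\ma|$ applied to one, $c$ is in the span trivially, and by the inductive hypothesis $10\norm{\ma}_\infty\log x_t = 10\norm{\ma}_\infty(\ma v_t + |\ma| u_t + \lam_t c) + (\mathrm{const})\1_m$. Crucially, the $(\mathrm{const})\1_m$ shift is irrelevant: $\gamma\x$ is only ever consumed by an $\argmin$ over $\Delta^m$, so we never need the normalizing constant of $x_t$ and may drop that term. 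This produces, in $O(n)$ arithmetic and \emph{no} streaming pass, an implicit representation $\gamma\x \equiv \ma a + |\ma| b + d\,c$. The companion vector $\gamma\y$ is $n$-dimensional, and forming it requires only $\ma^\top x_t$ and $|\ma|^\top x_t$ together with the explicitly stored $b$ and $y_t$; by Lemma~\ref{lem:implicitx}, the products $\ma^\top\exp(\ma v_t + |\ma| u_t + \lam_t c)$, $|\ma|^\top\exp(\cdot)$, and $\norm{\exp(\cdot)}_1$ (whence $\ma^\top x_t$, $|\ma|^\top x_t$ after dividing by the norm) are computable in one pass and $O(n)$ space.

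Next I would run through $\AltMin$. Its $x$-step $x^{(k+1)} = \argmin_{x\in\Delta^m}\Brace{\inprod{\gamma\x}{x} + r(x, y^{(k)})}$ has the standard entropic closed form $x^{(k+1)} \propto \exp\Par{-\tfrac{1}{10\norm{\ma}_\infty}\Par{\gamma\x + |\ma|(y^{(k)})^2}}$; substituting the implicit form of $\gamma\x$ shows $x^{(k+1)} \propto \exp(\ma v^{(k+1)} + |\ma| u^{(k+1)} + \lam^{(k+1)} c)$ with $(v^{(k+1)}, u^{(k+1)}, \lam^{(k+1)})$ obtained from $(a,b,d)$ and $y^{(k)}$ by $O(n)$ arithmetic and no pass. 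The $y$-step $y^{(k+1)} = \argmin_{y\in[-1,1]^n}\Brace{\inprod{\gamma\y}{y} + r(x^{(k+1)}, y)}$ decouples coordinatewise into one-dimensional clipped quadratics, with minimizer $y^{(k+1)}_j = \mathrm{clip}_{[-1,1]}\Par{-\gamma\y_j/(2(|\ma|^\top x^{(k+1)})_j)}$, so it needs only $|\ma|^\top x^{(k+1)}$, again available in one pass and $O(n)$ space via Lemma~\ref{lem:implicitx}. Thus each of the $K$ alternating rounds costs $O(1)$ passes and $O(\nnz(\ma))$ work, so the call returns $w_t = (x'_t, y'_t)$ in the claimed form within $O(K)$ passes, $O(\nnz(\ma)\cdot K)$ work, and $O(n)$ space. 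Repeating the gradient construction (now with $y'_t$, $x'_t$) and the second $\AltMin$ call produces $z_{t+1} = (x_{t+1}, y_{t+1})$ identically, which closes the induction and, summed over the loop body, gives the stated per-iteration bounds. (The preprocessing steps modifying $c$ and truncating $b$, cf.\ the discussion around Lemma~\ref{lem:truncateb}, only need $\norm{c}_\infty$ and $\norm{\ma}_\infty$ and cost one pass and $O(n)$ space, so they do not affect the accounting.)

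The part requiring the most care is exactly this bookkeeping: one must check that \emph{every} quantity fed back into a simplex $\argmin$ remains in $\mathrm{span}(\ma,|\ma|,c)$ modulo $\1_m$, and in particular that the coordinatewise $y$-updates and the gradient operator $g$ never inject an arbitrary $m$-dimensional vector that would break the representation. Three facts make this go through: (i) the regularizer's $x$-part is precisely entropy plus a term linear in $x$ with coefficient $|\ma|(y^2)$, so every $x$-$\argmin$ is the exponential of an $\{\ma,|\ma|,c\}$-combination; (ii) previous iterates enter the next $\gamma\x$ only through $\log x_t$, which by induction is such a combination; and (iii) the only $m$-dimensional objects ever needed numerically are the three quantities of Lemma~\ref{lem:implicitx}, each of which is one pass. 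I anticipate no genuine obstacle beyond verifying these invariants cleanly and confirming that the $\1_m$ shifts are always absorbed.
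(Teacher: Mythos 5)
Your proposal is correct and takes essentially the same route as the paper: induction on the outer iteration, maintaining box variables explicitly and simplex variables implicitly via the tuple $(v,u,\lam)$, deriving the entropic closed form for the $x$-step and the clipped coordinatewise quadratic for the $y$-step, and invoking Lemma~\ref{lem:implicitx} to get $\ma^\top$, $|\ma|^\top$ products and normalizations in one pass each (your explicit observation that the $\1_m$ normalization shift in $\log x_t$ is absorbed by the simplex $\argmin$ is a point the paper's proof glosses over but uses implicitly). The only piece of the paper's proof you omit is a short remark on numerical stability (padding coordinates of the exponent that are more than $\Theta(\log m)$ below the maximum so that all stored quantities fit in $O(\log n)$-bit words), which is needed to fully justify the $O(n)$-word space claim but does not change the representational argument.
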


\arxiv{
\begin{proof}
	We proceed by induction on $t$; in the first iteration, we have $v_0 = u_0 = \mzero_n$, $\lam_0 = 0$.
	
	\textbf{Preserving the $w_t$ invariant.} Suppose inductively that $z_t = (x_t, y_t)$ where $x_t \propto \exp(\ma v_t + |\ma| u_t + \lam_t c)$ for explicitly stored values $v_t$, $u_t$, $\lam_t$, $y_t$; we will drop the index $t$ for simplicity and refer to these as $\bv$, $\bu$, $\blam$, $\by$. Consider the procedure Algorithm~\ref{alg:altmin} initalized with these values, and note that
	\begin{align*}\gamma\x &= \frac{1}{3}\Par{\ma \by + c} - 10\norm{\ma}_\infty\Par{\ma \bv + |\ma|\bu +  \blam c} - |\ma| \Par{\by^2},\\
	\gamma\y &=\frac{1}{3}\Par{b - \ma^\top \frac{\exp(\ma \bv + |\ma|\bu +  \blam c)}{\norm{\exp(\ma \bv + |\ma|\bu +  \blam c)}_1}} - 2\diag{\by} |\ma|^\top \frac{\exp(\ma \bv + |\ma|\bu +  \blam c)}{\norm{\exp(\ma \bv + |\ma|\bu +  \blam c)}_1}.\end{align*}
Since $r(x, y)  = \inprod{|\ma| (y^2)}{x} + 10\norm{\ma}_\infty\sum_{i \in [m]} x_i\log x_i$, we can compute that for each $0 \le k < K$,
	\begin{equation}\label{eq:altminstep}
	\begin{aligned}
	& x^{(k + 1)} = \argmin_{x \in \Delta^m}\Brace{\inprod{|\ma| \Par{(y^{(k)})^2} + \gamma\x}{x} + 10\norm{\ma}_\infty \sum_{i \in [m]} x_i \log x_i}\\
	\propto & \exp\Par{-\tfrac{1}{10\norm{\ma}_\infty}\Par{\ma \Par{\tfrac 1 3 \by - 10\norm{\ma}_\infty \bv} + |\ma|\Par{ - (\by)^2-10\norm{\ma}_\infty\bu+ (y^{(k)})^2}+ \Par{\tfrac 1 3 - 10\norm{\ma}_\infty\blam} c}},\\
	& y^{(k + 1)} = \argmin_{y \in [-1, 1]^n}\Brace{\inprod{\gamma\y}{y} + \inprod{|\ma|^\top x^{(k + 1)}}{y^2}} \\
	= & \text{median}\Par{-1, 1, -\frac{\gamma\y}{2|\ma|^\top x^{(k + 1)}}} \text{ entrywise.}
	\end{aligned}
	\end{equation}

	In the last line, the median operation truncates the vector $-\tfrac{\gamma\y}{2\ma^\top x^{(k + 1)}}$ coordinatewise on the box $[-1, 1]^n$. Now, suppose at the start of iteration $k$ of Algorithm~\ref{alg:altmin}, we have the invariant 
	\begin{equation}\label{eq:altmininvar}
	x^{(k)} \propto \exp\Par{\ma v^{(k)} + |\ma |u^{(k)} + \lam^{(k)} c},
	\end{equation}
	and we have explicitly stored the tuple $(v^{(k)}, u^{(k)}, \lam^{(k)}, y^{(k)})$; in the first iteration, we can clearly choose $(v^{(0)}, u^{(0)}, \lam^{(0)}, y^{(0)}) = (\bv,\bu, \blam, \by)$. By the derivation \eqref{eq:altminstep}, we can update
	\begin{align*}
		v^{(k + 1)} & \gets -\frac{1}{10\norm{\ma}_\infty} \Par{\frac 1 3 \by - 10\norm{\ma}_\infty \bv},\\
		u^{(k+1)} & \gets -\frac{1}{10\norm{\ma}_\infty} \Par{- (\by)^2+ (y^{(k)})^2- 10\norm{\ma}_\infty \bu},\\
		\lam^{(k + 1)} & \gets -\frac{1}{10\norm{\ma}_\infty}\Par{\frac 1 3 - 10\norm{\ma}_\infty\blam},
	\end{align*}
	preserving representation \eqref{eq:altmininvar} in the next iteration. Moreover, note that $\gamma\y$ can be explicitly stored (it can be computed in one pass using Lemma~\ref{lem:implicitx}), and by applying Lemma~\ref{lem:implicitx} and the form \eqref{eq:altmininvar}, we can compute the vector $|\ma|^\top x^{(k + 1)}$ explicitly in one pass over the data and $O(m)$ work. This allows us to explicitly store $y^{(k + 1)}$ as well. The final point $w_t$ is one of the iterates $(x^{(K)}, y^{(K)})$, proving the desired invariant. Finally, we remark in order to perform these computations we only need to store the tuple $(v^{(k)}, u^{(k)}, \lam^{(k)}, y^{(k)})$ from the prior iteration, in $O(n)$ memory.
	
	\textbf{Preserving the $z_{t + 1}$ invariant.} The argument for preserving the invariant on $z_{t + 1}$ is exactly analogous to the above argument regarding $w_t$; the only modification is that the input vector to Algorithm~\ref{alg:altmin} is 
	\begin{align*}
	\gamma\x &= \frac{1}{3}\Par{\ma y'_t + c} - 10\norm{\ma}_\infty\Par{\ma \bv +|\ma|\bu +  \blam c} - |\ma| \Par{\by^2},\\
	\gamma\y &=\frac{1}{3}\Par{b-\ma^\top x'_t} - 2\diag{\by} |\ma|^\top \frac{\exp(\ma \bv + |\ma|\bu +  \blam c)}{\norm{\exp(\ma \bv + |\ma|\bu +  \blam c)}_1}.
	\end{align*}
	However, the previous argument shows that the vector $y'_t$ can be explicitly computed, and the vector $x'_t$ satisfies the invariant in the lemma statement. The same inductive argument shows that we can represent every intermediate iterate in the computation of $z_{t + 1}$ in the desired form.
	
	\textbf{Numerical stability.} We make a brief comment regarding numerical stability in the semi-streaming model, which may occur due to exponentiating vectors with a large range $\omega(\log m)$ (in defining simplex variables). It was shown in \cite{JambulapatiST19} that Algorithm~\ref{alg:sherman} is stable to increasing the value of any coordinate of a simplex variable which is $m^{10}$ multiplicatively smaller than the largest to reach this threshold, and renormalizing (i.e.\ this implicit ``padding'' operation only affects the resulting minimizer by an inverse-polynomial amount, which we can assume $\eps$ is larger than since otherwise the number of passes is super-polynomial). In computations of Lemma~\ref{lem:implicitx}, we can first store the largest coordinate of $\ma v + |\ma|u + \lam c$ in one pass. Then, for every coordinate more than $10\log m$ smaller than the largest coordinate, we will instead treat it as if its value was $10\log m$ smaller than the maximum in all computations, requiring one extra pass over the data.
\end{proof}
}

\arxiv{
We explicitly state a complete low-space implementation of Algorithms~\ref{alg:sherman} and~\ref{alg:altmin} under our semi-streaming implementation here for completeness as Algorithm~\ref{alg:lowspacesherman}. 
}

\begin{algorithm}[ht!]
	\DontPrintSemicolon
	\KwInput{ $\ma \in \R^{m \times n}_{\ge 0}$, $c \in \R^m$, $b \in \R^n$, $0 \le \eps \le \norm{\ma}_\infty$}
	\KwOutput{ $\{v'_t\}_{0 \le t < T} \subset \R^n$, $\{u'_t\}_{0 \le t < T} \subset \R^n$, $\{\lam'_t\}_{0 \le t < T} \subset \R$, $\{y'_t\}_{0 \le t < T} \subset \R^n$ so that for
		\[\by \defeq \frac{1}{T}\sum_{0 \le t < T} y'_t,\; \bx \defeq \frac{1}{T}\sum_{0 \le t < T} \frac{\exp(\ma v'_t + |\ma|u'_t +  \lam'_t c)}{\norm{\exp(\ma v'_t + |\ma|u'_t + \lam'_t c)}_1},\]
		the pair $(\bx, \by)$ is an $\eps$-approximate saddle point to \eqref{eq:boxsimplex}}
	$T \gets O(\tfrac{\norm{\ma}_\infty \log m}{\eps})$, $K \gets O(\log \tfrac{n\norm{\ma}_\infty}{\eps})$\;
	$t \gets 0$, $\lam_0 \gets 0$, $v_0 \gets \mzero_n$, $u_0 \gets \mzero_n$, $y_0 \gets \mzero_n$\;
	\While{$t < T$}{
		$(v^{(0)}, u^{(0)}, \lam^{(0)}, y^{(0)}) \gets (v_t, u_t, \lam_t, y_t)$\;
		$\gamma\y \gets \tfrac 1 3 (b - \ma^\top \tfrac{\exp(\ma v_t + |\ma| u_t + \lam_t c)}{\norm{\exp(\ma v_t +  |\ma| u_t + \lam_t c)}_1}) - 2\diag{y_t} |\ma|^\top \tfrac{\exp(\ma v_t + |\ma| u_t + \lam_t c)}{\norm{\exp(\ma v_t + |\ma| u_t + \lam_t c)}_1}$, computed using Lemma~\ref{lem:implicitx}\;
		\For{$0 \le k < K$}{
			$v^{(k + 1)} \gets -\tfrac{1}{10\norm{\ma}_\infty} (\tfrac 1 3 y_t  - 10\norm{\ma}_\infty v^{(0)})$\;
			$u^{(k+1)} \gets -\tfrac{1}{10\norm{\ma}_\infty} (- (y^{(0)})^2 + (y^{(k)})^2-10\norm{\ma}_\infty u^{(0)})$\;
			$\lam^{(k + 1)} \gets -\tfrac{1}{10\norm{\ma}_\infty} (\tfrac 1 3 - 10\norm{\ma}_\infty \lam^{(0)})$\;
			$d^{(k + 1)} \gets 2|\ma|^\top \tfrac{\exp(\ma v^{(k + 1)}+ |\ma| u^{(k + 1)} + \lam^{(k + 1)} c)}{\norm{\exp(\ma v^{(k + 1)}+ |\ma| u^{(k + 1)} +\lam^{(k + 1)} c)}_1}$, computed using Lemma~\ref{lem:implicitx}\;
			$y^{(k + 1)} \gets \textup{med}(-1, 1, -\tfrac{\gamma\y}{d^{(k + 1)}})$ entrywise\; }
		$(v'_t,u'_t, \lam'_t, y'_t) \gets (v^{(K)}, u^{(K)}, \lam^{(K)}, y^{(K)})$\;
		$(v^{(0)}, u^{(0)}, \lam^{(0)}, y^{(0)}) \gets (v_t, u_t, \lam_t, y_t)$\;
		$\gamma\y \gets \tfrac 1 3 (b - \ma^\top \tfrac{\exp(\ma v'_t + |\ma| u'_t + \lam'_t c)}{\norm{\exp(\ma v'_t + |\ma| u'_t + \lam'_t c)}_1}) - 2\diag{y_t} |\ma|^\top \tfrac{\exp(\ma v_t + |\ma| u_t + \lam_t c)}{\norm{\exp(\ma v_t + |\ma| u_t + \lam_t c)}_1}$, computed using Lemma~\ref{lem:implicitx}\;
		\For{$0 \le k < K$}{
			$v^{(k + 1)} \gets -\tfrac{1}{10\norm{\ma}_\infty} (\tfrac 1 3 y'_t - 10\norm{\ma}_\infty v^{(0)})$\;
			$u^{(k+1)} \gets -\tfrac{1}{10\norm{\ma}_\infty} (- (y^{(0)})^2 + (y^{(k)})^2-10\norm{\ma}_\infty u^{(0)})$\;
			$\lam^{(k + 1)} \gets -\tfrac{1}{10\norm{\ma}_\infty} (\tfrac 1 3 - 10\norm{\ma}_\infty \lam^{(0)})$\;
			$d^{(k + 1)} \gets 2|\ma|^\top \tfrac{\exp(\ma v^{(k + 1)} + |\ma| u^{(k + 1)}+ \lam^{(k + 1)} c)}{\norm{\exp(\ma v^{(k + 1)} + |\ma| u^{(k + 1)}+ \lam^{(k + 1)} c)}_1}$, computed using Lemma~\ref{lem:implicitx}\;
			$y^{(k + 1)} \gets \textup{med}(-1, 1, -\tfrac{\gamma\y}{d^{(k + 1)}})$ entrywise \;
		}		
		$(v_{t + 1}, u_{t+1}, \lam_{t + 1}, y_{t + 1}) \gets (v^{(K)}, u^{(K)}, \lam^{(K)}, y^{(K)})$\;
		$t \gets t + 1$\;
	}
	\caption{$\LSSherman(\ma, b, c, \epsilon)$}\label{alg:lowspacesherman}
\end{algorithm}

\begin{corollary}\label{corr:lss}
Algorithm~\ref{alg:lowspacesherman} is an implementation of Algorithms~\ref{alg:sherman} and~\ref{alg:altmin} with parameters given by Proposition~\ref{prop:shermancorrect} in $O(n)$ space and $O(T \cdot \log \frac{\max(\norm{\ma}_\infty, \norm{b}_1)}{\eps})$ passes.
\end{corollary}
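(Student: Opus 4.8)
The plan is to recognize that Algorithm~\ref{alg:lowspacesherman} is nothing more than the explicit transcription of Algorithm~\ref{alg:sherman} (run with Algorithm~\ref{alg:altmin} implementing its two inner minimization steps) under the implicit representation of simplex iterates established in Lemma~\ref{lem:implsimplex}; correctness then reduces to Proposition~\ref{prop:shermancorrect}, and the space and pass bounds to Lemmas~\ref{lem:implsimplex} and~\ref{lem:implicitx}.

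First I would verify, step by step, that the state carried by Algorithm~\ref{alg:lowspacesherman} faithfully tracks the state of Algorithm~\ref{alg:sherman}. By Lemma~\ref{lem:implsimplex}, every simplex iterate appearing in Algorithm~\ref{alg:sherman} (when Lines~\ref{alg:sherman-sub-1} and~\ref{alg:sherman-sub-2} are implemented by $\AltMin$) has the form $x \propto \exp(\ma v + |\ma| u + \lam c)$, and the recursions derived in that proof for $(v,u,\lam)$ inside $\AltMin$ --- coming from \eqref{eq:altminstep} --- are exactly the updates to $(v^{(k+1)}, u^{(k+1)}, \lam^{(k+1)})$ written in Algorithm~\ref{alg:lowspacesherman}; likewise $y^{(k+1)} \gets \textup{med}(-1,1,-\gamma\y/d^{(k+1)})$ with $d^{(k+1)} = 2|\ma|^\top x^{(k+1)}$ is the coordinatewise box minimizer from \eqref{eq:altminstep}, and the two forms of $\gamma\y$ (one using $x_t$ in its linear term, one using $x'_t$ but retaining $\diag{y_t}$) match the two $\AltMin$ invocations of Algorithm~\ref{alg:sherman}. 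All matrix--vector products $\norm{\exp(\cdot)}_1$, $\ma^\top\exp(\cdot)$, $|\ma|^\top\exp(\cdot)$ needed to form $\gamma\y$ and $d^{(k+1)}$ are furnished by Lemma~\ref{lem:implicitx} in one pass and $O(n)$ space each. Hence the emitted tuples $(v'_t, u'_t, \lam'_t, y'_t)$ encode exactly the iterates $(x'_t, y'_t)$ of Algorithm~\ref{alg:sherman}, and the internally maintained $(v_t, u_t, \lam_t, y_t)$ encode $(x_t, y_t)$. I would also dispatch the preprocessing Lines~\ref{alg:sherman-pre-1} and~\ref{alg:sherman-pre-2} of Algorithm~\ref{alg:sherman} by citing Lemma~\ref{lem:truncateb}: they only shift the objective by a constant, leave the set of approximate minimizers unchanged, enforce $\norm{b}_1 \le n\norm{\ma}_\infty$ (and $\norm{c}_\infty \le 2\norm{\ma}_\infty$), and are implementable in $O(1)$ passes and $O(n)$ space, which justifies taking $K = O(\log\frac{\max(\norm{\ma}_\infty,\norm{b}_1)}{\eps})$ --- bounded after truncation by the $O(\log\frac{n\norm{\ma}_\infty}{\eps})$ used in Algorithm~\ref{alg:lowspacesherman} --- in the $\AltMin$ calls, as its convergence guarantee requires.

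Granting the transcription, correctness follows: with $T = O(\norm{\ma}_\infty\log m/\eps)$ as in Proposition~\ref{prop:shermancorrect}, the output average $\bx = \tfrac1T\sum_{t} \tfrac{\exp(\ma v'_t + |\ma| u'_t + \lam'_t c)}{\norm{\exp(\ma v'_t + |\ma| u'_t + \lam'_t c)}_1} = \tfrac1T\sum_t x'_t$ is an $\eps$-approximate minimizer of \eqref{eq:boxsimplexprimal}, and $(\bx,\by)$ with $\by = \tfrac1T\sum_t y'_t$ is an $\eps$-approximate saddle point of \eqref{eq:boxsimplex} --- the latter being the ergodic duality-gap bound that the mirror-prox analysis behind Proposition~\ref{prop:shermancorrect} (i.e.\ \cite{CohenST21}) in fact establishes. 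For the resource bounds, Lemma~\ref{lem:implsimplex} shows the only quantities stored at any time are a constant number of $n$-vectors ($v^{(\cdot)}, u^{(\cdot)}, y^{(\cdot)}, \gamma\y, d^{(\cdot)}$) and scalars, so the space is $O(n)$ (the output tuples are streamed, not retained); and each of the two $\AltMin$ calls per outer iteration costs $O(K)$ passes (one application of Lemma~\ref{lem:implicitx} for $\gamma\y$, plus one per inner step for $d^{(k+1)}$), giving $O(TK) = O\!\left(T\log\tfrac{\max(\norm{\ma}_\infty,\norm{b}_1)}{\eps}\right)$ passes in total, plus the $O(1)$ preprocessing passes.

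I expect the only point requiring genuine care --- the sole place this goes beyond a pure transcription of Lemma~\ref{lem:implsimplex} --- is promoting the primal guarantee of Proposition~\ref{prop:shermancorrect} on $\bx$ to the saddle-point guarantee on $(\bx,\by)$ demanded by the output specification, which I would handle by invoking the duality-gap bound from \cite{CohenST21} directly; beyond that it is a matter of checking that the two occurrences of the $\gamma\y$ update and of the inner-loop count $K$ in Algorithm~\ref{alg:lowspacesherman} line up with the corresponding quantities in Algorithm~\ref{alg:sherman}.
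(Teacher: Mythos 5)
Your proposal is correct and takes essentially the same route as the paper: the paper's proof likewise reduces everything to Lemma~\ref{lem:implsimplex}, identifying each inner loop of Algorithm~\ref{alg:lowspacesherman} with an $\AltMin$ call computing $w_t$ and $z_{t+1}$ in $O(n)$ space, and gets the pass bound by observing that passes are used only $O(1)$ times per inner iteration (hence $O(K)$ per outer iteration and $O(TK)$ total). Your extra care about the preprocessing lines and about upgrading the primal guarantee of Proposition~\ref{prop:shermancorrect} to the saddle-point claim in the output box goes slightly beyond what the corollary itself asserts, but it is consistent with the surrounding discussion and does no harm.
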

\begin{proof}
It is immediate from Lemma~\ref{lem:implsimplex} that each loop of Lines 6-10 and Lines 14-18 in Algorithm~\ref{alg:lowspacesherman} is an implementation of Algorithm~\ref{alg:altmin} for computing each iterate $w_t$ and $z_{t + 1}$ in iteration $t$ of Algorithm~\ref{alg:sherman}, in $O(n)$ space. The pass complexity follows since passes are only used $O(1)$ times in each run of Lines 6-11 and 15-20 of Algorithm~\ref{alg:lowspacesherman}.
\end{proof}

Finally, we conclude with our proof of Theorem~\ref{thm:lss}.

\begin{proof}[Proof of Theorem~\ref{thm:lss}]
We use the parameter settings in Proposition~\ref{prop:shermancorrect}, which implies that it suffices to compute the value of \eqref{eq:boxsimplexprimal} with $\bx \defeq \frac{1}{T}\sum_{0 \le t < T} x'_t$. By linearity, we have
\[c^\top \bx = \frac{1}{T}\sum_{0 \le t < T} \inprod{c}{x'_t} = \frac{1}{T} \sum_{0 \le t < T} \sum_{i \in [m]} c_i \exp\Par{\inprod{\ai}{v'_t} + \inprod{|\ai|}{u'_t} + \lam'_t c_i}.\]
Since the summands in the above display (for each $0 \le t < T$) separate componentwise in the stream, we can compute each $\sum_{i \in [m]} c_i \exp(\inprod{\ai}{v'_t} + \inprod{|\ai|}{u'_t} + \lam'_t c_i)$ in one pass per iteration and store their average. Moreover, we can similarly compute the vector $\ma^\top \bx$ using Lemma~\ref{lem:implicitx} in one pass per iteration and store it explicitly in $O(n)$ space, at which point we can compute $\norm{\ma^\top \bx - b}_1$. Combining these two parts, we have the desired approximation to the value.

Finally, to demonstrate the guarantee on streaming $\bx$, the approximate minimizer has the form
\[\bx = \frac{1}{T}\sum_{0 \le t < T} \frac{\exp\Par{\ma v'_t + |\ma| u'_t + \lam'_t c}}{\norm{\exp\Par{\ma v'_t + |\ma| u'_t + \lam'_t c}}_1}.\]
We can thus take one additional pass per iteration to compute the value of
\[\Brack{\frac{1}{T} \frac{\exp\Par{\ma v'_t + |\ma| u'_t + \lam'_t c}}{\norm{\exp\Par{\ma v'_t + |\ma| u'_t + \lam'_t c}}_1}}_i \text{ for all } i \in [m],\]
and produce a stream of these values without affecting the overall pass complexity.
\end{proof}

We note that $\max(\norm{\ma}_\infty, \norm{b}_1)$ is never larger than $n\norm{\ma}_\infty$ by Line 2 of Algorithm~\ref{alg:sherman}. Finally, on a query $S \subseteq E$ with $|S| = q$, using $O(n + q)$ space (and no more passes), for $(\bx, \by)$ an $\eps$-approximate saddle point to \eqref{eq:boxsimplex} computed by our algorithm, we can output the set of values $(\{\bx_i\}_{i \in S}, y)$. This is accomplished by calling Lemma~\ref{lem:implicitx} and explicitly storing values of $x'_t$ on coordinates in $S$ each iteration, which is implementable in $O(q)$ additional space. While we do not use this fact for finding an approximate MCM solution or for our applications, this gives the memory requirement for querying entries of an approximate solution in the most general setting for box-simplex games.

\section{Approximate maximum cardinality matching}
\label{sec:flowround}

In this section, we give specific treatment to the problem of approximate MCM and prove our main result Theorem~\ref{thm:match-rounding}. We prove Theorem~\ref{thm:match-rounding} by assembling a variety of tools, centered around casting MCM as an instance of~\eqref{eq:boxsimplex} and using several helper procedures to complete the result.  \notarxiv{A variety of proofs in this section are deferred to Appendix~\ref{app:helpermcm}.}

\begin{enumerate}
	\item In Section~\ref{ssec:formulation}, we give the specific box-simplex problem formulation which serves as the optimization workhorse of this section, and prove that an appropriate approximate solution to this problem results in an approximate MCM.
	\item In Section~\ref{ssec:tools}, we give pre- and post-processing tools for manipulating the box-simplex problem and its output. Specifically, we give a vertex-size reduction enabling a tighter runtime analysis, and a cycle cancelling procedure for sparsifying the support of the approximate MCM.
	\item Finally, we put these pieces together to prove Theorem~\ref{thm:match-rounding} in Section~\ref{ssec:fullproof}.
\end{enumerate}

\subsection{Reducing MCM to a box-simplex problem}\label{ssec:formulation}

Throughout this section, we consider the problem of finding a $\eps$-approximate MCM of a bipartite graph $G = (V, E)$ with $|V|=n$, $|E|=m$, with unsigned incidence matrix $\mb \in \{0, 1\}^{E \times V}$, and maximum matching size $M^*$. Any maximum (possibly fractional) matching solves the problem 
\[\max_{M^* > 0} M^* \text{ such that } \exists x \in \Delta^E \text{ with } \mb^\top \Par{M^* x} \le \1_V \text{ entrywise.}\]
Here, the variable $M^* x$ takes on the role of the matching; observe that since $x \in \Delta^E$, the $\ell_1$ norm of $M^* x$ is precisely $M^*$. Before giving the box-simplex objective which we will approximate with the value algorithm of Section~\ref{sec:value}, we recall that we can obtain a 2-approximation to $M^*$ in one pass.

\begin{restatable}{lemma}{restategreedy}\label{lem:greedy}
The greedy algorithm can be implemented in one semi-streaming pass over a graph using $O(n)$ space and $O(m)$ work to return a matching of size $M$ with $M \le M^* \le 2M$.
\end{restatable}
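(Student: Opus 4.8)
The plan is to analyze the textbook greedy maximal matching algorithm and argue it meets all three requirements. First I would specify the algorithm: maintain a set $M \subseteq E$, initially empty, together with a Boolean array $\mathrm{matched}\in\{0,1\}^V$, initially all zero, recording which vertices are currently incident to an edge of $M$. When an edge $(u,v)$ arrives in the stream, test whether $\mathrm{matched}_u=\mathrm{matched}_v=0$; if so, add $(u,v)$ to $M$ and set $\mathrm{matched}_u\gets 1$, $\mathrm{matched}_v\gets 1$; otherwise discard the edge. After the single pass, output $M$ and its cardinality, which I will also call $M$ (abusing notation as in the statement).

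For the resource bounds I would note that $\mathrm{matched}$ occupies $O(n)$ words, and since $M$ is always a matching it has at most $\lfloor n/2\rfloor$ edges, so storing $M$ also takes $O(n)$ words; together this gives the $O(n)$ space bound over the one pass. Processing each streamed edge requires at most two array reads, a comparison, and — only when the edge is added — two array writes and one insertion, all $O(1)$ work, so the total work is $O(m)$.

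For correctness, one direction is immediate: an edge is added only when both endpoints are unmatched, so $M$ is a valid matching and hence $M\le M^*$. For the reverse inequality the key observation is that the output matching is \emph{maximal}: when the pass ends, every edge $e\in E$ must have had at least one endpoint already matched at the moment $e$ was processed (else greedy would have added $e$), and a matched vertex stays matched, so every $e\in E$ has an endpoint in $V(M)$, the set of endpoints of edges of $M$. Thus $V(M)$ is a vertex cover of $G$ of size $2M$, and since the edges of any matching are vertex-disjoint and each contains a vertex of this cover, every matching — in particular one of size $M^*$ — has size at most $|V(M)|=2M$, giving $M^*\le 2M$.

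I do not expect a genuine obstacle here; the only point requiring any care is confirming that the maximality property (and hence the vertex-cover bound) is unaffected by the adversarial stream order, which it is, precisely because the $\mathrm{matched}$ flags are monotone over the pass.
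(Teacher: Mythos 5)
Your proof is correct and follows essentially the same route as the paper: the paper implements greedy with a marked-vertex array in $O(1)$ work per edge and appeals to the folklore $2$-approximation, which you simply spell out via the standard maximality/vertex-cover argument. No issues.
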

The proof is standard and deferred to Appendix~\ref{sec:vertex-size-redx}. As the first step of our approximate MCM algorithm, we obtain $M$ so that $M \le M^* \le 2M$ using the above greedy algorithm. We then use this estimate to construct a problem of the form \eqref{eq:boxsimplexprimal}, whose approximate solution also yields an approximate MCM, which we describe now. 

From the unweighted graph $G = (V, E)$, we construct a modified graph $\tG = (\tV, \tE)$ with two extra vertices and one extra zero-weight edge between them, and refer to its weighted adjacency matrix as $\tmb$, with an all-zero row for the extra edge. We define $\dmcm \in \{0, 1\}^{\tV}$ to be the vector which is $1$ in every coordinate of $V \in \tV$, and $0$ in the two extra vertices. Finally, we require one additional piece of notation: for a graph $(\tV, \tE)$, some \emph{demands} $d \in \R^{\tV}_{\ge 0}$, and some flow $\tx \in \R^{\tE}_{\ge 0}$, we define the \emph{overflow} by
\[\text{Overflow}_d(\tx) \defeq \sum_{j \in \tV} \Brack{\Par{\tmb^\top \tx - d}_+}_j. \]
Combinatorially, this can be interpreted as the total amount the flow $\tx$ violates $d$. We show that to obtain an approximate matching in $G$, it suffices to find an $\eps M$-approximate minimizer to the following problem (where $\dmcm$ as defined earlier is the demands for the MCM problem):
\begin{equation}\label{eq:matchoverflow}
\begin{gathered}
\min_{x \in \Delta^{\tE}} -2M\norm{x_E}_1 + \text{Overflow}_{\dmcm}(2Mx).
\end{gathered}
\end{equation}
In particular, note that we can represent any flow on the original edges $E$ with $\ell_1$ norm \emph{at most} $2M$ (in particular, the maximum cardinality matching) as the restriction of a flow in $\R_{\ge 0}^{\tE}$ with $\ell_1$ norm exactly $2M$. Hence, we can interpret \eqref{eq:matchoverflow} combinatorially as attempting to put as much flow on the edges of the original graph $E$ as possible, while penalizing the overflow. 

We begin by demonstrating that to find approximate minimizers to \eqref{eq:matchoverflow}, it suffices to find an approximate minimizer to the following $\ell_1$ regression problem:
\begin{equation}\label{eq:matchl1reg}
\begin{gathered}
\min_{x \in \Delta^{\tE}} \norm{\ma^\top x - b}_1, \text{ where }  \ma \defeq M\tmb,\; b \defeq \half \dmcm.
\end{gathered}
\end{equation}

\begin{lemma}\label{lem:equivalentsaddlepoint}
Any $\Delta$-additively approximate minimizer to problem \eqref{eq:matchoverflow} is a $\Delta$-additively approximate minimizer to problem \eqref{eq:matchl1reg}, for all $\Delta > 0$, and vice versa.
\end{lemma}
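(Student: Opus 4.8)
The plan is to show that the objectives of \eqref{eq:matchoverflow} and \eqref{eq:matchl1reg} differ only by an additive constant that does not depend on $x$; once this is established, the claim is immediate, since $x$ is a $\Delta$-additive approximate minimizer of a function $f$ if and only if it is a $\Delta$-additive approximate minimizer of $f + C$ for any constant $C$, and this equivalence is symmetric (handling both ``directions'' of the lemma at once).

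First I would rewrite the objective of \eqref{eq:matchl1reg}. Since $\ma = M\tmb$ and $b = \half\dmcm$, we have $\norm{\ma^\top x - b}_1 = \half\norm{2M\tmb^\top x - \dmcm}_1$. Writing $w \defeq 2M\tmb^\top x - \dmcm \in \R^{\tV}$ and letting $w_+$, $(-w)_+$ denote its nonnegative and nonpositive parts, I would use the elementary identities $\norm{w}_1 = \norm{w_+}_1 + \norm{(-w)_+}_1$ and $\1_{\tV}^\top w = \norm{w_+}_1 - \norm{(-w)_+}_1$ to obtain $\norm{w}_1 = 2\norm{w_+}_1 - \1_{\tV}^\top w$. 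By definition of overflow, $\norm{w_+}_1 = \text{Overflow}_{\dmcm}(2Mx)$.

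Next I would evaluate $\1_{\tV}^\top w = 2M\,\1_{\tV}^\top\tmb^\top x - \1_{\tV}^\top\dmcm$. Each original edge of $G$ contributes a row of $\tmb$ with exactly two $1$'s, while the dummy edge contributes an all-zero row, so $\tmb\1_{\tV} = 2\1_E$ (extended by $0$ on the dummy-edge coordinate), whence $\1_{\tV}^\top\tmb^\top x = x^\top\tmb\1_{\tV} = 2\norm{x_E}_1$ using $x \ge 0$; moreover $\1_{\tV}^\top\dmcm = |V| = n$ since $\dmcm$ is $1$ on $V$ and $0$ on the two dummy vertices. Therefore $\1_{\tV}^\top w = 4M\norm{x_E}_1 - n$, and combining,
\[\norm{\ma^\top x - b}_1 = \half\norm{w}_1 = \text{Overflow}_{\dmcm}(2Mx) - 2M\norm{x_E}_1 + \tfrac{n}{2},\]
which is exactly the objective of \eqref{eq:matchoverflow} shifted by the $x$-independent constant $\tfrac{n}{2}$. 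Hence the two problems share the same set of $\Delta$-additive approximate minimizers for every $\Delta > 0$.

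There is no real obstacle here; the only care needed is the bookkeeping around the augmenting gadget, namely that the dummy edge's row is zero (so it drops out of both the overflow and the degree sum) and that the two dummy vertices are excluded from $\dmcm$ (so $\dmcm$ sums to $n$, not $n+2$). It is also worth remarking that the displayed identity holds for all $x \in \R^{\tE}_{\ge 0}$, not merely on the simplex $\Delta^{\tE}$, though only its restriction to $\Delta^{\tE}$ is used.
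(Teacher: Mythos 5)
Your proof is correct and follows essentially the same route as the paper: both arguments show the two objectives differ by an $x$-independent constant (you get $n/2$, via the aggregate identity $\norm{w}_1 = 2\norm{w_+}_1 - \1^\top w$, while the paper applies the termwise identity $\max(v,0)=\thalf(v+|v|)$ and identifies $\inprod{M\mb\dmcm}{x}=2M\norm{x_E}_1$), and then invoke invariance of $\Delta$-approximate minimizers under constant shifts. Your bookkeeping about the all-zero dummy-edge row and the dummy vertices being excluded from $\dmcm$ is the same care taken implicitly in the paper, so no gap remains.
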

\begin{proof}
The property of being a $\Delta$-approximate minimizer is preserved if the entire problem is shifted by a constant scalar, so it suffices to show \eqref{eq:matchoverflow} and \eqref{eq:matchl1reg} differ by a constant scalar. Observe that for any scalar $v$, $\max(v, 0) = \half(v + |v|)$. Thus, we can write
\begin{align*}
\Overflow_{\dmcm}\Par{2Mx} &= \sum_{j \in V} \max\Par{2M\tmb^\top x - 1, 0} \\
&= \half \norm{2M\tmb^\top x - \dmcm}_1 + \half \sum_{j \in V} \Par{2M\tmb^\top x - 1} \\
&= \norm{\ma^\top x - b}_1 + \inprod{M\mb \dmcm}{x} - \half |V|.
\end{align*}
Comparing with \eqref{eq:matchoverflow} yields the conclusion, since $M\mb \dmcm$ is $2M$ times the indicator on $E$.
\end{proof}
Finally, to conclude this section, we give a simple procedure for rounding an almost-feasible flow to a feasible matching. This procedure comes with a guarantee which converts an overflow bound on the original flow to a bound on the difference in quality of the resulting rounded matching.

\notarxiv{
\begin{algorithm}
	\caption{$\mathsf{RemoveOverflow}(x, G, d)$}
	\begin{algorithmic}[1]\label{alg:removeOverflow}
		\STATE \textbf{Input:} Bipartite graph $G = (V, E)$ with incidence matrix $\mb$, $x \in \R^E_{\ge 0}$, demands $d \in \R^V_{\geq 0}$
		\STATE \textbf{Output:} $\tx \in \R^E_{\geq 0}$ with $\tx \leq x$ entrywise, $\mb^\top \tx \leq d$, and $\norm{\tx}_1 \geq \norm{x}_1 - \Overflow_d(x)$
		\STATE $d^x \gets \mb^\top x$
		\STATE $f \gets (d^x - d)_{+}$
		\STATE $\tx_e \gets x_e\left(1 - \max\Par{\tfrac{f_a}{d_a^x} , \tfrac{f_b}{d_b^x}}\right)$ for all $e = (a, b) \in E$ with $x_e > 0$
		\RETURN $\tx$
	\end{algorithmic}
\end{algorithm}	}

\arxiv{
\begin{algorithm}
\DontPrintSemicolon
		\KwInput{Bipartite graph $G = (V, E)$ with incidence matrix $\mb$, $x \in \R^E_{\ge 0}$, demands $d \in \R^V_{\geq 0}$}
		\KwOutput{$\tx \in \R^E_{\geq 0}$ with $\tx \leq x$ entrywise, $\mb^\top \tx \leq d$, and $\norm{\tx}_1 \geq \norm{x}_1 - \norm{\Par{\mb^\top x - d}_+}_1$}
		$d^x \gets \mb^\top x$\;
		$f \gets (d^x - d)_{+}$\;
		$\tx_e \gets x_e\left(1 - \max\Par{\tfrac{f_a}{d_a^x} , \tfrac{f_b}{d_b^x}}\right)$ for all $e = (a, b) \in E$ with $x_e > 0$\;
		\Return{$\tx$}
	\caption{$\mathsf{RemoveOverflow}(x, G, d)$}\label{alg:removeOverflow}
\end{algorithm}	
}

\begin{lemma}[Overflow removal]\label{lem:overflowremoval}
$\mathsf{RemoveOverflow}$ is correct, i.e.\ on input $x \in \R^E_{\ge 0}$, it outputs $\tx \in \R^E_{\geq 0}$ with $\tx \leq x$, $\mb^\top \tx \leq d$, and $\norm{\tx}_1 \geq \norm{x}_1 - \Overflow_d(x)$.
\end{lemma}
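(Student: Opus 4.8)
The plan is to verify each of the three asserted properties of the output $\tx$ — namely $\tx \in \R^E_{\ge 0}$ with $\tx \le x$, then $\mb^\top \tx \le d$, then $\norm{\tx}_1 \ge \norm{x}_1 - \Overflow_d(x)$ — directly from the closed-form update in $\mathsf{RemoveOverflow}$, adopting the convention $\tx_e = 0$ on edges with $x_e = 0$. Before doing so I would record two elementary facts that drive all three arguments. Writing $d^x := \mb^\top x$, for any edge $e = (a,b)$ with $x_e > 0$ both loads satisfy $d_a^x \ge x_e > 0$ and $d_b^x \ge x_e > 0$, so the ratios in the update are well-defined; and since $f_v = \max(d_v^x - d_v, 0)$ with $d_v \ge 0$, we have $0 \le f_v \le d_v^x$ for every $v$, and moreover $f_v = 0$ whenever $d_v^x = 0$. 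In particular the multiplier $\max(f_a/d_a^x, f_b/d_b^x)$ always lies in $[0,1]$.

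Given this, the range property is immediate: $1 - \max(f_a/d_a^x, f_b/d_b^x) \in [0,1]$, so $0 \le \tx_e \le x_e$ edgewise, i.e.\ $\tx \in \R^E_{\ge 0}$ and $\tx \le x$.

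For feasibility I would fix $v \in V$ and split on whether $v$ is overloaded. If $d_v^x \le d_v$, then $\sum_{e \ni v}\tx_e \le \sum_{e \ni v} x_e = d_v^x \le d_v$. If instead $d_v^x > d_v$, then $f_v = d_v^x - d_v$, and for each incident edge $e = (v,w)$ with $x_e > 0$ the quantity $f_v/d_v^x$ is one of the two entries in the max defining $\tx_e$, so $\tx_e \le x_e(1 - f_v/d_v^x)$; summing over $e \ni v$ gives $\sum_{e \ni v}\tx_e \le (1 - f_v/d_v^x)\,d_v^x = d_v^x - f_v = d_v$.

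The only step with any bookkeeping is the size bound, and it reduces to a summation interchange. Using $\tx_e = x_e(1 - \max(f_a/d_a^x, f_b/d_b^x))$ together with $\max(f_a/d_a^x, f_b/d_b^x) \le f_a/d_a^x + f_b/d_b^x$ (both summands nonnegative), I would estimate
\begin{align*}
\norm{x}_1 - \norm{\tx}_1 &= \sum_{e=(a,b):\,x_e>0} x_e \max\!\Par{\tfrac{f_a}{d_a^x},\tfrac{f_b}{d_b^x}} \le \sum_{e=(a,b):\,x_e>0} x_e\Par{\tfrac{f_a}{d_a^x}+\tfrac{f_b}{d_b^x}} \\
&= \sum_{v \in V}\frac{f_v}{d_v^x}\sum_{e \ni v:\,x_e>0}x_e = \sum_{v:\,d_v^x>0} f_v = \norm{f}_1 = \Overflow_d(x),
\end{align*}
where the middle equality regroups the double sum over (edge, endpoint) pairs by vertex, and the last line uses $f_v = 0$ on vertices with $d_v^x = 0$. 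Rearranging yields $\norm{\tx}_1 \ge \norm{x}_1 - \Overflow_d(x)$. I do not expect a genuine obstacle: the lemma is a direct calculation, and the only place warranting care is this size bound, specifically the relaxation of the coordinatewise max by the sum of its entries (it is this choice that loses exactly the ``right'' quantity $\norm{f}_1$) and the edge-to-endpoint regrouping, with the degenerate case $d_v^x = 0$ harmless since then $f_v = 0$ already.
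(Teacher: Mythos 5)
Your proposal is correct and follows essentially the same route as the paper's proof: the multiplier-in-$[0,1]$ observation for $0 \le \tx \le x$, the per-vertex case split for $\mb^\top \tx \le d$, and bounding the max by the sum of its two entries followed by regrouping the double sum by vertex for the size bound. The only difference is cosmetic — you explicitly note the harmless degenerate case $d_v^x = 0$, which the paper leaves implicit.
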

\begin{proof}
	First, $x_e > 0$ for $e = (a,b)$ implies $d_a^x > 0$ and $d_b^x > 0$ and therefore $\tx$ is well-defined. As $d, f \geq 0$ and $e \leq d_x$ entrywise we also see that $\tx \geq 0$ and $\tx \leq x$, giving the first guarantee.
	 
	Next, note that for all $a \in V$ if $[\mb^\top x]_a \leq d$ then $[\mb^\top \tx]_a \leq d_a$ as $\tx \leq x$. On the other hand, if 
	$[\mb^\top \tx]_a > d_a$ then $f_a = d^x_a - d_a$ and
	\begin{align*}
	[\mb^\top \tx]_a 
	=
	\sum_{(a,b) \in E} \tx
	\leq \sum_{(a,b) \in E} x_e \left(1 - \frac{f_a}{d_a^x}\right) 
	= d^x_a \left( 1 - \frac{d^x_a - d_a}{d_a} \right)
	= d_a.
	\end{align*}
	Consequently, in either case $[\mb^\top \tx]_a \leq d_a$ so $\mb^\top \tx \leq d$. 
	The last claim of the lemma follows from
	\begin{align*}
	\norm{\tx}_1
	&= \sum_{e \in E} x_e - \sum_{e = (a,b) \in E \mid x_e \neq 0} x_e \max \left\{\frac{f_a}{d_a^x} , \frac{f_b}{d_b^x} \right\}
	\geq \norm{x}_1 - \sum_{e = (a,b) \in E \mid x_e \neq 0} x_e \left( \frac{f_a}{d_a^x} + \frac{f_b}{d_b^x} \right) \\
	&= \norm{x}_1 - \sum_{a \in V} \sum_{e = (a,b) \in E \mid x_e \neq 0} x_e \frac{e_a}{d_a^x}
	= \norm{x}_1 - \sum_{a \in V} e_a
	= \norm{x}_1 - \norm{\Par{\mb^\top x - d}_+}_1.
	\end{align*}
\end{proof}
As a consequence of Lemma~\ref{lem:overflowremoval} with $d = \dmcm$, we have an (algorithmic) proof that any $\tx$ satisfying
\begin{equation}\label{eq:mcm_guarantees}
\norm{\tx}_1 - \Overflow_d(\tx) \ge (1 - \eps) M^*
\end{equation}
 can be rounded to a feasible matching on the same support with $\ell_1$ norm at least $(1 - \eps) M^*$.

\subsection{Additional tools}\label{ssec:tools}

Before proving Theorem~\ref{thm:match-rounding}, we give a few helper tools building upon prior works in the literature, to prove Theorem~\ref{thm:match-rounding}, as Propositions~\ref{prop:vertexreduce} and~\ref{prop:cc}. 
First, when $n \gg M^*$, we note that we can reduce the number of vertices to $O(M^*\log(\eps^{-1}))$ while only slightly decreasing the MCM size; a similar result is given in \cite{AhnG13}. \arxiv{We state the reduction here, and defer its proof to Appendix~\ref{sec:vertex-size-redx} for completeness.}

\begin{restatable}[Vertex size reduction]{proposition}
{restatevertexreduce}\label{prop:vertexreduce}There is a procedure $\RedSize$ (Algorithm~\ref{alg:red-MCM-Vsize}) which takes as input unweighted bipartite $G = (V, E)$ with MCM size $M^*$ and with $O(\log\eps^{-1})$ passes, $O(M^*\log\eps^{-1})$ space, and $O(m\log\eps^{-1})$ work outputs a subset $V' \subseteq V$ of size $O(M^*\log\eps^{-1})$ such that the induced subgraph $G[V']$ has a MCM of size at least $(1 - \eps)M^*$.
\end{restatable}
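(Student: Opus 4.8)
The plan is to build $V'$ by accumulating the vertex sets of a short sequence of greedy maximal matchings, run on progressively restricted subgraphs, so that after $O(\log \eps^{-1})$ rounds the accumulated set both has size $O(M^*\log\eps^{-1})$ and captures all but an $\eps$-fraction of some fixed maximum matching $M^*$. First I would run the one-pass greedy algorithm of Lemma~\ref{lem:greedy} on $G$ to get a maximal matching $N_1$ with $|N_1|\ge M^*/2$ and set $V'\gets V(N_1)$. Since $V(N_1)$ is a vertex cover, every $M^*$-edge not yet captured has an endpoint unmatched by $N_1$, and by maximality of $N_1$ its other endpoint must be matched by $N_1$. This motivates recursing into the two bipartite graphs formed by the (unmatched-$L$, matched-$R$) edges and by the (matched-$L$, unmatched-$R$) edges: a short alternating-path argument (each augmenting path of $M^*\triangle N_1$ contributes a distinct ``first edge'' in the first graph and a distinct ``last edge'' in the second) shows each has matching number $\ge M^*-|N_1|$. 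I would run greedy on each, add the newly matched vertices to $V'$, and iterate. Writing $s_i$ for the number of matched vertices accumulated on one side after $i$ rounds (these sets are disjoint by construction), maximality of each new greedy matching against the surviving portion of $M^*$ gives $\mu(\text{residual})\ge M^*-s_i$, hence $s_{i+1}\ge s_i+\tfrac12(M^*-s_i)$, so $M^*-s_i\le 2^{-i}M^*$ and $K=\lceil\log_2\eps^{-1}\rceil$ rounds suffice to reach $s_K\ge(1-\eps)M^*$.

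The resource bounds are then immediate: each round is $O(1)$ greedy passes, for $O(\log\eps^{-1})$ passes total; each pass does $O(m)$ work, for $O(m\log\eps^{-1})$ work; the only stored state is the $O(\log\eps^{-1})$ matchings encountered (each of size $\le M^*$, so $O(M^*)$ space each) together with the accumulated $V'$ (needed to form the peeled subgraphs), for $O(M^*\log\eps^{-1})$ space; and $|V'|\le\sum_i 2|N_i|=O(M^*\log\eps^{-1})$.

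The main obstacle is the correctness claim $\mu(G[V'])\ge(1-\eps)M^*$. Greedy matchings from different rounds overlap on the side that is not peeled, so their naive union need not be a matching, and a single peeling direction provably misses the unmatched vertices of the un-peeled side (a disjoint union of length-three paths makes this concrete). The content of the proof is that the two-sided accumulation above repairs this: I would argue that every $M^*$-edge with an endpoint outside $V'$ is forced, via maximality in the final residual subgraphs, into a controlled structure (its partner lies in the matched side of every round of the relevant chain), which bounds the number of such ``lost'' $M^*$-edges by an $\eps$-fraction of $M^*$; since the surviving $M^*$-edges form a matching of $G[V']$, this yields $\mu(G[V'])\ge(1-\eps)M^*$ after rescaling $\eps$ by a constant. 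I expect the delicate point to be exactly this two-sided accounting — in particular, pinning down which residual subgraph to recurse on so that its matching number provably drops by a constant factor each round, which is what turns the $O(\log\eps^{-1})$-round bound into a guarantee rather than just a heuristic; a related reduction appears in~\cite{AhnG13}.
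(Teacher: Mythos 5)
Your algorithm and resource accounting are essentially the paper's (accumulate endpoints of maximal matchings across the boundary of the current vertex set for $O(\log\eps^{-1})$ rounds), but the correctness argument has a genuine gap, and it is exactly the one you flag at the end. The recurrence $s_{i+1}\ge s_i+\tfrac12(M^*-s_i)$ tracks the number of \emph{covered vertices on one side}, and this quantity does not lower-bound $\mu(G[V'])$: a vertex newly matched in round $i+1$ may be matched \emph{to a partner that was already used by an earlier round's matching} (your residual graphs explicitly allow this, since their other side consists of previously matched vertices), so even if $s_K\ge(1-\eps)M^*$ the union of the collected matchings can have far fewer than $(1-\eps)M^*$ vertex-disjoint edges inside $V'$ --- e.g.\ many later rounds can all funnel into the same few round-one partners. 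Nor is the invariant $\mu(\text{residual})\ge M^*-s_i$ established beyond the first round: the ``first edge / last edge'' argument is specific to the symmetric difference with the initial matching $N_1$, and you acknowledge you have not pinned down which residual subgraph makes the deficit provably drop in later rounds. So what you have is the algorithm plus an unproven progress lemma, which is the entire content of the proposition.

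The paper closes this gap by tracking the right quantity directly: $M_i\defeq\mu(G[V_i])$. Its Lemma~\ref{lem:progressreduce} shows that adding the endpoints of \emph{one} maximal matching across the cut $(V_i, V\setminus V_i)$ increases $\mu$ by at least $\tfrac13(M^*-M_i)$. The proof decomposes the symmetric difference between an MCM of $G[V_i]$ and an MCM of $G$ into $k=M^*-M_i$ vertex-disjoint augmenting paths, whose first and last edges cross the cut, and does a charging case analysis on how the maximal cross-cut matching touches these boundary edges: either both boundary edges of a path are picked (the path augments directly), or a boundary endpoint inside $V_i$ is matched elsewhere, in which case the path still augments inside $V_{i+1}$ but may invalidate at most two other paths. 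Maximality guarantees some case fires for each path, yielding the $\tfrac13$ factor; geometric decay of $M^*-M_i$ then makes $N=O(\log\eps^{-1})$ rounds a guarantee rather than a heuristic, and $\mu(G[V_N])\ge(1-\eps)M^*$ follows with no separate ``union of matchings'' step. To repair your write-up you would need to prove an analogous per-round deficit-reduction lemma for $\mu(G[V_i])$ itself (your two-sided peeling can likely be analyzed the same way, but the case analysis handling collisions with earlier partners is precisely the missing work).
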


We also require a procedure for sparsifying supports based on cycle cancelling. Our cycle cancelling procedure takes a (possibly infeasible) flow given as an insertion-only stream, and produces a flow of at least the same value and with no additional overflow, with support of size at most $n$. The procedure generalizes to weighted bipartite matching problems as well; we remark that similar procedures have appeared in prior work (see e.g.\ \cite{KangP15}). We provide a detailed implementation of a data structure to prove Proposition~\ref{prop:cc} in Appendix~\ref{sec:cc}.

\begin{restatable}[Cycle cancelling]{proposition}{restatecc}\label{prop:cc}
Consider a (possibly weighted) matching problem on a bipartite graph $G = (V, E, w)$ (for MCM, we let $w = \1$). There is an algorithm that has the following property: given a stream of length $L$, consisting of edge-flow tuples $(e, f_e)$ where $e \in E$ and $f_e \in \R_{\ge 0}$, define $x \in \R_{\ge 0}^E$ to be the sum of all $f_e \1_e$ in the stream where $\1_e$ is the $1$-sparse indicator of edge $e$. Then the algorithm runs in $O(n)$ space and $O(L\log n)$ time, and outputs a flow $\tilde{x}$ supported on $O(n)$ edges forming a forest, so that $\inprod{w}{\tx} \ge \inprod{w}{x}$, and $\mb^\top x = \mb^\top \tx$.
\end{restatable}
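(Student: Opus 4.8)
The plan is to process the stream while maintaining a running flow $\hat{x}\in\R^E_{\ge 0}$ whose support is always a \emph{forest} of $G$, and hence has at most $n-1$ edges (tuples with $f_e=0$ are simply ignored). Throughout I would preserve two invariants: after the first $t$ tuples have been read, writing $x^{(t)}\defeq\sum_{s\le t} f_{e_s}\1_{e_s}$ for the partial sum, we have $\mb^\top\hat{x}=\mb^\top x^{(t)}$ and $\inprod{w}{\hat{x}}\ge\inprod{w}{x^{(t)}}$. Applying these after the last tuple gives exactly $\mb^\top\tx=\mb^\top x$ and $\inprod{w}{\tx}\ge\inprod{w}{x}$ for the output $\tx\defeq\hat{x}$, while the forest invariant gives the $O(n)$-sparsity of the support; storing $O(1)$ words per tree node and per support edge then gives the $O(n)$ space bound.

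When a tuple $(e,f_e)$ with $f_e>0$ and $e=(a,b)$ arrives: if $e$ is already a support edge we add $f_e$ to its flow; if $a$ and $b$ lie in different trees we insert $e$ with flow $f_e$ (the support stays a forest); otherwise inserting $e$ creates a unique cycle $C$, namely $e$ together with the tree path from $a$ to $b$. Since $G$ is bipartite, $C$ is even and its edges split into two ``alternating classes'' $X\ni e$ and $Y$, each vertex of $C$ being incident to exactly one edge of each class. For any $\delta$, adding $\delta$ to every edge of one class and subtracting $\delta$ from every edge of the other leaves every vertex load — and hence $\mb^\top\hat{x}$ — unchanged, and changes $\inprod{w}{\hat{x}}$ by $\delta\Par{w(X)-w(Y)}$ if $X$ is the increased class (and by the negative of this if $Y$ is). I would therefore increase whichever class has the larger total $w$-weight and decrease the other, taking $\delta$ to be the minimum current flow among the decreased class; this $\delta$ is strictly positive because every edge of $C$, including the just-inserted $e$, carries positive flow. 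The update keeps $\hat{x}\ge 0$, does not decrease $\inprod{w}{\hat{x}}$, and forces the minimizing edge's flow to $0$, so we delete that edge and recover the forest invariant (in the MCM case $w=\1$ and $|X|=|Y|$, so either direction works). Both invariants are thus maintained tuple by tuple.

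To realize this within the claimed bounds I would store the forest and its edge flows in link/cut (dynamic) trees \cite{SleatorT83}, which support in $O(\log n)$ amortized time the root queries needed to test whether $a$ and $b$ are in the same tree, as well as \textsf{link}, \textsf{cut}, and path rerooting. The one nonstandard ingredient is supporting path queries and updates that respect the two alternating classes: after rerooting at $a$, the class of a path edge is the parity of the depth of its deeper endpoint, and bipartiteness of $G$ makes this parity labeling globally consistent on each tree (consecutive tree edges always receiving opposite parities), so an ``even-class'' and ``odd-class'' aggregate — the minimum flow and the total $w$-weight within each class — can be threaded through the preferred-path/splay structure, together with a lazy $(+\delta,-\delta)$ range update, exactly as an ordinary path sum is maintained. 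Each tuple then costs $O(1)$ such operations, for $O(L\log n)$ time overall, and the forest never exceeds $n-1$ edges, giving $O(n)$ space. I expect the bulk of the work to be precisely this augmentation of link/cut trees to carry the alternating-class aggregates and lazy updates correctly under rerooting; the cycle-cancelling argument itself is elementary, which is why the full data-structure implementation is deferred to Appendix~\ref{sec:cc}.
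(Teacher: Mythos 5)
Your proposal is correct and follows essentially the same route as the paper: cancel the unique cycle created by each arriving edge, choosing which alternating (parity) class to increase via the signed weight sum and cancelling by the minimum flow of the decreased class, all maintained with link/cut trees in $O(\log n)$ amortized time per tuple. The only cosmetic differences are that the paper packages the per-tuple procedure as a cycle-cancelling oracle applied to $O(n)$-size chunks of the stream and uses three parallel link/cut trees (alternating-signed weights plus two flow trees with sentinel values) rather than a single tree augmented with two-class aggregates, which amounts to the same data-structure work you defer to the appendix.
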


\arxiv{

\subsection{Approximate MCM in fewer passes and optimal space}\label{ssec:fullproof}

We finally give a proof of Theorem~\ref{thm:match-rounding} by applying the tools we have built.

\restatemcmdown*
\begin{proof}
We first define the steps of the algorithm, where we adjust the error parameter $\eps$ by an appropriate constant in each of the following subroutines.
\begin{enumerate}
	\item Use Lemma~\ref{lem:greedy} to obtain an estimate $M \le M^* \le 2M$.
	\item If $M \log \frac 1 \eps \le n$, use Proposition~\ref{prop:vertexreduce} to reduce to $O(M \log \frac 1 \eps)$ vertices; overload $V$, $E$ to be the vertex and edge sets on the resulting induced subgraph.
	\item Solve \eqref{eq:matchl1reg} to $\eps M$ additive accuracy using Algorithm~\ref{alg:sherman}, with approximate minimizer $x$. Set $\tx = 2M x_E$.
	\item Apply Proposition~\ref{prop:cc} to $\tx$ in streaming fashion to obtain a flow $\hat{x}$.
	\item Greedily find an exact MCM on the support of $\hat{x}$.
\end{enumerate}	

We next demonstrate correctness. Proposition~\ref{prop:shermancorrect} shows that Algorithm~\ref{alg:sherman} is correct for implementing Step 3. By Proposition~\ref{prop:vertexreduce}, the result of Step 3 of the above algorithm satisfies \eqref{eq:mcm_guarantees} (by appropriately adjusting constants), since $\eps M^* \ge \eps M$ by Step 1, and the optimal value of \eqref{eq:matchoverflow} is at most $M^*$ by choosing $2Mx$ to be the restriction of the MCM to $E$, with all overflow placed on the extra edge. Proposition~\ref{prop:cc} then returns a sparse $O(\eps)$-approximate MCM supported on a forest, also satisfying \eqref{eq:mcm_guarantees}. Lemma~\ref{lem:overflowremoval} then implies that this support contains an $O(\eps)$-approximate MCM. It is well-known that to compute a maximum cardinality matching on a tree, the greedy algorithm of repeatedly taking a leaf edge and removing both endpoints from the graph suffices \cite{Goddard04}. Applying this to each tree in the forest yields correctness.

Finally, we discuss pass, space, and work complexities. Since Proposition~\ref{prop:vertexreduce} gives a vertex size bound of at most $M \log \frac 1 \eps$, using Algorithm~\ref{alg:sherman} and~\ref{alg:altmin} in Step 3 of the algorithm implies by Corollary~\ref{corr:lss} that the pass complexity is $\log n \cdot \log(\eps^{-1}) \cdot \eps^{-1}$, since the additive accuracy level is $O(\eps M^*)$ and $\norm{\ma}_\infty = \norm{b}_1 = O(M^*)$. Next, by applying Proposition~\ref{prop:cc} to the input stream $x$ which is the average of the iterates of Algorithm~\ref{alg:sherman}, we reduce the support of $x$ to be on $O(n)$ edges supported on a forest, while maintaining that \eqref{eq:mcm_guarantees} is satisfied. We can give $x$ in a stream of size $O(m\log n \cdot \eps^{-1})$ which induces the average iterate, per Theorem~\ref{thm:lss}. This takes $O(m\log^2 n \cdot \eps^{-1})$ work by Proposition~\ref{prop:vertexreduce}, and does not dominate the pass complexity. Finally, we explicitly store the support of the forest, so it is clear that Step 5 is implementable in $O(n)$ space and work.
\end{proof}

\section{Further matching applications}\label{ssec:further}
\arxiv{
Here we give a few applications using the semi-streaming matching algorithm proposed in the paper. Section~\ref{ssec:exactMCM} shows how to apply the semi-streaming matching algorithm to give exact MCM. Sections~\ref{ssec:wbg},~\ref{ssec:randsam-ot} and~\ref{ssec:mwm} extend our semi-streaming framework and algorithms to the model where the cost vector $c$ is non-uniform. We first state a general result on solving box-simplex games induced by weighted matching problems in Section~\ref{ssec:wbg}. Then in Section~\ref{ssec:randsam-ot} and~\ref{ssec:mwm}, we show how to adapt the model to specific applications. Specifically, we develop space-effcient solvers for optimal transportation and MWM respectively, by reductions to our more general solver.}

\subsection{Exact maximum cardinality matching}\label{ssec:exactMCM}

Here we  show how to leverage Theorem~\ref{thm:match-rounding} to obtain the following result on computing MCMs.

\restateexact*

We remark that the number of passes in Theorem~\ref{thm:mcm-exact} can be reduced to $\widetilde{O}((M^*)^{\frac 3 4 + o(1)})$ for any MCM problem with MCM size $M^*>0$, using the vertex size reduction of Proposition~\ref{prop:vertexreduce} in Section~\ref{ssec:tools}. To prove Theorem~\ref{thm:mcm-exact}, we use the following result from~\cite{LiuJS19} regarding reachability in directed graphs. This result was originally proved for parallel (PRAM) algorithms, but its extension to semi-streaming algorithms is straightforward and has been observed previously, e.g.\ in~\cite{LiuSZ20}. 

\begin{proposition}[\cite{LiuJS19}]\label{prop:JLS19}	There is a semi-streaming algorithm that given  \emph{directed} $G=(V,E)$ with $|V|=n$ and vertices $s$, $t$, with high probability finds a path from $s$ to $t$ (assuming one exists) in $O(n^{\frac12+o(1)})$ passes. 
\end{proposition}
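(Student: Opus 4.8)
The statement is a near-immediate translation of the parallel (PRAM) $s$-$t$ reachability algorithm of \cite{LiuJS19} into the semi-streaming model; as remarked in the text, this translation has essentially been made before (e.g.\ in \cite{LiuSZ20}), so the plan is to sketch the main points rather than reprove reachability from scratch. The algorithm of \cite{LiuJS19}, given a directed graph on $n$ vertices, decides $s$-$t$ reachability with high probability using $m^{1+o(1)}$ work and $n^{\frac12+o(1)}$ parallel depth, and can be instrumented to output a witnessing path. The observation to make is that this algorithm can be simulated so that each unit of parallel depth costs $O(1)$ passes over the edge stream and the working memory never exceeds $\otilde(n)$, which gives the claimed $O(n^{\frac12+o(1)})$-pass bound.

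First I would recall the structure of the \cite{LiuJS19} algorithm. It proceeds in $n^{o(1)}$ phases (organized recursively on residual instances); each phase (i) samples a random subset $S \subseteq V$ of appropriate size, (ii) runs bounded-depth ($h = n^{\frac12+o(1)}$) forward and backward graph searches from $S$ to determine which vertices reach, or are reachable from, each sampled vertex within $h$ hops, (iii) uses these relations to add a set of $\otilde(n)$ shortcut edges (each corresponding to a short witness path in the current graph) that provably shrink the hop-diameter of a reachability-preserving supergraph, and (iv) continues on the residual instance; at the end a single bounded-depth ($n^{\frac12+o(1)}$) search in the shortcutted graph decides $s$-$t$ reachability. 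The key point is that every object maintained here --- the sample $S$, the per-sample reachability frontiers, and the accumulated shortcut set --- has size $\otilde(n)$; in particular we never need to store the full edge set, which may have $m = \omega(n)$ edges.

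Second I would give the streaming simulation. The only primitive with super-constant parallel depth is a bounded-depth multi-source graph search: given a stored frontier $F \subseteq V$, compute all vertices reachable within one more hop in the current graph (the stream edges together with the $\otilde(n)$ stored shortcut edges). This is exactly one pass over the edge stream --- for each streamed edge $(u,v)$ with $u \in F$, mark $v$ --- plus an $O(|\text{shortcuts}|)$ local sweep, using $\otilde(n)$ memory. Hence a depth-$h$ search is $h$ passes, and since all other per-phase work (sampling, shortcut selection, updating the data structures) is local computation on $\otilde(n)$-size objects requiring no passes, each of the $n^{o(1)}$ phases costs $n^{\frac12+o(1)}$ passes, for a total of $n^{\frac12+o(1)} \cdot n^{o(1)} = n^{\frac12+o(1)}$ passes. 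To return an actual path rather than a yes/no answer, I would have each bounded-depth search additionally record a predecessor pointer for every newly reached vertex, and have each shortcut edge store a pointer to the (recursively reconstructible) witness path that created it; unwinding these pointers reconstructs an $s$-$t$ path in $\otilde(n)$ additional space and no additional passes, and the high-probability guarantee is inherited verbatim since the only randomness is in the vertex sampling underlying the shortcut construction.

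The main thing to verify carefully --- the only place the ``straightforward'' claim has content --- is the bookkeeping that (a) the working memory genuinely stays $\otilde(n)$ across all $n^{o(1)}$ levels of the recursion, which uses that each residual instance still admits an $\otilde(n)$-size shortcut set and that space from completed subproblems can be reused, and (b) that every deep subroutine of \cite{LiuJS19} really is a bounded-hop graph search of depth $n^{\frac12+o(1)}$, so that the depth-to-passes translation is lossless up to constants. Both hold by direct inspection of the \cite{LiuJS19} algorithm, and this inspection is the bulk of the (short) proof.
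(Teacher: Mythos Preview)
Your proposal is correct and follows essentially the same approach as the paper: simulate each unit of parallel depth in the \cite{LiuJS19} hopset construction by one pass over the stream, store the $\otilde(n)$ shortcut edges together with the BFS arborescences (your predecessor pointers) that witness them, and recover an actual $s$-$t$ path by recursively replacing shortcut edges through these stored trees offline. The paper's sketch spells out the unwinding order (from the last hopset layer $H_{O(\log n)}$ down to $H_1$) and the offline nature of the final reconstruction a bit more explicitly, but the content is the same.
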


We remark that as stated, \cite{LiuJS19}  only solves the decision problem of whether $s$ can reach $t$, while in Proposition~\ref{prop:JLS19} we require the path. 

\arxiv{
However, here we provide a brief sketch of how it is straightforward to modify \cite{LiuJS19} to compute a $s$-$t$ path and thereby prove Proposition~\ref{prop:JLS19}.

 The algorithm of~\cite{LiuJS19} computes a \emph{hopset} $H$ of $\tO(n)$ edges, i.e. a graph $H$ with $\tO(n)$ edges with the property that any vertex $u$ can reach another vertex $v$ in $G \cup H$ if and only $u$ can reach $v$ in $G$ also. Further, this hopset has the property that with high probability there exists a path of length $O(n^{\frac12+o(1)})$ from $s$ to $t$ in $G \cup H$. The hopset $H$ is computed in $O(\log n)$ iterations as $H_1 \cup H_2 \cup \dots H_{O(\log n)}$, where each $H_i$ is computed from $O(n^{\frac12+o(1)})$-depth breath-first searches in vertex-induced subgraphs of $G_i = G \cup H_1 \cup H_2 \cup \dots \cup H_{i-1}$. This computation requires $\tO(m)$ work and $O(n^{\frac12+o(1)})$ depth in the PRAM model (as was the motivation in~\cite{LiuJS19}). Similarly, it can also be implemented in $O(n^{\frac12+o(1)})$ passes in the semi-streaming model by performing the operations of each level of parallel depth within $O(1)$ passes. Furthermore, a directed edge $(u,v)$ is added to $H_i$ if and only if $u$ is the root of one of the BFS arborescences used to form $H_i$, and can reach $v$ in the same BFS computation. 
 
 To recover a path from $s$ to $t$, we run the algorithm of~\cite{LiuJS19} and store the arborescences used to form $H$. We then perform one last $O(n^{\frac12+o(1)})$-depth breadth-first search from $s$ in $G \cup H$: if a path $P$ from $s$ to $t$ in $G \cup H$ is found, we mark the edges of $P$ and store them in the memory. 
 We can then replace the edges of $P$ that belong to $H$ to the edges of $G$ stored as parts of arborescences used to form $H$. For $i$ from $O(\log{n})$ down to $1$, we replace all edges in this $s$-$t$ path belonging to $H_i$ with the corresponding arborescence paths belonging to $G_i$. In the end we obtain a (not necessarily simple) path from $s$ to $t$ in $G$, which we may make simple by removing duplicates (notice that technically for this step, we only need to run a BFS algorithm offline, i.e., with no more passes on the stream, from $s$ to $t$ among the stored edges of $G$ that form $H \cup P$).  
 }

\begin{proof}[Proof of Theorem~\ref{thm:mcm-exact}] 
	We set $\eps := n^{-\frac{3}{4}}$ and run our algorithm in Theorem~\ref{thm:match-rounding} on $G$ to obtain a matching $F$ with $|F| \geq (1 - \epsilon) \cdot M^* \geq M^* - n^{\frac 1 4}$ (as $M^* \leq n$). 
	This requires $O(n^{\frac 3 4 + o(1)})$ passes. 
	
	We then augment this approximate MCM $F$ to an exact MCM by repeatedly finding augmenting paths for it. We form the  (directed) residual graph for $F$ as follows: we take vertices $\{V, s, t \}$ and add the edges $(s,v)$ for all $v \in L$, $(v,t)$ for all $v \in R$, 
	and for any $e = (i,j) \in E(G)$ for $i \in L, j \in R$, we add $(i,j)$ if $e \notin F$ and $(j,i)$ if $e \in F$. As long as $F$ is suboptimal, we can increase its size by $1$ by finding a path from $s$ to $t$ in this residual graph and augmenting the flow with this path's edges: such a path can be found in $O(n^{\frac12 + o(1)})$ passes by applying 
	Proposition~\ref{prop:JLS19}. Since $|F| \geq M^* - n^{\frac 1 4}$ initially, it can be augmented at most $n^{\frac14}$ times, and the result follows.
\end{proof}

\begin{remark}\label{rem:space-pass-efficiency}
	The exact MCM in Theorem~\ref{thm:mcm-exact} is a direct byproduct of the simultaneous space and pass-efficiency of our approximate algorithm and cannot be achieved directly by the prior state-of-the-art. While the space of algorithm of~\cite{AssadiLT21} is $O(n)$ like ours, their algorithm requires $O(\eps^{-2})$ passes which combined with the augmentation approach and $O(n^{\frac12+o(1)})$-pass algorithm of~\cite{LiuJS19} for each augmentation, leads to $\tO(n)$ passes for exact MCM. On the other hand, while the pass complexity of the algorithm of~\cite{AhnG18} is sufficient on the surface for this augmentation approach, its space-complexity exceeds the $\tO(n)$ 
	requirement of semi-streaming algorithms when $\eps \ll (\poly\log{(n)})^{-1}$. More precisely, by setting $\eps$ small enough in the algorithm of~\cite{AhnG18} to obtain an $o(n)$ pass algorithm using the above approach, the space requirement would become $\Omega(n^2)$, which matches the trivial bound obtained by storing the input  in just one 
	pass and solving the problem optimally offline.)
\end{remark}

\subsection{Weighted bipartite matching under an $\ell_1$ constraint}\label{ssec:wbg}

In this section $G = (V, E)$ is a bipartite graph where $V = L \cup R$ with unweighted incidence matrix $\mb \in \{0, 1\}^{E \times V}$. We consider a weighted matching problem parameterized by a (possibly non-uniform) demand vector $d\in[0,1]^{V}$, and weights $w \in \R_{\ge 0}^E$, formally defined as follows:
\begin{equation}
M^* \defeq \max_{x \ge 0} w^\top x \text{ subject to } \mb^\top x\le d.\label{eq:weighted-gen}
\end{equation}

We also assume we know the value $S$ of the $\ell_1$ norm of an optimal matching which is feasible for the demands $d$, yielding the maximum matching weight $M^*$; in all our relevant applications, we will have $S \ge 1$. The assumption that we exactly know $S$ may seem restrictive; in all our applications (cf.\ Section~\ref{ssec:randsam-ot} and~\ref{ssec:mwm}), it will suffice to know an upper bound and pad the graph with a dummy edge appropriately. We will refer to the optimal matching as $Sx^*$, for some $x^* \in \Delta^E$. 

Next, we state the $\ell_1$ regression problem we solve to compute an approximate solution to \eqref{eq:weighted-gen}. Define the extended graph with vertex and edge sets $\widetilde{V}$, $\widetilde{E}$ and incidence matrix $\tmb$, as in the reduction of Section~\ref{ssec:formulation} (as a reminder, it simply adds an extra isolated edge with no constraints on endpoints). The overflow formulation we will solve to obtain an approximate minimizer to \eqref{eq:weighted-gen} is:
\begin{equation}\label{eq:matchminimaxweighted-overflow}
\begin{gathered}
\min_{x \in \Delta^{\tE}} -\inprod{w_E}{Sx} + \norm{w}_{\infty} \text{Overflow}_{d}(Sx).
\end{gathered}
\end{equation}
By the arguments of Lemma~\ref{lem:equivalentsaddlepoint}, \eqref{eq:matchminimaxweighted-overflow} is equivalent to the following $\ell_1$ regression problem up to a constant scalar shift:
 \begin{equation}\label{eq:matchminimax-weighted}\begin{gathered} \min_{x \in \Delta^{\tE}} -c^\top x + \norm{\ma^\top x - b}_1, \\
 \text{where } \ma \defeq \half S\norm{w}_\infty\tmb,\; b \defeq \half \norm{w}_{\infty} d,\text{ and } c \defeq S\Par{\norm{w}_\infty \1_E - w_E}.
\end{gathered}\end{equation} 

We next show we can use Algorithm~\ref{alg:removeOverflow} to obtain a feasible approximate MWM on a weighted graph.

\begin{lemma}[Weighted overflow removal]\label{lem:overflowremoval-weighted}
Suppose that for a weighted graph $(V, E, w)$ with MWM value $M^*$, $\hx \in \R_{\ge 0}^{\tE}$ is a flow satisfying 
\[\inprod{w_E}{\hx} - \norm{w}_\infty \Overflow_d(\hx) \ge M^* - \eps. \]
Applying $\mathsf{RemoveOverflow}$ to $\hx$ outputs $\tx \in \R^E_{\geq 0}$ with $\tx \leq x$, $\mb^\top \tx \leq d$, and $\inprod{w_E}{\tx} \ge M^* - \eps$.
\end{lemma}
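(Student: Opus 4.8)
The statement is the weighted analogue of Lemma~\ref{lem:overflowremoval}, so the plan is to run $\mathsf{RemoveOverflow}$ on the restriction $\hx_E \in \R_{\ge 0}^E$ of $\hx$ to the original edge set $E$ (viewed as a flow on $G = (V,E)$ with demand vector $d$), and then show that the $w$-weighted value destroyed by the coordinatewise rescaling step is at most $\norm{w}_\infty$ times the overflow of $\hx$. The extra isolated edge of the extended graph $\tG = (\tV, \tE)$ plays no role here: it touches neither vertex of $V$, so it appears in no vertex constraint and contributes nothing to $\Overflow_d$, and since $w_E$ is supported on $E$ we also have $\inprod{w_E}{\hx_E} = \inprod{w_E}{\hx}$ and $\Overflow_d(\hx_E) = \Overflow_d(\hx)$.

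\textbf{Feasibility.} First I would note that the two feasibility conclusions come for free from Lemma~\ref{lem:overflowremoval}: applying $\mathsf{RemoveOverflow}(\hx_E, G, d)$ produces $\tx$ with $\tx \ge 0$, $\tx \le \hx_E$ entrywise, and $\mb^\top \tx \le d$, since that argument uses only nonnegativity and the structure of the rescaling $\tx_e = \hx_e(1 - \max(f_a/d_a^x, f_b/d_b^x))$, not any objective. So the only genuinely new content is the value bound $\inprod{w_E}{\tx} \ge M^* - \eps$.

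\textbf{Value bound.} For this I would redo the loss computation from Lemma~\ref{lem:overflowremoval}, now weighting by $w$. With $d^x \defeq \mb^\top \hx_E$ and $f \defeq (d^x - d)_+$, the lost weight is $\inprod{w_E}{\hx_E} - \inprod{w_E}{\tx} = \sum_{e=(a,b)\in E} w_e\hx_e \max(f_a/d_a^x, f_b/d_b^x)$. Bounding $w_e \le \norm{w}_\infty$, replacing $\max$ by the sum of the two terms, and then regrouping the double sum by vertex exactly as in Lemma~\ref{lem:overflowremoval} collapses $\sum_{e=(a,b)\in E}\hx_e(f_a/d_a^x + f_b/d_b^x)$ to $\sum_{j\in V} f_j = \Overflow_d(\hx_E) = \Overflow_d(\hx)$. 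Hence $\inprod{w_E}{\tx} \ge \inprod{w_E}{\hx} - \norm{w}_\infty\Overflow_d(\hx)$, and chaining with the hypothesis $\inprod{w_E}{\hx} - \norm{w}_\infty \Overflow_d(\hx) \ge M^* - \eps$ finishes the proof.

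\textbf{Main obstacle.} The only delicate point is the bookkeeping between the extended graph $\tG$ on which $\hx$ is defined and the original graph $G$ on which $\mathsf{RemoveOverflow}$, $\mb$, and $\Overflow_d$ most naturally act: one must confirm that the extra edge is truly isolated from $V$, so that restricting to $E$ leaves both $\Overflow_d$ and the $w_E$-weighted value unchanged, and that the claimed feasibility is $\mb^\top \tx \le d$ on the original vertex set. Beyond that, the argument is just the weighted re-run of the telescoping computation in Lemma~\ref{lem:overflowremoval}, with the single new observation that pulling $\norm{w}_\infty$ out of the weighted loss is precisely what lets the per-vertex regrouping reassemble $\Overflow_d(\hx)$.
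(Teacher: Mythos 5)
Your proposal is correct and is essentially the paper's argument: the paper proves the value bound in one line by combining the entrywise guarantee $\tx \le \hx$ and the $\ell_1$ loss bound of Lemma~\ref{lem:overflowremoval} with the $\ell_1$-$\ell_\infty$ H\"older inequality, giving $\inprod{w_E}{\hx - \tx} \le \norm{w}_\infty \norm{\hx - \tx}_1 \le \norm{w}_\infty \Overflow_d(\hx)$, which is exactly the bound you obtain by re-running the telescoping computation with $w_e \le \norm{w}_\infty$ pulled out. Your extra bookkeeping about restricting from $\tE$ to $E$ (the isolated zero-weight edge contributing nothing to $w_E$ or to $\Overflow_d$) is a harmless explicit version of what the paper leaves implicit.
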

\begin{proof}
By the $\ell_1$-$\ell_\infty$ H\"older's inequality and the guarantees of Algorithm~\ref{alg:removeOverflow},
\[\norm{w_E}{\tx} \ge \norm{w_E}{\hx} - \norm{w}_\infty \norm{\hx - \tx}_1 \ge \norm{w_E}{\hx} - \norm{w}_\infty \Overflow_d(\hx) \ge M^* - \eps.\]
\end{proof}

Finally, by combining Proposition~\ref{prop:shermancorrect} applied to \eqref{eq:matchminimax-weighted}, the cycle cancelling procedure of Proposition~\ref{prop:cc}, and the rounding procedure in Algorithm~\ref{alg:removeOverflow}, we obtain the following guarantee for solving \eqref{eq:weighted-gen}, summarized in Corollary~\ref{cor:gen}.

\begin{restatable}{corollary}{corgen}\label{cor:gen}
There is an algorithm that computes an $n$-sparse $\eps$-additively approximate solution $\hx$ to the problem \eqref{eq:weighted-gen} satisfying $w^\top\hx -w^\top (Sx^*) \ge \eps$, $\mb^\top \hx \le d$ using $O\Par{\gamma \log n \log (\gamma+\|w\|_\infty\|d\|_1 \eps^{-1})}$ passes, $O\Par{n}$ memory, and $O\Par{m \gamma \log n \log(n\gamma+n\|w\|_\infty\|d\|_1\eps^{-1})}$ work, where $\gamma = \frac{S\|w\|_\infty}{\eps}$.
\end{restatable}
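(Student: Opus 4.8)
The plan is to run the paper's general pipeline — implicitly solve a box-simplex game, sparsify the fractional solution by cycle cancelling, then round to feasibility — on the regression problem \eqref{eq:matchminimax-weighted}, and then track parameters. Concretely the algorithm I would use is: (i) in one pass compute $\norm{w}_\infty$ and form \eqref{eq:matchminimax-weighted} with $\ma = \half S\norm{w}_\infty\tmb$, $b = \half\norm{w}_\infty d$, $c = S(\norm{w}_\infty\1_E - w_E)$, noting $\norm{\ma}_\infty \le S\norm{w}_\infty$ since each row of $\tmb$ has at most two ones; (ii) run $\Sherman$ with input matrix $\ma$, vector $b$, cost vector $-c$ (so that \eqref{eq:matchminimax-weighted} is an instance of \eqref{eq:boxsimplexprimal}) via its low-space implementation Algorithm~\ref{alg:lowspacesherman}, with accuracy $\Theta(\eps)$ and $T$ as in Proposition~\ref{prop:shermancorrect}; by Theorem~\ref{thm:lss} this emits a stream of $O(m\cdot\frac{\norm{\ma}_\infty\log m}{\eps})$ nonnegative $1$-sparse vectors whose sum $\bx\in\Delta^{\tE}$ is a $\Theta(\eps)$-approximate minimizer of \eqref{eq:matchminimax-weighted}; (iii) multiply this stream through by $S$, discard the entries supported on the dummy edge, and feed the resulting edge-flow stream for $G=(V,E,w)$ into the cycle-cancelling data structure of Proposition~\ref{prop:cc}, obtaining a flow on $O(n)$ edges forming a forest; (iv) apply $\mathsf{RemoveOverflow}$ (Algorithm~\ref{alg:removeOverflow}) to that flow with demands $d$, and return the result $\hx$.

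Correctness I would argue by composing earlier guarantees. The preprocessing steps inside $\Sherman$ shift \eqref{eq:matchminimax-weighted} only by an additive constant and do not delete the dummy edge (its shifted cost is among the smallest), so Proposition~\ref{prop:shermancorrect} gives that $\bx$ is $\Theta(\eps)$-approximately optimal for \eqref{eq:matchminimax-weighted}, hence — via the constant-shift identity proved inside Lemma~\ref{lem:equivalentsaddlepoint} — for \eqref{eq:matchminimaxweighted-overflow}. I would observe that the minimum of \eqref{eq:matchminimaxweighted-overflow} is exactly $-M^*$: the value $-M^*$ is attained by $x^*$ (feasible, $\ell_1$ norm $S$, zero overflow), and no smaller value is possible because $\mathsf{RemoveOverflow}$ (Lemma~\ref{lem:overflowremoval-weighted}) converts any flow into a feasible matching of weight at least the negation of its \eqref{eq:matchminimaxweighted-overflow} objective, and feasible matchings have weight $\le M^*$. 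Therefore $\inprod{w_E}{S\bx} - \norm{w}_\infty\Overflow_d(S\bx) \ge M^* - \Theta(\eps)$. The key point for step (iii) is that $\Overflow_d(\cdot)$ depends on a flow only through its image under $\mb^\top$ on the original vertices $V$ (the two dummy vertices never overflow since the dummy row of $\tmb$ is zero), so cycle cancelling — which preserves $\mb^\top(\cdot)$ and does not decrease $\inprod{w}{\cdot}$ — leaves $\inprod{w_E}{\cdot} - \norm{w}_\infty\Overflow_d(\cdot)$ non-decreased while shrinking the support to $O(n)$ edges. Finally Lemma~\ref{lem:overflowremoval-weighted} applied to the cycle-cancelled flow yields $\hx$ with $\mb^\top\hx\le d$ and $\inprod{w}{\hx} \ge M^* - \Theta(\eps) = w^\top(Sx^*) - \Theta(\eps)$; rescaling $\eps$ by the accumulated constant gives an $n$-sparse feasible $\eps$-additive solution to \eqref{eq:weighted-gen}.

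For the resource bounds, each of the four steps runs in $O(n)$ space: Theorem~\ref{thm:lss} maintains only $O(n)$-dimensional implicit representations, Proposition~\ref{prop:cc} runs in $O(n)$ space, and $\mathsf{RemoveOverflow}$ touches an $O(n)$-sparse flow. Passes and work are dominated by step (ii). By Theorem~\ref{thm:lss} with accuracy $\Theta(\eps)$ the pass count is $O(\frac{\norm{\ma}_\infty}{\eps}\log m \log(\max(\norm{\ma}_\infty,\norm{b}_1)\eps^{-1}))$; substituting $\norm{\ma}_\infty\le S\norm{w}_\infty = \gamma\eps$, $\norm{b}_1 = \half\norm{w}_\infty\norm{d}_1$, $m\le n^2$, and $\max(\norm{\ma}_\infty,\norm{b}_1)\eps^{-1} = O(\gamma + \norm{w}_\infty\norm{d}_1\eps^{-1})$ gives $O(\gamma\log n\log(\gamma + \norm{w}_\infty\norm{d}_1\eps^{-1}))$ passes. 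The work is $O(\gamma\log n)$ outer iterations times $O(\log(n\norm{\ma}_\infty\eps^{-1}))$ inner alternating-minimization steps times $O(\nnz(\ma)) = O(m)$ work per pass (Lemma~\ref{lem:implicitx}), i.e.\ $O(m\gamma\log n\log(n\gamma + n\norm{w}_\infty\norm{d}_1\eps^{-1}))$; the cycle-cancelling pass costs $O(L\log n) = O(m\gamma\log^2 n)$, which is subsumed, and $\mathsf{RemoveOverflow}$ is $O(m)$.

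The step I expect to require the most care is the composition in (iii): verifying that the stream emitted by the implicit box-simplex solver of Theorem~\ref{thm:lss} is literally in the edge-flow format consumed by the link/cut-tree data structure of Proposition~\ref{prop:cc}, and that the comparatively weak guarantee of cycle cancelling (preservation of $\mb^\top$, non-decrease of the linear term) already suffices to preserve the \emph{overflow-penalized} objective — precisely because the overflow functional factors through $\mb^\top$. Everything else is routine bookkeeping: propagating the $\Theta(\eps)$ accuracy through the constant shifts and the $\ell_1$-$\ell_\infty$ H\"older step of Lemma~\ref{lem:overflowremoval-weighted}, and the arithmetic turning $\max(\norm{\ma}_\infty,\norm{b}_1)\eps^{-1}$ into $\gamma + \norm{w}_\infty\norm{d}_1\eps^{-1}$.
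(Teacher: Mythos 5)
Your proposal is essentially the paper's own proof of Corollary~\ref{cor:gen}: solve the regression problem \eqref{eq:matchminimax-weighted} with the semi-streaming box-simplex solver (Proposition~\ref{prop:shermancorrect}/Theorem~\ref{thm:lss}), feed $S$ times the streamed average iterate (restricted to $E$) into the cycle-cancelling structure of Proposition~\ref{prop:cc}, and certify the result with Lemma~\ref{lem:overflowremoval-weighted}; the only (harmless) deviation is that you return the $\mathsf{RemoveOverflow}$ flow directly, where the paper's last step computes an exact MWM on the $O(n)$-edge forest support, and both outputs satisfy the corollary via the same lemma. One small caveat inherited from the paper's display: redoing the Lemma~\ref{lem:equivalentsaddlepoint} calculation shows that \eqref{eq:matchminimaxweighted-overflow} equals $c^\top x + \norm{\ma^\top x - b}_1$ plus a constant with $c = S\Par{\norm{w}_\infty \1_E - w_E}$, so the cost vector passed to the solver should be $+c$ rather than the $-c$ in your parenthetical (the sign in \eqref{eq:matchminimax-weighted} is a typo); since your correctness argument is carried out entirely in the overflow formulation, this is a one-character fix and the rest of your accounting of passes, space, and work is right.
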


\begin{proof}

The proof is analogous to that of Theorem~\ref{thm:match-rounding}. 	
We first define the steps of the algorithm, where we adjust the error parameter $\eps$ by an appropriate constant in each of the following subroutines.
\begin{enumerate}
	\item Solve \eqref{eq:matchminimax-weighted} to $\eps$ additive approximation, and let the approximate solution be $x$. Set $\tx = S x_E$.
	\item Apply Proposition~\ref{prop:cc} to $\tx$ in streaming fashion to obtain a flow $\hat{x}$.
	\item Greedily find an exact MWM on the support of $\hat{x}$.
\end{enumerate}	
The correctness and runtime follow in the same way as in Theorem~\ref{thm:match-rounding}, where we use Lemma~\ref{lem:overflowremoval-weighted} in place of Lemma~\ref{lem:overflowremoval}, which implies the support of $\hx$ contains a $\eps$-additively approximate MWM.
\end{proof}

We briefly remark on the utility of Corollary~\ref{cor:gen}. The generality of being able to handle arbitrary costs has the downside of an additive error guarantee rather than multiplicative. We will show how to apply this general result in different settings where either the optimal solution is supported on simplex (e.g.\ $S=1$ for optimal transport), or we can modify the graph appropriately to have a saturated optimal matching (e.g.\ $S=n$ for maximum weight matching).

\subsection{Optimal transportation}
\label{ssec:randsam-ot}

In this section, we give a semi-streaming implementation for solving the discrete optimal transportation problem. This problem is parameterized by costs\footnote{Costs are without loss of generality nonnegative, as adding a uniform multiple of $\norm{c}_\infty\1$ affects the cost of all transportation plans by a fixed amount and the quantity $\cmax$ by at most a constant factor.} $c \in \R^{E}_{\ge 0}$, and two sets of demands $\ell \in \Delta^L$, $r \in \Delta^R$ where $L \cup R = V$ is the bipartition of the vertices, we wish to find a transportation plan $x \in \Delta^{E}$ between the demands with (approximately) minimal cost, as specified by $c$; we defer a further discussion of this formulation to \cite{AltschulerWR17}. \cite{JambulapatiST19} showed that to obtain a transport plan approximating the optimum to $\eps$-additive accuracy, it suffices to solve the following problem to $\eps$ duality gap, where $d$ is the vertical concatenation of $\ell$ and $r$, $\cmax \defeq \norm{c}_\infty$, and $\mb$ is the adjacency matrix of the unweighted complete bipartite graph:
\begin{equation}\label{eq:otminiax}
\min_{x \in \Delta^{E}} -c^\top x + 2C_{\max} \text{Overflow}_d(x).
\end{equation}

The formulation above is an instance of the weighted formulation~\eqref{eq:matchminimaxweighted-overflow} with $S=1$, $w=c$. Note here we also use the standard adjacency matrix $\mb$ instead of the extended one $\tilde{\mb}$ in denoting overflows $\text{Overflow}(\cdot)$ for simplicity as opposed to the rest of the paper. We thus apply Corollary~\ref{cor:gen} to solve the problem and conclude by giving a complete result for semi-streaming optimal transportation.%

\restateotthm*
\begin{proof}
The proof follows by first applying Corollary~\ref{cor:gen} to~\eqref{eq:otminiax}, with error parameter $\eps \norm{c}_\infty$, to obtain an $O(n)$ sparse solution $x$ in the desired work and space budget. For demands $d \in \R^V$ set to be the vertical concatenation of the given $\ell$ and $r$, this is a transportation plan that satisfies $\mb^\top x \le d$, achieves an additive $\eps$ approximation in the optimal cost, and can be stored in $O(n)$ space. To make $\mb^\top x = d$, it suffices to add a rank-one correction term as in Algorithm 2 of \cite{AltschulerWR17}, whose coordinates can be computed in a stream in one pass and $O(n)$ space (by storing the rank-one components). This can only help the objective, and the resulting exact transport plan can be made sparse via Proposition~\ref{prop:cc} within the specified work budget.
\end{proof}

\subsection{Maximum weight matching}\label{ssec:mwm}

We give a result for computing an approximate maximum weight matching for a bipartite graph in the semi-streaming model. Using the construction in Section~\ref{ssec:wbg} with a demand vector $d = \1_{V}$ (i.e.\ the vector which is all-ones on $V$ and zeroes on $\tV\setminus V$), it is clear the $\ell_1$ norm of a maximum weight matching is $n$, simply by putting all additional flow on the extra edge. Applying Corollary~\ref{cor:gen} then directly gives a complete result for semi-streaming maximum weight matching.

\begin{restatable}{theorem}{restatemwmthm}\label{thm:mwm}
	There is a deterministic semi-streaming algorithm which given any weighted bipartite graph $G = (V, E, w)$ with $|V| = n$, $|E| = m$, and defining $\gamma \defeq \tfrac{n\norm{w}_\infty}{\eps}$, finds an $\eps$-additive maximum weight matching using $O\Par{\gamma\log n\log(n\gamma)}$ passes, $O\Par{n}$ space, and $O(m\gamma\log n\log (n\gamma))$ total work.
\end{restatable}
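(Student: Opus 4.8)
The plan is to derive Theorem~\ref{thm:mwm} as a direct instantiation of the weighted bipartite matching framework of Section~\ref{ssec:wbg}, specialized to the demand vector $d = \1_V$ (all-ones on the original vertices, zero on the two extra vertices of $\tG$). First I would observe that with $d = \1_V$ the feasible set of \eqref{eq:weighted-gen} is exactly the fractional matching polytope of the bipartite graph $G$, whose optimum, call it $M^*$, equals the integral maximum weight matching value by integrality of the bipartite matching polytope (K\"onig's theorem). To invoke Corollary~\ref{cor:gen} I must supply the $\ell_1$ norm $S$ of an optimal solution: any maximum-weight fractional matching $x^*$ has $\norm{x^*}_1 \le \tfrac n 2$, since each of the $n$ vertices carries at most one unit of flow and every edge is double-counted, so it suffices to take $S = n$ and use the dummy edge of $\tG$ to place the slack $n - \norm{x^*}_1$ on that (zero-weight) edge, so that the padded optimum has $\ell_1$ norm exactly $S = n$, as the framework requires; the objective is unchanged because the dummy edge has weight $0$.

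With $S = n$ and $\norm{d}_1 = n$, the parameter $\gamma = S\norm{w}_\infty \eps^{-1}$ of Corollary~\ref{cor:gen} coincides with the $\gamma = n\norm{w}_\infty\eps^{-1}$ in the theorem. I would then substitute into the bounds of Corollary~\ref{cor:gen}: its pass count $O(\gamma\log n\log(\gamma + \norm{w}_\infty\norm{d}_1\eps^{-1}))$ becomes $O(\gamma\log n\log(\gamma+\gamma)) = O(\gamma\log n\log\gamma)$, and $\log\gamma = O(\log(n\gamma))$, matching the claimed pass bound; likewise the work bound $O(m\gamma\log n\log(n\gamma + n\norm{w}_\infty\norm{d}_1\eps^{-1})) = O(m\gamma\log n\log(n\gamma))$, and the $O(n)$ space bound transfers verbatim. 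Determinism is inherited, since Algorithms~\ref{alg:sherman} and~\ref{alg:altmin}, the cycle-cancelling data structure of Proposition~\ref{prop:cc}, and the tree dynamic program for exact MWM on a forest are all deterministic.

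For correctness, after rescaling the internal accuracy by a suitable constant, Corollary~\ref{cor:gen} returns an $n$-sparse $\hx$ with $\mb^\top\hx \le d = \1_V$ and $w^\top\hx \ge w^\top(Sx^*) - \eps = M^* - \eps$ (the last equality because the dummy edge has zero weight), supported on a forest via Proposition~\ref{prop:cc}; the algorithm then computes an exact MWM on that forest, and by the weighted overflow-removal argument (Lemma~\ref{lem:overflowremoval-weighted}) this forest support still contains a matching of weight at least $M^* - \eps$, so the returned integral matching is an $\eps$-additive MWM. I do not expect a genuine obstacle: all the technical content lives in Section~\ref{ssec:wbg} and Corollary~\ref{cor:gen}. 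The only points requiring care are (i) the dummy-edge padding that lets us feed the exact value $S = n$ into a framework that assumes $S$ is known, and (ii) the mild worst-case simplifications $\gamma \ge n$ and $\norm{w}_\infty \ge \eps$ (otherwise the empty matching is already $\eps$-additive optimal) needed to collapse the various logarithmic factors into $\log(n\gamma)$.
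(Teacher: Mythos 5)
Your proposal matches the paper's proof: the paper likewise takes the framework of Section~\ref{ssec:wbg} with $d = \1_V$, pads the optimal matching to $\ell_1$ norm exactly $S = n$ using the zero-weight dummy edge of $\tG$, and invokes Corollary~\ref{cor:gen}, whose bounds with $\norm{d}_1 = n$ simplify to exactly the stated pass, space, and work complexities. Your additional checks (integrality of the bipartite matching polytope, the role of Lemma~\ref{lem:overflowremoval-weighted} and the forest MWM step, determinism) are already packaged inside Corollary~\ref{cor:gen}, so the argument is correct and essentially identical to the paper's.
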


Finally, in the case when we have side information upper bounding the $\ell_1$ norm of any MWM by $S < n$, note that it suffices to solve \eqref{eq:matchminimaxweighted-overflow} with this scale $S$. This implies analogous wins in  Theorem~\ref{thm:mwm}, so that the parameter $\gamma$ scales linearly in $S$ rather than $n$.

\section{Transshipment}\label{sec:tran} 
To demonstrate the versatility of our approach, we also adapt our method to solve the transshipment problem on graphs in the semi-streaming model. In this problem we are given access to an undirected graph with non-negative edge weights $G = (V,E,w)$ and a vector $d \in \R^{V}$. Here $d$ is a demand vector, i.e.\ it specifies the desired flow imbalance on every vertex, and $w_e$ specifies the cost per  routing each unit of flow (in magnitude) on edge $e$. 

The goal of the transshipment problem is to compute a flow which routes the demands $d$ of minimum cost, given as the sum of weighted flow magnitudes. Formally, for a flow $f \in \R^E$ the imbalance of the flow at every vertex is given by $\mb^\top f$ where $\mb$ in this section refers to the signed incidence matrix of $G$.\footnote{We overload the notation $\mb$ for just this section, the only application where it is signed. The signed unweighted incidence matrix is defined in the same way as the unsigned one, except we choose an arbitrary orientation for every edge $e = (u, v)$ so the corresponding row in $\mb$ has a $1$ in column $u$ and a $-1$ in column $v$ or vice versa.}  Consequently, the transshipment problem is $\min_{f \in \R^E : \mb^\top f = d} \sum_{e \in E} w_e |f_e|$. Rescaling $f \gets \mw^{-1} f$ for $\mw \defeq \diag{w}$ shows that equivalently we can define the problem as follows.

\newcommand{\fopt}{f^{*}}
\newcommand{\zopt}{z_{*}}
\newcommand{\mv}{\mathbf{V}}
\newcommand{\fs}{f_{\mathsf{stream}}}
\newcommand{\optd}{\mathsf{opt}(d)}
\begin{definition}
For a graph $G =(V,E,w)$ with nonnegative edge weights, let $\mb \in \{-1,0,1\}^{m \times n}$ be its oriented (signed) unweighted incidence matrix. For any demand vector $d \in \R^n$, we write the transshipment problem, for a diagonal positive\footnote{For simplicity, we assume no zero-weight edges. Otherwise, we can form a spanning forest in one pass to remove zero-weight edges (since these shortcut the transshipment problem), and decompose into connected components.} matrix $\mw \in \R^{m \times w}$ as
\begin{equation}
\label{eqn:l1flow}
\min_{f:\mb^\top \mw^{-1} f =d} \norm{f}_1.
\end{equation}
Additionally, let $\optd$ refer to an arbitrary minimizer of problem~\eqref{eqn:l1flow} for a given demand $d$. 
\end{definition}

Applying our algorithmic framework, namely using the semi-streaming box-simplex game solver in Section~\ref{sec:value} and our tools for rounding and maintaining sparsity, we prove Theorem~\ref{thm:l1main}:

\restatetransshipthm*

We prove this result in several parts. We begin by giving a semi-streaming construction of an $\ell_1$-stretch approximator matrix $\mr  \in \R^{K \times n}$, defined as follows.

\begin{definition}\label{def:stretchapprox}
We say $\mr \in \R^{K \times n}$ is an $(\alpha, \beta)$-stretch approximator if it satisfies the following.
\end{definition}
\begin{itemize}
	\item For any $d \in \R^n$, $\norm{\optd}_1 \leq \norm{\mr d}_1 \leq \alpha \norm{\optd}_1$.
	\item $\sum_{v \in V} \text{deg}(v) \norm{\mr_{:v}}_0 \leq m \beta$, where $\text{deg}(v)$ is the degree of $v$ in $G$ and $\norm{\mr_{:v}}_0$ is the number of nonzero entries in the $v^{\textup{th}}$ column of $\mr$. 
	\item $\mr$ has at most $K \beta$ nonzero entries.
\end{itemize}

We construct $\mr \in \R^{K \times n}$, an $(\alpha, \beta)$-stretch approximator with $K = O(n \log n)$ and $\alpha, \beta = \log^{O(1)} n$, in Section~\ref{ssec:transshipment-approximator}. In Section~\ref{ssec:transshipment-redux}, we then use our stretch approximator to reduce a decision variant of transshipment to an appropriate box-simplex game. In Section~\ref{ssec:sparsetransshipment}, we demonstrate how to use our cycle-cancelling toolkit to round an approximate transshipment plan $f$ to be sparse. Finally, we put these pieces together in Section~\ref{ssec:transshipment-final}, proving Theorem~\ref{thm:l1main}.

\subsection{Constructing stretch approximators}\label{ssec:transshipment-approximator}

In this section, we give a semi-streaming construction of a stretch approximator. Our construction is a straightforward adaptation of a previous parallel construction by \cite{Li20}, proceeding in two steps. First we compute $H$, an $O(\log n)$ spanner subgraph of $G$ using $O(\log n)$ semi-streaming passes. We apply the construction of \cite{Li20} to $H$ offline, which no longer requires additional access to the edges of $G$. We then prove the approximation and sparsity bounds of the obtained matrix. 

\begin{definition}[Spanner]\label{def:spanner}
We say subgraph $H \subseteq G$ is an $\sigma$-spanner of graph $G = (V, E)$ if for all $u, v \in V$, where $d_G(u, v)$ denotes the shortest path distance through $G$,
\[d_G(u, v) \le d_H(u, v) \le \sigma d_G(u, v).\]
\end{definition}

It is well-known that Definition~\ref{def:spanner} implies that the optimal value of the objective \eqref{eqn:l1flow} is preserved up to a multiplicative $\sigma$ factor when restricting to $H$. This follows as every path in a decomposition of the solution to \eqref{eqn:l1flow} has its cost preserved up to a $\sigma$ factor upon routing through $H$.

\begin{lemma}[Corollary~5.2 of \cite{BeckerKKL17}]
	Let $G=(V,E,w)$ be a weighted graph given in an insertion-only stream. An $O(\log n)$-spanner of $G$ with $O(n \log n)$ edges can be computed using $O(\log n)$ passes, $O(n \log n)$ space, and $O(m \log n)$ total work.
	\label{lemma:spann}
\end{lemma}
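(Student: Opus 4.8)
The plan is to give a semi-streaming implementation of the Baswana--Sen randomized spanner construction with the parameter choice $k = \lceil \log_2 n \rceil$. That algorithm produces a $(2k-1)$-spanner, hence a spanner of stretch
\[
\sigma = 2k - 1 = O(\log n),
\]
with $O(k\, n^{1 + 1/k})$ edges in expectation; since $n^{1/k} = O(1)$ for this choice of $k$, the edge count is $O(n \log n)$. The remaining task is to verify that each of the $k$ ``phases'' of Baswana--Sen can be carried out with a single pass over the edge stream, so that the whole construction uses $O(k) = O(\log n)$ passes, $O(n \log n)$ space, and $O(m \log n)$ total work.

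Recall the structure of the algorithm. It maintains a hierarchy of clusterings $\mathcal{C}_0 \supseteq \mathcal{C}_1 \supseteq \cdots \supseteq \mathcal{C}_{k}$, where $\mathcal{C}_0$ is the partition of $V$ into singletons and $\mathcal{C}_i$ keeps each cluster of $\mathcal{C}_{i-1}$ independently with probability $n^{-1/k}$. In phase $i$ (for $1 \le i \le k-1$), each vertex $v$ still lying in a cluster of $\mathcal{C}_{i-1}$ either (a) joins the cluster of $\mathcal{C}_i$ reached by its lightest incident edge, adding that edge to $H$, if such an incident cluster exists, or (b) is otherwise dropped from the clustering, after adding to $H$, for each cluster of $\mathcal{C}_{i-1}$ it touches via an edge lighter than the one it used to join $\mathcal{C}_{i-1}$, the corresponding minimum-weight connecting edge; a final phase adds, for each surviving vertex, the lightest edge to each adjacent cluster of $\mathcal{C}_{k-1}$. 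The key observation is that phase $i$ needs only, for each vertex, the identity and weight of its lightest incident edge into each relevant cluster. Maintaining these running minima requires one left-to-right pass: on reading a stream element $(u,v,w_{uv})$ we look up the current clusters of $u$ and $v$ and relax the corresponding (vertex, cluster) minima, which is $O(\log n)$ work per edge with a balanced-tree cluster dictionary, so $O(m \log n)$ work per pass and $O(m \log n)$ in total. The stretch bound and the $O(k\,n^{1+1/k})$ expected edge bound are exactly the Baswana--Sen guarantees; boosting to high probability, or simply restarting if the produced $H$ is too large (which by Markov's inequality occurs with probability at most $1/2$), is routine.

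The step I expect to be the main obstacle is bounding the working space of a single pass to $O(n \log n)$ words. A vertex can be adjacent to many clusters simultaneously, so it is not immediately clear that storing ``the lightest incident edge into each relevant cluster, for all vertices'' fits in $O(n \log n)$ space. This is controlled by the Baswana--Sen invariant that in phase $i$ a vertex only ever retains edges to clusters of $\mathcal{C}_{i-1}$ lighter than the edge it used to join $\mathcal{C}_{i-1}$, and every edge ultimately retained is placed in $H$; hence the number of edges held in memory at any time is $O(|E(H)|) = O(n \log n)$ in expectation, while the clustering hierarchy and per-vertex bookkeeping take only $O(n \log n)$ and $O(n)$ words respectively. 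Combining the pass, space, and work bounds established above yields an $O(\log n)$-spanner with $O(n \log n)$ edges computed in $O(\log n)$ passes, $O(n \log n)$ space, and $O(m \log n)$ total work, as claimed.
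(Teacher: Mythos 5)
The paper does not actually prove this statement -- it is imported verbatim as Corollary~5.2 of \cite{BeckerKKL17} -- so a from-scratch argument like yours is welcome, and a multi-pass streaming implementation of Baswana--Sen with $k=\lceil\log_2 n\rceil$ is a legitimate route. However, your proof has a genuine gap precisely at the step you flag. The invariant you invoke controls the edges retained \emph{at the end} of a phase (those are indeed all placed in $H$), not the per-(vertex, cluster) minima that must be cached \emph{mid-pass}. During the pass, a vertex must keep the lightest edge to every adjacent cluster that is still a candidate, and a candidate is only pruned once a lighter edge to a \emph{sampled} cluster has been observed; under an adversarial arrival order in which all edges to sampled clusters appear at the end of the stream, nothing is pruned until the pass is over. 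Already in phase $1$, where the clusters of $\mathcal{C}_0$ are singletons, this forces caching one edge per neighbor for every vertex simultaneously, i.e.\ $\Theta(m)$ words rather than $O(n\log n)$. So the claim ``edges held in memory at any time $= O(|E(H)|)$'' is false: the peak working memory is not bounded by the spanner size. Repairing this requires a different mechanism, e.g.\ (i) an irrevocable per-edge decision rule as in Baswana's one-pass spanner algorithm (which, however, is stated for unweighted graphs; for weights one buckets edge weights geometrically, at the cost of extra logarithmic factors), or (ii) a probabilistic peak-memory argument -- for each vertex the number of live candidates at any fixed time has a geometric tail with parameter $n^{-1/k}$, so a union bound over times gives an $O(n^{1/k}\log n)$ per-vertex peak with high probability -- combined with an abort-and-restart rule when a space threshold is exceeded; note that either fix turns the unqualified space bound of the statement into an expected or with-high-probability guarantee.

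Two smaller problems. First, the work accounting: $O(m\log n)$ work per pass over $O(\log n)$ passes is $O(m\log^2 n)$ total, not $O(m\log n)$; matching the stated bound needs $O(1)$-time dictionary access per edge (e.g.\ hashing), not balanced trees. Second, your description of a phase is not the algorithm whose guarantees you are citing: in the case where $v$ is adjacent to a sampled cluster, Baswana--Sen also adds, for every adjacent cluster of $\mathcal{C}_{i-1}$ whose lightest connecting edge is strictly lighter than the chosen edge into the sampled cluster, that lightest edge (this is exactly what the $(2k-1)$-stretch proof uses), whereas in the case with no adjacent sampled cluster the lightest edge to \emph{every} adjacent cluster is added, with no weight filter. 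As written, your rule omits the former and misstates the latter, so the appeal to ``exactly the Baswana--Sen guarantees'' does not apply to the procedure you describe.
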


\begin{lemma}[Theorem~4.2 of \cite{Li20}]
	\label{lemma:li20}
	Given a graph $G = (V,E,w)$, we can compute a matrix $\mr \in \R^{O(n \log n) \times n}$ satisfying the following properties with high probability in $n$:
	\begin{itemize}
		\item $\norm{\optd}_1 \leq \norm{\mr d}_1 \leq O(\log^{4.5} n) \norm{\optd}_1$ for all $d \in \mr^n$. 
		\item For any $v \in V$, the $v^{\textup{th}}$ column of $\mr$ has $O(\log^5 n (\log \log n)^{O(1)})$ nonzero entries in expectation over the construction of $\mr$.
	\end{itemize}
	The algorithm runs in $O(n \log^{10} n (\log \log n)^{O(1)})$ work and space. 
\end{lemma}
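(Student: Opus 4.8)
The plan is to follow the parallel construction of \cite{Li20}; indeed, the statement as written is Theorem~4.2 of \cite{Li20} invoked offline, and the only streaming-specific point is that after extracting a sparse spanner $H$ of $G$ (which, by Lemma~\ref{lemma:spann}, costs $O(\log n)$ passes, $O(n\log n)$ space, and $O(m\log n)$ work, and changes the optimal value of \eqref{eqn:l1flow} by at most an $O(\log n)$ factor) the remaining computation runs on a graph with only $O(n\log n)$ edges, so it can be carried out with no further stream access and the $O(n\log^{O(1)}n)$ work/space bounds of \cite{Li20} transfer verbatim. For completeness I would sketch why such an $\mr$ exists. The central object is a \emph{Räcke-style hierarchical routing scheme} — equivalently, a distribution $\{(T_i,p_i)\}$ over (Steiner) tree routings of $G$, built from recursively applied low-diameter decompositions — such that $\sum_i p_i = 1$, every $T_i$ dominates $G$ (i.e.\ $d_{T_i}(u,v)\ge d_G(u,v)$ for all $u,v$), and $\sum_i p_i\, d_{T_i}(u,v)\le \alpha\, d_G(u,v)$ for all $u,v$ with $\alpha = \log^{O(1)}n$.

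Given such a family I would set $\mr$ to be the vertical concatenation of the matrices $p_i\mr_{T_i}$, where $\mr_{T_i}$ encodes the weighted cut structure of $T_i$: it has one row per tree edge $e$, whose column-$v$ entry records (with sign) the amount of flow the unique $T_i$-routing sends across $e$ when $v$ carries one unit of demand, so that $\norm{\mr_{T_i}d}_1$ is exactly the cost of optimally routing $d$ on $T_i$ for every demand $d$. Then $\norm{\mr d}_1 = \sum_i p_i\,\mathrm{cost}_{T_i}(d)$. The \emph{lower bound} $\norm{\mr d}_1\ge\norm{\optd}_1$ is immediate: using domination, each tree routing projects to a feasible $G$-flow of no greater cost, so $\mathrm{cost}_{T_i}(d)\ge\norm{\optd}_1$, and the $p_i$ sum to one. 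The \emph{upper bound} uses a path decomposition of an optimal $G$-flow $\optd$ into weighted paths $\{(P_j,\delta_j)\}$ with $\norm{\optd}_1 = \sum_j \delta_j\, d_G(a_j,b_j)$ where $P_j$ joins $a_j,b_j$; rerouting each $P_j$ through $T_i$ exhibits a $T_i$-routing of $d$ of cost $\le\sum_j\delta_j\, d_{T_i}(a_j,b_j)$, and averaging over $i$ with the stretch property gives $\norm{\mr d}_1\le \alpha\sum_j\delta_j\, d_G(a_j,b_j) = \alpha\norm{\optd}_1$.

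For the sparsity guarantees I would use that each tree $T_i$ has $O(n)$ edges (it spans $O(n)$ Steiner nodes), that the support of the distribution is $\log^{O(1)}n$, hence $K = O(n\log n)$ and $\mr$ has $O(n\log^{O(1)}n)$ nonzero entries in total; and — the delicate point — that each original vertex $v$ receives a nonzero coefficient in only $\log^{O(1)}n$ rows across all $T_i$, from which $\sum_v \text{deg}(v)\norm{\mr_{:v}}_0 = O(m\log^{O(1)}n)$ follows. This last bound is exactly where real work is needed: a naive hierarchy would make $v$ ``see'' $\Omega(\log n\cdot\mathrm{depth})$ ancestors, which is too many once the graph is a spanner, so one must sparsify the routing so that, per decomposition level, only $\log^{O(1)}n$ separating clusters are charged to $v$ while still preserving the averaged stretch inequality. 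I expect obtaining this per-column sparsity \emph{simultaneously} with the $\ell_1$-stretch bound to be the main obstacle; by contrast, the two stretch inequalities and the streaming reduction (spanner, then offline) are routine.
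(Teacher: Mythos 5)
Your opening sentence is, in effect, the paper's entire proof: this lemma is an imported result, and the paper establishes it purely by citation --- the stretch bound and the work/space bound are the statement of Theorem~4.2 of \cite{Li20}, while the expected per-column sparsity is extracted from Lemma~4.15 and the proof of Lemma~4.16 of \cite{Li20}. Note also that the lemma is a purely offline statement about an arbitrary input graph: the spanner extraction and the streaming considerations you fold in belong to Algorithm~\ref{alg:mr} and Lemma~\ref{lem:rconstruction}, not here, and likewise the degree-weighted bound $\sum_{v}\mathsf{deg}(v)\norm{\mr_{:v}}_0 = O(m\beta)$ is verified downstream in Lemma~\ref{lem:rconstruction} (via Markov's inequality and repetition) from the expected per-column bound asserted in this lemma.

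The difficulty is that your ``for completeness'' sketch does not close the statement, and you say so yourself: the expected per-column sparsity (the second bullet) is exactly what your R\"acke/FRT-style tree-distribution construction does not deliver, and it is also the one property that does not follow from the bare statement of Theorem~4.2 of \cite{Li20} --- which is why the paper must point into Li's internal lemmas. Moreover, your sketch is not the construction of \cite{Li20} (which is built from sparse neighborhood covers at geometrically increasing scales, not a distribution over dominating Steiner trees), and as a standalone alternative it has further unaddressed issues: the first bullet must hold for \emph{all} demands $d$ simultaneously with high probability, so an expected-stretch-per-pair guarantee is insufficient --- you would need an explicit small family of dominating trees whose convex combination approximates all distances, and known constructions of such families are neither computable within $O(n \log^{O(1)} n)$ work and space nor obviously compatible with $K = O(n\log n)$ rows and $\log^{O(1)} n$ nonzeros per column; nor does your argument produce the specific exponents claimed ($O(\log^{4.5} n)$ stretch, $O(\log^5 n(\log\log n)^{O(1)})$ column sparsity, $O(n\log^{10} n(\log\log n)^{O(1)})$ work and space). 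The clean fix is to do what the paper does: cite Theorem~4.2 of \cite{Li20} for the stretch and complexity claims and Lemma~4.15 together with the proof of Lemma~4.16 of \cite{Li20} for the column sparsity, rather than attempting to reprove the construction.
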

\begin{proof}
	The first and third condition on $\mr$ follow directly from Theorem~4.2 of \cite{Li20}. The second condition follows from Lemma~4.15 and the proof of Lemma~4.16 of \cite{Li20}.
\end{proof}

\begin{lemma}\label{lem:rconstruction}
	Let $G = (V,E,w)$ be a weighted undirected graph given in the semi-streaming graph model. Algorithm~\ref{alg:mr} computes $\mr \in \R^{K \times n}$, an $(\alpha, \beta)$-stretch approximator with $K = O(n \log n)$, $\alpha = O(\log^{5.5} n)$, and $\beta = O(\log^5 n (\log \log n)^{O(1)})$, in $O(\log n)$ passes, $O(n \log^{10} n (\log \log n)^{O(1)})$ space, and $O(m  \log^{10} n (\log \log n)^{O(1)})$ work with high probability in $n$.
\end{lemma}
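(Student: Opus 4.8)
The plan is to prove Lemma~\ref{lem:rconstruction} by composing the two ingredients already cited: first reduce from $G$ to a spanner $H$ via Lemma~\ref{lemma:spann}, then run Li20's offline construction (Lemma~\ref{lemma:li20}) on $H$, and finally argue that the resulting matrix is an $(\alpha,\beta)$-stretch approximator of $G$ with the stated parameters. Concretely, Algorithm~\ref{alg:mr} first spends $O(\log n)$ passes, $O(n\log n)$ space, and $O(m\log n)$ work to compute an $O(\log n)$-spanner $H\subseteq G$ with $O(n\log n)$ edges (recording the degree sequence of $G$ during the same passes, in $O(n)$ space); since $H$ then fits in memory, the remaining steps run offline with no further passes. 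I would then feed $H$ to the algorithm of Lemma~\ref{lemma:li20} to obtain $\mr\in\R^{O(n\log n)\times n}$ in $O(n\log^{10}n(\log\log n)^{O(1)})$ additional work and space. Summing, the pass, space, and work bounds of the lemma follow, using $m\ge n-1$, which we may assume after the preprocessing in the footnote that discards zero-weight edges and splits into connected components.

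For the stretch guarantee, let $\mathsf{opt}_H(d)$ denote the optimum of problem~\eqref{eqn:l1flow} restricted to $H$. Since $H\subseteq G$, any $H$-feasible flow is $G$-feasible, so $\norm{\optd}_1\le\norm{\mathsf{opt}_H(d)}_1$; and since $H$ is an $O(\log n)$-spanner, decomposing an optimal $G$-flow into paths and rerouting each through $H$ (exactly as noted before Lemma~\ref{lemma:spann}) gives $\norm{\mathsf{opt}_H(d)}_1\le O(\log n)\norm{\optd}_1$. Applying the approximation bound of Lemma~\ref{lemma:li20} to $H$ gives $\norm{\mathsf{opt}_H(d)}_1\le\norm{\mr d}_1\le O(\log^{4.5}n)\norm{\mathsf{opt}_H(d)}_1$; chaining these displays yields $\norm{\optd}_1\le\norm{\mr d}_1\le O(\log^{5.5}n)\norm{\optd}_1$ for every $d\in\R^n$, i.e.\ $\alpha=O(\log^{5.5}n)$, matching the first bullet of Definition~\ref{def:stretchapprox}.

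For the sparsity bounds, note the columns of $\mr$ are indexed by $V(H)=V$, and Lemma~\ref{lemma:li20} bounds the expected number of nonzeros per column by $\beta'\defeq O(\log^5 n(\log\log n)^{O(1)})$, independent of the input graph. Hence $\E[\sum_{v\in V}\deg(v)\norm{\mr_{:v}}_0]=\sum_{v\in V}\deg(v)\,\E[\norm{\mr_{:v}}_0]\le 2m\beta'$ with $\deg(v)$ the degree in $G$, and similarly $\E[\nnz(\mr)]\le n\beta'\le K\beta'$ with $K=O(n\log n)$. By Markov's inequality each of these two quantities is within a constant factor of its mean — hence at most $m\beta$ and $K\beta$ for a suitable $\beta=O(\beta')$ — with some constant probability $p>0$, while the approximation bound of Lemma~\ref{lemma:li20} holds with probability $1-n^{-c}$ for any desired constant $c$.

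To upgrade from constant to high probability I would re-run the offline construction $\Theta(\log n)$ times (reusing the stored spanner, so no extra passes), explicitly verify both sparsity conditions on each output using the precomputed degrees (this is just counting nonzeros), and keep the first output passing both checks; a union bound over the $\Theta(\log n)$ runs shows that with high probability some run passes and its approximation guarantee also holds. This amplification is the only genuinely delicate point: Li20's per-column bound is stated in expectation whereas the stretch bound is high-probability, so the boosting must be arranged so that conditioning on the efficiently checkable sparsity event does not compromise the separately union-bounded approximation event. Everything else reduces to bookkeeping of the composed complexities and the elementary spanner/rerouting inequalities used for the stretch bound.
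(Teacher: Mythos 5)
Your proposal is correct and follows essentially the same route as the paper: spanner via Lemma~\ref{lemma:spann}, offline application of Lemma~\ref{lemma:li20} to $H$, the chained spanner/stretch inequality for $\alpha = O(\log^{5.5} n)$, Markov plus $O(\log n)$ independent repetitions with an explicit sparsity check and a union bound over runs for the approximation guarantee. The only (harmless) deviation is that you verify the sparsity condition offline from a stored degree sequence of $G$, whereas the paper's Algorithm~\ref{alg:mr} checks $S=\sum_{v}\mathsf{deg}(v)\norm{\mr_{:v}}_0$ with one extra pass over $G$ per repetition; both fit within the stated $O(\log n)$-pass, space, and work budgets.
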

\begin{proof}
	We first show correctness of the algorithm. First, we note that the cost of transshipment over $H$ is at most $O(\log n)$ times greater than the cost over $G$ since $d_G(u,v) \leq d_H(u,v) \leq O(\log n) d_G(u,v)$ for all $u,v \in V$. We thus have $\norm{\optd}_1 \leq \norm{\mr d}_1 \leq O(\log^{5.5} n) \norm{\optd}_1$ by the first condition of Lemma~\ref{lemma:li20} applied to the matrix $\mr$ we return. Next, the second condition of Lemma~\ref{lemma:li20} implies that each column of $\mr$ computed on line~\ref{line:mr} of Algorithm~\ref{alg:mr}\ has $\gamma$ nonzero entries in expectation. Thus $\mathbb{E} \left[ \sum_{v \in V} \norm{\mr_{:v}}_0 \right]\leq n \gamma$ and 
	\[
	\mathbb{E} \left[ \sum_{(u,v) \in E} \norm{\mr_{:u}}_0 + \norm{\mr_{:v}}_0\right] = 2 \mathbb{E} \left[ \sum_{v \in V} \mathsf{deg}(v) \norm{\mr_{:v}}_0 \right] \leq 2m \gamma.
	\]
	By Markov's inequality and a union bound, we thus have $S \leq 6 m \gamma$ and $\nnz(\mr) \leq 3 n \gamma$ with probability at least $\frac 1 3$ over the randomness of Lemma~\ref{lemma:li20}. Consequently, with high probability in $n$ after $O(\log n)$ repetitions the algorithm returns a matrix $\mr$ with all the desired properties.
	
	We now bound the pass, space, and work complexity. By Lemma~\ref{lemma:spann}, the graph $H$ can be computed using $O(\log n)$ passes over $G$ using $O(n \log n)$ space and $O( m \log n)$ total work. Next, the matrix $\mr$ on Line~\ref{line:mr} of Algorithm~\ref{alg:mr} can be computed offline in $O(m \log^{10} n (\log \log n)^{O(1)})$ work and $O(n \log^{10} n (\log \log n)^{O(1)})$ space. The condition on Line~\ref{line:compute-S} can be checked in a single pass over $G$ with $O(m \gamma)$ extra work (as we can terminate if the check fails).
\end{proof}

\begin{algorithm}
	\DontPrintSemicolon
	\KwInput{Graph $G = (V, E,w)$}
	\KwOutput{$\ell_1$-stretch approximator $\mr$}
	$\gamma = O(\log^5 n (\log \log n)^{O(1)})$\;
	$H = O(\log n)$-spanner of $G$ computed by Corollary~5.2 of \cite{BeckerKKL17}\;
	\For{$t \in [O(\log n)]$}{
		$\mr =$ matrix computed by Theorem~4.2 of \cite{Li20} applied to $H$\; \label{line:mr}
		Compute $S = \sum_{v \in V} \mathsf{deg}(v) \norm{\mr \mathbf{e}_v}_0$ in a stream over $G$\;\label{line:compute-S}
		\If{$S \leq 6 m \gamma$ and $\nnz(\mr) \leq 3 n \gamma$}{
			\Return{$\mr$}
		}
	}
	\caption{$\mathsf{StretchApprox}(G)$}
	\label{alg:mr}
\end{algorithm}

\subsection{Reduction to box-simplex game}\label{ssec:transshipment-redux}
In this subsection, we describe our reduction of transshipment to a box-simplex game. We begin by defining a flow-constrained variant of the transshipment problem.

\begin{definition}
Let $G=(V,E,w)$ be a graph, and let $d \in \R^n$. Let $\mr$ be any matrix satisfying $\norm{\mathsf{opt}(d)}_1 \leq \norm{\mr d}_1$. For any $t \ge 0$, we define the \emph{flow-constrained} transshipment problem as 
\begin{equation}
\label{eqn:l1cons}    
\min_{f:\norm{f}_1 \leq t} \norm{\mr \mb^\top \mw^{-1} f - \mr d}_1. 
\end{equation}
\end{definition}

We next relate solutions to the flow-constrained transshipment problem to the original problem~\eqref{eqn:l1flow}.

\begin{lemma}\label{lem:valueconstrained}
Let $G$ be a graph, $d \in \R^n$, and $\mr$ satisfy $\norm{\optd}_1 \leq \norm{ \mr d}_1$.
\begin{itemize}
    \item If $t \geq \norm{\optd}_1$, the optimal value of~\eqref{eqn:l1cons} is $0$.
    \item If $t < \norm{\optd}_1$, the optimal value of~\eqref{eqn:l1cons} is at least $\norm{\optd}_1 - t$.
\end{itemize}
\end{lemma}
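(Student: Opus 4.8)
The plan is to analyze the two cases directly using the stretch-approximator's lower bound property $\norm{\optd}_1 \le \norm{\mr d}_1$ together with elementary properties of the $\ell_1$ norm and the structure of feasible flows. For the first case, when $t \ge \norm{\optd}_1$, I would simply exhibit a feasible point achieving objective value $0$: take $f = \mw \fopt$, where $\fopt$ is the minimizer of~\eqref{eqn:l1flow}, so that $\mb^\top \mw^{-1} f = \mb^\top \fopt$... wait, I need to be careful with the normalization. Recall that~\eqref{eqn:l1flow} is stated after the rescaling $f \gets \mw^{-1} f$, so its minimizer $\fopt = \mathsf{opt}(d)$ satisfies $\mb^\top \mw^{-1} \fopt = d$ and $\norm{\fopt}_1 = \norm{\optd}_1 \le t$. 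Hence $f = \fopt$ is feasible for~\eqref{eqn:l1cons}, and $\mr \mb^\top \mw^{-1} \fopt - \mr d = \mr d - \mr d = 0$, giving optimal value $0$ (it is nonnegative, so $0$ is optimal).

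For the second case, when $t < \norm{\optd}_1$, I would argue by contradiction or contrapositive. Suppose some $f$ with $\norm{f}_1 \le t$ achieves $\norm{\mr \mb^\top \mw^{-1} f - \mr d}_1 < \norm{\optd}_1 - t$. Write $d' \defeq \mb^\top \mw^{-1} f$, the demand actually routed by $f$. Then $f$ certifies that $\mathsf{opt}(d')$ exists with $\norm{\mathsf{opt}(d')}_1 \le \norm{f}_1 \le t$ (since $f$ itself is a feasible routing of $d'$ after the rescaling convention). By linearity of $\mr$ and $\mb^\top\mw^{-1}$, the flow $\fopt - f$ routes the demand $d - d'$, so $\norm{\mathsf{opt}(d - d')}_1 \le \norm{\optd}_1 + \norm{f}_1$; but more usefully, I want to combine two routings: any routing of $d'$ of norm $\le t$ plus any routing of $d - d'$ yields a routing of $d$, so $\norm{\optd}_1 \le t + \norm{\mathsf{opt}(d-d')}_1$, i.e. $\norm{\mathsf{opt}(d - d')}_1 \ge \norm{\optd}_1 - t$. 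Now apply the stretch-approximator lower bound to the demand $d - d'$: $\norm{\mathsf{opt}(d-d')}_1 \le \norm{\mr(d - d')}_1 = \norm{\mr d' - \mr d}_1 = \norm{\mr \mb^\top \mw^{-1} f - \mr d}_1$. Chaining these gives $\norm{\mr \mb^\top \mw^{-1} f - \mr d}_1 \ge \norm{\optd}_1 - t$, contradicting our assumption and establishing the claimed lower bound on the optimal value of~\eqref{eqn:l1cons}.

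The main obstacle I anticipate is the bookkeeping around the $\mw^{-1}$ rescaling: the problem~\eqref{eqn:l1flow} and~\eqref{eqn:l1cons} are written in the rescaled variables where the cost is just $\norm{f}_1$, but the incidence constraint carries a $\mw^{-1}$, so I must consistently track that a flow $f$ in these variables routes demand $\mb^\top \mw^{-1} f$ and has cost $\norm{f}_1$, and in particular that $\mathsf{opt}(d)$ as defined refers to this rescaled minimizer. The key structural fact I am leaning on — that routings compose additively, so $\norm{\mathsf{opt}(d_1 + d_2)}_1 \le \norm{\mathsf{opt}(d_1)}_1 + \norm{\mathsf{opt}(d_2)}_1$ and that a single flow of norm $\le t$ routing $d'$ can be appended to a routing of $d - d'$ — is immediate from linearity of $\mb^\top\mw^{-1}$ and the triangle inequality for $\norm{\cdot}_1$, so the argument should go through cleanly once the normalization is pinned down.
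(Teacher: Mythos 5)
Your proof is correct and is essentially the paper's argument: both cases hinge on defining the residual demand $d - \mb^\top \mw^{-1} f$, routing it optimally, lower-bounding the objective by $\norm{\mathsf{opt}(d - \mb^\top\mw^{-1}f)}_1$ via the assumed property of $\mr$, and then using that this residual routing plus $f$ routes $d$ together with optimality of $\optd$; the only difference is that you phrase it as a contradiction while the paper chains the inequalities directly (and both, like the paper, tacitly use the $\mr$ lower bound for the residual demand, not just for $d$, which is justified since $\mr$ is a stretch approximator). Your first case (taking $f = \optd$ itself, which is feasible and zeroes the objective) also matches the paper exactly.
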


\begin{proof}
For notational convenience, let $\fopt = \optd$. By definition we have $\mb^\top \mw^{-1} \fopt = d$. If $t \geq \norm{\fopt}_1$, we note $\fopt$ is feasible for~\eqref{eqn:l1cons} and therefore achieves an objective value of $0$. If instead $t < \norm{\fopt}_1$, consider any $f$ with $\norm{f}_1 \leq t$ and define $f' = \mathsf{opt}(d - \mb^\top \mw^{-1} f)$. We have
\[
\norm{\mr \mb^\top \mw^{-1} f - \mr d}_1 = \norm{\mr (d- \mb^\top \mw^{-1} f)}_1 \geq \norm{f'}_1,
\]
where for the last inequality we use the given condition on $\mr$ and the optimality of $f'$. On the other hand we have $\mb^\top \mw^{-1} f' = d - \mb^\top \mw^{-1} f$ and so $\mb^\top \mw^{-1} (f+ f') = d$. By optimality of $\fopt$ for \eqref{eqn:l1flow},
\[
\norm{\fopt}_1 \leq \norm{f+f'}_1 \leq \norm{f}_1 + \norm{f'}_1.
\]
Combining these inequalities, we have the desired
\[
\norm{\mr \mb^\top \mw^{-1} f - \mr d}_1 \geq \norm{\fopt}_1 - \norm{f}_1 \ge \norm{\fopt}_1 - t.
\]
\end{proof}

We remark that any $(\alpha, \beta)$-stretch approximator $\mr$ satisfies the assumption of Lemma~\ref{lem:valueconstrained}. Finally, we demonstrate how to express problems of the form~\eqref{eqn:l1cons} as box-simplex games.

\begin{lemma}\label{lem:l1boxsimplex}
Let $G$ be a graph, $d \in \R^n$, and $t \ge 0$. Let $\mr \in \R^{K \times n}$ be an $(\alpha, \beta)$-stretch approximator. Then~\eqref{eqn:l1cons} is equivalent to
\[
\min_{f' \in \Delta^{2E}} \norm{t \ma^\top f' - b}_1 = \min_{f' \in \Delta^{2E}} \max_{y \in [-1, 1]^K}  t y^\top \ma^\top f' - b^\top y
\]
for $b = \mr d$ and 
\[
\ma = \begin{pmatrix} \mw^{-1} \mb \mr^\top \\ - \mw^{-1} \mb \mr^\top \end{pmatrix}.
\]
Additionally, $\norm{\ma}_{\infty} \leq \alpha$ and $\nnz(\ma) = O(m \beta)$. If $R$ is stored explicitly, we may simulate access to a stream of the rows of $\ma$ and $|\ma|$ using $O(m \beta)$ total work and a single pass over $G$.
\end{lemma}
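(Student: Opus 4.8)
The plan is to establish the four assertions of the lemma in turn: the two displayed equalities, the bound $\norm{\ma}_\infty \le \alpha$, the sparsity bound $\nnz(\ma) = O(m\beta)$, and the streaming implementation. Writing the simplex variable as $f' = (g; h)$ with $g, h \in \R^{E}$, the block structure of $\ma$ gives
\[
\ma^\top f' = (\mw^{-1}\mb\mr^\top)^\top g - (\mw^{-1}\mb\mr^\top)^\top h = \mr\mb^\top\mw^{-1}(g - h),
\]
so that $t\ma^\top f' - b = \mr\mb^\top\mw^{-1}\big(t(g - h)\big) - \mr d$. I would first argue that the substitution $f = t(g-h)$ induces an objective-preserving correspondence between $\Delta^{2E}$ and the feasible set $\{f : \norm{f}_1 \le t\}$ of \eqref{eqn:l1cons}. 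One direction is immediate: $f' \in \Delta^{2E}$ yields $\norm{f}_1 = t\norm{g - h}_1 \le t\norm{f'}_1 = t$. For the reverse, given $f$ with $\norm{f}_1 \le t$ (the case $t = 0$ being trivial), I would decompose $f/t = p - q$ with $p, q \ge 0$ of disjoint support, set $s \defeq 1 - \norm{p}_1 - \norm{q}_1 \ge 0$, fix an arbitrary edge $e_0 \in E$, and take $g \defeq p + \tfrac{s}{2}\mathbf{e}_{e_0}$, $h \defeq q + \tfrac{s}{2}\mathbf{e}_{e_0}$; then $f' = (g; h) \in \Delta^{2E}$ and $t(g - h) = f$. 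Since $\norm{t\ma^\top f' - b}_1 = \norm{\mr\mb^\top\mw^{-1}f - \mr d}_1$ is preserved throughout, the two minima agree, giving the first equality. The second equality follows by applying $\norm{v}_1 = \max_{y \in [-1,1]^{K}} \inprod{y}{v}$ to $v = t\ma^\top f' - b$ and exchanging the order.

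For $\norm{\ma}_\infty \le \alpha$: the rows of $\ma$ are exactly the rows of $\mw^{-1}\mb\mr^\top$ and their negations, so it suffices to bound the $\ell_1$ norm of each row of $\mw^{-1}\mb\mr^\top$. I would observe that for an edge $e = (u, v)$ with weight $w_e$, the $e$-th row of $\mb\mr^\top$ equals $\pm\mr d_{uv}$ where $d_{uv} \defeq \mathbf{e}_u - \mathbf{e}_v$ is the single-edge demand, so the $e$-th row of $\mw^{-1}\mb\mr^\top$ has $\ell_1$ norm $w_e^{-1}\norm{\mr d_{uv}}_1$. Routing $d_{uv}$ along $e$ alone is feasible for \eqref{eqn:l1flow} with cost $w_e$, hence $\norm{\mathsf{opt}(d_{uv})}_1 \le w_e$, and the stretch-approximator bound $\norm{\mr d_{uv}}_1 \le \alpha\norm{\mathsf{opt}(d_{uv})}_1$ then yields $w_e^{-1}\norm{\mr d_{uv}}_1 \le \alpha$. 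For the sparsity bound, the same identification of rows shows the row of $\mw^{-1}\mb\mr^\top$ for $e = (u,v)$ has at most $\norm{\mr_{:u}}_0 + \norm{\mr_{:v}}_0$ nonzeros, so $\nnz(\ma) = 2\sum_{e = (u,v)}\norm{(\mw^{-1}\mb\mr^\top)_{e:}}_0 \le 2\sum_{v \in V}\text{deg}(v)\norm{\mr_{:v}}_0 \le 2m\beta$ by the second property in Definition~\ref{def:stretchapprox}.

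Finally, for the streaming claim I would store $\mr$ in a sparse, column-indexed form, using $O(\nnz(\mr)) = O(K\beta)$ space. Then in a single pass over the edge stream of $G$, each arriving edge $e = (u, v)$ with weight $w_e$ lets me read off the columns $\mr_{:u}$ and $\mr_{:v}$, form $w_e^{-1}(\mr_{:u} - \mr_{:v})$ — which is the $e$-th row of $\mw^{-1}\mb\mr^\top$ up to the edge's orientation sign, computed in $O(\norm{\mr_{:u}}_0 + \norm{\mr_{:v}}_0)$ time — and emit it, its negation, and their entrywise absolute values as the two rows of $\ma$ and of $|\ma|$ attached to $e$. Summing over edges, the total work is $O\big(\sum_{e = (u,v)}(\norm{\mr_{:u}}_0 + \norm{\mr_{:v}}_0)\big) = O(m\beta)$ by the computation above.

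I expect the main obstacle to be the reverse direction of the objective-preserving correspondence: showing that every $f$ with $\norm{f}_1 \le t$ — not just those of norm exactly $t$ — arises from an honest point of the equality-constrained simplex $\Delta^{2E}$. The padding trick above resolves this, exploiting that inflating a coordinate of both blocks of $f'$ by the same amount changes $\norm{f'}_1$ but leaves $g - h$ (hence $\ma^\top f'$) untouched. The one genuinely structural ingredient, rather than bookkeeping, is the bound $\norm{\ma}_\infty \le \alpha$: it is the only place the defining upper-bound guarantee of the stretch approximator enters, via the observation that the rows of $\mb\mr^\top$ are precisely the images under $\mr$ of the single-edge demands.
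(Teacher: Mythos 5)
Your proof is correct and follows essentially the same route as the paper: the same positive/negative-part splitting with the padding-on-an-arbitrary-edge trick to hit the simplex exactly, the same per-edge identification of rows of $\mw^{-1}\mb\mr^\top$ for the sparsity and streaming claims, and the same use of $\ell_1$--$\ell_\infty$ duality. The only cosmetic difference is that you bound $\norm{\ma}_\infty$ row-by-row via the single-edge demands $\mathbf{e}_u - \mathbf{e}_v$, whereas the paper bounds $\max_{\norm{v}_1=1}\norm{\mr\mb^\top\mw^{-1}v}_1$ over the whole unit $\ell_1$ ball; since that maximum is attained at a coordinate vector, the two arguments coincide in substance.
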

\begin{proof}
By duality of the $\ell_1$ norm and the $\ell_\infty$ norm, \eqref{eqn:l1cons} is equivalent to
\begin{align*}
\min_{f:\|f\|_1\le t} \norm{\mr \mb^\top \mw^{-1} f - \mr d}_1 & \equiv  \min_{f:\norm{f}_1 \leq t}  \max_{y:\substack{\norm{y}_{\infty} \leq 1}} y^\top \left( \mr \mb^\top \mw^{-1}f - \mr d \right).
\end{align*}

Next, we can write any $f \in \R^E$ as $f = f_+ - f_-$, where $f_{+} = \max \{ f, 0 \}$ and $f_{-} = - \min \{ f, 0 \}$ are entrywise nonnegative. Further, if $\norm{f}_1 \leq t$ there exists nonnegative $f_+$ and $f_-$ satisfying this condition with $\1^\top f_+ + \1^\top f_- = t$: one can simply select an arbitrary edge in $G$ and increase the corresponding entries in $f_+$ and $f_-$ by the same amount, which increases $\norm{f}_1$ without affecting $f_+ - f_-$. Thus, for any $f$ with $\norm{f}_1 \leq t$ there exists $\hat{f}= \frac 1 t [f_+; f_- ] \in \Delta^{2E}$ such that 
\[
f= t \begin{pmatrix} 
\id \\ 
-\id \end{pmatrix}^\top \hat{f} .
\]
Employing this variable substitution, our problem is equivalent to 
\[
\min_{\hat{f} \in \Delta^{2 E}} \max_{\norm{y}_\infty \leq 1} t y^\top \ma^\top \hat{f} - y^\top b
\]
as desired. The bound on $\norm{\ma}_\infty$ follows from the observation  
\[\norm{\mr \mb^\top \mw^{-1} u}_1 \le \alpha \norm{\mathsf{opt}\Par{\mb^\top \mw^{-1} u}}_1 \le \alpha \norm{u}_1\] 
for any $u \in \R^E$, by the first condition of Definition~\ref{def:stretchapprox} and definitions of $\mb, \mw$. This yields
\[
\norm{\ma}_\infty = \norm{\ma^\top}_1 = \max_{\norm{v}_1 = 1} \norm{\ma^\top v}_1 \le  \max_{\norm{v_1}_1 + \norm{v_2}_1 = 1} \norm{\mr \mb^\top \mw^{-1} v_1}_1 +  \norm{\mr \mb^\top \mw^{-1} v_2}_1 \leq \alpha
\]
as claimed. We conclude by bounding $\nnz(\ma)$ and showing that we can simulate streaming access to the columns of $\ma$ and $|\ma|$ in $O(m \beta)$ total work. Let $\delta_v$ denote the number of nonzero entries in the $v^{\text{th}}$ column of $\mr$. The column of $\mr \mb^\top \mw^{-1}$ corresponding to $e=(u,v)\in E$ is of the form
\[
w_e^{-1} \left(\mr_{:u}-\mr_{:v} \right),
\]
and thus has $O(\delta_u + \delta_v)$ nonzero entries. By the second assumed condition on $\mr$, summing over all columns gives $O(m \beta)$ nonzero entries together, thus the bound on $\nnz(\ma)$ follows. Finally, we may simulate access to columns of $\ma^\top$ and $|\ma^\top|$ via a stream over $G$ and forming the corresponding columns of $\ma^\top$ and $|\ma^\top|$ directly using $O(m\beta)$ work, from the above characterization of $\mr \mb^\top \mw^{-1}$.
\end{proof}

\subsection{Recovering a sparse flow}\label{ssec:sparsetransshipment}

We next give a simple procedure for taking a flow $f$ and rounding it to a sparse flow $f'$, such that the weighted cost is no larger and the marginal imbalances are preserved. At a high level, our algorithm performs the following steps.

\begin{enumerate}
	\item We form the ``double cover graph'' of $G = (V, E, w)$, a bipartite graph consisting of vertices $V_{\text{in}} \cup V_{\text{out}}$, which are two copies of $V$. For every $(u, v) \in E$, the double cover graph contains edges $(u_{\text{in}}, v_{\text{out}})$, denoting positive flow, and $(u_{\text{out}}, v_{\text{in}})$, denoting negative flow.
	\item We sparsify both the positive flow $f_+$ and the negative flow $f_-$ using Proposition~\ref{prop:cc}.
	\item We identify the sparsified flows into the original graph by using the signs appropriately, preserving marginal demands and not hurting the weighted cost.
\end{enumerate}

We make the transformation $f \gets \mw f$ in the following algorithm for consistency with Proposition~\ref{prop:cc}, so the marginal imbalances in accordance with transshipment are $\mb^\top f$ and the $\ell_1$ weight is $\norm{\mw f}_1$.

\begin{algorithm}
\DontPrintSemicolon
		\KwInput{Incremental stream of $f_+, f_- \in \R^E_{\geq 0}$ on disjoint supports, graph $G = (V,E,w)$}
		\KwOutput{Flow $f' \in \R^E$ with 
			\[\norm{f'}_0 = O(n),\; \mb^\top f' = \mb^\top (f_+ - f_-),\; \norm{\mw f'}_1 \leq \norm{\mw f_+}_1 + \norm{\mw f_-}_1.\]}
		Form $G'_+ = (V' = V_{\text{in}} \cup V_{\text{out}} , E'_+ = \emptyset)$ and $G'_- = (V' = V_{\text{in}} \cup V_{\text{out}} , E'_- = \emptyset)$ \;
		\For{$x_{(u,v)} \1_{(u,v)}$ in stream over $f_+$}{
		Add edge $(u_{\text{in}}, v_{\text{out}})$ with weight $x_{(u,v)}$ to $G'_+$\;
		Apply Proposition~\ref{prop:cc} to $G'_+$\;
		}
		\For{$x_{(u,v)} \mathbf{1}_{(u,v)}$ in stream over $f_-$}{
		Add edge $(u_{\text{out}}, v_{\text{in}})$ with weight $x_{(u,v)}$ to $G'_-$\;
		Apply Proposition~\ref{prop:cc} to $G'_-$\;
		}
	Let $f'_+$ be the output of Proposition~\ref{prop:cc} on $G'_+$ and $f'_-$ the output of Proposition~\ref{prop:cc} on $G'_-$\;
	\Return{$f'_+ - f'_-$}
	\label{alg:l1round}
	\caption{$\mathsf{RoundStream}(f_+, f_-)$}
\end{algorithm}

\begin{lemma}
\label{lemma:l1round}
Algorithm~\ref{alg:l1round} satisfies its output statement using a single pass over the streams $f_+$ and $f_-$, $O(n)$ space, and $O( \|f_+ + f_-\|_0) \log n)$ work.
\end{lemma}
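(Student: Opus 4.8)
The plan is to derive both correctness and the resource bounds by reducing to two applications of the cycle-cancelling primitive of Proposition~\ref{prop:cc}, one handling the positive part $f_+$ and one the negative part $f_-$; the double cover is used purely to make the underlying graph bipartite, as Proposition~\ref{prop:cc} requires. First I would fix an orientation of every edge of $G$ agreeing with the one defining the signed incidence matrix $\mb$, so that $e = (u,v)$ contributes $+1$ to row $u$ and $-1$ to row $v$. Writing $\mb'$ for the unsigned incidence matrix of $G'_+$ or $G'_-$ as appropriate, and using the correspondence $e = (u \to v) \mapsto (u_{\text{in}}, v_{\text{out}})$ in $G'_+$ and $e = (u \to v) \mapsto (u_{\text{out}}, v_{\text{in}})$ in $G'_-$, the central bookkeeping fact --- the step I expect to be the only delicate one --- is that for any nonnegative flow $g$ on $E$ and any vertex $x \in V$,
\[[\mb^\top g]_x = [\mb'^\top g]_{x_{\text{in}}} - [\mb'^\top g]_{x_{\text{out}}} \text{ when } g \text{ is carried on } G'_+,\]
and with $x_{\text{in}}, x_{\text{out}}$ exchanged when $g$ is carried on $G'_-$; this holds because the edges out of $x$ are exactly those incident to $x_{\text{in}}$ in $G'_+$ (to $x_{\text{out}}$ in $G'_-$) and the edges into $x$ are exactly those incident to $x_{\text{out}}$ (to $x_{\text{in}}$). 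Thus preserving the bipartite marginals of the double cover coordinatewise --- which is what Proposition~\ref{prop:cc} guarantees --- is exactly preserving the signed marginals $\mb^\top(\cdot)$ on $G$.

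Next I would instantiate Proposition~\ref{prop:cc} on $G'_+$ and on $G'_-$, each with the weight vector $-w$ inherited (up to sign) from $G$; since the statement of Proposition~\ref{prop:cc} imposes no sign condition on the weights, its guarantee $\inprod{w}{\tx} \ge \inprod{w}{x}$ becomes $\inprod{w}{f'_\pm} \le \inprod{w}{f_\pm}$, which is the direction needed so that transshipment cost does not increase. Feeding the stream of $f_+$ into the $G'_+$ instance and the stream of $f_-$ into the $G'_-$ instance yields forest-supported flows $f'_+$, $f'_-$, each with $O(|V'|) = O(n)$ nonzeros, each preserving the bipartite marginals, and each satisfying $\inprod{w}{f'_\pm} \le \inprod{w}{f_\pm}$. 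The marginal fact from the previous paragraph then gives $\mb^\top(f'_+ - f'_-) = \mb^\top(f_+ - f_-)$, and since $f' \defeq f'_+ - f'_-$ is supported on $\mathrm{supp}(f'_+) \cup \mathrm{supp}(f'_-)$ we get $\norm{f'}_0 = O(n)$. For the cost, the triangle inequality together with $f'_\pm \ge 0$ gives
\[\norm{\mw f'}_1 \le \inprod{w}{f'_+} + \inprod{w}{f'_-} \le \inprod{w}{f_+} + \inprod{w}{f_-} = \norm{\mw f_+}_1 + \norm{\mw f_-}_1,\]
which is the claimed inequality.

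Finally, for the resource bounds I would observe that the algorithm reads the stream of $f_+$ once --- routing every update into a single running instance of the Proposition~\ref{prop:cc} data structure on $G'_+$ --- and likewise reads the stream of $f_-$ once, so it makes a single pass over both input streams. Each of $G'_+, G'_-$ has $2n$ vertices, so each Proposition~\ref{prop:cc} instance uses $O(n)$ space, and storing $f'_\pm$ together with their difference costs $O(n)$ more, for $O(n)$ total. For the work, Proposition~\ref{prop:cc} processes a stream of length $L$ in $O(L \log n)$ time; applied to the two streams (which have disjoint supports, so combined length $\norm{f_+ + f_-}_0$) plus $O(n)$ time to form $f'_+ - f'_-$, this totals $O(\norm{f_+ + f_-}_0 \log n)$. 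Everything besides the orientation/marginal identity of the first paragraph is immediate from Proposition~\ref{prop:cc} and the triangle inequality, so that identity is where I would focus the care.
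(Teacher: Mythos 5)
Your proof is correct and follows essentially the same route as the paper's own argument: two applications of Proposition~\ref{prop:cc} on the bipartite double covers $G'_+$ and $G'_-$, translation of the preserved unsigned double-cover marginals into preserved signed marginals $\mb^\top f_\pm$ on $G$, the triangle inequality $\norm{\mw(f'_+-f'_-)}_1 \le \inprod{w}{f'_+} + \inprod{w}{f'_-}$ for the cost, and pass/space/work bounds inherited directly from Proposition~\ref{prop:cc}. The only difference is that you spell out two points the paper's terse proof leaves implicit --- the $x_{\textup{in}}/x_{\textup{out}}$ marginal identity, and that the cycle cancelling must be run with negated weights (equivalently with the nonnegative weights $\norm{w}_\infty\1 - w$, since the total flow is fixed by the preserved marginals) so that the guarantee of Proposition~\ref{prop:cc} points in the cost-nonincreasing direction --- which is a clarification of the same argument rather than a deviation from it.
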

\begin{proof}
The pass, space, and work complexities follow from Proposition~\ref{prop:cc}. The sparsity of $\|f'\|_0$ and the guarantee on weighted $\ell_1$ norm follows from the two applications of Proposition~\ref{prop:cc}, since $f_+$ and $f_-$ have disjoint supports. Finally, by construction of $G'_+$ and $G'_-$, the signed marginals of $f_+$ are preserved by $f'_+$, and the signed marginals of $f_-$ are preserved by $f'_-$. The conclusion follows since $f_+ - f_-$ then places the same amount of net flow on any vertex as $f'_+ - f'_-$.
\end{proof}

\subsection{Semi-streaming transshipment}\label{ssec:transshipment-final}

We finally give our full algorithm to solve the transshipment problem, and a proof of Theorem~\ref{thm:l1main}.

\begin{algorithm}
\DontPrintSemicolon
		\KwInput{Graph $G = (V, E,w)$, demand vector $d \in \R^n$, error tolerance $\eps \in (0, 1)$}
		\KwOutput{Flow $f$ with $\norm{f}_0 = O(n), \mb^\top \mw^{-1} f = d, $ and $\norm{f}_1 \leq (1+\eps) \norm{\optd}_1$.}
		$H = O(\log n)$-spanner of $G$ computed by Corollary~5.2 of \cite{BeckerKKL17}\;
        $\mr = \mathsf{StretchApprox}(G)$\;
        $t_{\max} \gets $ $2$-approximation to $\min_{\mb_H^\top \mw_H^{-1} f = d} \norm{f}_1$, scaled to be an overestimate to $\optd$\;
        
        \Comment*[f]{We let $\mb_H$ and $\mw_H$ be the appropriate restrictions of $\mb$ and $\mw$ to the subgraph $H$.}
        
        $t_{\min} \gets \frac{t_{\max}}{O(\log n)}$\;
        \While{$t_{\max} \geq (1+\frac \eps {\log n}) t_{\min}$}{
        $t \gets \frac{1}{2} (t_{\min} + t_{\max})$\;
        $\ma \gets t \begin{pmatrix} \mr \mb^\top \mw^{-1} & - \mr \mb^\top \mw^{-1}\end{pmatrix}^\top, b \gets \mr d$\;
        $Z, \fs \gets$ approximate value and streamed solution of \eqref{eq:boxsimplexprimal} with $\ma$, $b$ as defined above, and $c = \mzero$, to accuracy $\frac{\eps t}{O(\log n)}$\;
        \eIf{$Z \leq \frac{\eps t}{O(\log n)}$}{
        $t_{\max} \gets t$\; 
        }{
        $t_{\min} \gets t$\;
        }
        $f' \gets \mathsf{RoundStream}([\mw \fs]_+, [\mw \fs]_-)$ where $[\mw \fs]_+$, $[\mw \fs]_-$ are the restrictions of $\fs$ to the top and bottom rows of $\ma$ respectively after cancelling any nonnegative flow on corresponding edges in both parts\;
        $d' \gets d - \mb^\top f'$\;
        $f_{\mathsf{res}} \gets 2$-approximate solution to $\min_{\mb_H^\top \mw_H^{-1} f = d'} \norm{f}_1$\;
        \Return{$\mw^{-1}\left( \mathsf{RoundStream}([f' + \mw f_{\mathsf{res}}]_+, [f' + \mw f_{\mathsf{res}}]_-)\right)$}
        }
	\caption{$\mathsf{ApproxTransshipment}(G,d,\eps)$}
	\label{alg:l1}
\end{algorithm}

\begin{proof}[Proof of Theorem~\ref{thm:l1main}]
We begin by proving correctness of Algorithm~\ref{alg:l1}, and then discuss implementation costs to show they fit within the specified budgets.

First, $t_{\max}$ is an overestimate of $\optd$ by the definition of a spanner in Definition~\ref{def:spanner}, and similarly $t_{\min}$ is an underestimate. Next, by solving the problem in Line 8 to the stated accuracy, Lemma~\ref{lem:l1boxsimplex} shows we have an $\frac{\eps t}{O(\log n)}$-additive approximation to the value of the problem \eqref{eqn:l1cons}. Hence, Lemma~\ref{lem:valueconstrained} shows that when the binary search terminates, $t$ is an $O(\frac \eps {\log n})$-multiplicative approximation to $\norm{\optd}_1$. The additive error incurred due to the approximate routing of demands $d'$ on Line 15 can only affect the solution by $O(\eps)$ multiplicatively because of the quality of the spanner. Moreover, the applications of $\mathsf{RoundStream}$ do not affect meeting the demands and can only improve the $\ell_1$ value, by Lemma~\ref{lemma:l1round}. Hence, the output of the algorithm meets the demands and attains an $\eps$-multiplicative approximation to the value of the transshipment problem \eqref{eqn:l1flow}.

We next discuss pass, space, and work complexities. The costs of Lines 1 and 2 are given by Lemma~\ref{lemma:spann} and Lemma~\ref{lem:rconstruction} respectively. Lines 3 and 15 can be performed without streaming access to $G$ once $H$ has been stored, e.g.\ using \cite{Li20}, and do not dominate any of the costs.

There are at most $\log \frac{\log n} \eps$ iterations of the binary search, given the multiplicative range on $t$. The cost of each run in Line 8 is given by Theorem~\ref{thm:lss}, where $\norm{\ma}_\infty = t\alpha$ for the $\alpha$ in Lemma~\ref{lem:rconstruction}, and where the additive accuracy is given in Line 8. Finally, the costs of calls to $\mathsf{RoundStream}$ are given by Lemma~\ref{lemma:l1round} and do not dominate, completing the proof.

To conclude, we describe our algorithm for $(1+\eps)$-approximate shortest paths. We choose the demands $d = \1_u - \1_v$ for transshipment, and consider the flow $f$ computed by Algorithm~\ref{alg:l1}. We simply return the shortest $s$-$t$ path contained in the support of $f$: as $f$ satisfied $\mb^\top f = d$, it corresponds to a linear combination of $s$-$t$ paths. Returning the shortest path found in its support must therefore have length at most that of $f$, which is itself $\leq (1+\eps) \optd$. 
\end{proof} 
\arxiv{	\subsection*{Acknowledgments}

	We thank Alireza Farhadi, Cliff Liu, and Robert Tarjan for helpful conversations. 
	Researchers are supported in part by a Microsoft Research Faculty Fellowship, NSF CAREER Award CCF-1844855, NSF CAREER Award CCF-2047061, NSF Grant CCF-1955039, a PayPal research gift, a Sloan Research Fellowship, a gift from Google Research, and a Stanford Graduate Fellowship.}

	\arxiv{\bibliographystyle{alpha}	}
\newcommand{\etalchar}[1]{$^{#1}$}

	\newpage
\appendix
\addtocontents{toc}{\protect\setcounter{tocdepth}{1}}
\notarxiv{

\section{Further applications}
\label{sec:app}

Here we give omitted proofs for applications considered in the paper. For exact MCM, we discuss Proposition~\ref{prop:JLS19} in detail in Section~\ref{ssec:hopset}. We then extend our semi-streaming framework and algorithms to the model where the cost vector $c$ is non-uniform. We first state a general result on solving box-simplex games induced by weighted matching problems in Section~\ref{ssec:wbg}. Then in Section~\ref{ssec:randsam-ot} and~\ref{ssec:mwm}, we show how to adapt the model to specific applications. Specifically, we develop space-effcient solvers for optimal transportation and maximum weight matching respectively, by reductions to our more general solver.

\subsection{Discussion of Proposition~\ref{prop:JLS19}}\label{ssec:hopset}

The algorithm of~\cite{LiuJS19} computes a \emph{hopset} $H$ of $\tO(n)$ edges, which has the property that $u$ can reach $v$ in $G \cup H$ if and only $u$ can reach $v$ in $G$. Further, with high probability there exists a path of length $O(n^{\frac12+o(1)})$ from $s$ to $t$ in $G \cup H$. $H$ is computed in $O(\log n)$ iterations as $H_1 \cup H_2 \cup \dots H_{O(\log n)}$, where $H_i$ is computed from $O(n^{\frac12+o(1)})$-depth breath-first searches in vertex-induced subgraphs of $G_i = G \cup H_1 \cup H_2 \cup \dots \cup H_{i-1}$. This computation requires $\tO(m)$ work and $O(n^{\frac12+o(1)})$ depth in the PRAM model, and can be implemented in $O(n^{\frac12+o(1)})$ passes in the semi-streaming model by performing the operations of each level of parallel depth within a single pass. Further, a directed edge $(u,v)$ is added to $H_i$ if and only if $u$ can reach $v$ in one of the BFS arborescences used to form $H_i$, and the total size of these arborescences is $\tO(n)$. To recover a path from $s$ to $t$, we run the algorithm of~\cite{LiuJS19} and store the arborescences used to form $H$. We then perform a $O(n^{\frac12+o(1)})$-depth breadth-first search from $s$ in $G \cup H$: if a path from $s$ to $t$ in $G \cup H$ is found, we mark the edges in this path. For any $i$ from $O(\log n)$ to $1$, we then replace all edges in this $s-t$ path belonging to $H_i$ with the corresponding arborescence paths belonging to $G_i$. In the end we obtain a (not necessarily simple) path from $s$ to $t$ in $G$, which we may make simple by removing duplicates.

\subsection{Weighted bipartite matching under an $\ell_1$ constraint}\label{ssec:wbg}

In this section $G = (V, E)$ is a bipartite graph where $V = L \cup R$, $|L| = \tfrac n 2$ with unweighted edge incidence matrix $\mb \in \{0, 1\}^{E \times V}$. We consider a weighted matching problem parameterized by a (possibly non-uniform) demand vector $d\in[0,1]^{V}$, and weights $w \in \R_{\ge 0}^E$. We also assume we know the value $S$ of the $\ell_1$ norm of an optimal matching which is feasible for the demands $d$, yielding the maximum matching weight $M^*$; in all our applications, we will have $S \ge 1$.\footnote{The assumption that we exactly know $S$ may seem restrictive. In some applications (cf.\ Section~\ref{ssec:mwm}), it will suffice to know an upper bound and pad the graph with a dummy edge appropriately.} We will refer to the optimal matching as $Sx^*$, for some $x^* \in \Delta^E$. 

Under this problem parameterization, we give a MWM meta-result in Corollary~\ref{cor:gen}. We first give the specific box-simplex problem formulation we use. Let $c = \norm{w}_\infty\1_{V} - w$ be an all-positive vector with $\cmax \defeq \norm{c}_\infty \le \norm{w}_\infty$. Consider the problem
\begin{equation}\label{eq:mwmproblem-gen-app} \min_{x \in \Delta^{E}} S(c^\top x) + \sum_{v \in V} \max\Par{[\ma^\top x - b]_v, 0} = \min_{x \in \Delta^{E}} \max_{y \in [0, 1]^{V}} S(c^\top x) + y^\top\Par{\ma^\top x - b},\end{equation} 
where $\ma \defeq S\cmax \mb$, and $b \defeq \cmax d$.  Analogously to Lemma~\ref{lem:feasiblecorrect}, we show that similarly to MCM, we can round an approximate solution to the problem \eqref{eq:mwmproblem-gen-app} to be feasible without much loss in approximation guarantees, through the following variant of Algorithm~\ref{alg:feasible}. 

\begin{algorithm}
	\caption{$\Feasible(x, G)$}
	\begin{algorithmic}[1]\label{alg:feasible-gen}
		\STATE \textbf{Input:} Bipartite graph $G = (V, E)$ with vertex partition $V = L \cup R$ and edge-incidence matrix $\mb$, $x \in \Delta^E$ satisfying 
		\begin{equation}\label{eq:weightedroundreqs}
		\begin{aligned}
		\inprod{c}{Sx} \ge \inprod{c}{Sx^*} - \eps,\\
		\cmax \sum_{v\in L\cup R} \max\Par{[\mb^\top (Sx)-d]_v,0} \le \eps
		\end{aligned}
		\end{equation}
		\STATE \textbf{Output:} Fractional matching $\tx$ satisfying $-c^\top\tx +c^\top (Sx^*)\le 3\eps$, $\mb^\top \tx \le d$
		\STATE Let $\mx \in \R^{L \times R}$ be a zero-padded reshaped $Sx$
		\STATE $\lsum \gets \mx\1_{\frac{n}{2}}$
		\STATE $\mx' \gets \diag{\min\Par{\tfrac{d_l}{\lsum}, \1_{\frac{n}{2}}}} \mx$ entrywise
		\STATE $\rsum \gets (\mx')^\top \1_{\frac{n}{2}} $
		\STATE $\tmx \gets \mx' \diag{\min\Par{\tfrac{d_r}{\rsum}, \1_{\frac{n}{2}}}}$
		\RETURN $\tx =$ vectorized $\tmx$
	\end{algorithmic}
\end{algorithm}

\begin{equation}\label{eq:weighteddualgap}0 \le S(c^\top(x - x^*)) +\cmax \sum_{v\in L\cup R} \max\Par{[\mb^\top (Sx)-d]_v,0} \le \eps\end{equation}

\begin{lemma}\label{lem:feasiblecorrect-2}
Algorithm~\ref{alg:feasible-gen} is correct, i.e.\ it produces $\tx$ satisfying $-c^\top\tx +c^\top (Sx^*) \le 3\eps$, $\mb^\top \tx \le d$. Moreover, supposing that $\norm{x}_0 = O(n)$, using $O(1)$ passes, $O(n)$ auxiliary memory, and $O(m)$ work, we can explicitly compute and store $\tx$. 
\end{lemma}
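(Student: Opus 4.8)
The plan is to follow the overflow-removal template used for Lemma~\ref{lem:overflowremoval}, adapted to track the weighted objective $c^\top(Sx)$ in place of the $\ell_1$ mass and to allow general demands $d$. There are three things to verify for Algorithm~\ref{alg:feasible-gen}: (i) $\tx \ge 0$ and $\mb^\top \tx \le d$; (ii) $-c^\top \tx + c^\top(Sx^*) \le 3\eps$; and (iii) the stated $O(1)$-pass, $O(n)$-space, $O(m)$-work implementation.

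For (i), both scaling vectors $\min(d_l/\lsum, \1_{n/2})$ and $\min(d_r/\rsum, \1_{n/2})$ have entries in $[0,1]$ (on a vanishing row or column sum the corresponding row/column of $\mx$ is already zero and the scale is immaterial), so entrywise $\tmx \le \mx' \le \mx = Sx$; in particular $\tx \ge 0$. Left-scaling by $\diag{\min(d_l/\lsum,\1)}$ forces every $L$-marginal $[\mb^\top x']_l \le d_l$, and since the subsequent right-scaling only shrinks entries this $L$-marginal bound is inherited by $\tx$; right-scaling by $\diag{\min(d_r/\rsum,\1)}$ then forces $[\mb^\top \tx]_r \le d_r$. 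Hence $\mb^\top \tx \le d$.

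For (ii), the key identity (as in Lemma~\ref{lem:overflowremoval}) is that the entrywise $\ell_1$ loss at each rescaling step equals a total overflow. The first step loses $\norm{Sx - \mx'}_1 = \sum_{l \in L}\max([\mb^\top(Sx)]_l - d_l, 0)$, and the second loses $\norm{\mx' - \tmx}_1 = \sum_{r \in R}\max([\mb^\top x']_r - d_r, 0)$, where $x'$ is the vectorization of $\mx'$; since $x' \le Sx$ entrywise the second sum is at most $\sum_{r \in R}\max([\mb^\top(Sx)]_r - d_r, 0)$, and because $L$ and $R$ partition $V$ these add (using $\tmx \le \mx' \le Sx$) to
\[\norm{Sx - \tx}_1 \le \sum_{v \in L \cup R}\max\Par{[\mb^\top(Sx) - d]_v, 0} \le \frac{\eps}{\cmax},\]
by the second hypothesis in \eqref{eq:weightedroundreqs}. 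As $c \ge 0$ and $Sx - \tx \ge 0$ entrywise, $c^\top(Sx - \tx) \le \cmax\norm{Sx - \tx}_1 \le \eps$, so together with the first hypothesis $c^\top(Sx^*) - c^\top(Sx) \le \eps$ we obtain $c^\top(Sx^*) - c^\top \tx = (c^\top(Sx^*) - c^\top(Sx)) + (c^\top(Sx) - c^\top \tx) \le 2\eps \le 3\eps$.

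For (iii), under $\norm{x}_0 = O(n)$ we store the sparse vector $Sx$ in $O(n)$ space; the bipartition $V = L \cup R$ is part of the input. Accumulating $\lsum$ over the $O(n)$ nonzeros, forming $\mx'$, then accumulating $\rsum$ and forming $\tmx$ each touch only the support, so the whole routine runs in $O(1)$ passes, $O(n)$ auxiliary memory, and $O(m)$ work. The only mildly delicate point is the bookkeeping of the two overflow identities and checking that the $L$-marginal bound survives the second (column) rescaling; both are immediate from the monotonicity $\tmx \le \mx'$ and the computation in Lemma~\ref{lem:overflowremoval}, so I do not anticipate a substantive obstacle, and the slack between the $2\eps$ obtained above and the claimed $3\eps$ comfortably absorbs the constant rescaling of $\eps$ used when $\Feasible$ is invoked as a subroutine.
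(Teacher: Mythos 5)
Your proof is correct and follows essentially the same route as the paper's: the two sequential row/column rescalings, the overflow-to-$\ell_1$-loss accounting inherited from the unweighted rounding argument (Lemma~\ref{lem:overflowremoval} and its matrix-scaling counterpart), and a H\"older bound with $\cmax$ combined with the two hypotheses in \eqref{eq:weightedroundreqs}. The only (harmless) difference is that you charge each rescaling step only to the overflow on its own side of the bipartition and so obtain $2\eps$, whereas the paper bounds each of the two losses by the total overflow and settles for the stated $3\eps$.
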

\begin{proof}
For bounding the rounding loss, clearly the output $\tx$ is entrywise less than $Sx$, so it suffices to bound $\inprod{c}{Sx - \tx} \le 2\eps$. We note that $\|c\|_{\infty}\le \cmax$ and thus letting $x'$ be the vectorized $\mx'$, we have (following essentially identically from analogous calculations in Lemma~\ref{lem:feasiblecorrect})
\begin{align*}
-c^\top \tx &= -c^\top (Sx) +\cmax \sum_{v\in L\cup R} \max\Par{[\mb^\top (Sx)-d]_v,0} \le -c^\top(Sx) + \eps.
\end{align*}
The other parts of the proof (bounding the loss from $x'$ to $\tx$ by $\eps$ and the semi-streaming implementation) follow exactly the same as their counterparts in Lemma~\ref{lem:feasiblecorrect} so we omit them here. 
\end{proof}

Applying Proposition~\ref{prop:shermancorrect}, with the cycle cancelling  as in Appendix~\ref{sec:cc} and the rounding procedure in Algorithm~\ref{alg:feasible-gen} on the implicit solution whose coordinates can be computed in streaming fashion, we can obtain the following guarantees as in Corollary~\ref{cor:gen}.

\corgen*

\begin{proof}[Proof of Corollary~\ref{cor:gen}]
The proof is analogous to that of Theorem~\ref{thm:match-rounding}. 	
First, we apply Algorithms~\ref{alg:sherman} and~\ref{alg:altmin} to obtain a streaming representation of an average iterate which satisfies
\begin{equation}\label{eq:weighteddualgap}0 \le S(c^\top(x - x^*)) +\cmax \sum_{v\in L\cup R} \max\Par{[\mb^\top (Sx)-d]_v,0} \le \eps.\end{equation}
It is immediate from nonnegativity of the second term in \eqref{eq:weighteddualgap} that the first inequality in \eqref{eq:weightedroundreqs} holds for the average iterate. The second inequality follows from the arguments of Lemma~\ref{lem:cansearch} because removing one unit of violated demands can only affect the first summand in \eqref{eq:weighteddualgap} by $S\cmax$ units.

We then apply Proposition~\ref{prop:cc} on this average iterate, whose coordinates can be computed in streaming fashion by the representation in Corollary~\ref{corr:barxrep} and using Lemma~\ref{lem:implicitx}. By the guarantees of Proposition~\ref{prop:cc}, the requirements \eqref{eq:weightedroundreqs} are preserved, and the resulting fractional matching is supported on $O(n)$ edges. Finally, we run the rounding procedure in Algorithm~\ref{alg:feasible-gen} to obtain a fractional matching $\tx$ satisfying $c^\top\tx\ge c^\top (Sx^*)-\eps$ and $\mb^\top\tx\le d$, via Lemma~\ref{lem:feasiblecorrect-2} as the requirements are met.
\end{proof}

We briefly remark on the utility of Corollary~\ref{cor:gen}. The generality of being able to handle arbitrary costs has the downside of an additive error guarantee rather than multiplicative. We will show how to apply this general result in different settings where either the optimal solution is supported on simplex (e.g.\ $S=1$ for optimal transport), or we can modify the graph appropriately to have a saturated optimal matching (e.g.\ $S=n$ for maximum weight matching).

\subsection{Optimal transportation}
\label{ssec:randsam-ot}

In this section, we give a semi-streaming implementation for solving the discrete optimal transportation problem. In this problem parameterized by costs\footnote{Costs are without loss of generality nonnegative, as adding a uniform multiple of $\norm{c}_\infty\1$ affects the cost of all transportation plans by a fixed amount and the quantity $\cmax$ by at most a constant factor.} $c \in \R^{n^2}_{\ge 0}$, and two sets of demands $\ell, r \in \Delta^n$, we wish to find a transportation plan $x \in \Delta^{n^2}$ between the demands with (approximately) minimal cost, as specified by $c$; we defer a further discussion of this formulation to \cite{AltschulerWR17}. \cite{JambulapatiST19} showed that to obtain a transport plan approximating the optimum to $\eps$-additive accuracy, it suffices to solve the following problem to $\eps$ duality gap, where $d$ is the vertical concatenation of $\ell$ and $r$, $\cmax \defeq \norm{c}_\infty$, and $\mb$ is the adjacency matrix of the unweighted complete bipartite graph:
\begin{equation}\label{eq:otminiax}
\min_{x\in\Delta^{n^2}}\max_{y\in [0,1]^n}  -c^\top x + C_{\max}\cdot y^\top (\mb^\top x-d).
\end{equation}

The formulation above is an instance of the weighted formulation~\eqref{eq:mwmproblem-gen-app} with $S=1$. We conclude by giving a complete result for semi-streaming optimal transportation.%

\restateotthm*
\begin{proof}
The proof follows by first applying Corollary~\ref{cor:gen} to ~\eqref{eq:otminiax} to obtain an $O(n)$ sparse solution $\hx$ in the desired work and space budget. Then, we round this fractional transport plan to an exact plan with a rank-1 correction step, i.e. by letting
\[[x]_{ij} = [\hx]_{ij} + e_r e_c^\top,\]
where $e_r = \tfrac{d_L - \hat{\mx}\1_{n/2}}{\|d_L - \hat{\mx}\1_{n/2}\|_1}$ and $e_c = d_R - \hat{\mx}^\top\1_{n/2}$ for a reshaped $\hx$ in the form $\hat{\mx}\in\R^{L\times R}$. These corrections can be explicitly computed using $O(n)$ additional memory, and can only improve the quality of the solution since $c\ge0$. Adjusting constants, we have the approximation guarantee. Finally, by applying Proposition~\ref{prop:cc} to the input consisting of a reshaped $\hx$ plus $e_r e_c^\top$, we reduce the support size to $O(n)$ while preserving that the demands are satisfied, as desired.
\end{proof}

\subsection{Maximum weight matching}\label{ssec:mwm}

We give a result for computing an approximate maximum weight matching for a bipartite graph in the semi-streaming model. We first reduce the MWM problem to a box-simplex game. Given a bipartite graph $G = (V, E)$ where $V = L \cup R$, $|L| = n$, with unweighted edge incidence matrix $\mb \in \{0, 1\}^{E \times V}$ and weights $w \in \R_{\ge 0}^E$, to simplify the problem slightly so the maximum matching has $\ell_1$ norm $n$ without loss of generality, we consider a modified weighted graph $\tG = (\tV, \tE, \tw)$ constructed from $G$ as follows: we add one additional vertex $\ell$ to $L$, and a vertex $r$ to $R$. We further add additional edges from $r$ to all vertices in $L$, and from $\ell$ to all vertices in $R$, all with weight $0$. We call $\tmb \in \{0, 1\}^{\tE \times \tV}$ the unweighted edge incidence matrix of $\tG$; note that $|\tV|$ and $|\tE|$ are $|V|$ and $|E|$ up to constant factors. 

\begin{lemma}\label{lem:tgprops}
The modified graph $\tG$ satisfies the following properties.
\begin{enumerate}
	\item The maximum matching weight for $\tG$ is also $M^*$, the maximum matching weight for $G$.
	\item Define function $\opt(v) \defeq \max_{\tmb^\top x \le \1} \inprod{\tw}{x}$ over $x \in \R_{\ge 0}^{\tE}$, $\norm{x}_1 = v$, so $\opt(v)$ is the largest weight a matching on $\tG$ of size $v$ can have. Then, $\opt(v) \le \opt(v')$ for any $0 \le v \le v' \le n$.
\end{enumerate}
\end{lemma}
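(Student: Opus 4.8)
\textbf{Proof plan for Lemma~\ref{lem:tgprops}.} Both claims rest on the single structural fact that every edge of $\tE\setminus E$ has weight $0$, so putting flow on such an edge is ``free'' in the objective while still contributing to the $\ell_1$ mass. For Part~1: since $G$ is a subgraph of $\tG$ on which $\tw$ agrees with $w$, any maximum weight matching of $G$ is a feasible matching of $\tG$ of weight $M^*$, giving $\ge M^*$. Conversely, let $x$ be any (fractional) matching of $\tG$ and $x_E$ its restriction to $E$; the constraints $\tmb^\top x \le \1$ restricted to the original vertices $V$ are exactly the matching constraints of $G$, so $x_E$ is feasible for $G$, and $\inprod{\tw}{x}=\inprod{w}{x_E}$ because the coordinates of $x$ on $\tE\setminus E$ carry weight $0$. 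Hence $\inprod{\tw}{x}\le M^*$, proving equality; the same argument applied to fractional matchings also shows $\opt(v)\le M^*$ for every $v$.

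For Part~2, the plan is to show that any feasible fractional matching of $\tG$ of $\ell_1$ mass $v$ can be turned into a feasible fractional matching of any prescribed $\ell_1$ mass $v'\in[v,n]$ by adding flow only on the weight-$0$ edges incident to the two new vertices $\ell$ and $r$; since such additions leave the objective unchanged, this immediately gives $\opt(v')\ge\opt(v)$. Fix $x$ attaining $\opt(v)$. Each original vertex $u\in L$ has residual capacity $1-[\tmb^\top x]_u\ge 0$ and is joined to $r$ by a weight-$0$ edge (and symmetrically each vertex of $R$ is joined to $\ell$). Because $v=\norm{x}_1\le v'\le n=|L|$, the residual capacities summed over $L$ total at least $n-v\ge v'-v$, so there is room to route an additional $v'-v$ units of flow from vertices of $L$ onto the weight-$0$ edges into $r$, staying within $\tmb^\top x\le\1$ — the way the construction attaches $\ell$ and $r$ is precisely what guarantees that these new vertices can receive (resp.\ emit) that much flow. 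Choosing the added flow so that exactly $v'-v$ units are inserted yields $x'$ with $\norm{x'}_1=v'$, $\tmb^\top x'\le\1$, and $\inprod{\tw}{x'}=\inprod{\tw}{x}=\opt(v)$.

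The main obstacle is Part~2, and within it the bookkeeping that the weight-$0$ edges to $\ell$ and $r$ genuinely supply enough augmenting capacity to reach \emph{every} target mass up to $n$ while remaining feasible: this is where one must use $v'\le n$ together with the precise attachment of the new vertices, and check that the inserted flow can be tuned to land exactly on $v'$ rather than overshooting. Part~1, by contrast, is routine. As an alternative organization for Part~2 one may note that $v\mapsto\opt(v)$ is concave, being the optimal value of a linear program over the matching polytope sliced by $\norm{x}_1=v$ (for $v_1,v_2$ with optimizers $x_1,x_2$, the point $\lambda x_1+(1-\lambda)x_2$ is feasible for mass $\lambda v_1+(1-\lambda)v_2$), so that the padding argument is only needed to control the right endpoint; but the direct padding argument above seems cleanest.
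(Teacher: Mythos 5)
Your proof is correct and follows essentially the same route as the paper's: Part~1 via the zero-weight dummy edges together with the subgraph relation, and Part~2 by padding the optimal mass-$v$ matching with $v'-v$ extra units of flow on the weight-$0$ edges into the dummy vertices, using $v'\le n$ to ensure enough residual capacity on the original side. The one point you leave slightly implicit---that $\ell$ and $r$ can absorb the added flow---is precisely what the paper asserts (``as they both have capacity $n$''), so your argument matches the paper's intended reading of the constraints at the dummy vertices.
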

\begin{proof}
For the first property, note any maximum weight matching on $\tG$ without loss of generality puts no flow on any of the additional edges (as they are unweighted), and $G$ is a subgraph of $\tG$.

To see the second property, we need to exhibit a matching on $\tG$ with $\ell_1$ norm $v'$ and value at least $\opt(v)$. To do so, we can simply take the matching attaining $\opt(v)$ with $\ell_1$ norm $v$, and arbitrarily route $v' - v$ units of flow from unmatched vertices in $L$ to $r$, and similarly from $\ell$ to unmatched vertices in $R$, as they both have capacity $n$. 
\end{proof}

Lemma~\ref{lem:tgprops} implies that without loss of generality, we can set the $\ell_1$ norm of the simplex variable to be exactly $n$.  We next give the specific box-simplex problem formulation we use. Let $c = \norm{w}_\infty\1_{\tV} - w$ be an all-positive vector with $\cmax \defeq \norm{c}_\infty \le \norm{w}_\infty$. Consider the problem
\begin{equation}\label{eq:mwmproblem} \min_{x \in \Delta^{\tE}} n(c^\top x) + \sum_{v \in \tV} \max\Par{[\ma^\top x - b]_v, 0} = \min_{x \in \Delta^{\tE}} \max_{y \in [0, 1]^{\tV}} n(c^\top x) + y^\top\Par{\ma^\top x - b},\end{equation}
where $\ma^\top = n\cmax \tmb^\top$, and $b=\cmax\1_{\tV}$. Note this is an instance of the weighted formulation~\eqref{eq:mwmproblem-gen-app} with $S=n$. Applying Corollary~\ref{cor:gen} with the cycle cancelling procedure in Proposition~\ref{prop:cc} directly gives a complete result for semi-streaming maximum weight matching.

\begin{restatable}{theorem}{restatemwmthm}\label{thm:mwm}
	For a maximum weight matching problem parameterized by weights $w$ and letting $\gamma = \tfrac{n\norm{w}_\infty}{\eps}$, we can obtain an $\eps$-additive maximum weight matching using $O\Par{\gamma\log n\log\gamma}$ passes, $O\Par{n\log n}$ auxiliary memory, and $O(m\gamma\log\gamma)$ work.
\end{restatable}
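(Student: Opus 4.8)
The plan is to obtain this as a direct specialization of Corollary~\ref{cor:gen}, the general guarantee for the weighted bipartite matching problem \eqref{eq:weighted-gen}. First I would instantiate \eqref{eq:weighted-gen} on the extended graph $\tG$ of Section~\ref{ssec:formulation} (that is, $G$ together with one isolated zero-weight dummy edge whose row in $\tmb$ is all zero), with demand vector $d = \1_V$ (all-ones on the original vertices, zero on the two dummy endpoints), weights $w$ extended by zero to the dummy edge, and known $\ell_1$-scale $S = n$. The choice $S = n$ is valid: any maximum weight matching of $G$ has $\ell_1$ norm at most $n$ (each of its at most $n/2$ edges carries flow at most $1$), and any matching of strictly smaller $\ell_1$ norm can be padded to norm exactly $n$ by routing the remaining mass on the dummy edge, which affects neither the constraint $\tmb^\top x \le d$ nor the objective $w^\top x$ since the dummy edge contributes an all-zero row to $\tmb$ and has weight $0$. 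Hence there is an optimal solution of the form $n x^*$ with $x^* \in \Delta^{\tE}$ whose restriction to $E$ has value $M^*$, so $S = n$ satisfies the hypothesis of Section~\ref{ssec:wbg}.

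Next I would simply invoke Corollary~\ref{cor:gen} with these parameters and error $\eps$. Internally this runs the reduction of Section~\ref{ssec:wbg}: solve the $\ell_1$ regression formulation \eqref{eq:matchminimax-weighted} to $\eps$-additive accuracy via Algorithm~\ref{alg:sherman} implemented in low space (Corollary~\ref{corr:lss}), sparsify the implicit fractional solution via cycle cancelling (Proposition~\ref{prop:cc}), round to feasibility via $\mathsf{RemoveOverflow}$ (Lemma~\ref{lem:overflowremoval-weighted}), and read off the greedy matching on the resulting $O(n)$-edge forest support. Corollary~\ref{cor:gen} then outputs an $n$-sparse $\hx$ with $\mb^\top \hx \le \1_V$ and $w^\top \hx \ge M^* - \eps$ — after discarding any flow on the dummy edge, which is free — i.e.\ an $\eps$-additive maximum weight matching of $G$, using $O(\gamma \log n \log(\gamma + \norm{w}_\infty \norm{d}_1 \eps^{-1}))$ passes, $O(n)$ space, and $O(m \gamma \log n \log(n\gamma + n\norm{w}_\infty \norm{d}_1 \eps^{-1}))$ work, where $\gamma = S\norm{w}_\infty \eps^{-1} = n\norm{w}_\infty \eps^{-1}$.

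The only remaining step is parameter bookkeeping. Since $d = \1_V$ has exactly $|V| = n$ nonzero (unit) entries, $\norm{d}_1 = n$, so $\norm{w}_\infty \norm{d}_1 \eps^{-1} = O(n\norm{w}_\infty \eps^{-1}) = O(\gamma)$; therefore $\log(\gamma + \norm{w}_\infty\norm{d}_1\eps^{-1}) = O(\log\gamma)$ and $\log(n\gamma + n\norm{w}_\infty\norm{d}_1\eps^{-1}) = O(\log(n\gamma))$. Substituting these into the bounds from Corollary~\ref{cor:gen} and adjusting constants gives the claimed pass complexity $O(\gamma\log n\log\gamma)$, space $O(n)$, and work $O(m\gamma\log n\log(n\gamma))$, and determinism is inherited from Algorithm~\ref{alg:sherman} and Proposition~\ref{prop:cc}. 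I do not expect a genuine obstacle here, as all the algorithmic content lives in Corollary~\ref{cor:gen} and the area-convexity solver behind it; the one point that must be argued carefully is the validity of fixing the $\ell_1$-scale to $S = n$, which is exactly the dummy-edge argument above. I would also remark that if side information upper bounds the $\ell_1$ norm of any maximum weight matching by some $S < n$, the same reduction with that smaller $S$ makes $\gamma$ scale with $S$ rather than $n$, improving all three complexities proportionally.
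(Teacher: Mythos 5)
Your proposal is correct and matches the paper's own argument: Theorem~\ref{thm:mwm} is obtained there in exactly this way, by instantiating the weighted formulation of Section~\ref{ssec:wbg} on the extended graph with demands $d = \1_V$ and fixed scale $S = n$ (justified by padding any matching with flow on the zero-weight dummy edge, whose row of $\tmb$ is all zero), and then invoking Corollary~\ref{cor:gen} with $\norm{d}_1 = n$ so that $\gamma = n\norm{w}_\infty\eps^{-1}$ and the logarithmic factors collapse as you computed. Your added care about why $S=n$ is a valid choice and the concluding remark about side information $S < n$ are both consistent with what the paper states.
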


Finally, in the case when we have side information showing $\opt(S) = \opt(n)$ (in the notation of Lemma~\ref{lem:tgprops}), for some $S \le n$, note that it suffices to solve an analogous problem to \eqref{eq:mwmproblem} with the cost vector $c$ and the adjacency matrix $\tmb$ scaled by  $S$, rather than $n$. This implies analogous wins in  Theorem~\ref{thm:mwm}, so that the parameter $\gamma$ scales linearly in $S$ rather than $n$.
\section{Deferred proofs from Section~\ref{sec:value}}\label{app:helpersherman}

We provide proofs of Proposition~\ref{prop:shermancorrect} and Lemma~\ref{lem:implsimplex} here.

\restateshermancorrect*
\begin{proof}
	The claim on the number of iterations required by Algorithm~\ref{alg:sherman} follows immediately from Theorem 2 of \cite{CohenST21}. It remains to show the bound on $K$ (the number of alternating minimization steps needed) is correct. For simplicity, we will prove this bound suffices for the computation of $w_t$; the bound for $z_{t + 1}$ follows analogously. Lemma 7 of \cite{JambulapatiST19} demonstrates that in solving the minimization subproblem of Algorithm~\ref{alg:altmin}, every iteration of alternating minimizaton decreases the error (additive difference in value to optimality) by a constant factor, so we only require a bound on the initial error of each subproblem in Line 5 of Algorithm~\ref{alg:sherman}. Note that the subproblem is of the form: minimize over $(x, y) \in \Delta^m \times [-1, 1]^n$,
	\begin{equation}\label{eq:minwt}\begin{aligned}\inprod{\frac 1 3(\ma y_t + c) + \ma(y_t^2)}{x} + \inprod{\frac 1 3(-\ma^\top x_t + b) - 2\diag{y_t}\ma^\top x_t}{y} \\
	+ x^\top \ma (y^2) + 10\norm{\ma}_\infty \sum_{i \in [m]} x_i \log \frac{x_i}{[x_t]_i}. \end{aligned}\end{equation}
	It is immediate that the range of the first three summands in the objective \eqref{eq:minwt} over the range $\Delta^m \times [-1, 1]^n$ is $O(\max(\norm{\ma}_\infty, \norm{c}_\infty, \norm{b}_1))$. Moreover, Theorem 2 of \cite{CohenST21} shows that the $x$ block of the minimizer of the objective~\ref{eq:minwt} is entrywise within a multiplicative factor $2$ from $x_t$; in other words, for all $i \in [m]$, $[x'_t]_i \in [\thalf [x_t]_i, 2[x_t]_i]$. Hence, the absolute value of the fourth summand in \eqref{eq:minwt} at the minimizer is bounded by 
	\[10\norm{\ma}_\infty \sum_{i \in [m]} x_i \log 2 = O\Par{\norm{\ma}_\infty}.\] 
	and at the initial point $x^{(0)} = x_t$ it has value $0$. Combining these bounds, and using that the desired accuracy is $\tfrac \eps 2$, yields the required bound on $K$. Regarding the value approximation, we claim that any $\eps$-approximate saddle point $(\bx, \by)$ to a convex-concave function $f$ on the domain $\xset, \yset$ approximates the value of the saddle point within an additive $\eps$. To see this, let $(x^\star, y^\star)$ be the exact saddle point of $f$ so that $f(x^\star, y^\star) = \opt$. By definition of $(\bx, \by)$,
	\[\Par{\max_{y \in \yset} f(\bx, y) - \opt} + \Par{\opt - \min_{x \in \xset} f(x, \by)} \le \eps.\]
	Moreover, both quantities on the left hand side are nonnegative by definition of $\opt$, i.e.
	\[\opt = f(x^\star, y^\star) = \min_{x \in \xset} \Par{\max_{y \in \yset} f(x, y)} = \max_{y \in \yset} \Par{\min_{x \in \xset} f(x, y)}.\] 
	The conclusion follows by applying the bounds $\max_{y \in \yset} f(\bx, y) \ge f(\bx, \by) \ge \min_{x \in \xset} f(x, \by)$. 
	
	Finally, we discuss changing the domain. Algorithm~\ref{alg:sherman} is also Algorithm 1 of \cite{CohenST21} as analyzed in its Theorem 2. The guarantees and requirements of each step of Algorithm~\ref{alg:sherman} are thus unchanged; the only difference is that the bound on the domain size decreased, which can only help the runtime bounds (cf.\ Proposition 1, \cite{CohenST21}). The analysis of Algorithm~\ref{alg:altmin} is identical, and its complexity follows since Line 6 decomposes into separable minimization problems.
\end{proof}

\restateimplicitx*
\begin{proof}
	We proceed by induction; in the first iteration, we have $v_0 = \mzero_n$, $\lam_0 = 0$.
	
	\textbf{Preserving the $w_t$ invariant.} Suppose inductively that $z_t = (x_t, y_t)$ where $x_t \propto \exp(\ma v_t + \lam_t c)$ for explicitly stored values $v_t$, $\lam_t$, $y_t$; we will drop the index $t$ for simplicity and refer to these as $\bv$, $\blam$, $\by$. Consider the procedure Algorithm~\ref{alg:altmin} initalized with these values, and note that
	\begin{align*}\gamma\x &= \frac{1}{3}\Par{\ma \by + c} - 10\norm{\ma}_\infty\Par{\ma \bv + \blam c} - \ma \Par{\by^2},\\
	\gamma\y &=\frac{1}{3}\Par{b - \ma^\top \frac{\exp(\ma \bv + \blam c)}{\norm{\exp(\ma \bv + \blam c)}_1}} - 2\diag{\by} \ma^\top \frac{\exp(\ma \bv + \blam c)}{\norm{\exp(\ma \bv + \blam c)}_1}.\end{align*}
Using $r(x, y) = \inprod{\ma (y^2)}{x} + 10\norm{\ma}_\infty\sum_{i \in [m]} x_i\log x_i$, we explicitly compute that for each $0 \le k < K$,
	\begin{equation}\label{eq:altminstep}
	\begin{aligned}
	x^{(k + 1)} &= \argmin_{x \in \Delta^m}\Brace{\inprod{\ma \Par{(y^{(k)})^2} + \gamma\x}{x} + 10\norm{\ma}_\infty \sum_{i \in [m]} x_i \log x_i}\\
	&\propto \exp\Par{-\frac{1}{10\norm{\ma}_\infty}\Par{\ma \Par{\frac 1 3 \by - 10\norm{\ma}_\infty \bv - (\by)^2+ (y^{(k)})^2}+ \Par{\frac 1 3 + 10\norm{\ma}_\infty\blam} c}},\\
	y^{(k + 1)} &= \argmin_{y \in [-1, 1]^n}\Brace{\inprod{\gamma\y}{y} + \inprod{\ma^\top x^{(k + 1)}}{y^2}} \\
	&= \text{med}\Par{-1, 1, -\frac{\gamma\y}{2\ma^\top x^{(k + 1)}}} \text{ entrywise.}
	\end{aligned}
	\end{equation}
	In the last line, the median operation truncates the vector $-\tfrac{\gamma\y}{2\ma^\top x^{(k + 1)}}$ coordinatewise on the box $[-1, 1]^n$. Now, suppose at the start of iteration $k$ of Algorithm~\ref{alg:altmin}, we have the invariant 
	\begin{equation}\label{eq:altmininvar}
	x^{(k)} \propto \exp\Par{\ma v^{(k)} + \lam^{(k)} c},
	\end{equation}
	and we have explicitly stored the tuple $(v^{(k)}, \lam^{(k)}, y^{(k)})$; in the first iteration, we can clearly choose $(v^{(0)}, \lam^{(0)}, y^{(0)}) = (\bv, \blam, \by)$. By the derivation \eqref{eq:altminstep}, we can update
	\begin{align*}
		v^{(k + 1)} & \gets -\frac{1}{10\norm{\ma}_\infty} \Par{\frac 1 3 \by - 10\norm{\ma}_\infty \bv - (\by)^2+ (y^{(k)})^2},\\
		\lam^{(k + 1)} & \gets -\frac{1}{10\norm{\ma}_\infty}\Par{\frac 1 3 + 10\norm{\ma}_\infty\blam},
	\end{align*}
	preserving \eqref{eq:altmininvar}. Moreover, note that $\gamma\y$ can be explicitly stored (it can be computed in one pass using Lemma~\ref{lem:implicitx}), and by applying Lemma~\ref{lem:implicitx} and the form \eqref{eq:altmininvar}, we can compute the vector $\ma^\top x^{(k + 1)}$ explicitly in one pass over the data and $O(m)$ work. This allows us to explicitly store $y^{(k + 1)}$ as well. The final point $w_t$ is one of the iterates $(x^{(K)}, y^{(K)})$, proving the desired invariant. Finally, we remark in order to perform these computations we only need to store the tuple $(v^{(k)}, \lam^{(k)}, y^{(k)})$ from the prior iteration, in $O(n)$ auxiliary memory.
	
	\textbf{Preserving the $z_{t + 1}$ invariant.} The argument for preserving the invariant on $z_{t + 1}$ is exactly analogous to the above argument regarding $w_t$; the only modification is that the input vector to Algorithm~\ref{alg:altmin} is 
	\begin{align*}
	\gamma\x &= \frac{1}{3}\Par{\ma y'_t + c} - 10\norm{\ma}_\infty\Par{\ma \bv + \blam c} - \ma \Par{\by^2},\\
	\gamma\y &=\frac{1}{3}\Par{\ma^\top x'_t} - 2\diag{\by} \ma^\top x'_t.
	\end{align*}
	However, by the previous argument shows that the vector $y'_t$ can be explicitly computed, and the vector $x'_t$ satisfies the invariant in the lemma statement. The same inductive argument shows that we can represent every intermediate iterate in the computation of $z_{t + 1}$ in the desired form.
	
	\textbf{Numerical stability.} We make a brief comment regarding numerical stability in the semi-streaming model, which may occur due to exponentiating vectors with a large range $\omega(\log m)$ (in defining simplex variables). It was shown in \cite{JambulapatiST19} that Algorithm~\ref{alg:sherman} is stable to increasing the value of any coordinate of a simplex variable which is e.g.\ $m^{10}$ multiplicatively smaller than the largest to reach this threshold, and renormalizing. In computations of Lemma~\ref{lem:implicitx}, we can first store the largest coordinate of $\ma v + \lam c$ in one pass. Then, for every coordinate more than $10\log m$ smaller than the largest coordinate, we will instead treat it as if its value was $10\log m$ smaller than the maximum in all computations, requiring one extra pass over the data.
\end{proof}

We explicitly state a complete low-space implementation of Algorithm~\ref{alg:sherman} here for completeness as Algorithm~\ref{alg:lowspacesherman}. The correctness of the implementation follows from the proof of Lemma~\ref{lem:implsimplex}.

\begin{algorithm}[ht!]
	\caption{$\LSSherman(\ma, b, c, \epsilon)$}
	\begin{algorithmic}[1]\label{alg:lowspacesherman}
		\STATE \textbf{Input:} $\ma \in \R^{m \times n}_{\ge 0}$, $c \in \R^m$, $b \in \R^n$, $0 \le \eps \le \norm{\ma}_\infty$
		\STATE \textbf{Output:} $\{v'_t\}_{0 \le t < T} \subset \R^n$, $\{\lam'_t\}_{0 \le t < T} \subset \R$, $\{y'_t\}_{0 \le t < T} \subset \R^n$ so that for
		\[\by \defeq \frac{1}{T}\sum_{0 \le t < T} y'_t,\; \bx \defeq \frac{1}{T}\sum_{0 \le t < T} \frac{\exp(\ma v'_t + \lam'_t c)}{\norm{\exp(\ma v'_t + \lam'_t c)}_1},\]
		the pair $(\bx, \by)$ is an $\eps$-approximate saddle point to \eqref{eq:boxsimplex}
		\STATE $T \gets O(\tfrac{\norm{\ma}_\infty \log m}{\eps})$, $K \gets O(\log \tfrac{\max(\norm{\ma}_\infty, \norm{c}_\infty, \norm{b}_1)}{\eps})$
		\STATE $t \gets 0$, $\lam_0 \gets 0$, $v_0 \gets \mzero_n$, $y_0 \gets \mzero_n$
		\WHILE{$t < T$}
		\STATE $(v^{(0)}, \lam^{(0)}, y^{(0)}) \gets (v_t, \lam_t, y_t)$
		\STATE $\gamma\y \gets \tfrac 1 3 (b - \ma^\top \tfrac{\exp(\ma v_t + \lam_t c)}{\norm{\exp(\ma v_t + \lam_t c)}_1}) - 2\diag{y_t} \ma^\top \tfrac{\exp(\ma v_t + \lam_t c)}{\norm{\exp(\ma v_t + \lam_t c)}_1}$, computed using Lemma~\ref{lem:implicitx}
		\FOR{$0 \le k < K$}
		\STATE $v^{(k + 1)} \gets -\tfrac{1}{10\norm{\ma}_\infty} (\tfrac 1 3 y_t  - 10\norm{\ma}_\infty v^{(0)} - (y^{(0)})^2 + (y^{(k)})^2)$
		\STATE $\lam^{(k + 1)} \gets -\tfrac{1}{10\norm{\ma}_\infty} (\tfrac 1 3 + 10\norm{\ma}_\infty \lam^{(0)})$
		\STATE $d^{(k + 1)} \gets 2\ma^\top \tfrac{\exp(\ma v^{(k + 1)}+ \lam^{(k + 1)} c)}{\norm{\exp(\ma v^{(k + 1)}+ \lam^{(k + 1)} c)}_1}$, computed using Lemma~\ref{lem:implicitx}
		\STATE $y^{(k + 1)} \gets \textup{med}(-1, 1, -\tfrac{\gamma\y}{d^{(k + 1)}})$ entrywise 
		\ENDFOR
		\STATE $(v'_t, \lam'_t, y'_t) \gets (v^{(K)}, \lam^{(K)}, y^{(K)})$
		\STATE $(v^{(0)}, \lam^{(0)}, y^{(0)}) \gets (v_t, \lam_t, y_t)$
		\STATE $\gamma\y \gets \tfrac 1 3 (b - \ma^\top \tfrac{\exp(\ma v'_t + \lam'_t c)}{\norm{\exp(\ma v'_t + \lam'_t c)}_1}) - 2\diag{y_t} \ma^\top \tfrac{\exp(\ma v_t + \lam_t c)}{\norm{\exp(\ma v_t + \lam_t c)}_1}$, computed using Lemma~\ref{lem:implicitx}
		\FOR{$0 \le k < K$}
		\STATE $v^{(k + 1)} \gets -\tfrac{1}{10\norm{\ma}_\infty} (\tfrac 1 3 y'_t - 10\norm{\ma}_\infty v^{(0)} - (y^{(0)})^2 + (y^{(k)})^2)$
		\STATE $\lam^{(k + 1)} \gets -\tfrac{1}{10\norm{\ma}_\infty} (\tfrac 1 3 + 10\norm{\ma}_\infty \lam^{(0)})$
		\STATE $d^{(k + 1)} \gets 2\ma^\top \tfrac{\exp(\ma v^{(k + 1)}+ \lam^{(k + 1)} c)}{\norm{\exp(\ma v^{(k + 1)}+ \lam^{(k + 1)} c)}_1}$, computed using Lemma~\ref{lem:implicitx}
		\STATE $y^{(k + 1)} \gets \textup{med}(-1, 1, -\tfrac{\gamma\y}{d^{(k + 1)}})$ entrywise 
		\ENDFOR		
		\STATE $(v_{t + 1}, \lam_{t + 1}, y_{t + 1}) \gets (v^{(K)}, \lam^{(K)}, y^{(K)})$
		\STATE $t \gets t + 1$
		\ENDWHILE
	\end{algorithmic}
\end{algorithm}
\section{Deferred proofs for maximum cardinality matching}\label{app:helpermcm}

\subsection{MCM preliminaries}

\restategreedy*
\begin{proof}
	It is folklore that the greedy algorithm (adding edges one at a time to a matching, if and only if neither endpoint is already matched) is a $2$-approximation to the MCM, proving the approximation guarantee. Regarding the implementation, we can keep marking an array of vertex indices and check if a current edge violates a matched vertex, in $O(1)$ time per edge.
\end{proof}

\restatecansearch*
\begin{proof}
	Let $x^*$ be the maximum matching with all overflow placed on the extra edge in $\tG$, i.e.\ $x^*_E$ is the maximum matching on the original graph, and $x^*$ puts a value of $2M - M^*$ on the extra edge. It is clear by definition that $\tfrac{1}{2M} x^*$ is feasible for the objective \eqref{eq:matchminimax}, has $\inprod{\1_E}{x^*} = M^*$, and has no overflow (i.e.\ $\ma^\top x - \1_V$ is entrywise nonpositive). Hence, for any $y \in [0, 1]^{\tV}$,
	\begin{equation}\label{eq:choosexstar}2y^\top\Par{\mb^\top x^* - \1_V} - \inprod{\1_E}{ x^*} \le -M^*. \end{equation}
	The definition of duality gap, as well as $\ma^\top x - \1_V$ being entirely zero in coordinates of $\tV\setminus V$, imply
	\begin{equation}\label{eq:dualitygap}\begin{aligned}
	\max_{y' \in [0, 1]^{V}} 2(y')^\top \Par{\mb^\top \tx - \1} - \inprod{\1_E}{\tx} &\le \min_{x^* \in 2M\Delta^n} 2y^\top\Par{\mb^\top x^* - \1_V} - \inprod{\1_E}{x^*} + \eps M \\
	&\le -M^* + \eps M.
	\end{aligned}\end{equation}
	The second inequality used \eqref{eq:choosexstar}. Now, by $M \le M^*$ and $\max_{y' \in [0, 1]^{V}} (y')^\top \Par{\mb^\top \tx - \1} \ge 0$ (since we can always choose the all-zeroes vector), the second conclusion in \eqref{eq:mcm_guarantees} is immediate. To see the first, call $F$ the overflow of $\tx$, i.e.\
	\[F \defeq \max_{y' \in [0, 1]^{V}} (y')^\top \Par{\mb^\top \tx - \1} = \sum_{j \in V} \max\Par{\Brack{\mb^\top \tx}_j - 1, 0}.\]
	By definition, we must have $\inprod{\1_E}{\tx} \le M^* + F$, since removing $F$ units makes the matching feasible, and hence it cannot have $\ell_1$ norm larger than $M^*$. Now, we conclude from \eqref{eq:dualitygap} the desired
	\[F - M^* \le 2F - \inprod{\1_E}{\tx} \le -M^* + \eps M \implies F \le \eps M \le \eps M^*.\]
\end{proof}

\subsection{Vertex size reduction}

In this section, we prove Proposition~\ref{prop:vertexreduce}. We first state our algorithm, a greedy iterative procedure which repeatedly adds a maximal matching from a maintained vertex set to the remaining vertices.

\begin{algorithm}
	\caption{$\RedSize(G, \eps)$}
	\begin{algorithmic}[1]\label{alg:red-MCM-Vsize}
		\STATE \textbf{Input:} Bipartite graph $G = (V, E)$ with MCM size $M^*$
		\STATE \textbf{Output:} Vertex subset $V' \subseteq V$, $|V'| = O(M^*\log(\eps^{-1}))$, $G[V']$ has MCM size $\ge (1 - \eps) M^*$
		\STATE $V_0 \gets$ vertices incident to some maximal matching in $G$
		\STATE $N \gets \lceil 3\log(\eps^{-1}) \rceil$
		\FOR{$i \in [N]$}
		\STATE $S \gets$ vertices in $V \setminus V_{i - 1}$ incident to some maximal matching between $V_{i - 1}$ and $V \setminus V_{i - 1}$
		\STATE $V_i \gets V_{i - 1} \cup S$
		\ENDFOR
		\RETURN $V' \gets V_N$
	\end{algorithmic}
\end{algorithm}

We require one key technical claim for analyzing Algorithm~\ref{alg:red-MCM-Vsize}. 

\begin{lemma}\label{lem:progressreduce}
Fix an iteration $0 \le i \le N - 1$ of Algorithm~\ref{alg:red-MCM-Vsize}, and suppose the MCM size of $G[V_i]$ is $M_i$. Then, the MCM size of $G[V_{i + 1}]$ increases by at least $\tfrac 1 3 (M^* - M_i)$.
\end{lemma}
\begin{proof}
The symmetric difference between the maximum matchings of $G[V_i]$ and $G$ can be decomposed into even cycles and paths (to see this, every vertex in the symmetric difference has degree at most $2$). Hence, ignoring all even cycles, there is a vertex-disjoint augmenting path decomposition of size $k \defeq M^* - M_i$; call these paths $P_1, P_2, \ldots P_k$. For each path $P_j$, $j \in [k]$, all of its edges except the first and last are contained in $G[V_i]$; call the first and last edges $(x_j, x'_j)$ and $(y_j, y'_j)$, where $x_j, y_j \in G[V_i]$. We consider a few cases, ``processing'' paths sequentially.
\begin{enumerate}
	\item If both $(x_j, x'_j)$ and $(y_j, y'_j)$ are included in the maximal matching between $V_i$ and $V \setminus V_i$, this increases the maximum matching size in $V_{i + 1}$ by one since we can use the augmenting path.
	\item If both $x_j$ and $y_j$ are matched to other vertices $x''_j \neq x'_j$ and $y''_j \neq y'_j$, we can still use the resulting augmenting path in $V_{i + 1}$. The other vertices $x''_j$, $y''_j$ can only remove two other later paths from consideration in the process, where we consider paths sequentially. Note that if $x''_j$ or $y''_j$ is not an endpoint from a later path, this can only help the argument here.
	\item If one of $x_j$ and $y_j$ is matched to its corresponding endpoint in $P_j$ but the other is not, we can use the augmenting path and remove one other path from consideration.
\end{enumerate}
Because we took a maximal matching between $V_i$ and $S \setminus V_i$, at least one of the above cases must happen, so we increase the MCM size by at least $\tfrac 1 3 k$.
\end{proof}
We conclude with our analysis of Algorithm~\ref{alg:red-MCM-Vsize}.

\restatevertexreduce*
\begin{proof}
The correctness follows immediately by applying Lemma~\ref{lem:progressreduce}, and the fact that our original maximal matching (contained in $G[V_0]$) is of size at least $\thalf M^*$. We now prove correctness. For the first pass, the complexity follows directly from Lemma~\ref{lem:greedy}. For all additional iterations of the algorithm, note that we can only add $O(M^*)$ additional vertices to the current set since it is a valid matching in the original graph, and the iteration can be implemented in a single pass analogously to the proof of Lemma~\ref{lem:greedy}. The space overhead is only the maintenance of the current set $V_i$.
\end{proof}

\subsection{Cycle cancelling in low space}\label{sec:cc}

In this section, we provide implementation of a data structure which proves the following claim.

\restatecc*

Specifically, we reduce its proof to demonstrating the existence of a \emph{bipartite cycle-cancelling oracle}.

\begin{definition}[Bipartite cycle-cancelling oracle]
We call $\oracle$ a \emph{bipartite cycle-cancelling oracle} (BCCO) if it is associated with a (weighted) bipartite graph $G = (V, E, w)$, and given any vector $x \in \R_{\ge 0}^E$ supported on at most $2n$ edges, $\oracle$ outputs a vector $\tx \in \R_{\ge 0}^E$ such that $\mb^\top x = \mb^\top \tx$ and $\inprod{w}{\tx} \ge \inprod{w}{x}$, so that $\tx$ is supported on at most $n$ edges. \end{definition}

\begin{lemma}\label{lem:oracleexists}
A BCCO $\oracle$ is implementable in $O(n)$ space and $O(n \log n)$ work.
\end{lemma}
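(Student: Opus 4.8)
\textbf{Proof plan for Lemma~\ref{lem:oracleexists} (existence of a low-space BCCO).}

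The plan is to implement the oracle using a link/cut tree data structure \cite{SleatorT83}, processing the at-most-$2n$ support edges of $x$ one at a time while maintaining a spanning forest of the edges currently assigned nonzero flow. First I would initialize an empty forest on the vertex set $V$, where each link/cut tree node stores its incident vertex and each tree edge stores its current flow value $x_e$ together with its weight $w_e$. The forest will maintain the invariant that it is acyclic and that the flow values on forest edges, together with the flows on edges not yet inserted, reproduce the correct vertex marginals $\mb^\top x$. When a new support edge $e = (a,b)$ with flow $f_e > 0$ arrives, I query (via $\LCA$ / $\Access$ operations) whether $a$ and $b$ are already in the same tree: if not, I simply $\Link$ them with edge $e$ and flow $f_e$; if they are, then $e$ together with the tree path from $a$ to $b$ forms a cycle, and since $G$ is bipartite this cycle has even length and alternates between the two sides.

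The key step is the cycle-cancelling update. On the even cycle, orient it so that $e$ is traversed in its ``forward'' direction; then alternate edges of the cycle form two matchings $C_+$ (including $e$) and $C_-$. Pushing $\theta$ units of flow around the cycle means adding $\theta$ to every edge in $C_+$ and subtracting $\theta$ from every edge in $C_-$; because the cycle alternates sides and the graph is bipartite, this operation leaves $\mb^\top x$ unchanged. I would choose the sign of $\theta$ that does not decrease the weighted value $\inprod{w}{x}$ — i.e.\ if $\sum_{e \in C_+} w_e \ge \sum_{e \in C_-} w_e$ push flow to increase $C_+$ (bounded by the minimum flow on $C_-$), otherwise the reverse — and push the maximal such $\theta$ until some cycle edge hits zero. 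That zero edge is then removed via $\Cut$, restoring acyclicity; the net effect is to add $e$ and delete a (possibly different) edge, so the forest size never exceeds $n-1$ and the final output $\tx$ is supported on a forest with at most $n$ edges. To implement ``find the min-flow edge on a path'' and ``subtract a uniform amount along a path'', I would use the standard augmented link/cut tree supporting $\ParityMin$ / $\ParityAdd$ (path-min and path-add with an alternating $\pm1$ sign pattern to handle the two matchings), which is exactly the toolkit that exists in the literature for flow rounding (cf.\ \cite{KangP15}); each cycle-cancelling step then costs $O(\log n)$ amortized time and a constant number of tree operations.

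For the complexity accounting: the forest has $O(n)$ nodes and edges, and each link/cut tree node/edge stores $O(1)$ words (flow, weight, and the augmented path-aggregate fields), giving $O(n)$ space total. Each of the $O(n)$ support-edge insertions triggers $O(1)$ link/cut operations plus at most one cycle cancellation (which itself is $O(1)$ tree operations via the path-aggregate queries), so the total work is $O(n \log n)$. Correctness of the marginals is the cycle invariant argued above; correctness of the weight guarantee $\inprod{w}{\tx} \ge \inprod{w}{x}$ follows because every cancellation step was chosen to weakly increase $\inprod{w}{\cdot}$; and the sparsity $\norm{\tx}_0 \le n$ follows since the final object is an acyclic subgraph of $G$. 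The main obstacle I expect is the bookkeeping of the alternating sign pattern in the augmented link/cut tree — ensuring that the ``path-add'' and ``path-min'' operations respect the bipartition-induced orientation consistently across $\Access$ and $\ChangeRoot$ (rerooting flips which side a vertex's subtree is aggregated on) — but this is a known and routine, if fiddly, augmentation of the standard data structure, and I would handle it by keeping an explicit parity bit per node relative to the current tree root and XOR-ing it through rotations. Once Lemma~\ref{lem:oracleexists} is in place, Proposition~\ref{prop:cc} follows by feeding the incremental stream of edge-flow tuples $(e, f_e)$ directly into the oracle (accumulating flow on repeated edges), invoking a cancellation whenever the support would exceed $n$ edges, so that the running support stays $O(n)$ throughout; this gives $O(n)$ space and $O(L \log n)$ total time over a stream of length $L$.
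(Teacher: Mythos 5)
Your proposal is correct and follows essentially the same route as the paper: maintain an acyclic forest in a link/cut tree, and for each new edge cancel the induced even cycle in the direction that weakly increases $\inprod{w}{\cdot}$, cutting the edge driven to zero, with $O(\log n)$ amortized cost per operation. The only difference is an implementation detail — you augment a single tree with parity-aware path-min/path-add, whereas the paper sidesteps the parity bookkeeping by keeping three parallel trees ($W$, $T_+$, $T_-$) with alternating-signed weights and sentinel values on the wrong-parity edges — and both yield the stated $O(n)$ space and $O(n\log n)$ work.
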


\begin{proof}[Proof of Proposition~\ref{prop:cc}]
It suffices to divide the stream into chunks of $n$ edges, and repeatedly call $\oracle$. Specifically, we input the first $2n$ edges of the stream into $\oracle$ and produce a set of $n$ edges, then input these $n$ edges with the next $n$ edges of the stream, and so on. Inducting on the properties of $\oracle$, we obtain the value and feasibility guarantees on the final output. Regarding the space and work guarantees, it suffices to use $L \ge n$ and apply Lemma~\ref{lem:oracleexists}.
\end{proof}

The remainder of this section is devoted to proving Lemma~\ref{lem:oracleexists}. Specifically, we demonstrate how to use the link/cut tree data structure of \cite{SleatorT83, Tarjan83}, with a few small modifications, to provide the required oracle. In Section~\ref{sssec:lctree} we state some preliminaries on link/cut trees which are known from prior work (namely a set of supported operations), and we put these components together in Section~\ref{sssec:oracleproof} to give an implementation which proves Lemma~\ref{lem:oracleexists}.

\subsubsection{Link/cut tree description}\label{sssec:lctree}

The description in this section primarily follows the implementation of link/cut trees given in Chapter 5 of \cite{Tarjan83}, with a few terminology changes following the later lecture notes of \cite{Dem12}. It is primarily a summary of prior work, where we state the supported operations that we will use.

\textbf{Data structure representation.} The link/cut tree implementation of \cite{SleatorT83, Tarjan83} we will use is maintained as a forest, where each (undirected, rooted) tree in the forest is decomposed into ``preferred paths'' and each path is stored as a splay tree, along with a parent pointer to the neighbor of the preferred path endpoint closer to the root. Every edge in the forest has an associated value (which can change). We refer to the tree representations of preferred paths as ``auxiliary trees'' and the tree representation of all connected vertices as a ``main tree.'' The link/cut tree supports queries or modifications to main trees, including the following operations.
\begin{enumerate}
	\item $\Link(v, w, C)$: add an edge of value $C$ between $v$ and $w$ in the main tree (if they are in different trees, join them so that $w$ is the parent of $v$).
	\item $\Cut(v)$: remove the edge between $v$ and its parent in the main tree.
	\item $\ChangeRoot(r)$: change the root of the main tree to the vertex $r$.
	\item $\LCA(v, w)$: return the least common ancestor of $v$ and $w$ in the main tree.
	\item $\Min(v)$: return the smallest edge value between $v$ and its root in the main tree.
	\item $\Add(v, C)$: add $C$ to all edge values between $v$ and its root in the main tree.
	\item $\Sum(v)$: return the sum of all edge values between $v$ and its root in the main tree.
\end{enumerate}

For a more complete and formal description, we refer the reader to \cite{Tarjan83, Dem12}.

\textbf{Space complexity of link/cut trees.} Based on the representation described above, it is clear that as long as the amount of additional information we need to store in the auxiliary trees to implement path aggregation operations is a constant per vertex, the overall space complexity is $O(n)$ where $n$ is an upper bound on the number of vertices of all main trees. The only remaining space overhead is storing all auxiliary trees, and the parent pointers of preferred paths in main trees. We include this discussion because it is not explicitly stated in the source material.

\textbf{Work complexity of link/cut trees.} We state the guarantees of the link/cut tree in the following claim, which follows by the analysis of the works \cite{SleatorT83, Tarjan83}.

\begin{proposition}[Main result of \cite{SleatorT83, Tarjan83}]\label{prop:reducetopath}
The amoritized cost of $L$ calls to any of the operations $\Link$, $\Cut$, $\ChangeRoot$, $\LCA$, $\Min$, $\Add$, or $\Sum$ is bounded by $O(L\log n)$.
\end{proposition}

\subsubsection{Proof of Lemma~\ref{lem:oracleexists}}\label{sssec:oracleproof}

We first outline our approach, and then describe how to implement it using link/cut trees.

\textbf{Approach outline.} The main goal is to show how to preserve an acyclic matching $\tx$ so that $\mb^\top \tx = \mb^\top x$ and $\inprod{w}{\tx} \ge \inprod{w}{x}$ after processing every edge, in a total of $O(n\log n)$ work using the link/cut tree. Inductively it is clear that the output will satisfy all the requirements of $\oracle$.

We now describe how to process a single input edge $e = (u, v)$ where $u \in L$ and $v \in R$ are on opposite sides of the bipartition. First, if $u$ and $v$ are both not in the data structure, create a new main tree consisting of just this edge (with say $u$ as the root). If only one is in the data structure then create a new edge with value $x_e$ appropriately using $\Link$. If $u$ and $v$ are in different main trees, we can call $\Link(u, v, x_e)$. In these cases, no cycles are created and no edge values are changed.

The last case is when $u$ and $v$ are in the same main tree. The path between $u$ and $v$ in their main tree and the new edge $(u, v)$ forms a cycle. In order to preserve $\mb^\top x$, it suffices to alternate adding and subtracting some amount $C$ from edges along the cycle. In order to make sure $\inprod w x$ is monotone, it suffices to compute the alternating sum of weights along the cycle to pick a parity, i.e.\ whether we add from every odd edge in the cycle and subtract from every even edge, or vice versa (since all cycles are even length, one of these will increase the weight). Once we pick a parity, we compute the minimum value amongst all edges of that parity in the cycle and alternate adding and subtracting this value; this will result in the tree becoming acyclic and preserve all invariants.

We finally remark that it suffices to divide the support of the input $x$ into connected components, and remove cycles from each connected component. By designating a ``root vertex'' in each connected component and processing the edges in that component in BFS order from the root, it is clear that the tree corresponding to that component will never become disconnected (edges are only removed when a cycle is formed, and removing any edge on the cycle will not disconnect the tree). Thus, it suffices to discuss how to implement cycle cancelling for a single connected component.

\textbf{Implementation.} We discuss the implementation of cycle cancelling for a single connected component. It will use three link/cut trees, called $W$, $T_+$, and $T_-$; the topology of these three trees will always be the same (i.e.\ they are the same tree up to the edge values). Roughly speaking, $W$ will contain edge weights with alternating signs (e.g.\ each edge's weight will be signed $(-1)^{\text{depth of edge}}$), and will be used to determine the direction of cycle cancelling to preserve $\inprod{w}{\tx} \ge \inprod{w}{x}$. Further, $T_+$ and $T_-$ will each correctly contain half of the edge values of the current fractional matching (depending on their sign in $W$), and the other edge values will be set to a large quantity. They are used to compute the minimum (signed) edge value in cycles and to modify the fractional matching.

Consider processing a new edge $e = (u, v)$ with weight $w_e$ and value $x_e$. If this edge is in any of the non-cycle-creating cases described in the outline, i.e.\ its endpoints do not belong in the same tree with a path between them, then we add a copy to each of $W$, $T_+$, and $T_-$. Depending on the depth of the edge in $W$ (which we can check by looking at the sign of its parent edge), we either give it value $w_e$ or $-w_e$. If it is positive in $W$, then we give it value $x_e$ in $T_+$ and value $n^2 \norm{x}_\infty$ in $T_-$; otherwise, we swap these values. We choose the value $n^2 \norm{x}_\infty$ for one copy of the edge, so that it never becomes the minimum value edge returned by $\Min$ throughout the algorithm.

Finally, we handle the case where we need to remove a cycle (when $u$ and $v$ are in the same tree, and the edge is not in the tree). We start by determining which half of edges in the cycle we want to remove value from, and which we should add to. To do this, we query $r = \LCA(u, v)$ and $\ChangeRoot(r)$ on the tree in $W$. Then, we compute the alternating sums of weights in each half of edges in the cycle. This allows us to determine a direction to preserve $\inprod w \tx \ge \inprod w x$. We also call $\LCA(u, v)$ and $\ChangeRoot(r)$ for both copies of the main tree in $T$.

We next determine the amount we wish to alternatingly add and subtract along the $u$-to-$r$ and $v$-to-$r$ paths by calling $\Min$ on the appropriate tree, $T_-$ or $T_+$ (corresponding to the direction obtained from querying $W$). We then call $\Add$ on both copies of the main tree in $T$ for these paths, with the value obtained by the $\Min$ queries or its negation appropriately. This zeros one edge, which we $\Cut$ from all three copies of the tree. Finally, we revert to the original root in all three trees to maintain correctness of signs. Overall, the number of link/cut tree operations per edge in the stream is a constant, so Proposition~\ref{prop:reducetopath} bounds the total work by $O(n\log n)$.

\subsection{Rounding to feasible integral solutions}\label{ssec:taming}

We state our algorithm for proving Lemma~\ref{lem:feasiblecorrect}, and give its semi-streaming analysis in this section.

\begin{algorithm}
	\caption{$\Feasible(x, G)$}
	\begin{algorithmic}[1]\label{alg:feasible}
		\STATE \textbf{Input:} Bipartite graph $G = (V, E)$ with vertex partition $V = L \cup R$ and incidence matrix $\mb$, fractional $x \in \R^E_{\ge 0}$ satisfying $\norm{x}_1 = M$, $\sum_{j \in V} \max\Par{[\mb^\top x]_j - 1, 0} \le \eps M$
		\STATE \textbf{Output:} Fractional matching $\tx$ satisfying $\norm{\tx - x}_1 \le 2\eps M$, $\mb^\top \tx \le \1_n$
		\STATE Let $\mx \in \R^{L \times R}$ be a zero-padded reshaped $x$
		\STATE $\lsum \gets \mx\1_{\frac{n}{2}}$
		\STATE $\mx' \gets \diag{\min\Par{\tfrac{1}{\lsum}, \1_{\frac{n}{2}}}} \mx$ entrywise
		\STATE $\rsum \gets (\mx')^\top \1_{\frac{n}{2}} $
		\STATE $\tmx \gets \mx' \diag{\min\Par{\tfrac{1}{\rsum}, \1_{\frac{n}{2}}}}$
		\RETURN $\tx =$ vectorized $\tmx$
	\end{algorithmic}
\end{algorithm}

To briefly explain Algorithm~\ref{alg:feasible}, in Line 3, for each edge $e = (u, v) \in E$ where $u \in L$, $v \in R$, we index $\mx_{uv} = x_e$, and for every nonexistent edge we set the corresponding entry to 0; the vectorization in Line 8 undoes this process. We first scale the matching downward so that there is no overflow on vertices in $L$, and then scale the resulting matching so there is no overflow on vertices in $R$. We now prove a helper result, which shows that the resulting $\ell_1$ loss $\norm{\tx - x}_1$ is bounded, and that we can implement Algorithm~\ref{alg:feasible} in few semi-streaming passes and low space.

\begin{restatable}[Feasible rounding]{proposition}{restatefeasiblecorrect}\label{lem:feasiblecorrect}
For bipartite graph $G = (V, E)$ with incidence matrix $\mb$, Algorithm~\ref{alg:feasible} takes as input a fractional approximate matching $x \in \R^E_{\ge 0}$ satisfying $\norm{x}_1 = M$ and $\sum_{j \in V} \max([\mb^\top x]_j, 0) \le \eps M$, and produces $\tx$ satisfying $\norm{\tx - x}_1 \le 2\eps M$, $\mb^\top \tx \le \1_V$ with the same support. Moreover, supposing that $\norm{x}_0 = O(n)$, using $O(1)$ passes, $O(n)$ auxiliary memory, and $O(m)$ work, we can explicitly compute and store $\tx$.
\end{restatable}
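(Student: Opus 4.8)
The plan is to prove Proposition~\ref{lem:feasiblecorrect} (i.e.\ correctness and efficiency of Algorithm~\ref{alg:feasible}) in two stages, following closely the structure already present in the commented-out proof of the weighted variant (Lemma~\ref{lem:feasiblecorrect-2}) and the analogous $\ell_1$-loss bookkeeping. First I would set up notation: write $\lscale \defeq \min(\tfrac{1}{\lsum}, \1_{n/2})$ and $\rscale \defeq \min(\tfrac{1}{\rsum}, \1_{n/2})$ so that the output satisfies $\tx_e = [\lscale]_u [\rscale]_v x_e$ for every edge $e = (u, v) \in L \times R$. From this closed form, nonnegativity and $\tx \le x$ entrywise are immediate because each scaling factor lies in $[0,1]$, giving that $\tx$ is supported on a subset of the support of $x$ (in fact the same support, since positive entries are only multiplied by positive quantities). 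Feasibility $\mb^\top \tx \le \1_V$ follows because left-multiplying by $\diag{\lscale}$ makes the row sums (marginals on $L$) of $\mx'$ at most $1$, and then right-multiplying by $\diag{\rscale}$ caps the column sums (marginals on $R$) at $1$ while only decreasing entries, hence preserving the bound on $L$-marginals.

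The second stage is the $\ell_1$-loss bound $\norm{\tx - x}_1 \le 2\eps M$. The key step here is to split the loss as $\norm{x}_1 - \norm{\tx}_1 = (\norm{\mx}_1 - \norm{\mx'}_1) + (\norm{\mx'}_1 - \norm{\tmx}_1)$ (using entrywise monotonicity so the $\ell_1$ difference equals the difference of $\ell_1$ norms). For the first term I would observe that $\norm{\mx}_1 - \norm{\mx'}_1$ is exactly the total overflow of $x$ on vertices of $L$, namely $\sum_{u \in L} \max([\mb^\top x]_u - 1, 0)$, which is at most the total overflow $\sum_{j \in V}\max([\mb^\top x]_j - 1, 0) \le \eps M$ by hypothesis. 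For the second term, I would first note that the vectorized $x'$ of $\mx'$ satisfies $x' \le x$ entrywise, so its overflow is no larger than that of $x$, hence $\sum_{j\in V}\max([\mb^\top x']_j - 1, 0) \le \eps M$ too; then $\norm{\mx'}_1 - \norm{\tmx}_1 = \sum_{v \in R}\max([\mb^\top x']_v - 1, 0) \le \eps M$ by the same reasoning. Summing gives $\norm{\tx - x}_1 \le 2\eps M$.

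Finally, for the semi-streaming implementation under the assumption $\norm{x}_0 = O(n)$: since $x$ has $O(n)$-sparse support, we can store it explicitly in $O(n)$ memory; reading the stream once and accumulating contributions to each $L$-vertex's marginal lets us compute $\lsum$ and hence $\lscale$ in one pass and $O(m)$ work; a second pass (or the same stored support) lets us form $\mx'$ explicitly, compute $\rsum$ and $\rscale$, and then form $\tx$. The whole procedure thus runs in $O(1)$ passes, $O(n)$ auxiliary memory, and $O(m)$ work.

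I do not expect a serious obstacle; the only mild subtlety is the bookkeeping to argue that $x'$ (the intermediate matching) also has overflow bounded by $\eps M$, which is needed to control the second scaling step — this relies on the entrywise inequality $x' \le x$ rather than any recomputation of the overflow budget. Everything else is a direct unwinding of the definitions of $\lscale$, $\rscale$, and the reshaping operations. Concretely:

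\begin{proof}
Write $\lscale \defeq \min(\tfrac{1}{\lsum}, \1_{n/2})$ and $\rscale \defeq \min(\tfrac{1}{\rsum}, \1_{n/2})$ entrywise, where the minima are interpreted coordinatewise with the convention that a coordinate with $[\lsum]_u = 0$ (resp.\ $[\rsum]_v = 0$) receives value $1$ (this does not affect the output since such rows/columns of $\mx$, $\mx'$ are all-zero). Unwinding the definitions in Lines 5 and 7, the output satisfies $\tx_e = [\lscale]_u [\rscale]_v x_e$ for every $e = (u, v) \in L \times R$, and every entry with $x_e = 0$ maps to $\tx_e = 0$, so $\tx$ has the same support as $x$.

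\textbf{Feasibility and the entrywise bound.} Since every coordinate of $\lscale$ and $\rscale$ lies in $[0, 1]$ and $x \ge 0$, we immediately get $0 \le \tmx \le \mx' \le \mx$ entrywise, giving $0 \le \tx \le x$. For feasibility: by construction of $\mx'$, the $u$-th row sum of $\mx'$ is $\min(1, [\lsum]_u) \cdot$ (a quantity at most $1$) $\le 1$ for each $u \in L$, i.e.\ $[\mb^\top x']_u \le 1$ where $x'$ is the vectorization of $\mx'$. Right-multiplying by $\diag{\rscale}$ to form $\tmx$ only decreases entries, so the $L$-marginals remain at most $1$; and by the symmetric argument the $v$-th column sum of $\tmx$ is at most $1$ for each $v \in R$. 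Hence $\mb^\top \tx \le \1_V$.

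\textbf{Bounding $\norm{\tx - x}_1$.} Since $0 \le \tx \le x$ entrywise, $\norm{\tx - x}_1 = \norm{x}_1 - \norm{\tx}_1$, and we split
\[\norm{x}_1 - \norm{\tx}_1 = \Par{\norm{x}_1 - \norm{x'}_1} + \Par{\norm{x'}_1 - \norm{\tx}_1}.\]
For the first term, note that for each $u \in L$ with $[\mb^\top x]_u > 1$ we have $[\lscale]_u = 1/[\mb^\top x]_u$, so the row sum of $\mx'$ at $u$ equals $1$, contributing a loss of $[\mb^\top x]_u - 1$; and for $u$ with $[\mb^\top x]_u \le 1$ there is no loss. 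Thus
\[\norm{x}_1 - \norm{x'}_1 = \sum_{u \in L} \max\Par{[\mb^\top x]_u - 1, 0} \le \sum_{j \in V} \max\Par{[\mb^\top x]_j - 1, 0} \le \eps M,\]
using the hypothesis. Since $x' \le x$ entrywise, $[\mb^\top x']_j \le [\mb^\top x]_j$ for all $j$, so
\[\sum_{j \in V}\max\Par{[\mb^\top x']_j - 1, 0} \le \sum_{j \in V}\max\Par{[\mb^\top x]_j - 1, 0} \le \eps M.\]
Applying the identical argument to the second scaling step (now over $v \in R$, using that the $v$-th column of $\tmx$ equals the $v$-th column of $\mx'$ scaled by $\min(1, [\rsum]_v)$):
\[\norm{x'}_1 - \norm{\tx}_1 = \sum_{v \in R}\max\Par{[\mb^\top x']_v - 1, 0} \le \eps M.\]
Adding the two bounds gives $\norm{\tx - x}_1 \le 2\eps M$.

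\textbf{Semi-streaming implementation.} Suppose $\norm{x}_0 = O(n)$. We store the support of $x$ explicitly in $O(n)$ memory. In one pass over the stream (to associate each stored edge's endpoints with the bipartition, if needed; alternatively this is known a priori), we accumulate the vector $\lsum = \mx \1_{n/2}$, which has $O(n)$ entries, and compute $\lscale$. We then form $x'$ on its $O(n)$-sized support explicitly, compute $\rsum$ and hence $\rscale$, and apply it to produce $\tx$. This uses $O(1)$ passes, $O(n)$ auxiliary memory, and $O(m)$ work.
\end{proof}
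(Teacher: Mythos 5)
Your proof is correct and follows essentially the same route as the paper's own argument: the identical decomposition of the $\ell_1$ loss into the $L$-side overflow of $x$ and the $R$-side overflow of $x'$ (using $x' \le x$ to re-bound the latter by $\eps M$), the same monotonicity argument for feasibility, and the same two-pass streaming implementation. No gaps; your handling of the zero-marginal convention and the observation that the statement's overflow hypothesis should be read as $\sum_{j\in V}\max([\mb^\top x]_j - 1, 0) \le \eps M$ match the paper's intent.
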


\begin{proof}
	We will let the scaling vectors be defined by $\lscale \defeq \min\Par{\tfrac{1}{\lsum}, \1_{\frac{n}{2}}}$, $\rscale \defeq \min\Par{\tfrac{1}{\rsum}, \1_{\frac{n}{2}}}$; the representation $\tx_e = [\lscale]_u [\rscale]_v x_e$ for all $e = (u, v) \in L \times R$ follows immediately. Next, note that clearly $\tmx \le \mx' \le \mx$ entrywise, by positivity and that the entrywise rescalings are bounded above by $1$. The conclusion $\mb^\top \tx \le \1_n$ then follows, since left-scaling by $\lscale$ forces the marginals on $L$ to be bounded, and this boundedness is preserved upon right-scaling by $\rscale$ (which also bounds marginals on $R$). It is straightforward to see the output $\tx$ has the same support as $x$ given the operations. It remains to bound $\norm{\tx - x}_1$, and give a semi-streaming implementation.
	
	\textbf{Bounding the rounding loss.} We use shorthand $\norm{\mx}_1$ to mean the entrywise sum, and similarly define $\norm{\mx'}_1$, $\|\tmx\|_1$. We first bound $\norm{\mx}_1 - \norm{\mx'}_1$, which we note is the total overflow on $L$:
	\begin{align*}
	\norm{\mx}_1 - \norm{\mx'}_1 &= \sum_{u \in L} \max\Par{[\mb^\top x]_u - 1, 0} \\
	&\le \sum_{j \in V} \max\Par{[\mb^\top x]_j - 1, 0} \le \eps M.
	\end{align*}
	The last line followed by assumption. Similarly, to bound $\norm{\mx'}_1 - \|\tmx\|_1$, first define $x'$ to be the vectorized form of $\mx'$, and note that since $x' \le x$ entrywise, we have the overflow bound $ \sum_{j \in V} \max\Par{[\mb^\top x']_j - 1, 0} \le \eps M$. Thus, a similar calculation yields that
	\begin{align*}
	\norm{\mx'}_1 - \norm{\tmx}_1 &= \sum_{v \in R} \max\Par{[\mb^\top x']_v - 1, 0} \\
	&\le \sum_{j \in V} \max\Par{[\mb^\top x']_j - 1, 0} \le \eps M.
	\end{align*}
	Combining these calculations and using entrywise monotonicity yields $\norm{\tx - x}_1 \le 2\eps M$.
	
	\textbf{Semi-streaming implementation.} It suffices to show how to compute $\lscale$ and $\rscale$ in one pass and $O(m)$ work each; clearly they can be stored in $O(n)$ space, and applied to $x$ to explicitly form $\tx$. Since the support size of $x$ is $O(n)$, we can store it explicitly and add appropriately to the marginals on $L$ to compute $\lscale$ as the edges are encountered in the stream. Similarly, once $\lscale$ is explicitly stored, we can compute and store $x'$ and use it to compute $\rscale$ in one pass.
\end{proof}

Now we can formally prove the postprocessing procedure to round the approximate fractional solution to a feasible integral matching, with guarantees as stated here.

\restateintegralcorrect*

\begin{proof}[Proof of Proposition~\ref{lem:integralcorrect}]

By guarantees of Proposition~\ref{prop:cc} and~\ref{lem:feasiblecorrect}, we can obtain a feasible solution $\tx'$ satisfying $\norm{\tx' - x}_1 \le 2\eps M$, $\mb^\top \tx' \le \1_V$, and that $\tx'$ has support corresponding to a forest on $G = (V,E)$ using $O(1)$ passes, $O(n)$ auxiliary memory, and $O(m)$ work. 

Now considering the forest supported on vertex set $V$, we know there is at least a good approximate fractional MCM solution $\tx'$ for the original problem. Now we can directly compute an MCM on each component as a tree by running a dynamic programming procedure for subtrees. This runs in a total time of $O(n)$. The MCM $\tx$ uses $O(n)$ memory to store, and by construction satisfies the property that $1^\top \tx' \le 1^\top \tx$, and that $\mb^\top \tx \le \1_V$, which proves the statement.
\end{proof}

\section{Sampling for rounding linear programming solutions}\label{sec:samplelp}

In this section, we give a general procedure for rounding a fractional solution which is returned by our algorithm for box-simplex games in Section~\ref{sec:value}. This procedure applies to general box-simplex problems, beyond those with combinatorial structure, so we include this section for completeness. In particular, directly applying the sparsity bounds of this section to our matching problems directly imply $\tO(n \cdot \text{poly}(\eps^{-1}))$ space bounds for the various matching-related applications in the paper, up to width parameters, which roughly match our strongest results up to the $\eps^{-1}$ dependence. As an example, we give an application of this technique to MCM at the end of this section.

 Given streaming access to an approximate fractional solution $x$ on the simplex (via an implicit representation), a natural way of constructing a low-space approximate solution is to randomly sample each entry of $x_i$ and reweight to preserve expected objective value; this is the rounding strategy we analyze. We use the following Algorithm~\ref{alg:randsample}, parameterized by some prescribed $\{M_i\}_{i \in [m]}$.

\begin{algorithm}
	\caption{$\RandomSample(x, K, \{M_i\}_{i \in [m]})$}
	\begin{algorithmic}[1]\label{alg:randsample}
		\STATE \textbf{Input:} Coordinates of $x \in \R_{\ge 0}^m$ in streaming fashion, sample count $K$, parameters $\{M_i\}_{i \in [m]}$
		\FOR{$i\in[m]$}
		\STATE Draw $K$ independent random variables $\{X_i^k\}_{k \in [K]}$, where 
		\[X_i^k = \begin{cases}M_i & \text{with probability } \tfrac{x_i}{M_i}\\ 0 &\text{otherwise}\end{cases}.\]
		\STATE $\hat{x}_i \gets \tfrac{1}{K}\sum_{k\in[K]}X_i^k$.
		\ENDFOR
		\RETURN{$\hat{x}$}.
	\end{algorithmic}
\end{algorithm}

\subsection{Concentration bounds}

For proofs in this section, as well as later, we crucially rely on well-known concentration properties of bounded random variables. We state here a few facts used repeatedly throughout.

\begin{proposition}[Chernoff bound]\label{prop:cher-down}
	For $K$ independent scaled Bernoulli random variables $\{X_k\}_{k\in[K]}$ satisfying $X_k = N_k$ with probability $p_k$, $0<N_k \le 1$ for all $k \in [K]$, and all $0<\delta<1$,
	\[
	\Pr\left(\left|\sum_{k\in[K]}X_k-\sum_{k\in[K]}\E X_k\right| \ge \delta \sum_{k\in[K]}\E X_k\right)\le 2\exp\left(-\frac{\delta^2\sum_{k\in[K]}\E X_k}{3}\right)
	\]
\end{proposition}

We give a simple generalization of Proposition~\ref{prop:cher-down} to the case where the scaled Bernoulli variables are allowed to take on negative values.

\begin{corollary}[Generalized Chernoff bound]\label{cor:cher-down}
	For $K$ independent scaled Bernoulli random variables $\{X_k\}_{k\in[K]}$	satisfying $X_k= N_k$ with probability $p_k$, $0< |N_k| \le 1$ for all $k \in [K]$, and all $0<\delta<1$,
	\[
	\Pr\left(\left|\sum_{k\in[K]}X_k-\sum_{k\in[K]}\E X_k\right| \ge \delta \sum_{k\in[K]}\E |X_k|\right)\le 4\exp\left(-\frac{\delta^2\sum_{k\in[K]}\E |X_k|}{3}\right)
	\]
\end{corollary}
\begin{proof}
	Divide the set $[K] = \mathcal{K}^+\cup\mathcal{K}^-$, where we define $\mathcal{K}^+\defeq\{k\in[K]| N_k\ge0\}$ and $\mathcal{K}^+\defeq\{k\in[K]| N_k<0\}$. Applying Proposition~\ref{prop:cher-down} to $\sum_{k\in\mathcal{K}^+}X_k$ 	 and $\sum_{k\in\mathcal{K}^-}-X_k$ yields the result.
\end{proof}

Furthermore, the following one-sided Chernoff bound holds when $0<N_k\le 1$ and $\delta\ge 1$.

\begin{proposition}[One-sided Chernoff bound]\label{prop:cher-down-oneside}
	For $K$ independent scaled Bernoulli random variables $\{X_k\}_{k\in[K]}$ satisfying $X_k = N_k$ with probability $p_k$, $0<N_k \le 1$ for all $k \in [K]$, and all $\delta>0$,
	\[
	\Pr\left(\sum_{k\in[K]}X_k-\sum_{k\in[K]}\E X_k \ge \delta \sum_{k\in[K]}\E X_k\right)\le\exp\left(-\frac{\delta^2\sum_{k\in[K]}\E X_k}{2+\delta}\right)
	\]
\end{proposition}

\begin{proposition}[Bernstein's inequality]\label{prop:bernstein}
	For $K$ independent random variables $\{X_k\}_{k \in [K]}$	satisfying $|X_k|\le C$ with probability one, let $V = \sum_{k\in[K]} \Var[X_k]$. Then for all $t \ge 0$,
	\[
	\Pr\left(\left|\sum_{k\in[K]}X_k-\sum_{k\in[K]}\E X_k\right| \ge t\right) \le 2\exp\left(-\frac{t^2}{2V+2Ct/3}\right).
	\]
\end{proposition}

\subsection{Random sampling guarantees}

We first give a general guarantee on the approximation error incurred by random sampling via Algorithm~\ref{alg:randsample}. While the guarantees are a bit cumbersome to state, they become significantly simpler in applications. For instance, in all our applications, all binary random variables are scaled with $M_i$ such that  $\max_{i \in [n]} M_i \le 1$ (see Lemma~\ref{lem:gen-rounding} for definition), and the bounds on $\ma^\top \hx - \ma^\top x$ become standard multiplicative error approximations when the matrix $\ma$ is all-positive.

\begin{lemma}\label{lem:gen-rounding}
	Consider an instance of problem \eqref{eq:boxsimplex} parameterized by $\ma$, $b$, $c$. For some $x \in \Delta^m$ whose coordinates can be computed in streaming fashion, define for all $j \in [n]$,
	\[B_j = \Brack{\left|\ma\right|^\top x}_j.\] 
	Let $\hat{x}$ be the output of Algorithm~\ref{alg:randsample} on input $x$ with 
	\[M_{i} = \min_{j\in [n]}\frac{B_j}{|\ma_{ij}|}\text{ and } K= \frac{12\log(mn)}{\eps^2}.\]
	With probability at least $1 - (mn)^{-1}$, $\hat{x}$ satisfies the following properties for $B \defeq \max_{i\in[m]} M_i$:
	\begin{align*}
	\left|\Brack{\ma^\top \hat{x}-\ma^\top x}_j\right|& \le \eps B_j \text{ for all } j \in [n], \quad |\norm{\hat{x}}_1 - \norm{x}_1| \le \eps \max(1,B), \\
	\left|c^\top\hat{x}-c^\top x\right| & \le \eps\norm{c}_\infty \max(1,B),\\
	\|\hat{x}\|_0 & =  O\left(\left(1 + \sum_{i \in [m]} x_i \max_{j\in[n]}\frac{|\ma_{ij}|}{B_j}\right)\cdot\frac{\log(mn)}{\eps^2}\right).
	\end{align*}
\end{lemma}

\begin{proof}
	First note that it is immediate by definition of the $B_j$ to see that all $\tfrac{x_i}{M_i}=\max_{j\in[n]}\tfrac{|\ma_{ij}|x_i}{B_j}\le 1$ are valid sampling probabilities for all $i\in[m]$. Also, recall that entrywise
	\[\hat{x}_i = \frac{1}{K}\sum_{k \in [K]} X_i^k.\]	
	We show the first property; fix some $j \in [n]$ and consider $[\ma^\top \hx]_j - [\ma^\top x]_j $. Applying Corollary~\ref{cor:cher-down} to the random variables $\{\tfrac{1}{B_j}\ma_{ij}X^k_{i}\}_{i \in [m], k \in [K]}$ with $\delta = \eps$, we see by definition of $K$ that
	\begin{align*}
	\sum_{i \in \mathcal[m], k \in [K]} \frac{1}{B_j}\ma_{ij}X^k_{i} = \frac{K}{B_j}  \Brack{\ma^\top \hx}_j, & \\
	\sum_{i \in \mathcal[m], k \in [K]} \E\Brack{\frac{1}{B_j} \ma_{ij}X_{i}^k} = \frac{K}{B_j} [\ma^\top x]_j, & \;\\ \sum_{i \in \mathcal[m], k \in [K]} \E\left|\frac{1}{B_j} \ma_{ij}X_{i}^k\right| = \frac{K}{B_j} [|\ma|^\top x]_j,& \\
	\implies \Pr\left(\frac{K}{B_j}\left|[\ma^\top \hx]_j-[\ma^\top x]_j\right| \ge \delta \frac{K}{B_j}[|\ma|^\top x]_j\right)
	& \le 4\exp\left(-\frac{\delta^2 K[|\ma|^\top x]_j}{3B_j}\right)\le \frac{4}{(nm)^4},
	\end{align*}
	where for the last inequality we use definitions of $\delta$, $K$, and that $B_j= [|\ma|^\top x]_j$, for all $j\in[n]$.

	This conclusion for a coordinate $j \in [n]$ is equivalent to $\left|[\ma^\top \hx]_j - [\ma^\top x]_j \right| \le \eps B_j$. Union bounding over all $j \in [n]$, we thus have with probability at least $1-\tfrac{4}{n^3m^4}$ 
	\[\left|\left[\ma^\top \hx - \ma^\top x\right]_j\right|\le \eps B_j,\quad\forall j\in[n].\]
	Now we show the second and third properties. 
	Given $B\defeq\max_{i\in[m]}M_i$, we first apply Proposition~\ref{prop:bernstein} on the sum $\sum_{i\in[m]}K c_{i}\hat{x}_{i} = \sum_{i \in [m], k \in [K]} c_i X_i^{k}$, using the bound
	\begin{align*}\sum_{i \in [m], k \in [K]} \Var\Brack{c_{i} X_{i}^k} 
	&\le K\norm{c}_{\infty}^2 \sum_{i \in [m]} \Var\Brack{X_i^k} \\
	&\le K\norm{c}_{\infty}^2\sum_{i \in [m]} x_i M_i \le K\norm{c}_\infty^2 B.\end{align*}
	Thus, we can choose parameters $C = \norm{c}_{\infty}B$, $V = K\norm{c}_{\infty}^2B$ to obtain the following bounds for $K\ge \tfrac{12\log(mn)}{\eps^2}$ depending on whether $B>1$ or $B\le1$. For $B>1$, we have
	\begin{align*}
	\Pr\left(K\left|\sum_{i \in [m]} c_{i}\hx_{i}- \sum_{i \in [m]}c_{i} x_{i} \right|\ge \eps B\norm{c}_{\infty} K\right)&\le2\exp\left(-\frac{\eps^2B^2\norm{c}_{\infty}^2 K^2}{2K\norm{c}_{\infty}^2B+ 2\norm{c}_{\infty}^2B^2K\eps/3}\right) \\&\le \frac{1}{4(mn)^2}.
	\end{align*}
	
	For $B\le1$, we have
	\begin{align*}
	\Pr\left(K\left|\sum_{i\in[m]} c_{i}\hx_{i}- \sum_{i\in [m]}c_{i} x_{i} \right|\ge \eps\norm{c}_{\infty} K\right)&\le2\exp\left(-\frac{\eps^2\norm{c}_{\infty}^2 K^2}{2K\norm{c}_{\infty}^2B+ 2\norm{c}_{\infty}^2BK\eps/3}\right) \\&\le \frac{1}{4(mn)^2}.
	\end{align*}
	
	Altogether this implies the third conclusion, and the second conclusion follows as a special case when specifically picking $c=\1_n$. Each holds with probability $\ge 1-\tfrac{1}{4}(mn)^{-2}$.
	
	Finally, for all $i\in [m]$, let $Y_{i}=1$ if $\hat{x}_{i}\neq 0$ and $Y_{i} = 0$ otherwise. It is straightforward to see that $Y_{i}=1$ with probability 
	\begin{align*}
	1-\Par{1-\max_{j\in[n]}\frac{|\ma_{ij}|x_{i}}{B_j}}^{K} & \le K\max_{j\in[n]}\frac{|\ma_{ij}|x_{i}}{B_j}\\
	\implies \sum_{i \in [m]} \E[Y_{i}] \le K \sum_{i \in [m]} x_i \max_{j\in[n]}\frac{|\ma_{ij}|}{B_j} & \le K \max\left( \sum_{i \in [m]} x_i \max_{j\in[n]}\frac{|\ma_{ij}|}{B_j},1\right).
	\end{align*}
	
	Thus, by applying the one-sided Chernoff bound in Proposition~\ref{prop:cher-down-oneside} (where we consider the cases where $\sum_{i \in [m]} x_i \max_{j\in[n]}\frac{|\ma_{ij}|}{B_j}\ge1$ and $\sum_{i \in [m]} x_i \max_{j\in[n]}\frac{|\ma_{ij}|}{B_j}\le1$ separately), 
	\[
	\Pr\left( \sum_{i \in [m]} Y_{i} \ge 2K+2K\Par{\sum_{i \in [m]} x_i \max_{j\in[n]}\frac{|\ma_{ij}|}{B_j}}\right) \le \frac{1}{4(mn)^2}.
	\]
	Finally, applying a union bound, all desired events hold with probability $ 1-\tfrac{1}{mn}$.
\end{proof}

Note that in the semi-streaming model, one can choose $B_j \defeq \Brack{|\ma|^\top x}_j$ and compute all such values in one pass when $x$ is of the form in Lemma~\ref{lem:implicitx}; however, occasionally we will choose larger values of $B_j$ to obtain improved sparsity guarantees. Thus, we provide the following one-sided guarantee, which holds when $\ma_{ij}\ge 0$, $\forall i,j$ and $B_j\ge [\ma^\top x]_j$.

\begin{corollary}\label{cor:gen-rounding}
	Consider an instance of problem \eqref{eq:boxsimplex} parameterized by $\ma$ with $\ma_{ij}\ge0$ $\forall$ i,j, and $b$, $c$. For some $x \in \Delta^m$ whose coordinates can be computed in streaming fashion, suppose we have bounds for all $j \in [n]$,
	\[B_j \ge \Brack{\ma^\top x}_j.\] 
	Let $\hat{x}$ be the output of Algorithm~\ref{alg:randsample} on input $x$ with 
	\[M_{i} = \min_{j\in [n]}\frac{B_j}{|\ma_{ij}|}, \text{ and } K= \frac{12\log(mn)}{\eps^2}.\]
	With probability at least $1 - (mn)^{-1}$, $\hat{x}$ satisfies the following properties for $B \defeq \max_{i\in[m]} M_i$:
	\begin{align*}
	\Brack{\ma^\top \hat{x}-\ma^\top x}_j\le \eps B_j \text{ for all } j \in [n],& \quad |\norm{\hat{x}}_1 - \norm{x}_1| \le \eps \max(1,B), \\
	\left|c^\top\hat{x}-c^\top x\right| \le \eps\norm{c}_\infty \max(1,B),
	\quad \|\hat{x}\|_0 & =  O\left(\left(1 + \sum_{i \in [m]} x_i \max_{j\in[n]}\frac{|\ma_{ij}|}{B_j}\right)\cdot\frac{\log(mn)}{\eps^2}\right).
	\end{align*}
\end{corollary}

Note that the first and fourth properties of Corollary~\ref{cor:gen-rounding} are one-sided in that we only wish to upper bound a property of $\hx$. The proof is identical to Lemma~\ref{lem:gen-rounding}, except that we use Proposition~\ref{prop:cher-down-oneside} with $\delta = \eps\cdot\tfrac{B_j}{[|\ma|^\top x]_j}>0$ instead of Proposition~\ref{prop:cher-down} in cases where $\delta \ge 1$, which suffices for the one-sided bound of the first property. Similarly, when $\delta\ge1$ as in the proof of the fourth property, the one-sided Chernoff bound only helps concentration. Next, we give an end-to-end guarantee on turning Algorithm~\ref{alg:sherman} into a solver for the fractional problem, via applying Algorithm~\ref{alg:randsample} on the output.

\begin{lemma}\label{lem:gen-rounding-back}
	Given $(x,y)$, an $\tfrac \eps 2$-approximate saddle point of~\eqref{eq:boxsimplex}, let \[B_j =[|\ma|^\top x]_j,\quad B = \max_{i\in[m]}\min_{j\in [n]}\frac{B_j}{|\ma_{ij}|}\] and $\hx = \RandomSample(x, K, \{M_i\}_{i \in [m]})$ with
	\[M_i = \min_{j \in [n]} \frac{B_j }{|\ma_{ij}|},\text{ and } K =O\Par{ \frac{\log(mn)\Par{\Par{\norm{c}_\infty^2 + \norm{\ma}_\infty^2}(1+B)^2+\Par{\sum_{j\in[n]} B_j}^2}}{\eps^2}}.\]
	With probability $1-(mn)^{-1}$, $(\tfrac{\hx}{\|\hx\|_1},y)$ is an $\eps$-approximate saddle point to~\eqref{eq:boxsimplex}. Moreover, the total space complexity of Algorithm~\ref{alg:sherman} and Algorithm~\ref{alg:randsample} to compute and store the output $(\tfrac{\hx}{\|\hx\|_1}, y)$ is
	\[O\Par{\log n\Par{\sum_{i \in [m]} x_i \max_{j \in [n]} \frac{|\ma_{ij}|}{B_j}} \cdot \frac{\Par{\Par{\norm{c}_\infty^2 + \norm{\ma}_\infty^2}(1+B)^2+\Par{\sum_{j\in[n]} B_j}^2}}{\eps^2} + \frac{n\norm{\ma}_\infty}{\eps} }.\]
\end{lemma}

\begin{proof}
	Our choice of $K$ is with respect to an accuracy parameter on the order of  
	\[\frac{\eps}{(\norm{c}_\infty+\norm{\ma}_\infty) (1+B) + \sum_{j\in[n]} B_j}.\] Recall that an $\eps$-approximate saddle point to a convex-concave function $f$ satisfies
	\[
	\max_{\by \in \yset} f(x, \by) - \min_{\bx \in \xset} f(\bx, y) \le \eps.
	\]	
	For our output $(\tfrac{\hx}{\|\hx\|_1}, y)$, since we keep the same box variable $y$, it suffices to show that the side of the duality gap due to $x$ and $\tfrac{\hx}{\|\hx\|_1}$ does not change significantly. Namely, we wish to show
	\[ \max_{\by \in \yset} f\Par{\frac{\hx}{\|\hx\|_1}, \by} - \max_{\by \in \yset} f(x, \by) = \Par{c^\top \frac{\hx}{\|\hx\|_1} + \norm{\ma^\top \frac{\hx}{\|\hx\|_1} - b}_1} - \Par{c^\top x + \norm{\ma^\top x - b}_1} \le \frac{\eps}{2}.\]
	By the triangle inequality, it equivalently suffices to show that
	\begin{align*}c^\top \hx - c^\top x \le \frac{\eps}{8},\; c^\top \hx \Par{\frac{1}{\norm{\hx}_1 } - 1} \le \frac{\eps}{8},\\
	\norm{\ma^\top\hx - \ma^\top x}_1 \le \frac{\eps}{8},\; \norm{\ma^\top \hx}_1 \left|\frac{1}{\norm{\hx}_1 } - 1 \right| \le \frac{\eps}{8}.\end{align*}
	The first and third conclusions hold by applying Lemma~\ref{lem:gen-rounding} for the choice of $K$. The second and fourth hold by $|c^\top \hx|\le \norm{c}_\infty\norm{\hx}_1$ and $\norm{\ma^\top\hx}_1\le \norm{\ma}_\infty\norm{\hx}_1$ and then applying Lemma~\ref{lem:gen-rounding} for the choice of $K$. Finally, the desired sparsity follows by combining the space bound of the output (via Lemma~\ref{lem:gen-rounding}) and the space complexity of implicitly representing the average iterate (via Corollary~\ref{corr:barxrep}).
\end{proof}

\subsection{Application: rounding MCM solutions}

We give a simple application of our random sampling framework to computing an explicit low-space approximate MCM. While the space complexity does not match our strongest results based on a low-space cycle cancelling implementation, we hope it is a useful example of how to more generally sparsify fractional solutions of box-simplex games without explicit combinatorial structure.

Following the reductions of Section~\ref{sec:flowround}, we assume in this section that we have a simplex variable $x \in \Delta^m$ and a matching size $\bM \in [1, n]$ such that
\[\mb^\top (\bM x) \le \1_v,\; \bM\ge (1 - \eps)M^*,\]
where $M^*$ is the maximum matching size. We now show how to apply our random sampling procedure, Algorithm~\ref{alg:randsample}, to sparsify the support of the matching without significant loss.

\begin{corollary}\label{lem:MCM-rounding}
	Suppose for an MCM problem, $x \in \Delta^{E}$ satisfies $\mb^\top x \le \tfrac{1}{\bM}\1_V$, and $\eps \in (0, 1)$. Let $\hat{x}$ be the output of Algorithm~\ref{alg:randsample} on input $x$ with $m_{e} = \tfrac 1 \bM$ for all $e \in E$, and $K\defeq \tfrac{12\log n}{\eps^2}$.
	Then with probability at least $1 - n^{-2}$, the output $\hat{x}$ satisfies the following properties:
	\[
	\mb^\top \hat{x}\le 
	\frac{1+\eps}{\bM}\1_V, \quad |\norm{\hat{x}}_1 - 1| \le \eps,\quad \|\hat{x}\|_0 =  O\left(\frac{\bM\log n}{\eps^2}\right).
	\]
\end{corollary}

\begin{proof}
	It is straightforward to see that by the assumptions, we can take $\ma = \bM \mb$ and $B_v = 1$ for all $v \in V$, and choose the accuracy level to be $\eps$ in Lemma~\ref{lem:gen-rounding}. Thus, by the conclusions of Lemma~\ref{lem:gen-rounding}, noting that $M_e = \tfrac{1}{\bM} \le 1$ holds for all $e\in E$, we conclude that with probability at least $1 - n^{-2}$, all desired guarantees hold:
	\begin{align*}
	& \bM\mb^\top \hat{x}\le \bM\mb^\top x + \eps\1_V\le(1+\eps)\1_V,\\
	& |\norm{\hat{x}}_1 - 1| \le \eps,\\
	& \|\hat{x}\|_0 =  O\left(\left(\sum_{e \in E} x_e \max_{v\in V} \frac{|\ma_{ev}|}{B_v}\right)\cdot\frac{\log(mn)}{\eps^2}\right) = O\left(\frac{\bM\log n}{\eps^2}\right).
	\end{align*}
\end{proof} %
\section{Approximate MCM via box-constrained Newton's method}
\label{sec:box}

\newcommand{\arunsopt}{M^*}
\newcommand{\xopt}{x^{\star}}
\newcommand{\vopt}{v^{\star}}
\newcommand{\xhat}{\hat{x}}
\newcommand{\veps}{\varepsilon}
\newcommand{\xtilde}{\widetilde{x}}
\newcommand{\vtilde}{\widetilde{v}}
\newcommand{\mlap}{\mathcal{L}}
The goal of this section is to prove Theorem~\ref{thm:boxnewton_alg}. In particular, we give an alternate second-order method to compute $(1-\eps)$-approximate maximum matchings in unweighted graphs using techniques developed for solving matrix scaling and balancing problems.

\begin{restatable}{theorem}{boxconstrained}\label{thm:boxnewton_alg}
For a MCM problem on bipartite $G = (V,E)$ with $|V| =n$, $|E| = m$ and optimal value $M^*$, Algorithm~\ref{alg:boxconstrained} with parameter\footnote{This lower bound on $\eps$ is without loss of generality as otherwise we can use the algorithm in Section~\ref{ssec:exactMCM} which computes an exact MCM with a larger value of $\eps$.} $\eps \in [ \Theta\left(\frac{\log(mn)}{n}\right), \frac{1}{2}]$ obtains a matching of size at least $(1-\eps)M^*$ using $\tO(n)$ space, $\tO(\eps^{-1})$ passes, and $\tO(m)$ work per pass. 
\end{restatable}

We first recall the dual (vertex cover) formulation of the standard bipartite matching linear program
\begin{equation}
\label{eqn:exactLP}
\min_{v \geq 0, \\ \mb v \geq \1} \1^\top v,
\end{equation}
where $\mb$ is the unoriented incidence matrix. Our method (which is based on the box-constrained Newton's method of \cite{CMTV17}) will use a relaxation of this linear program which makes use of oriented incidence matrices and Laplacians, which we now define.

\begin{definition}
Let $G = (V,E,w)$ be a weighted undirected bipartite graph with bipartition $L, R$ and nonnegative edge weights $w$. We define the \emph{oriented} incidence matrix $\tmb  \in \R^{E \times V}$ as the matrix where for any edge $e = (i,j), i \in L, j \in R$ the row corresponding to $e$ in $\tmb$ has $\tmb_{ei} = 1$, $\tmb_{ej} = -1$, and all other entries set to $0$. We additionally define the graph Laplacian $\mlap_G =  \tmb^\top \diag{w} \tmb$. 
\end{definition}

When the graph is obvious we drop the subscript from $\mlap_G$. The orientation of the edges in $\tmb$ are typically chosen to be arbitrary in the literature, but we specify edge orientation as our algorithm will distinguish the vertices in $L$ and $R$, and denote a vector $x$ on these vertices as $x_L$ and $x_R$ (or sometimes $[x]_L$, $[x]_R$ for clarity). We will first describe our regularization scheme for this LP and prove that approximate minimizers for the regularized objective yield approximate fractional vertex covers. We then prove some stability properties on the Hessian of our regularized objective and show that a second-order method can be implemented in $\tO(n)$ space and $\tO(\eps^{-1})$ passes.

\begin{lemma}[Properties of regularized vertex cover]
\label{lem:VC-relaxed}
Let $G$ be an unweighted bipartite graph with $n$ nodes and $m$ edges, and let $L,R$ be the vertices on either side of its bipartition. Let $\arunsopt$ be the size of the maximum matching in $G$. Let $\veps \geq \frac{8 \log(mn)}{n}$ be a parameter, and set $\mu = \frac{\veps}{4 \log(mn)}$. Consider 
\begin{equation}
    \label{eqn:approxLP}
    f_{\mu} (x) \defeq \1_L^\top x_L - \1_R^\top x_R + \mu \left( \sum_{i \in [m]} e^{\frac{1}{\mu} (1 - [\tmb x]_i )} + \sum_{j \in L} e^{-\frac{1}{\mu} (x_j +\tfrac \veps 4)} + \sum_{j \in R} e^{\frac{1}{\mu} (x_j - \tfrac \veps 4)} \right).
\end{equation}
$f_{\mu}$ has the following properties.
\begin{itemize}
    \item For any $x$ with $f_{\mu}(x) <m$ we have $\tmb x \geq (1-\tfrac \veps 2)\1$, $x_L \geq - \frac{\veps}{2}$, and $x_R \leq \frac{\veps}{2}$. 
    \item If $\vopt$ is a feasible minimizer to~\ref{eqn:exactLP}, $x \defeq (1+\tfrac \veps 2) [\vopt_L, -\vopt_R]$ has $f_\mu(x) \leq (1+ \tfrac{2\veps}{3}) \arunsopt$.
    \item $f_\mu(x) \geq \arunsopt - \veps n$ for any $x$.
    \item For any $x$ with $f_{\mu}(x)< m$, let $x' = [x'_L, x'_R]$ where $x'_L = \min \{x_L, 2 \cdot \1_L \}$ and $x'_R = \max \{ x_R, -2 \cdot\1_R \}$. Then $f_\mu(x') \leq f_\mu(x)$.%
    
\end{itemize}
\end{lemma}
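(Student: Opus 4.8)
The plan is to prove the four bullets in order, reusing the first in the proof of the fourth. Throughout I use the choice $\tfrac1\mu=\tfrac{4\log(mn)}{\veps}$, so $\mu e^{c\veps/\mu}=\mu(mn)^{4c}$, together with $\mu\le\tfrac1{8\log(mn)}$, $\mu\ge\tfrac2n$, and the elementary bound $\mu e^{s/\mu}\ge \mu+s$. I will also assume without loss of generality that $G$ has no isolated vertices (they do not affect $\arunsopt$), so $m\ge n/2$.

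\emph{Second bullet (a near-optimal point has small objective).} Since $G$ is bipartite, LP duality and integrality of the bipartite matching polytope give $\1^\top\vopt=\arunsopt$. For $x=\Par{1+\tfrac\veps2}[\vopt_L,-\vopt_R]$ the linear part of $f_\mu$ equals $\Par{1+\tfrac\veps2}\1^\top\vopt=\Par{1+\tfrac\veps2}\arunsopt$. Feasibility $\vopt_a+\vopt_b\ge1$ gives $[\tmb x]_e=\Par{1+\tfrac\veps2}(\vopt_a+\vopt_b)\ge1+\tfrac\veps2$, so each edge penalty is at most $\mu(mn)^{-2}$; and $\vopt\ge0$ makes each box penalty at most $\mu(mn)^{-1}$. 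Summing the at most $m+n$ penalty terms and using $\mu\le\tfrac1{8\log(mn)}$ and $\arunsopt\ge1$ bounds the total penalty contribution by $\tfrac\veps6\arunsopt$, hence $f_\mu(x)\le\Par{1+\tfrac\veps2+\tfrac\veps6}\arunsopt=\Par{1+\tfrac{2\veps}3}\arunsopt$.

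\emph{Third bullet (global lower bound).} Let $\my\in\{0,1\}^E$ be the indicator of a maximum matching, so $\norm{\my}_1=\arunsopt$ and $\mb^\top\my\le\1$. Keeping a $\my_e$-fraction of each edge penalty and linearizing gives $\mu\sum_e e^{\frac1\mu(1-[\tmb x]_e)}\ge(1+\mu)\arunsopt-(\tmb^\top\my)^\top x$; since $(\tmb^\top\my)_j\in[0,1]$ for $j\in L$ and $\in[-1,0]$ for $j\in R$, adding the linear part $\1_L^\top x_L-\1_R^\top x_R$ leaves coordinatewise a term $\alpha_j x_j$ for $j\in L$ and $-\alpha_j x_j$ for $j\in R$ with $\alpha_j\defeq 1-(\mb^\top\my)_j\in[0,1]$. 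I then cancel each such possibly-unbounded term by spending an $\alpha_j$-fraction of the matching box penalty, again via $\mu e^{s/\mu}\ge\mu+s$, at a cost of $\tfrac\veps4$ per vertex. Summing over the $n$ vertices yields $f_\mu(x)\ge(1+\mu)\arunsopt-\tfrac{\veps n}4\ge\arunsopt-\veps n$.

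\emph{First bullet (bounded objective forces approximate feasibility), and the main obstacle.} I would argue by contradiction. Each box term $\pm t+\mu e^{\mp(t\mp\veps/4)/\mu}$ is minimized at $\mp\tfrac\veps4$ with value $\mu-\tfrac\veps4\ge-\tfrac\veps4$, and edge penalties are nonnegative, so outside any single offending term $f_\mu$ is bounded below by $-\tfrac{\veps n}4\ge-\tfrac n8$. If $[\tmb x]_i<1-\tfrac\veps2$ the $i$-th edge penalty exceeds $\mu(mn)^2\ge2m^2n$, and $2m^2n-\tfrac n8>m$. If $x_a<-\tfrac\veps2$ for $a\in L$, then monotonicity of $t\mapsto t+\mu e^{-(t+\veps/4)/\mu}$ on $(-\infty,-\tfrac\veps4)$ shows the $a$-th box term exceeds $-\tfrac\veps2+\mu(mn)\ge2m-\tfrac12$, and $2m-\tfrac12-\tfrac n8>m$ using $m\ge n/2$; the statement $x_R\le\tfrac\veps2$ is symmetric. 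In each case $f_\mu(x)>m$, contradicting $f_\mu(x)<m$. I expect the box constraints to be the delicate point: a $\tfrac\veps2$-violation there only inflates the penalty to $\approx\mu(mn)$, which barely exceeds $m$, so the argument genuinely needs the no-isolated-vertices reduction (hence $m\ge n/2$); the edge constraints have enormous slack.

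\emph{Fourth bullet (monotonicity under truncation).} I would truncate coordinate by coordinate, all of $L$ first and then $R$, showing each step cannot increase $f_\mu$; crucially, since $f_\mu$ only decreases, the first bullet's bounds $x_L\ge-\tfrac\veps2$, $x_R\le\tfrac\veps2$ remain valid for the current iterate. For $a\in L$ with $x_a>2$, the restriction of $f_\mu$ to $x_a$ is convex with $\partial_{x_a}f_\mu=1-\sum_{e\ni a}e^{\frac1\mu(1-[\tmb x]_e)}-e^{-\frac1\mu(x_a+\veps/4)}$, which is increasing in $x_a$, so it suffices that this derivative be $\ge0$ at $x_a=2$, i.e. $\sum_{e=(a,b)\ni a}e^{\frac1\mu(x_b-1)}+e^{-\frac1\mu(2+\veps/4)}\le1$. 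Using $x_b\le\tfrac\veps2$ each term is at most $e^{-3/(4\mu)}\le(mn)^{-6}$, so the sum over at most $m$ edges is below $1$; hence replacing $x_a$ by $2$ decreases $f_\mu$. The $R$-coordinates are handled symmetrically using $x_L\ge-\tfrac\veps2$, giving $f_\mu(x')\le f_\mu(x)$.
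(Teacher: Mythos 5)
Your proposal is correct, and for the first, second and fourth bullets it follows essentially the paper's route: the second bullet is the same direct substitution of the scaled optimal cover; the first bullet is the same single-offending-term contradiction (you lower bound each vertex's linear-plus-box pair by $\mu-\varepsilon/4$, the paper instead bounds the whole linear part by $n$ times its smallest entry -- same effect); the fourth bullet is the same coordinate-by-coordinate truncation, which you justify via monotonicity of the one-variable derivative at the threshold $2$ rather than the paper's Lipschitz estimate, and you make explicit the iteration order and the persistence of the first-bullet bounds, which the paper leaves implicit. The genuinely different piece is the third bullet. The paper argues by contradiction: a point with $f_\mu(x)\le M^*-\varepsilon n$ has $f_\mu(x)<m$, so the first bullet applies, $[x_L+\tfrac{\varepsilon}{2}\1,\,-x_R+\tfrac{\varepsilon}{2}\1]$ is a feasible fractional vertex cover, and weak LP duality gives $M^*\le \1_L^\top x_L-\1_R^\top x_R+\tfrac{\varepsilon n}{2}<f_\mu(x)+\tfrac{\varepsilon n}{2}$, a contradiction. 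You instead give a direct primal certificate, linearizing an $\alpha_j$-weighted share of the edge and box penalties along a maximum-matching indicator via $\mu e^{s/\mu}\ge\mu+s$. Your route is self-contained (no reliance on the first bullet or on duality for this item), yields the slightly stronger bound $f_\mu(x)\ge(1+\mu)M^*-\tfrac{\varepsilon n}{4}$, and remains valid even in the degenerate regime discussed next, where the paper's derivation of this bullet inherits the first bullet's hidden requirement.

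One caveat: your ``without loss of generality no isolated vertices'' is not literally a WLOG, since deleting isolated vertices changes both $f_\mu$ (which has a box term per vertex) and the $n$ appearing in the bounds; it is an added hypothesis. That said, the concern you flag is real: the paper's own chain for the first bullet ends with $-\tfrac{n\varepsilon}{2}+\mu mn\ge m$, which silently requires $m$ to be at least on the order of $\log(mn)$, and when $m$ is tiny relative to $n$ (many isolated vertices) the first bullet as stated can actually fail -- one can park the non-isolated part near its optimum, set the isolated coordinates near $\mp\tfrac{\varepsilon}{4}$ so the objective is very negative, and push one isolated coordinate just past $\mp\tfrac{\varepsilon}{2}$ at a cost of only about $\mu mn\ll \tfrac{\varepsilon n}{4}$. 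So some assumption of this kind is unavoidable; in the paper's application the graph has already been passed through the vertex-reduction step and has no isolated vertices, so your extra hypothesis is harmless there. Simply state it as a hypothesis (or as a property guaranteed by the preprocessing) rather than a reduction.
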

\begin{proof}
We prove the claims in order. For the first claim, let $v = [x_L, -x_R]$ and let $i$ be the index of the smallest entry of $v$. If $v_i \leq -\tfrac \veps 2$, then since $v_i + \tfrac \veps 4 \le \tfrac{v_i}{2}$ and $\exp$ is always nonnegative,
\[
f_{\mu}(x) \geq \1_L^\top x_L - \1_R^\top x_R +\mu \exp \left(- \frac{v_i}{2\mu}\right) \geq n v_i + \mu \exp \left(- \frac{v_i}{2\mu}\right) \geq m.
\]
The second inequality followed from the definition of $v_i$ and the third from monotonicity of the expression in $v_i$. Thus $v \ge -\tfrac \veps 2 \1$, implying the bounds on $x_L$ and $x_R$. This also implies that $\1_L^\top x_L - \1_R^\top x_R \geq -\tfrac{\veps n}{2}$, so it is straightforward to verify that if some $1 - [\tmb x]_i \ge \tfrac \veps 2$, then we would have $f_\mu(x) \ge -\tfrac{\veps n}{2} + (mn)^2 \mu \ge  m$, completing the first claim.

Next, since $\vopt$ is a feasible minimizer to \eqref{eqn:exactLP}, $\1^\top \vopt = \arunsopt$, $\vopt \geq 0$, and $\mb \vopt \geq 1$. Therefore by construction, $\1_L^\top x_L - \1_R^\top x_R = (1+\tfrac \veps 2) \arunsopt$, $x_L \geq 0$, $x_R \leq 0$, and $\tmb x = (1+\tfrac \veps 2)  \mb \vopt \geq (1+ \tfrac \veps 2)$. Plugging these into $f_\mu(x)$ yields (since $M^* \ge 1$)
\begin{align*}
f_{\mu}(x) &\leq \Par{1 + \frac \veps 2} \arunsopt + \mu \left( m \exp{\left(\frac{-\veps}{2 \mu} \right)} + n \exp{\left( \frac{-\veps}{4 \mu} \right)} \right) \\
&\leq \Par{1 + \frac \veps 2} \arunsopt + \mu \frac{m + n}{mn} \le \Par{1 + \frac{2\veps}{3}} M^*.
\end{align*}

For the third claim, suppose $x$ satisfied $f_\mu(x) \leq \arunsopt - \veps n$. As $\arunsopt \leq n$ we may apply the first claim so $x_L \geq -\tfrac \veps 2$, $x_R \leq \tfrac \veps 2$, and $\tmb x \geq \1 - \tfrac \veps 2$. Let $v = [x_L + \tfrac \veps 2 \1, -x_R+ \tfrac \veps 2 \1]$, so $v \geq 0$ and $\mb v \geq \1$. This is a contradiction, as $v$ is feasible for \eqref{eqn:exactLP}, so $\arunsopt \leq \1^\top v = \1_L^\top x_L - \1_R^\top x_R + \frac{\veps n}{2} < f_{\mu}(x) + \frac{\veps n}{2} < \arunsopt$.

Finally, note that $f_{\mu}(x) < m$ combined with the first property of $f_{\mu}$ implies that $\tmb x \geq (1-\frac{\veps}{2}) \1$, $x_L \geq -\frac{\veps}{2}$, and $x_R \leq \frac{\veps}{2}$. Assume $x$ has $[x_L]_i = \alpha > 2$ for some $i$: we will show that decreasing this coordinate to $2$ decreases $f_\mu$. The only terms affected by changing $[x_L]_i$ are $\1^\top x_L$ (which decreases by $\alpha -2$), $\mu \sum_{j \in L} e^{-\frac{1}{\mu} (x_j + \frac{\veps}{4} )}$, and $\mu \sum_{i \in [m]} e^{\frac{1}{\mu} (1 - [\tmb x]_i)}$. The first of these sums increases by at most 
\[
\mu \exp\left( -\frac{2 + \frac \veps 4}{\mu} \right) - \mu \exp\left( -\frac{\alpha + \frac \veps 4}{\mu} \right) \leq (\alpha - 2) e^{\frac 2\mu} \leq \frac{\alpha - 2}{mn},
\]
while each of the $m$ terms in the second sum increases by at most
\[
\mu \exp\left( -\frac{-1 - [x_R]_j}{\mu} \right) - \mu \exp\left( -\frac{1-\alpha -[x_R]_j }{\mu} \right) \leq (\alpha - 2) e^{\frac{1-\veps}{\mu}} \leq \frac{\alpha - 2}{mn},
\]
for some $[x_R]_j$: we use our upper bound on $x_R$ here. Thus the total change in $f_{\mu}(x)$ is at most $-(\alpha - 2) \left(1 - \frac{1}{mn} - \frac{1}{n} \right) < 0$. A similar analysis gives the analogous claim for coordinates of $x_R$.

\end{proof}

We next show that derivatives of $f$ satisfy some useful stability properties, which we now define.

\begin{definition}
\label{def:SOR}
Let $f$ be a convex function. We say that $f$ is $r$-second order robust if for any vectors $x,y$ where $\norm{x-y}_\infty \leq r$ we have (where $\preceq$ is the Loewner order on the positive semidefinite cone)
\[
e^{-1} \nabla^2 f(x) \preceq \nabla^2 f(y) \preceq e \nabla^2 f(x).
\]
\end{definition}

\begin{lemma}
\label{lem:VC-derivatives}
The gradient and Hessian of $f_{\mu}$ defined in \eqref{eqn:approxLP} are
\[
\nabla f_{\mu}(x) = \begin{bmatrix}
\1_L - z_L\\ 
-\1_R + z_R
\end{bmatrix} - \tmb^\top y,\;
\nabla^2 f_{\mu}(x) = \frac{1}{\mu} \left( \diag{z} + \tmb^\top~\diag{y} \tmb \right)
\]
where $y \defeq \exp\left(\frac{1}{\mu} \left( \1 - \tmb x \right) \right)$, $z_L \defeq \exp\left(-\frac{1}{\mu} (x_L + \tfrac \veps 4\1) \right)$, $z_R \defeq  \exp\left(\frac{1}{\mu} (x_R- \tfrac \veps 4\1) \right)$, and $z \defeq [z_L ,z_R]$. For all $x$,  $\nabla^2 f_{\mu}(x)$ is a nonnegative diagonal matrix plus a weighted graph Laplacian matrix. Further, $f_\mu$ is convex and $\frac{\mu}{2}$-second order robust. Finally, for any $x,\delta$ we have
\[
\delta^\top \nabla^2 f_{\mu}(x) \delta \leq \frac{2}{\mu} \cdot  \delta^{\top} \left( \diag{\mb^\top y} + \diag{z} \right) \delta.
\]
\end{lemma}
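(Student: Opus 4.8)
The plan is to obtain everything by a direct term-by-term differentiation of $f_\mu$ and then read off the structural, convexity, and stability consequences. First I would differentiate the three exponential blocks separately. The linear part $\1_L^\top x_L - \1_R^\top x_R$ contributes $[\1_L;\,-\1_R]$ to the gradient and nothing to the Hessian. For $\mu\sum_{i\in[m]} e^{(1-[\tmb x]_i)/\mu}$ the chain rule gives gradient $-\sum_i \tmb_{i:}^\top e^{(1-[\tmb x]_i)/\mu} = -\tmb^\top y$ and Hessian $\tfrac1\mu\sum_i \tmb_{i:}^\top\tmb_{i:}\, e^{(1-[\tmb x]_i)/\mu} = \tfrac1\mu\tmb^\top\diag{y}\tmb$. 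The $L$-vertex penalties $\mu\sum_{j\in L}e^{-(x_j+\veps/4)/\mu}$ contribute $-z_L$ to the gradient and $\tfrac1\mu\diag{z_L}$ to the Hessian, and the $R$-vertex penalties contribute $+z_R$ and $\tfrac1\mu\diag{z_R}$. Adding these yields exactly the claimed formulas for $\nabla f_\mu$ and $\nabla^2 f_\mu$.

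Next I would record the structural claim. Since $y$ is an exponential it is entrywise positive, so $\tmb^\top\diag{y}\tmb$ is the Laplacian of $G$ with (positive) edge weights $y$, hence PSD; likewise $z$ is entrywise positive, so $\tfrac1\mu\diag{z}$ is a nonnegative diagonal matrix. Their sum, which is $\nabla^2 f_\mu(x)$, is therefore PSD for every $x$, giving convexity of $f_\mu$ and the claimed ``diagonal plus weighted Laplacian'' description.

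For $\tfrac\mu2$-second order robustness I would exploit that $\mu\nabla^2 f_\mu(x) = \sum_{v} z_v(x)\,\chi_v\chi_v^\top + \sum_{(a,b)\in E} y_{(a,b)}(x)\,(\chi_a-\chi_b)(\chi_a-\chi_b)^\top$ is a nonnegative combination of \emph{fixed} PSD rank-one matrices (the $\chi_v$ being standard basis vectors), where only the coefficients depend on $x$. If $\norm{x-x'}_\infty\le\mu/2$, then each $z_v$ changes by the multiplicative factor $e^{\pm|x_v - x'_v|/\mu}\in[e^{-1/2},e^{1/2}]$, while $y_{(a,b)}$ changes by $e^{[\tmb(x'-x)]_{(a,b)}/\mu}$ with $|[\tmb(x'-x)]_{(a,b)}|\le\norm{\tmb_{(a,b):}}_1\norm{x-x'}_\infty = 2\cdot\tfrac\mu2 = \mu$ (each row of $\tmb$ has exactly two $\pm1$ entries), so every coefficient changes by a factor in $[e^{-1},e]$. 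Scaling the coefficients of a nonnegatively weighted sum of PSD matrices by factors in $[e^{-1},e]$ scales the sum within the Loewner order by the same interval, so $e^{-1}\nabla^2 f_\mu(x)\preceq\nabla^2 f_\mu(x')\preceq e\,\nabla^2 f_\mu(x)$ as required.

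Finally, for the quadratic-form bound I would expand $\mu\,\delta^\top\nabla^2 f_\mu(x)\delta = \delta^\top\diag{z}\delta + \sum_{(a,b)\in E} y_{(a,b)}(\delta_a-\delta_b)^2$ and apply $(\alpha-\beta)^2\le 2\alpha^2+2\beta^2$ edge by edge; collecting terms gives $\sum_{(a,b)} y_{(a,b)}(\delta_a-\delta_b)^2 \le 2\sum_{v}\delta_v^2\sum_{e\ni v} y_e = 2\,\delta^\top\diag{\mb^\top y}\delta$, where $\mb$ is the \emph{unsigned} incidence matrix so $[\mb^\top y]_v$ is the weighted degree of $v$. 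Hence $\mu\,\delta^\top\nabla^2 f_\mu(x)\delta \le \delta^\top\diag{z}\delta + 2\,\delta^\top\diag{\mb^\top y}\delta \le 2\,\delta^\top(\diag{\mb^\top y}+\diag{z})\delta$, which is the claim after dividing by $\mu$. None of these steps is a genuine obstacle; the only points needing care are tracking that $\norm{\tmb_{i:}}_1 = 2$ so that the perturbation radius $\mu/2$ yields precisely the factor $e$ rather than $e^2$, and keeping straight that the last inequality converts the \emph{oriented} Laplacian $\tmb^\top\diag{y}\tmb$ into a multiple of the \emph{unsigned} weighted-degree diagonal $\diag{\mb^\top y}$.
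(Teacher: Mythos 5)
Your proposal is correct and follows essentially the same route as the paper's proof: direct term-by-term differentiation for the gradient and Hessian, positivity of $y$ and $z$ for the Laplacian-plus-diagonal structure and convexity, the bound $\norm{\tmb(x-x')}_\infty \le \norm{\tmb}_\infty\norm{x-x'}_\infty = 2\cdot\tfrac{\mu}{2}=\mu$ so that every coefficient in the Hessian moves by at most a factor $e$, and an edge-wise inequality for the final diagonal bound. The only cosmetic difference is that you prove the last claim via $(\delta_a-\delta_b)^2 \le 2\delta_a^2 + 2\delta_b^2$ per edge while the paper phrases the identical computation as Cauchy--Schwarz together with $\mb\1 = 2\cdot\1$, and you make the ``each matrix changes by a factor of $e$'' step explicit through the rank-one decomposition of the Hessian.
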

\begin{proof}
The gradient and Hessian claims may be verified directly, and it is clear that $\tfrac z \mu$ is nonnegative and $\tmb^\top \diag{\tfrac y \mu}\tmb$ is a weighted graph Laplacian. Convexity of $f_\mu$ follows from $\nabla^2 f_\mu(x) \succeq 0$ everywhere. To prove second order robustness, for any $x, x'$ with $\norm{x-x'}_\infty \leq \frac{\mu}{2}$,
\[\norm{\tmb x - \tmb x'}_\infty \leq \norm{\tmb}_{\infty} \norm{x-x'}_{\infty} \leq 2 \cdot \frac{\mu}{2} \leq \mu.\] Thus every matrix forming the Hessian of $f_{\mu}$ changes by at most a factor of $e$ multiplicatively, and so $f_{\mu}$ is second order robust. For the final claim, let $|\delta|$, $\delta^2$ be applied entrywise. Then,
\begin{align*}
\delta^\top \tmb^\top \diag{y} \tmb \delta &= \sum_{i \in [m]} y_i [\tmb \delta]_i^2 \leq  \sum_{i \in [m]} y_i [\mb |\delta|]_i^2 \\
&\leq \sum_{i \in [m]} y_i [\mb \delta^2]_i [\mb \1]_i =  2 \sum_{i \in [m]} y_i [\mb \delta^2]_i = 2 \delta^\top \diag{\mb^\top y} \delta.
\end{align*}
The second line used Cauchy-Schwarz and $\mb \1 = 2 \cdot\1$. Since $\diag{z} \succeq 0$, this yields the claim.
\end{proof}

It remains to implement a low-pass and low-space optimization procedure to minimize $f_{\mu}$. We make use of a variant of the box-constrained Newton's method of \cite{CMTV17} stated below.
\begin{definition}
We say that a procedure $\oracle$ is an $(r,k)$-oracle for a class of matrices $\mathcal{M}$ if, for any $\ma \in \mathcal{M}$ and vector $b$, $\oracle(\ma,b)$ returns a vector $z$ where $\norm{z}_\infty \leq rk$ and
\[
z^\top b + \frac{1}{2} z^\top \ma z \leq \frac{1}{2} \left( \min_{\norm{y}_\infty \leq r} y^\top b + \frac{1}{2} y^\top \ma y \right).
\]
\end{definition}

The proof of the following is patterned from \cite{CMTV17}, but tolerates error in the Hessian.

\begin{lemma}
Let $f$ be $r$-second order robust with minimizer $\xopt$. For some $x$, let $\mm$ be such that $\frac{1}{2} \nabla^2 f(x)\preceq \mm \preceq 2 \nabla^2 f(x)$. Let $\oracle$ be an $(r,k)$-oracle for a class of matrices $\mathcal{M}$ where $\mm\in\mathcal{M}$. Then if $\max \{ \norm{\xopt}_\infty , \norm{x}_\infty \} \leq R$ and $R \geq r$, for any vector $x$, $x' = x + \frac{1}{k} \oracle\left(\frac{1}{k} \nabla f(x), \frac{2e}{k^2} \mm\right)$ satisfies
\[
f(x') - f(\xopt) \leq \left(1 - \frac{r}{160 k R}\right) (f(x) - f(\xopt)).
\]

\label{lem:boxnewton}
\end{lemma}
\begin{proof}
We note that for any $y$ where $\norm{y-x}_\infty \leq r$ we have
\[
f(y) = f(x) + \nabla f(x)^\top (y-x) + \frac{1}{2} \int_{0}^1 (y-x)^\top \nabla^2 f(\alpha x + (1-\alpha)y) (y-x) d \alpha
\]
and thus by the second order robustness of $f$,
\begin{equation}\label{eq:ulbounds}\begin{aligned}
f(x) + \nabla f(x)^\top (y-x) + \frac{1}{4e} (y-x)^\top \mm (y-x) \le f(y),\\ f(y) \leq f(x) + \nabla f(x)^\top (y-x) + e (y-x)^\top \mm (y-x).\end{aligned}\end{equation}
Let
\[
\hat{\delta} = \argmin_{\norm{\delta}_\infty \leq r} \frac{1}{k} \nabla f(x)^\top \delta + \frac{e}{k^2} \delta^\top \mm \delta.
\]
Also, let $\Delta = \oracle(\frac{1}{k} \nabla f(x), \frac{2 e}{k^2} \mm)$. By definition of $\oracle$ we have $\norm{\Delta}_{\infty} \leq rk$ and 
\[
\frac{1}{k}\nabla f(x)^\top \Delta + \frac{e }{k^2} \Delta^\top \mm \Delta \leq \frac{1}{2} \left( \frac{1}{k} \nabla f(x)^\top \hat{\delta} + \frac{e }{k^2} \hat{\delta}^\top \mm \hat{\delta} \right).
\]
Thus since $x' = x + \frac{1}{k} \Delta$ we see $\norm{x' - x}_\infty \leq r$, and hence
\begin{equation}\label{eq:xprogress}
f(x') \leq f(x) + \frac{1}{k} \nabla f(x)^\top \Delta + \frac{e}{k^2} \Delta^\top \mm \Delta \leq f(x) + \frac{1}{2} \left( \frac{1}{k} \nabla f(x)^\top \hat{\delta} + \frac{e}{k^2} \hat{\delta}^\top \mm \hat{\delta} \right).
\end{equation}
Define $\tilde{x}=  \frac{r}{2R} (\xopt - x)$, we note that $\norm{\tilde{x}}_{\infty} \leq \frac{r}{2R} \left( \norm{\xopt}_\infty + \norm{x}_\infty \right) \leq r$. By the minimality of $\hat{\delta}$ we observe that $\hat{\delta}$ achieves a smaller value than $c \widetilde{x}$ for any $c \leq 1$. For the choice $c = \frac{1}{4e^2}$ this implies
\begin{align*}
\frac{1}{2} \left( \frac{1}{k} \nabla f(x)^\top \hat{\delta} + \frac{e}{k^2} \hat{\delta}^\top \mm \hat{\delta} \right) &\leq \frac{1}{2} \left( \frac{c}{k} \nabla f(x)^\top \tilde{x} + \frac{c^2 e}{k^2} \tilde{x}^\top \mm \tilde{x} \right) \\
&= \frac{1}{8e^2} \left( \frac{1}{k}\nabla f(x)^\top \tilde{x} + \frac{1}{4ek^2}  \tilde{x}^\top \mm \tilde{x} \right) \\
&\leq \frac{1}{8e^2} \left( f\left(x + \frac{1}{k}\tilde{x}\right) - f(x) \right) \\
&\leq -\frac{r}{16k e^2 R} \left( f(x) - f(\xopt) \right),
\end{align*}
where the third line used \eqref{eq:ulbounds} and the last used convexity of $f$. Plugging this into \eqref{eq:xprogress} yields
\[
f(x') - f(\xopt) \leq \left( 1 - \frac{r}{16k e^2 R} \right) \left( f(x) - f(\xopt) \right).
\]
\end{proof}

Next, the class of matrices which are Laplacians plus a nonnegative diagonal admit an efficient $\oracle$.

\begin{lemma}[Theorem 5.11, \cite{CMTV17}]
Let $\mathcal{M}$ be the class of matrices which consist of a Laplacian matrix plus a nonnegative diagonal. Let $\ma \in \mathcal{M}$ be a matrix with $m$ nonzero entries. There is an algorithm which runs in $\tO(m)$ time and space and implements an $(r,O(\log n))$-oracle for $\ma$.%
\end{lemma}

 We complement this with a known semi-streaming spectral approximation for Laplacians.
\begin{lemma}[Section 2.2, \cite{McGregor14}]\label{lem:lapsparsify}
Let $G = (V,E,w)$ be a weighted undirected graph, given as an insertion-only stream. There is an algorithm which for any $\veps \in (0,1)$ takes one pass and returns a graph $H$ with $O(\veps^{-2} n \log^3(n))$ edges such that $\mlap_{H} \preceq \mlap_{G} \preceq (1+\veps) \mlap_{H}$ using $\tO(m)$ work.
\end{lemma}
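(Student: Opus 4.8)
The plan is to prove this via the standard \emph{merge-and-reduce} (hierarchical sketching) technique applied to an offline nearly-linear-time spectral sparsification primitive. First I would invoke the offline building block: the effective-resistance sampling scheme of Spielman--Srivastava shows that any weighted graph $G'$ on $n$ vertices with $m'$ edges admits a reweighted subgraph $\widetilde H$ with $O(\delta^{-2} n\log n)$ edges satisfying $(1-\delta)\mlap_{G'}\preceq\mlap_{\widetilde H}\preceq(1+\delta)\mlap_{G'}$, computable in $\tO(m')$ time (via Johnson--Lindenstrauss projections and a nearly-linear-time Laplacian solver to approximate effective resistances). Rescaling, $H' \defeq (1+\delta)^{-1}\widetilde H$ satisfies the one-sided guarantee $\mlap_{H'}\preceq\mlap_{G'}\preceq\tfrac{1+\delta}{1-\delta}\mlap_{H'}\preceq(1+3\delta)\mlap_{H'}$ for $\delta\le\tfrac13$; call this a $(1+3\delta)$-sparsifier. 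I would also record two elementary Loewner-order facts making such sparsifiers composable: (i) if $\mlap_{H_i}\preceq\mlap_{G_i}\preceq\rho\,\mlap_{H_i}$ for $i=1,2$ then the edge union $H_1\cup H_2$ (summing parallel weights) satisfies $\mlap_{H_1\cup H_2}\preceq\mlap_{G_1\cup G_2}\preceq\rho\,\mlap_{H_1\cup H_2}$, since $\mlap_{A\cup B}=\mlap_A+\mlap_B$; and (ii) if $H''$ is a $\rho_1$-sparsifier of $H'$ and $H'$ is a $\rho_2$-sparsifier of $G$, then $H''$ is a $(\rho_1\rho_2)$-sparsifier of $G$.

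Next I would set up the merge-and-reduce data structure. Fix $\ell\defeq\lceil 2\log_2 n\rceil=O(\log n)$, which upper bounds the depth of any balanced binary tree over the $\le m\le n^2$ stream edges, and set the internal accuracy parameter $\delta\defeq\Theta(\veps/\ell)=\Theta(\veps/\log n)$. The algorithm maintains one partial sparsifier per ``level'' $0,\dots,\ell$, treating levels like a binary counter: each incoming edge is a level-$0$ singleton graph; whenever a level would hold two graphs, we take their edge union (at most doubling the edge count) and, if the result exceeds $s\defeq\Theta(\delta^{-2}n\log n)=\Theta(\veps^{-2}n\log^3 n)$ edges, reduce it back to $\le s$ edges by applying the offline primitive with parameter $\delta$, carrying the result up to the next level. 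This processes the stream in a single pass. At the end, merging the at most $\ell+1$ per-level graphs into one (and sparsifying once more) produces the output $H$.

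For correctness, I would argue by induction over the implicit merge tree that the partial sparsifier at any node of height $h$ is a $(1+3\delta)^{h}$-sparsifier (in the one-sided sense) of the union of the stream edges in its subtree, using fact (i) at merges and fact (ii) at reductions. Hence $H$ is a $(1+3\delta)^{\ell}$-sparsifier of $G$, and since $(1+3\delta)^{\ell}\le e^{3\delta\ell}\le 1+\veps$ whenever $\delta\le\veps/(6\ell)$ and $\veps\le1$ (choosing the hidden constant in $\delta$ accordingly), we get $\mlap_H\preceq\mlap_G\preceq(1+\veps)\mlap_H$. The edge-count bound is immediate: every stored graph, in particular $H$, has $O(s)=O(\veps^{-2}n\log^3 n)$ edges. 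The working space is $O(\ell)$ such graphs, i.e.\ $\otilde(n)$ words. For the work bound, a level-$h$ reduction touches $O(s)$ edges and costs $\tO(s)$ time, and over the whole stream level $h$ is reduced $O(m/(2^h s))$ times; summing over $h$ gives total reduction work $\sum_{h\ge0} O(m/(2^h s))\cdot\tO(s)=\tO(m)$, which dominates everything else.

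I expect the main obstacle to be the bookkeeping around error amplification and the self-consistent choice of parameters: one must verify that the merge-tree depth is genuinely $O(\log n)$ (this is where $m=O(n^2)$ is used, and isolated components cause no issue), that shrinking the per-reduction accuracy to $\delta=\Theta(\veps/\log n)$ simultaneously keeps the cumulative distortion below $1+\veps$ after $O(\log n)$ compositions and inflates the stored size only to $O(\veps^{-2}n\log^3 n)$, and that converting the two-sided Spielman--Srivastava guarantee to the one-sided form of the lemma (together with checking that one-sided sparsifiers compose under sums and chaining) costs only constant-factor slack in $\delta$. The remaining ingredients — the nearly-linear-time effective-resistance sampler and the Laplacian solver it invokes — I would cite as black boxes.
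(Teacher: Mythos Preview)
The paper does not prove this lemma at all; it is stated as a citation of Section~2.2 of \cite{McGregor14} and used as a black box. Your proposal is correct and is exactly the standard merge-and-reduce argument that underlies the cited result: partition the stream, repeatedly union-and-resparsify using Spielman--Srivastava, and set the per-reduction accuracy to $\Theta(\veps/\log n)$ so that the $O(\log n)$ compounded approximations stay within $1+\veps$, yielding the $O(\veps^{-2}n\log^3 n)$ edge bound.
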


\newcommand{\BoxConstrainedVC}{\mathsf{BoxConstrainedVC}}
\begin{algorithm}
	\caption{$\BoxConstrainedVC(G, \veps, \oracle)$}
	\begin{algorithmic}[1]\label{alg:boxconstrained}
		\STATE \textbf{Input:} Bipartite graph $G = (V,E)$ with vertex partition $V = L \cup R$ and oriented edge-incidence matrix $\tmb$ given as a stream, $\veps > 0$, $(O(\frac{\veps}{\log(n)}),k)$-oracle $\oracle$ for symmetric diagonal dominant matrices with nonpositive off diagonal
		\STATE \textbf{Output:} $v$ fractional vertex cover for $G$
		\STATE $x_0 \gets [\1_L, -\1_R]$, $\mu = \frac{\veps}{4 \log(mn)}$, $T = O(\frac{\log \mu^{-1}}{\mu k})$
		\STATE $f_\mu(x) \defeq \1_L^\top x_L - \1_R^\top x_R + \mu \left( \sum_{i \in [m]} e^{\frac{1}{\mu} (1 - [\tmb x]_i )} + \sum_{j \in L} e^{-\frac{1}{\mu} (x_j + \frac \veps 4)} + \sum_{j \in R} e^{\frac{1}{\mu} (x_j - \frac \veps 4)} \right)$
		\FOR{$0 \le t < T$ }
		\STATE Compute $\nabla f_\mu(x_t)$ and $\mm$, a $2$-approximation of $\nabla^2 f_{\mu}(x_t)$ with Lemma~\ref{lem:lapsparsify}
		\STATE $x'_{t+1} = x_t + \frac{1}{k}\oracle\left(\frac{1}{k} \nabla f_\mu(x_t), \frac{2e}{k^2} \mm \right)$.
		\STATE $x_{t+1} = \max \left\{ \min \{ x'_{t+1}, 2\cdot \1 \}, -2\cdot \1 \right\}$ entrywise
		\ENDFOR
		\RETURN{$x_T$}.
	\end{algorithmic}
	\label{alg:boxconstrained}
\end{algorithm}

We now assemble these claims to prove the correctness of our algorithm.

\begin{proposition}
\label{prop:boxnewton}
Let $G = (V,E)$ be an unweighted bipartite graph with bipartition $L, R$. Given access to a $(O(\frac{\veps}{\log n}), k)$-oracle $\oracle$ for $\nabla^2 f_\mu$, Algorithm~\ref{alg:boxconstrained} computes $x$ so $v = [x_L + \frac{\veps}{2} \1_L, -x_R + \frac{\veps}{2} \1_R]$ is a feasible vertex cover of size $\arunsopt + \veps n$. For $y \defeq \exp\left( \frac{1}{\mu} (1 - \tmb x) \right)$, $\exists w$ with $\norm{w}_1 \leq \mu n$ so $y-w$ is a feasible matching with $\1^\top y \geq \arunsopt - 5 \veps n$. Algorithm~\ref{alg:boxconstrained} requires $\tO(n)$ auxiliary space and $O(\frac{k \log\mu^{-1}}{\mu})$ passes, each requiring $\tO(m)$ work, plus the work and space required by one $\oracle$ call.
\end{proposition}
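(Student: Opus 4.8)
\textbf{Proof plan for Proposition~\ref{prop:boxnewton}.} The plan is to combine the structural properties of $f_\mu$ established in Lemma~\ref{lem:VC-relaxed} and Lemma~\ref{lem:VC-derivatives} with the convergence guarantee of the box-constrained Newton step (Lemma~\ref{lem:boxnewton}), and then translate the resulting approximate minimizer of $f_\mu$ into both a feasible vertex cover and a near-optimal matching via LP duality. First I would verify the hypotheses of Lemma~\ref{lem:boxnewton} along the trajectory of Algorithm~\ref{alg:boxconstrained}. By the fourth property of Lemma~\ref{lem:VC-relaxed}, clipping each iterate to $[-2,2]^{V}$ (Line~8) does not increase $f_\mu$, so the iterates stay in a bounded box with $R = O(1)$; by the second property, the initial point $x_0 = [\1_L,-\1_R]$ has $f_\mu(x_0) = O(M^*) = O(n)$, and more importantly $(1+\tfrac\veps2)[\vopt_L,-\vopt_R]$ certifies $f_\mu(\xopt) \le (1+\tfrac{2\veps}3)M^*$, while $f_\mu \ge M^* - \veps n$ everywhere by the third property, so the optimality gap is $f_\mu(x_0) - f_\mu(\xopt) = O(n)$ and the minimizer $\xopt$ lies in $[-\tfrac\veps2,\tfrac\veps2]^V \subseteq [-2,2]^V$ (using the first property, since otherwise $f_\mu(\xopt) \ge m$). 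Lemma~\ref{lem:VC-derivatives} gives that $f_\mu$ is $\tfrac\mu2$-second-order robust and that $\nabla^2 f_\mu(x)$ is always a nonnegative diagonal plus a weighted Laplacian, hence lies in the matrix class $\mathcal{M}$ admitting an $\oracle$; the spectral sparsifier $\mm$ of Lemma~\ref{lem:lapsparsify} satisfies $\tfrac12\nabla^2 f_\mu(x) \preceq \mm \preceq 2\nabla^2 f_\mu(x)$ as required. Then Lemma~\ref{lem:boxnewton} applied with $r = \Theta(\tfrac\veps{\log n}) = \Theta(\mu)$, $k = O(\log n)$, $R = O(1)$ yields geometric decrease of the gap by a factor $(1 - \Omega(\tfrac{\mu}{k}))$ per iteration, so after $T = O(\tfrac{k\log\mu^{-1}}{\mu}) = \tO(\veps^{-1})$ iterations the gap is at most $\mu \cdot \mathrm{poly}(\mu^{-1})$-driven down below, say, $\veps n$; concretely I would pick the constant in $T$ so that $f_\mu(x_T) - f_\mu(\xopt) \le \tfrac\veps3 n$, giving $f_\mu(x_T) \le (1+\tfrac{2\veps}3)M^* + \tfrac\veps3 n \le M^* + \veps n$ (using $M^* \le n$), which is $< m$ for the relevant parameter regime.

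\textbf{From $f_\mu(x_T) \le M^* + \veps n$ to the two conclusions.} For the vertex cover: since $f_\mu(x_T) < m$, the first property of Lemma~\ref{lem:VC-relaxed} gives $\tmb x_T \ge (1-\tfrac\veps2)\1$, $[x_T]_L \ge -\tfrac\veps2$, $[x_T]_R \le \tfrac\veps2$, so $v \defeq [[x_T]_L + \tfrac\veps2\1_L,\ -[x_T]_R + \tfrac\veps2\1_R]$ is entrywise nonnegative and satisfies $\mb v = \tmb x_T + \veps\1 \ge \1$, hence feasible for~\eqref{eqn:exactLP}; moreover $\1^\top v = \1_L^\top [x_T]_L - \1_R^\top[x_T]_R + \veps n \le f_\mu(x_T) + \veps n \le M^* + 2\veps n$ (absorbing the constant by rescaling $\veps$). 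For the matching: set $y = \exp(\tfrac1\mu(\1-\tmb x_T))$ and $z = [z_L,z_R]$ as in Lemma~\ref{lem:VC-derivatives}. The key identity is that $\nabla f_\mu(x_T)$ is small in norm — once the gap is small, $\|\nabla f_\mu(x_T)\|$ is controlled by strong-convexity-type arguments, but more simply I would argue directly from the stationarity-at-optimum structure, or use that $f_\mu$ evaluated near the minimizer forces the "soft constraint" residuals to be small. Precisely, the two blocks of $\nabla f_\mu(x_T) = [\1_L - z_L; -\1_R + z_R] - \tmb^\top y$ vanishing would say $\tmb^\top y$ equals $[\1_L - z_L;\,-\1_R+z_R]$; the vertex-incidence relation $\tmb^\top y = (\mb^\top y)$ restricted with signs, together with $\|z\|_1 = O(\mu n)$ (which follows since $f_\mu(x_T) < m$ bounds the exponential sums, so $\mu\sum z_j \le f_\mu(x_T) + \|x\|_1 = O(n)$, and a sharper bound $\mu\|z\|_1 = O(\mu n)$ comes from the constraint residuals being $O(\mu\log n) = O(\veps)$), lets me write $y - w$ for an appropriate correction $w$ with $\|w\|_1 \le \mu n$ such that $\mb^\top(y-w) \le \1$, i.e.\ $y-w$ is a feasible fractional matching. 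Finally, weak LP duality between~\eqref{eqn:exactLP} and the matching LP gives $\1^\top y \ge \1^\top(y-w) \ge \1^\top v - \|w\|_1 - (\text{slack}) \ge M^* - 5\veps n$ after bookkeeping the $O(\veps n)$ and $O(\mu n)$ error terms; I would track constants so the stated $5\veps n$ holds.

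\textbf{Resource accounting.} Each of the $T = \tO(\veps^{-1})$ iterations needs: one pass to read the stream and build the sparsifier $\mm$ via Lemma~\ref{lem:lapsparsify} ($\tO(m)$ work, $\tO(n)$ space), one (or $O(1)$) passes to evaluate $\nabla f_\mu(x_t)$, which decouples over edges exactly as in Lemma~\ref{lem:implicitx} (each $y_i$ and $z_j$ computed from a single row), so $\tO(m)$ work and $\tO(n)$ space; the $\oracle$ call and the clipping step are offline on an $\tO(n)$-sized object. Thus the total is $\tO(\veps^{-1})$ passes, $\tO(m)$ work per pass, $\tO(n)$ auxiliary space plus whatever $\oracle$ uses. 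The main obstacle I anticipate is the matching-recovery step: cleanly extracting the feasible fractional matching $y - w$ with the precise $\|w\|_1 \le \mu n$ bound and the $\1^\top y \ge M^* - 5\veps n$ guarantee requires carefully relating the near-stationarity of $x_T$ (which only follows from the small optimality gap, not from an exact gradient bound) to the primal-dual complementary-slackness structure; getting the constant $5$ and ensuring $w$ is nonnegative and supported correctly is where the real care is needed, whereas the optimization convergence and the vertex-cover side are essentially immediate from the cited lemmas.
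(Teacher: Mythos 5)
The optimization part (monotone clipping via the fourth property of Lemma~\ref{lem:VC-relaxed}, $R=O(1)$, second-order robustness, the sparsified Hessian fed to $\oracle$, geometric gap decrease via Lemma~\ref{lem:boxnewton}), the vertex-cover conclusion, and the pass/space/work accounting all match the paper's proof. The genuine gap is in the second claim. First, your stopping criterion is too weak: you propose to stop once $f_\mu(x_T)-f_\mu(\xopt)\le\frac{\veps}{3}n$, but recovering the matching requires driving the gap down to $O(\mu^3 n)$ (still within $T=O(\frac{k\log\mu^{-1}}{\mu})$ iterations, so only the constant changes). The reason is that the only route from ``small gap'' to ``small gradient'' here is \emph{not} strong convexity --- $f_\mu$ degenerates wherever the exponentials are tiny, so ``strong-convexity-type arguments'' or ``stationarity-at-optimum structure'' do not supply the bound you need. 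The paper's argument is a perturbation argument: for any $\delta$ with $\norm{\delta}_\infty\le\frac\mu2$, second-order robustness together with $\nabla^2 f_\mu\preceq\frac2\mu\left(\diag{\mb^\top y}+\diag{z}\right)$ gives $f_\mu(x_T)-f_\mu(x_T+\delta)\ge-\nabla f_\mu(x_T)^\top\delta-\frac{2e}{\mu}\delta^\top\left(\diag{\mb^\top y}+\diag{z}\right)\delta$; choosing $\delta=-\frac{\mu^2}{4e}\textbf{sign}(\nabla f_\mu(x_T))$ and closing the self-referential loop with $\norm{\mb^\top y+z}_1\le\norm{\nabla f_\mu(x_T)}_1+n$ converts a gap of $\frac{\mu^3 n}{9e}$ into $\norm{\nabla f_\mu(x_T)}_1\le\mu n$, which is exactly the existence of $w$ with $\1+w=\mb^\top y+z$ and $\norm{w}_1\le\mu n$ (feasibility of $y-w$ then uses only $z\ge0$; no bound on $\norm{z}_1$ is needed, and your claimed $\norm{z}_1=O(\mu n)$ is neither justified nor required). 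With a gap of $\frac\veps3 n$ the same computation only yields $\norm{\nabla f_\mu(x_T)}_1=O(\veps n/\mu^2)$, far too weak.

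Second, the lower bound $\1^\top y\ge M^*-5\veps n$ does not follow from weak LP duality as sketched: weak duality upper-bounds a feasible matching by any vertex cover, i.e.\ it points the wrong way, and your chain $\1^\top(y-w)\ge\1^\top v-\norm{w}_1-(\text{slack})$ is exactly the approximate complementary-slackness statement that has to be proven, not invoked. The paper proves it by pairing the near-stationarity identity $\1+w=\mb^\top y+z$ with the iterate itself: taking the inner product with $v'=[x_L,-x_R]-\mu\1$ and rearranging gives $f_\mu(x_T)-\mu n+w^\top v'=y^\top(\tmb x_T-\1)+\1^\top y+x_L^\top z_L-x_R^\top z_R$, and then the exponential structure (for each edge either $[\tmb x_T]_i-1\le\frac\veps2$ or $y_i\le\frac{\veps}{2mn}$, and analogously for $z$) bounds the cross terms by $\veps\,\1^\top y+\frac\veps2 M^*$ and $\veps n$, so $\1^\top y\ge\frac{1}{1+\veps}f_\mu(x_T)-3\veps n\ge M^*-5\veps n$ using $f_\mu\ge M^*-\veps n$. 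You correctly flag this step as the main obstacle, but the plan does not contain the argument. Two minor repairs: the first property of Lemma~\ref{lem:VC-relaxed} gives only one-sided bounds, so the containment $\norm{\xopt}_\infty\le2$ should come from the fourth property (as you do for the iterates), not from ``$\xopt\in[-\frac\veps2,\frac\veps2]^V$''; and the cover size is $\1^\top v=\1^\top x_L-\1^\top x_R+\frac{\veps}{2}n\le M^*+\veps n$, with no rescaling of $\veps$ needed.
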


As the space used by $\oracle$ can be reused between runs, the space overhead will be $\tO(n)$ throughout.
\begin{proof}
By Lemmas~\ref{lem:VC-relaxed} and \ref{lem:VC-derivatives}, $\oracle$ applies to a matrix family containing the Hessian $\nabla^2 f_\mu$. In addition, we see that $\nabla f_\mu$  may be computed in a single pass since it equals
\[
\nabla f_{\mu}(x) = \begin{bmatrix}
\1_L - z_L\\ 
-\1_R + z_R
\end{bmatrix} - \tmb^\top y
\]
for $z_L, z_R, y$ as in Lemma~\ref{lem:VC-derivatives}. The first term may be computed directly, while the second can be computed analogously to Lemma~\ref{lem:implicitx}. Further, we can obtain a $2$-spectral approximation of $\nabla^2 f_\mu(x)$ in semi-streaming fashion: we compute $z$ in one pass, and sparsify the Laplacian using Lemma~\ref{lem:lapsparsify} while computing $y$ coordinatewise in one pass. Thus in each iteration we perform one pass and one call to $\oracle$: as there are $O(\frac{k \log n}{\mu })$ iterations, the claimed space, pass, and work bounds follow. As the graph in Lemma~\ref{lem:lapsparsify} is sparse, its Laplacian can be explicitly stored in $\tO(n)$ space.

We now prove the correctness of the algorithm. Let $\xopt$ be the minimizer of $f_{\mu}$. By construction, $f_\mu(x_0) \le 2n \le m$ and function progress is monotone. We note that the point $x_t$ has $\norm{x_t}_\infty \leq 2$ by construction for all $t$, and that $\norm{\xopt}_\infty \leq 2$ by the fourth condition of Lemma~\ref{lem:VC-relaxed}. Further, by Lemma~\ref{lem:VC-derivatives} we see that is $r$-second order robust with $r = \frac{\mu}{2}$. By Lemma~\ref{lem:boxnewton} we obtain
\[
f_{\mu}(x'_{t+1}) - f_{\mu}(\xopt) \leq \left(1 - \frac{\mu}{640 k} \right) \left( f_{\mu}(x_{t}) - f_{\mu}(\xopt)\right)
\]
for any $t$. As the fourth condition of Lemma~\ref{lem:VC-relaxed} implies $f_\mu(x_{t+1}) \leq f_{\mu}(x'_{t+1})$, the final $x_T$ has
\[
f_\mu(x_T) - f_\mu(\xopt) \leq \left(1 - \frac{\mu}{640 k} \right)^T \left( f_\mu(x_0) - f_\mu(\xopt) \right) \leq \frac{\mu^3 n}{9e},
\]
where the second inequality uses  $f_\mu(x_0) \leq 2n$ and $f_\mu(\xopt) \geq 0$. Thus if $\arunsopt$ is the size of the minimum vertex cover of $G$, we obtain by Lemma~\ref{lem:VC-relaxed} that $f_{\mu}(x_T) \leq (1+\frac{2\veps}{3})\arunsopt + \frac{\mu^3 n}{9e}\leq \arunsopt + \frac{\veps}{2} n$, since $\arunsopt \leq \frac{n}{2}$ and $\mu \leq \veps$. To complete the proof, the first condition of Lemma~\ref{lem:VC-relaxed} holds implies $v = [[x_T]_L + \frac{\veps}{2} 1, -[x_T]_R  + \frac{\veps}{2} 1]$ is nonnegative with $\mb v \geq 1$ and $\1^\top v \leq \arunsopt + \frac{\veps}{2} n+ \frac{\veps}{2} n = M^* + \veps n$.

For the second claim, we observe that for any $\delta$ with $\norm{\delta}_\infty \leq \frac \mu 2$ and $\xhat = x_T + \delta$,
\begin{align*}
\frac{\mu^3 n}{9e}&\geq f_\mu(x_T) - f_\mu(\xopt) \geq f_{\mu}(x_T) -  f_{\mu}(\hat{x}) \\
&\geq - \nabla f_{\mu}(x_T)^\top \delta - e \delta^\top \nabla^2 f_\mu(x_T) \delta \\
&\geq - \nabla f_{\mu}(x_T)^\top \delta - \frac{2e}{\mu} \delta^\top \left(\diag{\mb^\top y} + \diag{z} \right) \delta,
\end{align*}

where $z = [z_L, z_R]$; the first line used optimality of $\xopt$, the second used second order robustness of $f_\mu$, and the third used Lemma~\ref{lem:VC-derivatives}. We choose $\delta = - \frac{\mu^2}{4 e} \textbf{sign}\left(\nabla f_\mu (x_T) \right)$: this satisfies $\norm{\delta}_\infty \le \tfrac{\mu}{2}$ so
\[
\frac{\mu^2}{4e} \norm{\nabla f_\mu(x_T) }_1  - \frac{\mu^3}{8e} \norm{ \mb^\top y + z}_1 \leq \frac{\mu^3 n}{9e}.
\]
Now note 
\begin{align*}
\norm{ \mb^\top y + z}_1 &= \norm{ [\tmb^\top y]_L + z_L}_1 + \norm{ -[\tmb^\top y]_R + z_R}_1 \\
&\leq \norm{ [\tmb^\top y]_L + z_L -  \1_L}_1 + \norm{ -[\tmb^\top y]_R + z_R - \1_R}_1 + n = \norm{\nabla f_{\mu}(x_T)}_1 + n. 
\end{align*}
Plugging this in to the above expression and rearranging, we obtain
\[
\frac{\mu^2}{4e} \norm{\nabla f_{\mu}(x_T)}_1 \leq  \frac{\mu^3 n}{9 e} + \frac{\mu^3 n}{8e} + \frac{\mu^3}{8e} \norm{\nabla f_{\mu}(x_T)}_1,
\]
so $\norm{\nabla f_{\mu}(x_T)}_1 \leq \mu n$. This implies $\exists w$ with $\norm{w}_1 \leq \mu n$ where $\1 + w=  \mb^\top y + z$. Since $z$ is nonnegative, $\mb^\top y \leq \1 + w$ with $\norm{w}_1 \leq \mu n$ as desired. We finally lower bound $\1^\top y$. Let $v' = [x_L, -x_R] - \mu \1$. Taking the inner product of $\1 + w=  \mb^\top y + z$ with $v'$ and rearranging gives
\[
\1^\top x_L - \1^\top x_R - \mu n - \1^\top y + w^\top v' = y^\top (\tmb x - 1) - \mu \1^\top y + x_L^\top z_L - x_R^\top z_R - \mu \1^\top z,
\]
or 
\[
f_{\mu}(x) - \mu n + w^\top v' = y^\top (\tmb x - 1) + \1^\top y + x_L^\top z_L - x_R^\top z_R.
\]
Next, $w^\top v' \geq - \norm{w}_1 \norm{v'}_\infty \geq - 2 \mu n$. Further,  
since $y = \exp\left(\frac{1}{\mu} (1 - \tmb x)\right)$, for any $i$ either $[\tmb x]_i -1 \leq \frac{\veps}{2}$ or $y_i \leq \exp\left(-\frac{1}{\mu} \frac{\veps}{2}\right) = \frac{\veps}{2mn}$. Thus  $y^\top (\tmb x - 1) \leq \veps \1^\top y + \frac{\veps}{2} \arunsopt$. Similarly, we obtain $x_L^\top z_L - x_R^\top z_R \leq \veps n$. Plugging these in, and using $\mu \le \tfrac \veps 4$ and $M^* \le n$ yields
\[
f_{\mu}(x) - 3 \mu n -\veps n - \frac{\veps}{2} \arunsopt \leq (1+\veps) \1^\top y \implies \1^\top y \geq \frac{1}{1+\veps} f_{\mu}(x) - 3 \veps n.
\]
Here we used $\mu \leq \frac{\veps}{4}$ and $\arunsopt \leq n$. The claim follows from $f_{\mu}(x) \geq \arunsopt - \veps n$ via Lemma~\ref{lem:VC-relaxed}.
\end{proof}

\newcommand{\ytilde}{\widetilde{y}}
\begin{proof}[Proof of Theorem~\ref{thm:boxnewton_alg}]
Let $\arunsopt$ be the size of the maximum matching in $G$. We first preprocess the graph $G$ using Proposition~\ref{prop:vertexreduce} within the pass, space, and work budgets to reduce to $n = O(M^*\log(\eps^{-1}))$. Applying Algorithm~\ref{alg:boxconstrained} to $\widetilde{G}$ with $\veps = \frac{\eps}{12 \log(\eps^{-1})}$, by the second claim in Proposition~\ref{prop:boxnewton} we obtain $y  \in \R^{E}_{\ge 0}$ with $\1^\top y \geq  (1-\frac{\eps}{2})\arunsopt - 5 \veps n\geq  (1 - O(\eps)) M^*$, and
\[
\sum_{j \in V} \max \Par{ [\mb^\top y]_j - 1, 0 } \leq \mu n \le \frac{\eps}{12} \arunsopt.
\]
The conclusion follows upon applying Propositions~\ref{prop:cc} and \ref{lem:feasiblecorrect} to round the approximate matching $y$ to be sparse and feasible, and using the implementation of $\oracle$ from Lemma~\ref{lem:oracleexists}.
\end{proof}
 }
\arxiv{%
\section{Deferred proofs from Section~\ref{sec:value}}\label{app:deferredsherman}

In this section, we give a proof of Proposition~\ref{prop:shermancorrect}, which we restate for convenience. We recall the following definition: for convex-concave $f:\xset\times \yset \rightarrow \R$, we say a pair $(x, y)$ is an $\eps$-approximate saddle point if $f(x, y') - f(x', y) \le \eps$ $\forall x' \in \xset$, $y' \in \yset$. Before embarking on the proof, we require the following helper lemma for demonstrating correctness of Line 2 of Algorithm~\ref{alg:sherman}.

\begin{restatable}{lemma}{restatetruncateb}\label{lem:truncateb}
Consider \eqref{eq:boxsimplex} under the transformation described by Lines 1 and 2 of Algorithm~\ref{alg:sherman}. The approximate optimality of any point in $\Delta^m$ for the corresponding problem \eqref{eq:boxsimplexprimal} is unaffected by this transformation (i.e.\ if some point is $\eps$-approximately suboptimal for the new problem, it also has $\eps$-approximate optimality for the old problem). Moreover, using $O(n)$ space and one pass we can implement this transformation for the remainder of any semi-streaming algorithm under the model of Definition~\ref{def:semistream_matrix}.
\end{restatable}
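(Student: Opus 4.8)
\textbf{Proof proposal for Lemma~\ref{lem:truncateb}.}

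The plan is to analyze the two preprocessing steps (Lines~\ref{alg:sherman-pre-1} and~\ref{alg:sherman-pre-2} of Algorithm~\ref{alg:sherman}) separately, showing each alters the primal objective \eqref{eq:boxsimplexprimal} only by a constant scalar shift on the feasible region $\Delta^m$, so that $\eps$-approximate minimizers are preserved, and then checking that both can be implemented in $O(n)$ space and one pass. Recall the objective of \eqref{eq:boxsimplexprimal} is $h(x) \defeq c^\top x + \norm{\ma^\top x - b}_1$ over $x \in \Delta^m$.

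First I would handle the shift $c \gets c - (\min_{i^*} c_{i^*})\1$ (the second part of Line~\ref{alg:sherman-pre-1}). For $x \in \Delta^m$ we have $(\min_{i^*} c_{i^*})\1^\top x = \min_{i^*} c_{i^*}$, so $h$ changes by the fixed constant $-\min_{i^*} c_{i^*}$ on all of $\Delta^m$; hence the set of $\eps$-approximate minimizers over $\Delta^m$ is unchanged. For the row-removal step (removing coordinates $i$ with $c_i > \min_{i^*} c_{i^*} + 2\norm{\ma}_\infty$ and the corresponding rows of $\ma$), I would argue that no such coordinate is ever used by an (approximate) minimizer: putting any mass $x_i > 0$ on such a coordinate can be rerouted to the coordinate $i^*$ achieving the minimum cost without increasing $h$ by more than it saves. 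Concretely, moving a unit of simplex mass from coordinate $i$ to $i^*$ changes the linear term by $c_{i^*} - c_i \le -2\norm{\ma}_\infty$, while it changes $\norm{\ma^\top x - b}_1$ by at most $\norm{\ma_{i:}}_1 + \norm{\ma_{i^*:}}_1 \le 2\norm{\ma}_\infty$ (by the triangle inequality and the definition $\norm{\ma}_\infty = \max_i \norm{\ma_{i:}}_1$). So the net change is $\le 0$: there is always an optimal solution, and an $\eps$-approximate solution, supported on the retained coordinates, and restricting $\Delta^m$ to those coordinates does not change the optimal value or the property of being within $\eps$ of it. (An approximate minimizer of the reduced problem lifts back to the original by padding with zeros, with the same objective value, hence the same $\eps$-suboptimality.) For Line~\ref{alg:sherman-pre-2}, the truncation $b \gets \min(\max(b, -\norm{\ma}_\infty\1), \norm{\ma}_\infty\1)$, I would show that for every $x \in \Delta^m$ the vector $\ma^\top x$ lies entrywise in $[-\norm{\ma}_\infty, \norm{\ma}_\infty]$: indeed $\norm{\ma^\top x}_\infty \le \norm{\ma^\top}_1 \norm{x}_1 = \norm{\ma}_\infty$ since $\norm{\ma^\top}_1 = \norm{\ma}_\infty$ and $\norm{x}_1 = 1$. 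Hence for each coordinate $j$, clamping $b_j$ into $[-\norm{\ma}_\infty, \norm{\ma}_\infty]$ changes $|[\ma^\top x]_j - b_j|$ by exactly $|b_j - \tb_j|$ (the displacement past the boundary), which is independent of $x$; so $\norm{\ma^\top x - b}_1$ and $\norm{\ma^\top x - \tb}_1$ differ by the fixed constant $\norm{b - \tb}_1$ on all of $\Delta^m$, again preserving $\eps$-approximate minimizers.

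For the implementation claim, I would note that all three operations reduce to one linear scan of the stream plus $O(n)$ extra memory. The value $\cmax = \max_i c_i$ and $\min_{i^*} c_{i^*}$ can be computed and stored in $O(1)$ words in one pass over the $\{c_i\}$ stream; thereafter, whenever a row $(\ai, c_i)$ arrives, the algorithm checks whether $c_i > \min_{i^*} c_{i^*} + 2\norm{\ma}_\infty$ and, if so, simply ignores that row (and uses $c_i - \min_{i^*} c_{i^*}$ in place of $c_i$ otherwise) --- note $\norm{\ma}_\infty$ is the max $\ell_1$ norm of a row, also computable in the same pass if not already known. The vector $b$ is stored explicitly in $O(n)$ space and clamped coordinatewise in place. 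Since all subsequent accesses to $\ma$, $c$ in the algorithm go through the same streaming interface, emulating the modified stream costs only the bookkeeping of $\norm{\ma}_\infty$ and $\min_{i^*} c_{i^*}$ (which fit in $O(1)$ words) and the modified $b$ (which fits in $O(n)$ words).

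The main obstacle, and the only step requiring genuine care, is the row-removal argument: I must make sure the ``reroute mass to the cheapest coordinate'' exchange argument is tight enough that it holds simultaneously with the subsequent $c$-shift (so the threshold $2\norm{\ma}_\infty$ is exactly what makes the net change non-positive, matching the comment in the algorithm that this guarantees $\norm{c}_\infty \le 2\norm{\ma}_\infty$ as required by Lemma~5 of \cite{CohenST21}), and that it interacts correctly with approximate rather than exact minimizers --- i.e., that an $\eps$-suboptimal point for the reduced problem, viewed as an $\eps$-suboptimal point for the full problem, genuinely has the same suboptimality gap because the optimal values coincide. Everything else is a routine constant-shift bookkeeping check.
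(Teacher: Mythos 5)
Your proposal is correct and follows essentially the same route as the paper's proof: an exchange argument rerouting mass from an expensive coordinate to the cheapest one (with the $2\norm{\ma}_\infty$ threshold making the linear gain dominate the at-most-$2\norm{\ma}_\infty$-per-unit change in the $\ell_1$ term), a constant shift for subtracting $\min_i c_i$ over the simplex, the observation $\norm{\ma^\top x}_\infty \le \norm{\ma}_\infty$ turning the clamping of $b$ into a coordinatewise constant shift, and a one-pass, $O(n)$-space implementation storing $\norm{\ma}_\infty$, the extreme value of $c$, and the modified $b$. No gaps; the argument matches the paper's in both structure and the key estimates.
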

\begin{proof} We discuss correctness of each line separately.

\textit{Correctness of Line 1.} Let $i^* = \argmin_{i \in [m]} c_i$. We claim that the minimizer of $\inprod{c}{x} + \norm{\ma^\top x - b}_1$ only puts nonzero values on coordinates $i \in [m]$ where $c_i \le c_{i^*} + 2\norm{\ma}_\infty$. To see this, consider any $x \in \Delta^m$ where $x_i \neq 0$ for some $c_i \le c_{i^*} + 2\norm{\ma}_\infty$, and consider taking $\tx = x$ in every coordinate, except $\tx_{i^*} = x_{i^*} + x_i$ and $\tx_i = 0$ (i.e.\ we zero out the coordinate $i$ and shift its mass to $i^*$). Then,
\begin{align*}
\inprod{c}{x - \tx} > 2\norm{\ma}_\infty x_i = \norm{\ma}_\infty \norm{\tx - x}_1 \ge \norm{\ma^\top (\tx - x)}_1 \ge \norm{\ma^\top \tx - b}_1 - \norm{\ma^\top x - b}_1.
\end{align*}
Rearranging implies $\tx$ obtains better objective value for the problem \eqref{eq:boxsimplexprimal}. Hence, by ignoring all large coordinates, any $\eps$-approximately suboptimal point in the new problem is also $\eps$-approximately suboptimal for the old problem, because the objective values are the same and any feasible point in the new problem is also feasible for the old problem. Finally, shifting $c$ by a multiple of the all-ones vector affects the problem globally by a constant, since $x$ is in the simplex.

\textit{Correctness of Line 2.} Because $c^\top x$ is unaffected, it suffices to discuss the effect on
	\[\norm{\ma^\top x - b}_1 = \sum_{j \in [n]} \left|\Brack{\ma^\top x}_j - b_j\right|.\]
	For disambiguation call $b'$ the result of the truncation, and $b$ the vector in the original problem. Consider a coordinate where $b_j \neq b'_j$; suppose without loss that $b_j > \norm{\ma}_\infty$ (the other case is handled symmetrically). Then, since $\Brack{\ma^\top x}_j \le \norm{\ma}_\infty$ by definition,
	\[\left|\Brack{\ma^\top x}_j - b_j\right| = b_j - \Brack{\ma^\top x}_j = \Par{b_j - \norm{\ma}_\infty} + \Par{\norm{\ma}_\infty - \Brack{\ma^\top x}_j} = \Par{b_j - \norm{\ma}_\infty} + \left|\Brack{\ma^\top x}_j - b'_j\right|.\]
	Hence the effect is a constant scalar shift, as desired. Finally, regarding the semi-streaming implementation, $b$ is $n$-dimensional and hence we can store it in $O(n)$ space. We can also store $\norm{\ma}_\infty$ and $\cmax$ in one pass and constant space, and simulate the post-processed vector $c$ by modifying any coordinate $c_i$ encountered in a stream using these precomputed values.
\end{proof}

\restateshermancorrect*
\begin{proof}
	First, the effects of Lines 1 and 2 of Algorithm~\ref{alg:sherman} does not affect the problem~\eqref{eq:boxsimplexprimal} other than changing the objective value everywhere by a constant, by Lemma 5 of \cite{CohenST21} and Lemma~\ref{lem:truncateb} respectively. Next, the claim on the number of iterations required by Algorithm~\ref{alg:sherman} follows immediately from Theorem 2 of \cite{CohenST21}. It remains to show that the given value of $K$ suffices for Algorithm~\ref{alg:altmin} to solve the subproblems to $\delta$ additive accuracy.
	
	For simplicity, we will prove this bound on $K$ suffices for the computation of $w_t$; the bound for $z_{t + 1}$ follows analogously. Lemma 7 of \cite{JambulapatiST19} demonstrates that in solving the minimization subproblem of Algorithm~\ref{alg:altmin}, every iteration of alternating minimizaton decreases the suboptimality gap for the subproblem by a constant factor, so we only require a bound on the initial error of each subproblem in Line 5 of Algorithm~\ref{alg:sherman}. The subproblem is of the form: minimize over $(x, y) \in \Delta^m \times [-1, 1]^n$,
	\begin{equation}\label{eq:minwt}\begin{aligned}\inprod{\frac 1 3(\ma y_t + c) + |\ma|(y_t^2)}{x} + \inprod{\frac 1 3(-\ma^\top x_t + b) + 2\diag{y_t}|\ma|^\top x_t}{y} \\
	+ x^\top |\ma| (y^2) + 10\norm{\ma}_\infty \sum_{i \in [m]} x_i \log \frac{x_i}{[x_t]_i}. \end{aligned}\end{equation}
	It is immediate that the range of the first three summands in the objective \eqref{eq:minwt} over the domain $\Delta^m \times [-1, 1]^n$ is $O(\max(\norm{\ma}_\infty, \norm{c}_\infty, \norm{b}_1)) = O(\max(\norm{\ma}_\infty, \norm{b}_1))$. Here we used $\norm{c}_\infty \le 2\norm{\ma}_\infty$. Moreover, Theorem 2 of \cite{CohenST21} shows that the $x$ block of the minimizer of~\eqref{eq:minwt} is entrywise within a multiplicative factor $2$ from $x_t$; in other words, for all $i \in [m]$, $[x'_t]_i \in [\thalf [x_t]_i, 2[x_t]_i]$. Hence, the absolute value of the fourth summand in \eqref{eq:minwt} at the minimizer is bounded by 
	\[10\norm{\ma}_\infty \sum_{i \in [m]} x_i \log 2 = O\Par{\norm{\ma}_\infty}.\] 
	and at the initial point $x^{(0)} = x_t$ it has value $0$. Combining these bounds, and using that the desired accuracy is $\tfrac \eps 2$, yields the required bound on $K$. 
	
	To give the final result, we claim that any $\eps$-approximate saddle point $(\bx, \by)$ to a convex-concave function $f$ on the domain $\xset, \yset$ has that $f_{\xset}(\bx)$ approximates the value of the primal-only problem
	\[\min_{x \in \xset} f_{\xset}(x),\text{ where } f_{\xset}(x) \defeq \max_{y \in \yset} f(x, y), \]
	within an additive $\eps$. To see this, let $(x^\star, y^\star)$ be the exact saddle point of $f$ so that $f(x^\star, y^\star) = \opt$. By definition of $(\bx, \by)$,
	\[\Par{\max_{y \in \yset} f(\bx, y) - \opt} + \Par{\opt - \min_{x \in \xset} f(x, \by)} \le \eps.\]
	Moreover, both quantities on the left hand side are nonnegative by definition of $\opt$, i.e.
	\[\opt = f(x^\star, y^\star) = \min_{x \in \xset} \Par{\max_{y \in \yset} f(x, y)} = \max_{y \in \yset} \Par{\min_{x \in \xset} f(x, y)}.\] 
	Thus, by nonnegativity of the second quantity, we conclude that
	\[f_{\xset}(\bx) = \max_{y \in \yset} f(\bx, y) \le \opt + \eps = f_{\xset}(x^\star) + \eps.\]

\end{proof} 
\section{Matching tools from the literature}
\label{sec:vertex-size-redx}

In this section, we prove Lemma~\ref{lem:greedy} and Proposition~\ref{prop:vertexreduce}. We begin with the former.

\restategreedy*
\begin{proof}
	It is folklore that the greedy algorithm (adding edges one at a time to a matching, if and only if neither endpoint is already matched) is a $2$-approximation to the MCM, proving the approximation guarantee. Regarding the implementation, we can keep marking an array of vertex indices and check if a current edge violates a matched vertex, in $O(1)$ time per edge.
\end{proof}

 We now proceed to Proposition~\ref{prop:vertexreduce}. We first state our algorithm, a greedy iterative procedure which repeatedly adds a maximal matching from a maintained vertex set to the remaining vertices.

\begin{algorithm}
\DontPrintSemicolon
		\KwInput{Bipartite graph $G = (V, E)$ with MCM size $M^*$}
		\KwOutput{Vertex subset $V' \subseteq V$ with $|V'| = O(M^*\log(\eps^{-1}))$ and $G[V']$ that has MCM size $\ge (1 - \eps) M^*$}
		$V_0 \gets$ vertices incident to some maximal matching in $G$\;
		$N \gets \lceil 3\log(\eps^{-1}) \rceil$\;
		\For{$i \in [N]$}{
		$S \gets$ vertices in $V \setminus V_{i - 1}$ incident to some maximal matching between $V_{i - 1}$ and $V \setminus V_{i - 1}$\;
		$V_i \gets V_{i - 1} \cup S$\;
		}
		\Return{$V' \gets V_N$}
	\caption{$\RedSize(G, \eps)$}\label{alg:red-MCM-Vsize}
\end{algorithm}

We require one key technical claim for analyzing Algorithm~\ref{alg:red-MCM-Vsize}. 

\begin{lemma}\label{lem:progressreduce}
Fix an iteration $0 \le i \le N - 1$ of Algorithm~\ref{alg:red-MCM-Vsize}, and suppose the MCM size of $G[V_i]$ is $M_i$. Then, the MCM size of $G[V_{i + 1}]$ increases by at least $\tfrac 1 3 (M^* - M_i)$.
\end{lemma}
\begin{proof}
The symmetric difference between the maximum matchings of $G[V_i]$ and $G$ can be decomposed into even cycles and paths (to see this, every vertex in the symmetric difference has degree at most $2$). Hence, ignoring all even cycles, there is a vertex-disjoint augmenting path decomposition of size $k \defeq M^* - M_i$; call these paths $P_1, P_2, \ldots P_k$. For each path $P_j$, $j \in [k]$, all of its edges except the first and last are contained in $G[V_i]$; call the first and last edges $(x_j, x'_j)$ and $(y_j, y'_j)$, where $x_j, y_j \in G[V_i]$. We consider a few cases, ``processing'' paths sequentially.
\begin{enumerate}
	\item If both $(x_j, x'_j)$ and $(y_j, y'_j)$ are included in the maximal matching between $V_i$ and $V \setminus V_i$, this increases the maximum matching size in $V_{i + 1}$ by one since we can use the augmenting path.
	\item If both $x_j$ and $y_j$ are matched to other vertices $x''_j \neq x'_j$ and $y''_j \neq y'_j$, we can still use the resulting augmenting path in $V_{i + 1}$. The other vertices $x''_j$, $y''_j$ can only remove two other later paths from consideration in the process, where we consider paths sequentially. Note that if $x''_j$ or $y''_j$ is not an endpoint from a later path, this can only help the argument here.
	\item If one of $x_j$ and $y_j$ is matched to its corresponding endpoint in $P_j$ but the other is not, we can use the augmenting path and remove one other path from consideration.
\end{enumerate}
Because we took a maximal matching between $V_i$ and $S \setminus V_i$, at least one of the above cases must happen, so we increase the MCM size by at least $\tfrac 1 3 k$.
\end{proof}

We conclude with our analysis of Algorithm~\ref{alg:red-MCM-Vsize}.

\restatevertexreduce*
\begin{proof}
The correctness follows immediately by applying Lemma~\ref{lem:progressreduce}, and the fact that our original maximal matching (contained in $G[V_0]$) is of size at least $\thalf M^*$. We now prove correctness. For the first pass, the complexity follows directly from Lemma~\ref{lem:greedy}. For all additional iterations of the algorithm, note that we can only add $O(M^*)$ additional vertices to the current set since it is a valid matching in the original graph, and the iteration can be implemented in a single pass analogously to the proof of Lemma~\ref{lem:greedy}. The space overhead is only the maintenance of the current set $V_i$.
\end{proof}

\section{Cycle cancelling in low space}\label{sec:cc}

In this section, we provide implementation of a data structure which proves the following claim.

\restatecc*

Specifically, we reduce its proof to demonstrating the existence of a \emph{bipartite cycle-cancelling oracle}.

\begin{definition}[Bipartite cycle-cancelling oracle]
We call $\oracle$ a \emph{bipartite cycle-cancelling oracle} (BCCO) if it is associated with a (weighted) bipartite graph $G = (V, E, w)$, and given any vector $x \in \R_{\ge 0}^E$ supported on $L \le 2n$ edges, $\oracle$ outputs a vector $\tx \in \R_{\ge 0}^E$ such that $\mb^\top x = \mb^\top \tx$ and $\inprod{w}{\tx} \ge \inprod{w}{x}$, so that $\tx$ is supported on at most $n$ edges. \end{definition}

\begin{lemma}\label{lem:oracleexists}
A BCCO $\oracle$ is implementable in $O(L)$ space and $O(L \log n)$ work.
\end{lemma}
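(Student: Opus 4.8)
The plan is to implement the oracle with link/cut trees (Sleator--Tarjan), maintaining throughout the processing of the $L$ support edges of $x$ an acyclic flow $\tx$ that satisfies the invariants $\mb^\top \tx = \mb^\top x$ and $\inprod{w}{\tx} \ge \inprod{w}{x}$. Inductively, once all $L$ edges have been incorporated, the first two guarantees of a BCCO hold immediately, and acyclicity forces $\tx$ to be supported on at most $n$ edges (the support is a forest on the $O(L)$ vertices ever touched, and in each bipartite component the vertex count bounds the edge count). First I would recall, citing Proposition~\ref{prop:reducetopath}, that a link/cut tree supports $\Link$, $\Cut$, $\ChangeRoot$, $\LCA$, $\Min$, $\Add$, and $\Sum$ in $O(\log n)$ amortized time per operation, and that the representation uses $O(1)$ space per vertex (plus auxiliary splay-tree and parent-pointer overhead), hence $O(L)$ space overall since at most $L$ edges and $O(L)$ vertices ever appear.

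Next I would describe how to process a single incoming edge $e = (u,v)$. If $u$ and $v$ lie in distinct main trees (including the cases where $u$ or $v$ is new), I simply insert $e$ with its flow value via $\Link$: this creates no cycle and changes no existing flow value. If $u$ and $v$ already lie in the same main tree, the tree path from $u$ to $v$ together with $e$ forms an even cycle $C$. To preserve $\mb^\top \tx$ I alternately add and subtract a common positive scalar along the edges of $C$; to keep $\inprod{w}{\tx}$ non-decreasing I compute the alternating sum of $w$ along $C$ and choose the parity (add along the ``odd'' edges or along the ``even'' edges) for which this sum is nonnegative --- since $|C|$ is even, at least one choice works. I then set the scalar equal to the minimum current flow value among the edges that will be decreased, which zeros at least one such edge; that edge is removed by $\Cut$, restoring acyclicity. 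To guarantee no component ever disconnects, I process the support edges of each connected component in BFS order from a fixed root, so cuts only ever occur on cycle edges, which cannot disconnect the component.

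The implementation detail I expect to be the main obstacle is carrying out each cycle cancellation using only $O(1)$ link/cut operations, so that the total work is $O(L\log n)$ by Proposition~\ref{prop:reducetopath}. To this end I would maintain three link/cut trees $W$, $T_+$, $T_-$ of identical topology: in $W$ each edge carries its weight signed by $-1$ raised to its current depth, so that a $\Sum$ query along a root path yields an alternating weight sum; in $T_+$ the positively-signed edges carry their true flow value while the others carry a dummy value $n^2\norm{x}_\infty$ (never the $\Min$), and $T_-$ is the mirror image. On a cycle I would call $\LCA(u,v)$ and $\ChangeRoot$ on all three trees, read off the profitable direction from $\Sum$ queries on $W$, read off the cancellation amount from a $\Min$ query on whichever of $T_\pm$ matches that direction, apply $\Add$ to the $u$-to-root and $v$-to-root paths in both copies of $T$ with the amount and its negation placed appropriately, $\Cut$ the zeroed edge from all three trees, and restore the original root to keep depth parities consistent. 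Each step is a constant number of link/cut operations, and each of $W$, $T_+$, $T_-$ uses $O(L)$ space, giving the claimed $O(L)$ space and $O(L\log n)$ work.
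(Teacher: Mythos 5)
Your proposal is correct and follows essentially the same route as the paper's own implementation: maintaining the acyclic flow via link/cut trees, processing each component's support edges in BFS order so cuts never disconnect a tree, and using the three identically-shaped trees $W$, $T_+$, $T_-$ (alternating depth-signed weights, true flow values on one parity and the dummy value $n^2\norm{x}_\infty$ on the other) together with $\LCA$, $\ChangeRoot$, $\Sum$, $\Min$, $\Add$, and $\Cut$ to cancel each cycle with $O(1)$ operations, then restoring the root to keep parities consistent. The accounting via Proposition~\ref{prop:reducetopath} giving $O(L)$ space and $O(L\log n)$ work matches the paper's argument as well.
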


\begin{proof}[Proof of Proposition~\ref{prop:cc}]
If the stream length is $L \le 2n$, applying $\oracle$ obtains the desired results.
	
Otherwise, it suffices to divide the stream into chunks of $n$ edges, and repeatedly call $\oracle$. Specifically, we can input the first $2n$ edges of the stream into $\oracle$ and produce a set of $n$ edges, then input these $n$ edges with the next $n$ edges of the stream, and so on. Inducting on the properties of $\oracle$, we obtain the value and feasibility guarantees on the final output. Regarding the space and work guarantees, it suffices to repeatedly apply Lemma~\ref{lem:oracleexists}, and note that we can reuse the space for $\oracle$.
\end{proof}

The remainder of this section is devoted to proving Lemma~\ref{lem:oracleexists}. Specifically, we demonstrate how to use the link/cut tree data structure of \cite{SleatorT83, Tarjan83}, with a few small modifications, to provide the required oracle. In Section~\ref{sssec:lctree} we state some preliminaries on link/cut trees which are known from prior work (namely a set of supported operations), and we put these components together in Section~\ref{sssec:oracleproof} to give an implementation which proves Lemma~\ref{lem:oracleexists}.

\subsection{Link/cut tree description}\label{sssec:lctree}

The description in this section primarily follows the implementation of link/cut trees given in Chapter 5 of \cite{Tarjan83}, with a few terminology changes following the later lecture notes of \cite{Dem12}. It is primarily a summary of prior work, where we state the supported operations that we will use.

\textbf{Data structure representation.} The link/cut tree implementation of \cite{SleatorT83, Tarjan83} we will use is maintained as a forest, where each (undirected, rooted) tree in the forest is decomposed into ``preferred paths'' and each path is stored as a splay tree, along with a parent pointer to the neighbor of the preferred path endpoint closer to the root. Every edge in the forest has an associated value (which can change). We refer to the tree representations of preferred paths as ``auxiliary trees'' and the tree representation of all connected vertices as a ``main tree.'' The link/cut tree supports queries or modifications to main trees, including the following operations.
\begin{enumerate}
	\item $\Link(v, w, C)$: add an edge of value $C$ between $v$ and $w$ in the main tree (if they are in different trees, join them so that $w$ is the parent of $v$).
	\item $\Cut(v)$: remove the edge between $v$ and its parent in the main tree.
	\item $\ChangeRoot(r)$: change the root of the main tree to the vertex $r$.
	\item $\LCA(v, w)$: return the least common ancestor of $v$ and $w$ in the main tree.
	\item $\Min(v)$: return the smallest edge value between $v$ and its root in the main tree.
	\item $\Add(v, C)$: add $C$ to all edge values between $v$ and its root in the main tree.
	\item $\Sum(v)$: return the sum of all edge values between $v$ and its root in the main tree.
\end{enumerate}

For a more complete and formal description, we refer the reader to \cite{Tarjan83, Dem12}.

\textbf{Space complexity of link/cut trees.} Based on the representation described above, it is clear that as long as the amount of additional information we need to store in the auxiliary trees to implement path aggregation operations is a constant per vertex, the overall space complexity is $O(n)$ where $n$ is an upper bound on the number of vertices of all main trees. The only remaining space overhead is storing all auxiliary trees, and the parent pointers of preferred paths in main trees. We include this discussion because it is not explicitly stated in the source material.

\textbf{Work complexity of link/cut trees.} We state the guarantees of the link/cut tree in the following claim, which follows by the analysis of the works \cite{SleatorT83, Tarjan83}.

\begin{proposition}[Main result of \cite{SleatorT83, Tarjan83}]\label{prop:reducetopath}
The amoritized cost of $L$ calls to any of the operations $\Link$, $\Cut$, $\ChangeRoot$, $\LCA$, $\Min$, $\Add$, or $\Sum$ is bounded by $O(L\log n)$.
\end{proposition}

\subsection{Implementation of BCCO}\label{sssec:oracleproof}

We now prove Lemma~\ref{lem:oracleexists}. We first outline our approach in constructing the oracle, and then describe a concrete implementation by using link/cut trees.

\textbf{Approach outline.} The main goal is to show how to preserve an acyclic matching $\tx$ so that $\mb^\top \tx = \mb^\top x$ and $\inprod{w}{\tx} \ge \inprod{w}{x}$ after processing every edge, in a total of $O(n\log n)$ work using the link/cut tree. Inductively it is clear that the output will satisfy all the requirements of $\oracle$.

We now describe how to process a single input edge $e = (u, v)$ where $u \in L$ and $v \in R$ are on opposite sides of the bipartition. First, if $u$ and $v$ are both not in the data structure, create a new main tree consisting of just this edge (with say $u$ as the root). If only one is in the data structure then create a new edge with value $x_e$ appropriately using $\Link$. If $u$ and $v$ are in different main trees, we can call $\Link(u, v, x_e)$. In these cases, no cycles are created and no edge values are changed.

The last case is when $u$ and $v$ are in the same main tree. The path between $u$ and $v$ in their main tree and the new edge $(u, v)$ forms a cycle. In order to preserve $\mb^\top x$, it suffices to alternate adding and subtracting some amount $C$ from edges along the cycle. In order to make sure $\inprod w x$ is monotone, it suffices to compute the alternating sum of weights along the cycle to pick a parity, i.e.\ whether we add from every odd edge in the cycle and subtract from every even edge, or vice versa (since all cycles are even length, one of these will increase the weight). Once we pick a parity, we compute the minimum value amongst all edges of that parity in the cycle and alternate adding and subtracting this value; this will result in the tree becoming acyclic and preserve all invariants.

We finally remark that it suffices to divide the support of the input $x$ into connected components, and remove cycles from each connected component. By designating a ``root vertex'' in each connected component and processing the edges in that component in BFS order from the root, it is clear that the tree corresponding to that component will never become disconnected (edges are only removed when a cycle is formed, and removing any edge on the cycle will not disconnect the tree). Thus, it suffices to discuss how to implement cycle cancelling for a single connected component.

\textbf{Implementation.} We discuss the implementation of cycle cancelling for a single connected component. It will use three link/cut trees, called $W$, $T_+$, and $T_-$; the topology of these three trees will always be the same (i.e.\ they are the same tree up to the edge values). Roughly speaking, $W$ will contain edge weights with alternating signs (e.g.\ each edge's weight will be signed $(-1)^{\text{depth of edge}}$), and will be used to determine the direction of cycle cancelling to preserve $\inprod{w}{\tx} \ge \inprod{w}{x}$. Further, $T_+$ and $T_-$ will each correctly contain half of the edge values of the current fractional matching (depending on their sign in $W$), and the other edge values will be set to a large quantity. They are used to compute the minimum (signed) edge value in cycles and to modify the fractional matching.

Consider processing a new edge $e = (u, v)$ with weight $w_e$ and value $x_e$. If this edge is in any of the non-cycle-creating cases described in the outline, i.e.\ its endpoints do not belong in the same tree with a path between them, then we add a copy to each of $W$, $T_+$, and $T_-$. Depending on the depth of the edge in $W$ (which we can check by looking at the sign of its parent edge), we either give it value $w_e$ or $-w_e$. If it is positive in $W$, then we give it value $x_e$ in $T_+$ and value $n^2 \norm{x}_\infty$ in $T_-$; otherwise, we swap these values. We choose the value $n^2 \norm{x}_\infty$ for one copy of the edge, so that it never becomes the minimum value edge returned by $\Min$ throughout the algorithm.

Finally, we handle the case where we need to remove a cycle (when $u$ and $v$ are in the same tree, and the edge is not in the tree). We start by determining which half of edges in the cycle we want to remove value from, and which we should add to. To do this, we query $r = \LCA(u, v)$ and $\ChangeRoot(r)$ on the tree in $W$. Then, we compute the alternating sums of weights in each half of edges in the cycle. This allows us to determine a direction to preserve $\inprod w \tx \ge \inprod w x$. We also call $\LCA(u, v)$ and $\ChangeRoot(r)$ for both copies of the main tree in $T$.

We next determine the amount we wish to alternatingly add and subtract along the $u$-to-$r$ and $v$-to-$r$ paths by calling $\Min$ on the appropriate tree, $T_-$ or $T_+$ (corresponding to the direction obtained from querying $W$). We then call $\Add$ on both copies of the main tree in $T$ for these paths, with the value obtained by the $\Min$ queries or its negation appropriately. This zeros one edge, which we $\Cut$ from all three copies of the tree. Finally, we revert to the original root in all three trees to maintain correctness of signs. Overall, the number of link/cut tree operations per edge in the stream is a constant, so Proposition~\ref{prop:reducetopath} bounds the total work by $O(n\log n)$.

\section{Sampling for rounding linear programming solutions}\label{sec:samplelp}

In this section, we give a general procedure for rounding a fractional solution which is returned by our algorithm for box-simplex games in Section~\ref{sec:value}. This procedure applies to general box-simplex problems, beyond those with combinatorial structure, so we include this section for completeness. In particular, directly applying the sparsity bounds of this section to our matching problems directly imply $\tO(n \cdot \text{poly}(\eps^{-1}))$ space bounds for the various matching-related applications in the paper, up to width parameters, which roughly match our strongest results up to the $\eps^{-1}$ dependence. As an example, we give an application of this technique to MCM at the end of this section.

 Given streaming access to an approximate fractional solution $x$ on the simplex (via an implicit representation), a natural way of constructing a low-space approximate solution is to randomly sample each entry of $x_i$ and reweight to preserve expected objective value; this is the rounding strategy we analyze. We use the following Algorithm~\ref{alg:randsample}, parameterized by some prescribed $\{M_i\}_{i \in [m]}$.

\begin{algorithm}
\DontPrintSemicolon
\KwInput{Coordinates of $x \in \R_{\ge 0}^m$ in streaming fashion, sample count $K$, parameters $\{M_i\}_{i \in [m]}$}
		\For{$i\in[m]$}{
		Draw $K$ independent random variables $\{X_i^k\}_{k \in [K]}$, where 
		\[X_i^k = \begin{cases}M_i & \text{with probability } \tfrac{x_i}{M_i}\\ 0 &\text{otherwise}\end{cases}.\]\;
		$\hat{x}_i \gets \tfrac{1}{K}\sum_{k\in[K]}X_i^k$\;
		}
		\Return{$\hat{x}$}
	\caption{$\RandomSample(x, K, \{M_i\}_{i \in [m]})$}\label{alg:randsample}
\end{algorithm}

\subsection{Concentration bounds}

For proofs in this section, as well as later, we crucially rely on well-known concentration properties of bounded random variables. We state here a few facts used repeatedly throughout.

\begin{proposition}[Chernoff bound]\label{prop:cher-down}
	For $K$ independent scaled Bernoulli random variables $\{X_k\}_{k\in[K]}$ satisfying $X_k = N_k$ with probability $p_k$, $0<N_k \le 1$ for all $k \in [K]$, and all $0<\delta<1$,
	\[
	\Pr\left(\left|\sum_{k\in[K]}X_k-\sum_{k\in[K]}\E X_k\right| \ge \delta \sum_{k\in[K]}\E X_k\right)\le 2\exp\left(-\frac{\delta^2\sum_{k\in[K]}\E X_k}{3}\right)
	\]
\end{proposition}

We give a simple generalization of Proposition~\ref{prop:cher-down} to the case where the scaled Bernoulli variables are allowed to take on negative values.

\begin{corollary}[Generalized Chernoff bound]\label{cor:cher-down}
	For $K$ independent scaled Bernoulli random variables $\{X_k\}_{k\in[K]}$	satisfying $X_k= N_k$ with probability $p_k$, $0< |N_k| \le 1$ for all $k \in [K]$, and all $0<\delta<1$,
	\[
	\Pr\left(\left|\sum_{k\in[K]}X_k-\sum_{k\in[K]}\E X_k\right| \ge \delta \sum_{k\in[K]}\E |X_k|\right)\le 4\exp\left(-\frac{\delta^2\sum_{k\in[K]}\E |X_k|}{3}\right)
	\]
\end{corollary}
\begin{proof}
	Divide the set $[K] = \mathcal{K}^+\cup\mathcal{K}^-$, where we define $\mathcal{K}^+\defeq\{k\in[K]| N_k\ge0\}$ and $\mathcal{K}^+\defeq\{k\in[K]| N_k<0\}$. Applying Proposition~\ref{prop:cher-down} to $\sum_{k\in\mathcal{K}^+}X_k$ 	 and $\sum_{k\in\mathcal{K}^-}-X_k$ yields the result.
\end{proof}

Furthermore, the following one-sided Chernoff bound holds when $0<N_k\le 1$ and $\delta\ge 1$.

\begin{proposition}[One-sided Chernoff bound]\label{prop:cher-down-oneside}
	For $K$ independent scaled Bernoulli random variables $\{X_k\}_{k\in[K]}$ satisfying $X_k = N_k$ with probability $p_k$, $0<N_k \le 1$ for all $k \in [K]$, and all $\delta>0$,
	\[
	\Pr\left(\sum_{k\in[K]}X_k-\sum_{k\in[K]}\E X_k \ge \delta \sum_{k\in[K]}\E X_k\right)\le\exp\left(-\frac{\delta^2\sum_{k\in[K]}\E X_k}{2+\delta}\right)
	\]
\end{proposition}

\begin{proposition}[Bernstein's inequality]\label{prop:bernstein}
	For $K$ independent random variables $\{X_k\}_{k \in [K]}$	satisfying $|X_k|\le C$ with probability one, let $V = \sum_{k\in[K]} \Var[X_k]$. Then for all $t \ge 0$,
	\[
	\Pr\left(\left|\sum_{k\in[K]}X_k-\sum_{k\in[K]}\E X_k\right| \ge t\right) \le 2\exp\left(-\frac{t^2}{2V+2Ct/3}\right).
	\]
\end{proposition}

\subsection{Random sampling guarantees}

We first give a general guarantee on the approximation error incurred by random sampling via Algorithm~\ref{alg:randsample}. While the guarantees are a bit cumbersome to state, they become significantly simpler in applications. For instance, in all our applications, all binary random variables are scaled with $M_i$ such that  $\max_{i \in [n]} M_i \le 1$ (see Lemma~\ref{lem:gen-rounding} for definition), and the bounds on $\ma^\top \hx - \ma^\top x$ become standard multiplicative error approximations when the matrix $\ma$ is all-positive.

\begin{lemma}\label{lem:gen-rounding}
	Consider an instance of problem \eqref{eq:boxsimplex} parameterized by $\ma$, $b$, $c$. For some $x \in \Delta^m$ whose coordinates can be computed in streaming fashion, define for all $j \in [n]$,
	\[B_j = \Brack{\left|\ma\right|^\top x}_j.\] 
	Let $\hat{x}$ be the output of Algorithm~\ref{alg:randsample} on input $x$ with 
	\[M_{i} = \min_{j\in [n]}\frac{B_j}{|\ma_{ij}|}\text{ and } K= \frac{12\log(mn)}{\eps^2}.\]
	With probability at least $1 - (mn)^{-1}$, $\hat{x}$ satisfies the following properties for $B \defeq \max_{i\in[m]} M_i$:
	\begin{align*}
	\left|\Brack{\ma^\top \hat{x}-\ma^\top x}_j\right|& \le \eps B_j \text{ for all } j \in [n], \quad |\norm{\hat{x}}_1 - \norm{x}_1| \le \eps \max(1,B), \\
	\left|c^\top\hat{x}-c^\top x\right| & \le \eps\norm{c}_\infty \max(1,B),\\
	\|\hat{x}\|_0 & =  O\left(\left(1 + \sum_{i \in [m]} x_i \max_{j\in[n]}\frac{|\ma_{ij}|}{B_j}\right)\cdot\frac{\log(mn)}{\eps^2}\right).
	\end{align*}
\end{lemma}

\begin{proof}
	First note that it is immediate by definition of the $B_j$ to see that all $\tfrac{x_i}{M_i}=\max_{j\in[n]}\tfrac{|\ma_{ij}|x_i}{B_j}\le 1$ are valid sampling probabilities for all $i\in[m]$. Also, recall that entrywise
	\[\hat{x}_i = \frac{1}{K}\sum_{k \in [K]} X_i^k.\]	
	We show the first property; fix some $j \in [n]$ and consider $[\ma^\top \hx]_j - [\ma^\top x]_j $. Applying Corollary~\ref{cor:cher-down} to the random variables $\{\tfrac{1}{B_j}\ma_{ij}X^k_{i}\}_{i \in [m], k \in [K]}$ with $\delta = \eps$, we see by definition of $K$ that
	\begin{align*}
	\sum_{i \in \mathcal[m], k \in [K]} \frac{1}{B_j}\ma_{ij}X^k_{i} = \frac{K}{B_j}  \Brack{\ma^\top \hx}_j, & \\
	\sum_{i \in \mathcal[m], k \in [K]} \E\Brack{\frac{1}{B_j} \ma_{ij}X_{i}^k} = \frac{K}{B_j} [\ma^\top x]_j, & \;\\ \sum_{i \in \mathcal[m], k \in [K]} \E\left|\frac{1}{B_j} \ma_{ij}X_{i}^k\right| = \frac{K}{B_j} [|\ma|^\top x]_j,& \\
	\implies \Pr\left(\frac{K}{B_j}\left|[\ma^\top \hx]_j-[\ma^\top x]_j\right| \ge \delta \frac{K}{B_j}[|\ma|^\top x]_j\right)
	& \le 4\exp\left(-\frac{\delta^2 K[|\ma|^\top x]_j}{3B_j}\right)\le \frac{4}{(nm)^4},
	\end{align*}
	where for the last inequality we use definitions of $\delta$, $K$, and that $B_j= [|\ma|^\top x]_j$, for all $j\in[n]$.

	This conclusion for a coordinate $j \in [n]$ is equivalent to $\left|[\ma^\top \hx]_j - [\ma^\top x]_j \right| \le \eps B_j$. Union bounding over all $j \in [n]$, we thus have with probability at least $1-\tfrac{4}{n^3m^4}$ 
	\[\left|\left[\ma^\top \hx - \ma^\top x\right]_j\right|\le \eps B_j,\quad\forall j\in[n].\]
	Now we show the second and third properties. 
	Given $B\defeq\max_{i\in[m]}M_i$, we first apply Proposition~\ref{prop:bernstein} on the sum $\sum_{i\in[m]}K c_{i}\hat{x}_{i} = \sum_{i \in [m], k \in [K]} c_i X_i^{k}$, using the bound
	\begin{align*}\sum_{i \in [m], k \in [K]} \Var\Brack{c_{i} X_{i}^k} 
	&\le K\norm{c}_{\infty}^2 \sum_{i \in [m]} \Var\Brack{X_i^k} \\
	&\le K\norm{c}_{\infty}^2\sum_{i \in [m]} x_i M_i \le K\norm{c}_\infty^2 B.\end{align*}
	Thus, we can choose parameters $C = \norm{c}_{\infty}B$, $V = K\norm{c}_{\infty}^2B$ to obtain the following bounds for $K\ge \tfrac{12\log(mn)}{\eps^2}$ depending on whether $B>1$ or $B\le1$. For $B>1$, we have
	\begin{align*}
	\Pr\left(K\left|\sum_{i \in [m]} c_{i}\hx_{i}- \sum_{i \in [m]}c_{i} x_{i} \right|\ge \eps B\norm{c}_{\infty} K\right)&\le2\exp\left(-\frac{\eps^2B^2\norm{c}_{\infty}^2 K^2}{2K\norm{c}_{\infty}^2B+ 2\norm{c}_{\infty}^2B^2K\eps/3}\right) \\&\le \frac{1}{4(mn)^2}.
	\end{align*}
	
	For $B\le1$, we have
	\begin{align*}
	\Pr\left(K\left|\sum_{i\in[m]} c_{i}\hx_{i}- \sum_{i\in [m]}c_{i} x_{i} \right|\ge \eps\norm{c}_{\infty} K\right)&\le2\exp\left(-\frac{\eps^2\norm{c}_{\infty}^2 K^2}{2K\norm{c}_{\infty}^2B+ 2\norm{c}_{\infty}^2BK\eps/3}\right) \\&\le \frac{1}{4(mn)^2}.
	\end{align*}
	
	Altogether this implies the third conclusion, and the second conclusion follows as a special case when specifically picking $c=\1_n$. Each holds with probability $\ge 1-\tfrac{1}{4}(mn)^{-2}$.
	
	Finally, for all $i\in [m]$, let $Y_{i}=1$ if $\hat{x}_{i}\neq 0$ and $Y_{i} = 0$ otherwise. It is straightforward to see that $Y_{i}=1$ with probability 
	\begin{align*}
	1-\Par{1-\max_{j\in[n]}\frac{|\ma_{ij}|x_{i}}{B_j}}^{K} & \le K\max_{j\in[n]}\frac{|\ma_{ij}|x_{i}}{B_j}\\
	\implies \sum_{i \in [m]} \E[Y_{i}] \le K \sum_{i \in [m]} x_i \max_{j\in[n]}\frac{|\ma_{ij}|}{B_j} & \le K \max\left( \sum_{i \in [m]} x_i \max_{j\in[n]}\frac{|\ma_{ij}|}{B_j},1\right).
	\end{align*}
	
	Thus, by applying the one-sided Chernoff bound in Proposition~\ref{prop:cher-down-oneside} (where we consider the cases where $\sum_{i \in [m]} x_i \max_{j\in[n]}\frac{|\ma_{ij}|}{B_j}\ge1$ and $\sum_{i \in [m]} x_i \max_{j\in[n]}\frac{|\ma_{ij}|}{B_j}\le1$ separately), 
	\[
	\Pr\left( \sum_{i \in [m]} Y_{i} \ge 2K+2K\Par{\sum_{i \in [m]} x_i \max_{j\in[n]}\frac{|\ma_{ij}|}{B_j}}\right) \le \frac{1}{4(mn)^2}.
	\]
	Finally, applying a union bound, all desired events hold with probability $ 1-\tfrac{1}{mn}$.
\end{proof}

Note that in the semi-streaming model, one can choose $B_j \defeq \Brack{|\ma|^\top x}_j$ and compute all such values in one pass when $x$ is of the form in Lemma~\ref{lem:implicitx}; however, occasionally we will choose larger values of $B_j$ to obtain improved sparsity guarantees. Thus, we provide the following one-sided guarantee, which holds when $\ma_{ij}\ge 0$, $\forall i,j$ and $B_j\ge [\ma^\top x]_j$.

\begin{corollary}\label{cor:gen-rounding}
	Consider an instance of problem \eqref{eq:boxsimplex} parameterized by $\ma$ with $\ma_{ij}\ge0$ $\forall$ i,j, and $b$, $c$. For some $x \in \Delta^m$ whose coordinates can be computed in streaming fashion, suppose we have bounds for all $j \in [n]$,
	\[B_j \ge \Brack{\ma^\top x}_j.\] 
	Let $\hat{x}$ be the output of Algorithm~\ref{alg:randsample} on input $x$ with 
	\[M_{i} = \min_{j\in [n]}\frac{B_j}{|\ma_{ij}|}, \text{ and } K= \frac{12\log(mn)}{\eps^2}.\]
	With probability at least $1 - (mn)^{-1}$, $\hat{x}$ satisfies the following properties for $B \defeq \max_{i\in[m]} M_i$:
	\begin{align*}
	\Brack{\ma^\top \hat{x}-\ma^\top x}_j\le \eps B_j \text{ for all } j \in [n],& \quad |\norm{\hat{x}}_1 - \norm{x}_1| \le \eps \max(1,B), \\
	\left|c^\top\hat{x}-c^\top x\right| \le \eps\norm{c}_\infty \max(1,B),
	\quad \|\hat{x}\|_0 & =  O\left(\left(1 + \sum_{i \in [m]} x_i \max_{j\in[n]}\frac{|\ma_{ij}|}{B_j}\right)\cdot\frac{\log(mn)}{\eps^2}\right).
	\end{align*}
\end{corollary}

Note that the first and fourth properties of Corollary~\ref{cor:gen-rounding} are one-sided in that we only wish to upper bound a property of $\hx$. The proof is identical to Lemma~\ref{lem:gen-rounding}, except that we use Proposition~\ref{prop:cher-down-oneside} with $\delta = \eps\cdot\tfrac{B_j}{[|\ma|^\top x]_j}>0$ instead of Proposition~\ref{prop:cher-down} in cases where $\delta \ge 1$, which suffices for the one-sided bound of the first property. Similarly, when $\delta\ge1$ as in the proof of the fourth property, the one-sided Chernoff bound only helps concentration. Next, we give an end-to-end guarantee on turning Algorithm~\ref{alg:sherman} into a solver for the fractional problem, via applying Algorithm~\ref{alg:randsample} on the output.

\begin{lemma}\label{lem:gen-rounding-back}
	Given $(x,y)$, an $\tfrac \eps 2$-approximate saddle point of~\eqref{eq:boxsimplex}, let \[B_j =[|\ma|^\top x]_j,\quad B = \max_{i\in[m]}\min_{j\in [n]}\frac{B_j}{|\ma_{ij}|}\] and $\hx = \RandomSample(x, K, \{M_i\}_{i \in [m]})$ with
	\[M_i = \min_{j \in [n]} \frac{B_j }{|\ma_{ij}|},\text{ and } K =O\Par{ \frac{\log(mn)\Par{\Par{\norm{c}_\infty^2 + \norm{\ma}_\infty^2}(1+B)^2+\Par{\sum_{j\in[n]} B_j}^2}}{\eps^2}}.\]
	With probability $1-(mn)^{-1}$, $(\tfrac{\hx}{\|\hx\|_1},y)$ is an $\eps$-approximate saddle point to~\eqref{eq:boxsimplex}. Moreover, the total space complexity of Algorithm~\ref{alg:sherman} and Algorithm~\ref{alg:randsample} to compute and store the output $(\tfrac{\hx}{\|\hx\|_1}, y)$ is
	\[O\Par{\log n\Par{\sum_{i \in [m]} x_i \max_{j \in [n]} \frac{|\ma_{ij}|}{B_j}} \cdot \frac{\Par{\Par{\norm{c}_\infty^2 + \norm{\ma}_\infty^2}(1+B)^2+\Par{\sum_{j\in[n]} B_j}^2}}{\eps^2} + \frac{n\norm{\ma}_\infty}{\eps} }.\]
\end{lemma}

\begin{proof}
	Our choice of $K$ is with respect to an accuracy parameter on the order of  
	\[\frac{\eps}{(\norm{c}_\infty+\norm{\ma}_\infty) (1+B) + \sum_{j\in[n]} B_j}.\] Recall that an $\eps$-approximate saddle point to a convex-concave function $f$ satisfies
	\[
	\max_{\by \in \yset} f(x, \by) - \min_{\bx \in \xset} f(\bx, y) \le \eps.
	\]	
	For our output $(\tfrac{\hx}{\|\hx\|_1}, y)$, since we keep the same box variable $y$, it suffices to show that the side of the duality gap due to $x$ and $\tfrac{\hx}{\|\hx\|_1}$ does not change significantly. Namely, we wish to show
	\[ \max_{\by \in \yset} f\Par{\frac{\hx}{\|\hx\|_1}, \by} - \max_{\by \in \yset} f(x, \by) = \Par{c^\top \frac{\hx}{\|\hx\|_1} + \norm{\ma^\top \frac{\hx}{\|\hx\|_1} - b}_1} - \Par{c^\top x + \norm{\ma^\top x - b}_1} \le \frac{\eps}{2}.\]
	By the triangle inequality, it equivalently suffices to show that
	\begin{align*}c^\top \hx - c^\top x \le \frac{\eps}{8},\; c^\top \hx \Par{\frac{1}{\norm{\hx}_1 } - 1} \le \frac{\eps}{8},\\
	\norm{\ma^\top\hx - \ma^\top x}_1 \le \frac{\eps}{8},\; \norm{\ma^\top \hx}_1 \left|\frac{1}{\norm{\hx}_1 } - 1 \right| \le \frac{\eps}{8}.\end{align*}
	The first and third conclusions hold by applying Lemma~\ref{lem:gen-rounding} for the choice of $K$. The second and fourth hold by $|c^\top \hx|\le \norm{c}_\infty\norm{\hx}_1$ and $\norm{\ma^\top\hx}_1\le \norm{\ma}_\infty\norm{\hx}_1$ and then applying Lemma~\ref{lem:gen-rounding} for the choice of $K$. Finally, the desired sparsity follows by combining the space bound of the output (via Lemma~\ref{lem:gen-rounding}) and the space complexity of implicitly representing the average iterate.
\end{proof}

\subsection{Application: rounding MCM solutions}

We give a simple application of our random sampling framework to computing an explicit low-space approximate MCM. While the space complexity does not match our strongest results based on a low-space cycle cancelling implementation, we hope it is a useful example of how to more generally sparsify fractional solutions of box-simplex games without explicit combinatorial structure.

Following the reductions of Section~\ref{sec:flowround}, we assume in this section that we have a simplex variable $x \in \Delta^m$ and a matching size $\bM \in [1, n]$ such that
\[\mb^\top (\bM x) \le \1_V,\; \bM\ge (1 - \eps)M^*,\]
where $M^*$ is the maximum matching size. We now show how to apply our random sampling procedure, Algorithm~\ref{alg:randsample}, to sparsify the support of the matching without significant loss.

\begin{corollary}\label{lem:MCM-rounding}
	Suppose for an MCM problem, $x \in \Delta^{E}$ satisfies $\mb^\top x \le \tfrac{1}{\bM}\1_V$, and $\eps \in (0, 1)$. Let $\hat{x}$ be the output of Algorithm~\ref{alg:randsample} on input $x$ with $m_{e} = \tfrac 1 \bM$ for all $e \in E$, and $K\defeq \tfrac{12\log n}{\eps^2}$.
	Then with probability at least $1 - n^{-2}$, the output $\hat{x}$ satisfies the following properties:
	\[
	\mb^\top \hat{x}\le 
	\frac{1+\eps}{\bM}\1_V, \quad |\norm{\hat{x}}_1 - 1| \le \eps,\quad \|\hat{x}\|_0 =  O\left(\frac{\bM\log n}{\eps^2}\right).
	\]
\end{corollary}

\begin{proof}
	It is straightforward to see that by the assumptions, we can take $\ma = \bM \mb$ and $B_v = 1$ for all $v \in V$, and choose the accuracy level to be $\eps$ in Lemma~\ref{lem:gen-rounding}. Thus, by the conclusions of Lemma~\ref{lem:gen-rounding}, noting that $M_e = \tfrac{1}{\bM} \le 1$ holds for all $e\in E$, we conclude that with probability at least $1 - n^{-2}$, all desired guarantees hold:
	\begin{align*}
	& \bM\mb^\top \hat{x}\le \bM\mb^\top x + \eps\1_V\le(1+\eps)\1_V,\\
	& |\norm{\hat{x}}_1 - 1| \le \eps,\\
	& \|\hat{x}\|_0 =  O\left(\left(\sum_{e \in E} x_e \max_{v\in V} \frac{|\ma_{ev}|}{B_v}\right)\cdot\frac{\log(mn)}{\eps^2}\right) = O\left(\frac{\bM\log n}{\eps^2}\right).
	\end{align*}
\end{proof}

\section{Approximate MCM via box-constrained Newton's method}
\label{sec:box}

\newcommand{\arunsopt}{M^*}
\newcommand{\xopt}{x^{\star}}
\newcommand{\xhat}{\hat{x}}
\newcommand{\veps}{\varepsilon}
\newcommand{\xtilde}{\widetilde{x}}
\newcommand{\vtilde}{\widetilde{v}}
\newcommand{\mlap}{\mathcal{L}}
The goal of this section is to prove the following Theorem~\ref{thm:boxnewton_alg}. In particular, we give an alternate second-order method to compute $(1-\eps)$-approximate maximum matchings in unweighted graphs using techniques developed by \cite{CMTV17} for solving matrix scaling and balancing problems. 

\begin{restatable}{theorem}{boxconstrained}\label{thm:boxnewton_alg}
For a MCM problem on bipartite $G = (V,E)$ with $|V| =n$, $|E| = m$ and optimal value $M^*$, Algorithm~\ref{alg:boxconstrained} with parameter\footnote{This lower bound on $\eps$ is without loss of generality as otherwise we can use the algorithm in Section~\ref{ssec:exactMCM} which computes an exact MCM with a larger value of $\eps$.} $\eps \in [ \Theta\left(\frac{\log(mn)}{n}\right), \frac{1}{2}]$ obtains a matching of size at least $(1-\eps)M^*$ using $\tO(n)$ space, $\tO(\eps^{-1})$ passes, and $\tO(m)$ work per pass. 
\end{restatable}

We first recall the dual (vertex cover) formulation of the standard bipartite matching linear program
\begin{equation}
\label{eqn:exactLP}
\min_{v \geq 0, \\ \mb v \geq \1} \1^\top v,
\end{equation}
where $\mb$ is the unoriented incidence matrix. Our method (which is based on the box-constrained Newton's method of \cite{CMTV17}) uses a relaxation of this linear program which makes use of oriented incidence matrices and Laplacians, which we now define.

\begin{definition}
Let $G = (V,E,w)$ be a weighted undirected bipartite graph with bipartition $L, R$
and nonnegative edge weights $w$. We define the unoriented incidence matrix $\mb  \in \R^{E \times V}$ as the matrix where for any edge $e = (i,j), i \in L, j \in R$ the row corresponding to $e$ in $\tmb$ has $\tmb_{ei} = \tmb_{ej} = 1$, and all other entries set to $0$. Additionally, we define the \emph{oriented} incidence matrix $\tmb \in \R^{E \times V}$ as \[
\tmb = \begin{bmatrix}
\mathbf{I}_L & 0\\ 
0   &  -\mathbf{I}_R
\end{bmatrix} \mb,
\]
and the graph Laplacian $\mlap_G =  \tmb^\top \diag{w} \tmb$. 
\end{definition}

When the graph is obvious we drop the subscript from $\mlap_G$. The orientation of the edges in $\tmb$ are typically chosen to be arbitrary in the literature, but we specify edge orientation as our algorithm will distinguish the vertices in $L$ and $R$, and denote a vector $x$ on these vertices as $x_L$ and $x_R$ (or sometimes $[x]_L$, $[x]_R$ for clarity). We will first describe our regularization scheme for this LP and prove that approximate minimizers for the regularized objective yield approximate fractional vertex covers. We then prove some stability properties on the Hessian of our regularized objective and show that a second-order method can be implemented in $\tO(n)$ space and $\tO(\eps^{-1})$ passes. Our choice of regularization and notion of stability are heavily based on \cite{CMTV17}, which employed a similar regularization scheme to solve matrix scaling and balancing problems. We nevertheless give a self-contained description of these details for completeness. 

\begin{lemma}[Properties of regularized vertex cover]
\label{lem:VC-relaxed}
Let $G$ be an unweighted bipartite graph with $n$ nodes and $m$ edges, and let $L,R$ be the vertices on either side of its bipartition. Let $\arunsopt$ be the size of the maximum matching in $G$. Let $\veps \geq \frac{8 \log(mn)}{n}$ be a parameter, and set $\mu = \frac{\veps}{4 \log(mn)}$. Consider
\begin{equation}
    \label{eqn:approxLP}
    f_{\mu} (x) \defeq \1^\top x + \mu \left( \sum_{e \in E} e^{\frac{1}{\mu} (1 - [\mb x]_e )} + \sum_{i \in V} e^{-\frac{1}{\mu} (x_i +\tfrac \veps 4)}  \right).
\end{equation}
$f_{\mu}$ has the following properties.
\begin{itemize}
    \item For any $x$ with $f_{\mu}(x) <m$ we have $\mb x \geq (1-\tfrac \veps 2)\1$ and $x \geq - \frac{\veps}{2}$
    \item If $\xopt$ is a feasible minimizer to~\ref{eqn:exactLP}, $x \defeq (1+\tfrac \veps 2) \xopt$ has $f_\mu(x) \leq (1+ \tfrac{2\veps}{3}) \arunsopt$.
    \item $f_\mu(x) \geq \arunsopt - \veps n$ for any $x$.
    \item For any $x$ with $f_{\mu}(x)< m$, let $x' = \min \{x, 2 \cdot \1 \}$  Then $f_\mu(x') \leq f_\mu(x)$.
    
\end{itemize}
\end{lemma}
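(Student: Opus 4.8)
\textbf{Proof plan for Lemma~\ref{lem:VC-relaxed}.}

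The four claims about $f_\mu$ are proved largely independently, each by a direct calculation exploiting the exponential barrier terms; the unifying idea is that each exponential term blows up faster than the linear term $\1^\top x$ can shrink. For the \emph{first claim}, I would argue by contraposition: suppose some coordinate $x_i \le -\tfrac\veps 2$, or some edge $e$ has $[\mb x]_e \le 1 - \tfrac\veps 2$. In the first case, letting $i$ index the smallest entry of $x$, I use $x_i + \tfrac\veps 4 \le \tfrac{x_i}{2}$ and the bound $\1^\top x \ge n x_i$ to get $f_\mu(x) \ge n x_i + \mu\exp(-x_i/(2\mu))$, which (by monotonicity in $x_i$ and the choice $\mu = \veps/(4\log(mn))$) is at least $m$; this simultaneously forces $x \ge -\tfrac\veps 2\1$ and hence $\1^\top x \ge -\tfrac{\veps n}{2}$. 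Then, for the edge constraint, if $1 - [\mb x]_e \ge \tfrac\veps 2$ for some $e$, the corresponding term contributes $\mu\exp(\tfrac1\mu \cdot \tfrac\veps 2) = \mu(mn)^2$, so $f_\mu(x) \ge -\tfrac{\veps n}{2} + \mu(mn)^2 \ge m$, a contradiction.

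For the \emph{second claim}, I plug $x = (1+\tfrac\veps 2)\xopt$ into $f_\mu$. Since $\xopt \ge 0$ and $\mb\xopt \ge \1$, we get $\1^\top x = (1+\tfrac\veps2)M^*$, each vertex term is at most $\exp(-\tfrac\veps{4\mu}) = (mn)^{-1}$, and each edge term is at most $\exp(-\tfrac\veps{2\mu}) = (mn)^{-2}$, so the barrier contributes at most $\mu(m+n)/(mn) \le \mu \le \tfrac\veps 4$; using $M^* \ge 1$ gives $f_\mu(x) \le (1+\tfrac\veps2)M^* + \tfrac\veps 4 \le (1+\tfrac{2\veps}3)M^*$. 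For the \emph{third claim} I argue by contradiction: if $f_\mu(x) \le M^* - \veps n$, then since $M^* \le n$ I may invoke the first claim to conclude $x \ge -\tfrac\veps 2\1$ and $\mb x \ge (1-\tfrac\veps2)\1$; hence $v := x + \tfrac\veps2\1$ satisfies $v \ge 0$ and $\mb v \ge \1$, i.e.\ $v$ is feasible for \eqref{eqn:exactLP}, so $M^* \le \1^\top v = \1^\top x + \tfrac{\veps n}{2} \le f_\mu(x) + \tfrac{\veps n}{2} < M^*$, a contradiction.

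The \emph{fourth claim} (monotonicity under truncation at $2$) is the most computational and I expect it to be the main obstacle, since it requires carefully bounding how much each exponential term can \emph{increase} when a large coordinate is decreased to $2$. I would fix a coordinate $x_i = \alpha > 2$ and compare $f_\mu$ at $x$ versus at $x$ with that coordinate replaced by $2$; by the first claim (applicable since $f_\mu(x) < m$) we have $\mb x \ge (1-\tfrac\veps2)\1$ and $x \le 2\cdot\1$ on the coordinates we're not changing, wait — actually we only know the lower bound on $x$, so I need the analogue upper bound from the first claim's symmetric statement; I would instead directly bound: $\1^\top x$ drops by exactly $\alpha - 2$; the vertex term at $i$ changes by $\mu(\exp(-\tfrac{2 + \veps/4}{\mu}) - \exp(-\tfrac{\alpha+\veps/4}{\mu})) \le (\alpha-2)\mu\exp(-2/\mu) \le \tfrac{\alpha-2}{mn}$ using $\mu\exp(-2/\mu)\le \mu e^{-2/\mu}$ is tiny; and each of the (at most $m$) edge terms incident to $i$ changes by at most $(\alpha-2)\exp(\tfrac{1-\veps}{\mu}) \le \tfrac{\alpha-2}{mn}$ — here I need a crude upper bound on the other endpoint's contribution, which I would get by noting the edge term at $x$ is at most $\mu e^{1/\mu}$ or better, using that $f_\mu(x) < m$ controls all edge terms. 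Summing: the total change is at most $-(\alpha-2)(1 - \tfrac1{mn} - \tfrac1n) < 0$. The delicate point is making the per-edge increase bound rigorous: I would observe that because $f_\mu(x) < m$, every edge term $\mu e^{(1-[\mb x]_e)/\mu} < m$, so $[\mb x]_e > 1 - \mu\log(m/\mu)$, and use this to control the ``before'' value $\exp(\tfrac{1 - [\mb x]_e + \alpha}{\mu})$ — though actually the cleanest route is the bound in the excerpt which uses an upper bound on $x_R$ (equivalently the other endpoint), so I would first prove the symmetric upper-bound analogue of the first claim ($x \le 2\cdot\1$ would be too strong; rather I need some polynomial upper bound), or simply note that decreasing a coordinate only decreases $[\mb x]_e$, making the relevant exponential increase, and bound that increase using $e^{(1-[\mb x'_e])/\mu} - e^{(1-[\mb x]_e)/\mu}$ is at most $(\alpha-2)\tfrac1\mu e^{(1-[\mb x]_e + \alpha)/\mu}\cdot\mu$ via the mean value theorem, with the exponent controlled since $f_\mu(x) < m$ bounds $[\mb x]_e$ from below and the coordinates are effectively bounded. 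I'd then finish by applying the same argument symmetrically (for $x_R$ coordinates the truncation is from below at $-2$, handled identically).
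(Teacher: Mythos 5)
Your plan follows the paper's proof essentially step for step for the first three claims: the contrapositive argument via the smallest coordinate and the inequality $x_i + \tfrac\veps4 \le \tfrac{x_i}{2}$, the direct evaluation at $(1+\tfrac\veps2)\xopt$, and the contradiction via the feasible cover $v = x + \tfrac\veps2\1$ are exactly the paper's arguments, and your overall strategy for the fourth claim (truncate one coordinate $x_i=\alpha>2$ down to $2$, and charge the gain $\alpha-2$ in the linear term against the increase of the one affected vertex term and of the at most $m$ affected edge terms, each bounded by $\tfrac{\alpha-2}{mn}$) is also the paper's.

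The one place your writeup has a real gap is the per-edge increase bound in the fourth claim: the sub-route you lean on does not work as stated. Controlling the exponent ``since $f_\mu(x)<m$ bounds $[\mb x]_e$ from below'' only gives $[\mb x]_e \ge 1-\tfrac\veps2$, and after lowering $x_i$ from $\alpha$ to $2$ the exponent becomes $\tfrac1\mu\bigl(1-[\mb x]_e+(\alpha-2)\bigr)$, which that inequality does not control for large $\alpha$; likewise, extracting a lower bound on the other endpoint from ``the edge term at $x$ is at most $m/\mu$'' only yields $x_j \ge 1-\alpha-\mu\log(m/\mu)$, which again degrades with $\alpha$. The ingredient that closes the argument — and what the paper uses — is simply the coordinate lower bound of the \emph{first} claim applied to the untouched endpoint: since $f_\mu(x)<m$, we have $x_j \ge -\tfrac\veps2$, so for $e=(i,j)$ the post-truncation edge exponent is $\tfrac1\mu(1-2-x_j) \le -\tfrac{1-\veps/2}{\mu}$, and by the mean value theorem the increase of that term is at most $(\alpha-2)\,e^{-(1-\veps/2)/\mu} \le \tfrac{\alpha-2}{mn}$, since $(1-\veps/2)/\mu \ge 6\log(mn)$ for $\veps \le \tfrac12$. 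In particular, no upper bound on coordinates and no ``symmetric analogue'' of the first claim is needed, so your detour on that point is moot. Two further bookkeeping slips (harmless but worth fixing): your displayed per-edge bound $(\alpha-2)\exp(\tfrac{1-\veps}{\mu})$ should have a negative exponent, and your vertex-term bound carries a spurious extra factor of $\mu$ (the derivative of $t\mapsto \mu e^{-t/\mu}$ is $-e^{-t/\mu}$, so MVT gives $(\alpha-2)e^{-2/\mu}$, not $(\alpha-2)\mu e^{-2/\mu}$). Finally, in the second claim, bounding the barrier crudely by $\mu \le \tfrac\veps4$ and absorbing it into $(1+\tfrac{2\veps}3)\arunsopt$ requires $\arunsopt \ge \tfrac32$; keep the sharper bound $\mu\bigl(m(mn)^{-2}+n(mn)^{-1}\bigr) \le \mu\tfrac{m+n}{mn}$ so the case $\arunsopt=1$ is covered.
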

\begin{proof}
We prove the claims in order. For the first claim, let $i$ be the index of the smallest entry of $x$. If $x_i \leq -\tfrac \veps 2$, then since $x_i + \tfrac \veps 4 \le \tfrac{x_i}{2}$ and $\exp$ is always nonnegative,
\[
f_{\mu}(x) \geq \1^\top x + \mu \exp \left(- \frac{x_i}{2\mu}\right) \geq n x_i + \mu \exp \left(- \frac{x_i}{2\mu}\right) \geq m.
\]
The second inequality followed from the definition of $x_i$ and the third from monotonicity of the expression in $x_i$. Thus $x \ge -\tfrac \veps 2 \1$ as claimed. This also implies that $\1^\top x \geq -\tfrac{\veps n}{2}$, so it is straightforward to verify that if some $1 - [\mb x]_i \ge \tfrac \veps 2$, then we would have $f_\mu(x) \ge -\tfrac{\veps n}{2} + (mn)^2 \mu \ge  m$, completing the first claim.

Next, since $\xopt$ is a feasible minimizer to \eqref{eqn:exactLP} we have $\1^\top \xopt = \arunsopt$, $\xopt \geq 0$, and $\mb \xopt \geq 1$. Therefore by construction, $\1^\top x = (1+\tfrac \veps 2) \arunsopt$, $x \geq 0$, and $\mb x = (1+\tfrac \veps 2)  \mb \xopt \geq (1+ \tfrac \veps 2)$. Plugging these into $f_\mu(x)$ yields (since $M^* \ge 1$)
\begin{align*}
f_{\mu}(x) &\leq \Par{1 + \frac \veps 2} \arunsopt + \mu \left( m \exp{\left(\frac{-\veps}{2 \mu} \right)} + n \exp{\left( \frac{-\veps}{4 \mu} \right)} \right) \\
&\leq \Par{1 + \frac \veps 2} \arunsopt + \mu \frac{m + n}{mn} \le \Par{1 + \frac{2\veps}{3}} M^*.
\end{align*}

For the third claim, suppose $x$ satisfied $f_\mu(x) \leq \arunsopt - \veps n$. As $\arunsopt \leq n$ we may apply the first claim and obtain $x \geq -\tfrac \veps 2$ and $\mb x \geq \1 - \tfrac \veps 2$. Let $v = x + \tfrac \veps 2 \1$, so $v \geq 0$ and $\mb v \geq \1$. This is a contradiction: as $v$ is feasible for \eqref{eqn:exactLP} we have $\arunsopt \leq \1^\top v = \1^\top x + \frac{\veps n}{2} < f_{\mu}(x) + \frac{\veps n}{2} < \arunsopt$.

Finally, note that $f_{\mu}(x) < m$ combined with the first property of $f_{\mu}$ implies that $\mb x \geq (1-\frac{\veps}{2}) \1$, $x \geq -\frac{\veps}{2}$. Assume $x$ has $x_i = \alpha > 2$ for some $i$: we will show that decreasing this coordinate to $2$ decreases $f_\mu$. The only terms affected by changing $x_i$ are $\1^\top x$ (which decreases by $\alpha -2$), $\mu \sum_{j \in V} e^{-\frac{1}{\mu} (x_j + \frac{\veps}{4} )}$, and $\mu \sum_{e \in E} e^{\frac{1}{\mu} (1 - [\mb x]_e)}$. The first of these sums increases by at most 
\[
\mu \exp\left( -\frac{2 + \frac \veps 4}{\mu} \right) - \mu \exp\left( -\frac{\alpha + \frac \veps 4}{\mu} \right) \leq (\alpha - 2) e^{\frac 2\mu} \leq \frac{\alpha - 2}{mn},
\]
while each of the $m$ terms in the second sum increases by at most
\[
\mu \exp\left( -\frac{-1 - [x_R]_j}{\mu} \right) - \mu \exp\left( -\frac{1-\alpha -[x_R]_j }{\mu} \right) \leq (\alpha - 2) e^{\frac{1-\veps}{\mu}} \leq \frac{\alpha - 2}{mn},
\]
for some $x_j$: we use our lower bound on $x$ here. Thus the total change in $f_{\mu}(x)$ is at most $-(\alpha - 2) \left(1 - \frac{1}{mn} - \frac{1}{n} \right) < 0$. 
\end{proof}

We next show that derivatives of $f$ satisfy some useful stability properties, which we now define.

\begin{definition}
\label{def:SOR}
Let $f$ be a convex function. We say that $f$ is $r$-second order robust if for any vectors $x,y$ where $\norm{x-y}_\infty \leq r$ we have (where $\preceq$ is the Loewner order on the positive semidefinite cone)
\[
e^{-1} \nabla^2 f(x) \preceq \nabla^2 f(y) \preceq e \nabla^2 f(x).
\]
\end{definition}

\begin{lemma}
\label{lem:VC-derivatives}
Let 
\[
g_\mu(x) = f_\mu\left(\begin{bmatrix}
\mathbf{I}_L & 0\\ 
0 & -\mathbf{I}_R
\end{bmatrix} x \right)
\]
where $f_\mu$ is defined in \eqref{eqn:approxLP}. The gradient and Hessian of $g_{\mu}$ are
\[
\nabla g_{\mu}(x) = \begin{bmatrix}
\1_L - z_L\\ 
-\1_R + z_R
\end{bmatrix} - \tmb^\top y,\;
\nabla^2 g_{\mu}(x) = \frac{1}{\mu} \left( \diag{z} + \tmb^\top~\diag{y} \tmb \right)
\]
where $y \defeq \exp\left(\frac{1}{\mu} \left( \1 - \tmb x \right) \right)$, $z_L \defeq \exp\left(-\frac{1}{\mu} (x_L + \tfrac \veps 4\1) \right)$, $z_R \defeq  \exp\left(\frac{1}{\mu} (x_R- \tfrac \veps 4\1) \right)$, and $z \defeq [z_L ,z_R]$\footnote{Here, we use the notation $x_L$ to refer to the vector $x$ restricted to the entries in $L \subseteq V$.}. For all $x$,  $\nabla^2 g_{\mu}(x)$ is a nonnegative diagonal matrix plus a weighted graph Laplacian matrix. Further, $g_\mu$ is convex and $\frac{\mu}{2}$-second order robust. Finally, for any $x,\delta$ we have
\[
\delta^\top \nabla^2 g_{\mu}(x) \delta \leq \frac{2}{\mu} \cdot  \delta^{\top} \left( \diag{\mb^\top y} + \diag{z} \right) \delta.
\]
\end{lemma}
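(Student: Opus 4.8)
The plan is to verify the gradient and Hessian by a direct chain-rule computation, read off the structural claims, and then isolate the one observation driving both convexity and second-order robustness. First I would set $\mathbf{S} \defeq \begin{bmatrix}\mathbf{I}_L & 0 \\ 0 & -\mathbf{I}_R\end{bmatrix}$, so that $g_\mu(x) = f_\mu(\mathbf{S}x)$, $\tmb = \mb\mathbf{S}$, and in particular $[\mb(\mathbf{S}x)]_e = [\tmb x]_e$ for every edge $e$. Substituting into \eqref{eqn:approxLP} rewrites $g_\mu(x)$ as $\1_L^\top x_L - \1_R^\top x_R + \mu\big(\sum_{e\in E} e^{\frac1\mu(1-[\tmb x]_e)} + \sum_{i\in L} e^{-\frac1\mu(x_i+\veps/4)} + \sum_{j\in R} e^{\frac1\mu(x_j-\veps/4)}\big)$. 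Differentiating termwise, the linear part gives $(\1_L; -\1_R)$, each summand $\mu e^{\frac1\mu(1-[\tmb x]_e)}$ contributes $-y_e\,\tmb_{e:}^\top$ by the chain rule (so the edge sum contributes $-\tmb^\top y$), and the two vertex-penalty sums contribute $-z_L$ on the $L$-block and $+z_R$ on the $R$-block, which is the stated $\nabla g_\mu$. A second differentiation kills the linear terms, turns $-\tmb^\top y$ into $\frac1\mu\tmb^\top\diag{y}\tmb$, and turns the $z$-terms into $\frac1\mu\diag{z}$, giving the stated Hessian.

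Next I would read off the structural claims. The vector $z$ is entrywise positive (a vector of exponentials), so $\frac1\mu\diag{z}$ is a nonnegative diagonal matrix, while $\frac1\mu\tmb^\top\diag{y}\tmb$ is by definition the graph Laplacian $\mlap_G$ with edge weights $y_e/\mu \ge 0$; hence $\nabla^2 g_\mu(x)$ is a nonnegative diagonal plus a weighted graph Laplacian, in particular positive semidefinite for all $x$, which yields convexity of $g_\mu$ at once.

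The main point is second-order robustness. Here I would write $\nabla^2 g_\mu(x) = \frac1\mu\sum_{i\in V} z_i(x)\,\mathbf{e}_i\mathbf{e}_i^\top + \frac1\mu\sum_{e\in E} y_e(x)\,\tmb_{e:}^\top\tmb_{e:}$ as a sum of rank-one positive semidefinite pieces whose coefficients are exponentials of affine functions of $x$. If $\norm{x-x'}_\infty \le \mu/2$, then, writing $\delta = x-x'$, each row of $\tmb$ has $\ell_1$ norm $2$ so $|[\tmb x]_e - [\tmb x']_e| = |[\tmb\delta]_e| \le |\delta_i| + |\delta_j| \le 2\cdot\tfrac\mu2 = \mu$ for $e = (i,j)$, whence $y_e(x')/y_e(x) = e^{\frac1\mu([\tmb x]_e - [\tmb x']_e)} \in [e^{-1},e]$; similarly $z_i(x')/z_i(x) \in [e^{-1/2}, e^{1/2}] \subseteq [e^{-1}, e]$. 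Since scaling each positive semidefinite summand by a factor in $[e^{-1}, e]$ and re-summing keeps the total sandwiched between $e^{-1}$ and $e$ times the original in the Loewner order, we conclude $e^{-1}\nabla^2 g_\mu(x) \preceq \nabla^2 g_\mu(x') \preceq e\,\nabla^2 g_\mu(x)$, i.e.\ $\tfrac\mu2$-second order robustness in the sense of Definition~\ref{def:SOR}. The Loewner-monotonicity step is the only part needing care, but I do not expect a genuine obstacle: the regularization and stability template here closely follows that of \cite{CMTV17}.

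Finally, for the last inequality I would start from $\delta^\top\nabla^2 g_\mu(x)\delta = \frac1\mu\big(\delta^\top\diag{z}\delta + \sum_{e\in E} y_e[\tmb\delta]_e^2\big)$. For $e = (i,j)$ one has $|[\tmb\delta]_e| \le |\delta_i| + |\delta_j| = [\mb|\delta|]_e$ (with $|\delta|$ entrywise), and by Cauchy--Schwarz $[\mb|\delta|]_e^2 \le [\mb\1]_e\,[\mb\delta^2]_e = 2[\mb\delta^2]_e$ using $\mb\1 = 2\cdot\1$ (with $\delta^2$ entrywise). Summing over edges gives $\sum_e y_e[\tmb\delta]_e^2 \le 2\,y^\top\mb\delta^2 = 2\,\delta^\top\diag{\mb^\top y}\delta$, so $\delta^\top\nabla^2 g_\mu(x)\delta \le \frac1\mu\big(\delta^\top\diag{z}\delta + 2\,\delta^\top\diag{\mb^\top y}\delta\big) \le \frac2\mu\,\delta^\top(\diag{\mb^\top y} + \diag{z})\delta$, as claimed.
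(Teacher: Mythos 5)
Your proposal is correct and follows essentially the same route as the paper: chain-rule/termwise verification of the gradient and Hessian, convexity from the diagonal-plus-Laplacian (hence PSD) structure, second-order robustness by noting each exponential coefficient of a PSD summand moves by at most a factor $e$ when $\norm{x-x'}_\infty \le \mu/2$ (using that rows of $\tmb$ have $\ell_1$ norm $2$), and the final bound via $|[\tmb\delta]_e| \le [\mb|\delta|]_e$, Cauchy--Schwarz with $\mb\1 = 2\cdot\1$, and nonnegativity of $\diag{z}$. The only cosmetic difference is that you expand $g_\mu$ explicitly and differentiate term by term, whereas the paper cites the chain rule through the sign matrix; the content is identical.
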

\begin{proof}
The gradient and Hessian claims may be verified directly as we have via the chain rule 
\[
\nabla g_\mu (x) = \begin{bmatrix}
\mathbf{I}_L & 0\\ 
0 & -\mathbf{I}_R
\end{bmatrix} \nabla f_\mu\left( \begin{bmatrix}
\mathbf{I}_L & 0\\ 
0 & -\mathbf{I}_R
\end{bmatrix} x\right) \]
and
\[\nabla^2 g_\mu(x) = \begin{bmatrix}
\mathbf{I}_L & 0\\ 
0 & -\mathbf{I}_R
\end{bmatrix} \nabla^2 f_\mu\left( \begin{bmatrix}
\mathbf{I}_L & 0\\ 
0 & -\mathbf{I}_R
\end{bmatrix} x\right)
\begin{bmatrix}
\mathbf{I}_L & 0\\ 
0 & -\mathbf{I}_R
\end{bmatrix}.
\] 
It is clear that $\tfrac z \mu$ is nonnegative and $\tmb^\top \diag{\tfrac y \mu}\tmb$ is a weighted graph Laplacian. Convexity of $g_\mu$ follows from $\nabla^2 g_\mu(x) \succeq 0$ everywhere. To prove second order robustness, for any $x, x'$ with $\norm{x-x'}_\infty \leq \frac{\mu}{2}$,
\[\norm{\tmb x - \tmb x'}_\infty \leq \norm{\tmb}_{\infty} \norm{x-x'}_{\infty} \leq 2 \cdot \frac{\mu}{2} \leq \mu.\] Thus every matrix forming the Hessian of $g_{\mu}$ changes by at most a factor of $e$ multiplicatively, and so $g_{\mu}$ is second order robust. For the final claim, let $|\delta|$, $\delta^2$ be applied entrywise. Then,
\begin{align*}
\delta^\top \tmb^\top \diag{y} \tmb \delta &= \sum_{i \in [m]} y_i [\tmb \delta]_i^2 \leq  \sum_{i \in [m]} y_i [\mb |\delta|]_i^2 \\
&\leq \sum_{i \in [m]} y_i [\mb \delta^2]_i [\mb \1]_i =  2 \sum_{i \in [m]} y_i [\mb \delta^2]_i = 2 \delta^\top \diag{\mb^\top y} \delta.
\end{align*}
The second line used Cauchy-Schwarz and $\mb \1 = 2 \cdot\1$. Since $\diag{z} \succeq 0$, this yields the claim.
\end{proof}

It remains to implement a low-pass and low-space optimization procedure to minimize $g_{\mu}$. We make use of a variant of the box-constrained Newton's method of \cite{CMTV17} stated below.
\begin{definition}
We say that a procedure $\oracle$ is an $(r,k)$-oracle for a class of matrices $\mathcal{M}$ if, for any $\ma \in \mathcal{M}$ and vector $b$, $\oracle(\ma,b)$ returns a vector $z$ where $\norm{z}_\infty \leq rk$ and
\[
z^\top b + \frac{1}{2} z^\top \ma z \leq \frac{1}{2} \left( \min_{\norm{y}_\infty \leq r} y^\top b + \frac{1}{2} y^\top \ma y \right).
\]
\end{definition}

The proof of the following is patterned from \cite{CMTV17}, but tolerates multiplicative error in the Hessian.

\begin{lemma}
Let $f$ be $r$-second order robust with minimizer $\xopt$. For some $x$, let $\mm$ be such that $\frac{1}{2} \nabla^2 f(x)\preceq \mm \preceq 2 \nabla^2 f(x)$. Let $\oracle$ be an $(r,k)$-oracle for a class of matrices $\mathcal{M}$ where $\mm\in\mathcal{M}$. Then if $\max \{ \norm{\xopt}_\infty , \norm{x}_\infty \} \leq R$ and $R \geq r$, for any vector $x$, $x' = x + \frac{1}{k} \oracle\left(\frac{1}{k} \nabla f(x), \frac{2e}{k^2} \mm\right)$ satisfies
\[
f(x') - f(\xopt) \leq \left(1 - \frac{r}{160 k R}\right) (f(x) - f(\xopt)).
\]

\label{lem:boxnewton}
\end{lemma}
\begin{proof}
We note that for any $y$ where $\norm{y-x}_\infty \leq r$ we have
\[
f(y) = f(x) + \nabla f(x)^\top (y-x) + \int_0^1\int_{0}^\beta (y-x)^\top \nabla^2 f(\alpha x + (1-\alpha)y) (y-x) d \alpha d \beta.
\]
As the integral is over a triangle of area $\frac{1}{2}$, the second order robustness of $f$ yields
\begin{equation}\label{eq:ulbounds}\begin{aligned}
f(x) + \nabla f(x)^\top (y-x) + \frac{1}{4e} (y-x)^\top \mm (y-x) \le f(y),\\ f(y) \leq f(x) + \nabla f(x)^\top (y-x) + e (y-x)^\top \mm (y-x).\end{aligned}\end{equation}
Let
\[
\hat{\delta} = \argmin_{\norm{\delta}_\infty \leq r} \frac{1}{k} \nabla f(x)^\top \delta + \frac{e}{k^2} \delta^\top \mm \delta.
\]
Also, let $\Delta = \oracle(\frac{1}{k} \nabla f(x), \frac{2 e}{k^2} \mm)$. By definition of $\oracle$ we have $\norm{\Delta}_{\infty} \leq rk$ and 
\[
\frac{1}{k}\nabla f(x)^\top \Delta + \frac{e }{k^2} \Delta^\top \mm \Delta \leq \frac{1}{2} \left( \frac{1}{k} \nabla f(x)^\top \hat{\delta} + \frac{e }{k^2} \hat{\delta}^\top \mm \hat{\delta} \right).
\]
Thus since $x' = x + \frac{1}{k} \Delta$ we see $\norm{x' - x}_\infty \leq r$, and hence
\begin{equation}\label{eq:xprogress}
f(x') \leq f(x) + \frac{1}{k} \nabla f(x)^\top \Delta + \frac{e}{k^2} \Delta^\top \mm \Delta \leq f(x) + \frac{1}{2} \left( \frac{1}{k} \nabla f(x)^\top \hat{\delta} + \frac{e}{k^2} \hat{\delta}^\top \mm \hat{\delta} \right).
\end{equation}
Define $\tilde{x}=  \frac{r}{2R} (\xopt - x)$, we note that $\norm{\tilde{x}}_{\infty} \leq \frac{r}{2R} \left( \norm{\xopt}_\infty + \norm{x}_\infty \right) \leq r$. By the minimality of $\hat{\delta}$ we observe that $\hat{\delta}$ achieves a smaller value than $c \widetilde{x}$ for any $c \leq 1$. For the choice $c = \frac{1}{4e^2}$ this implies
\begin{align*}
\frac{1}{2} \left( \frac{1}{k} \nabla f(x)^\top \hat{\delta} + \frac{e}{k^2} \hat{\delta}^\top \mm \hat{\delta} \right) &\leq \frac{1}{2} \left( \frac{c}{k} \nabla f(x)^\top \tilde{x} + \frac{c^2 e}{k^2} \tilde{x}^\top \mm \tilde{x} \right) \\
&= \frac{1}{8e^2} \left( \frac{1}{k}\nabla f(x)^\top \tilde{x} + \frac{1}{4ek^2}  \tilde{x}^\top \mm \tilde{x} \right) \\
&\leq \frac{1}{8e^2} \left( f\left(x + \frac{1}{k}\tilde{x}\right) - f(x) \right) \\
&\leq -\frac{r}{16k e^2 R} \left( f(x) - f(\xopt) \right),
\end{align*}
where the third line used \eqref{eq:ulbounds} and the last used convexity of $f$. Plugging this into \eqref{eq:xprogress} yields
\[
f(x') - f(\xopt) \leq \left( 1 - \frac{r}{16k e^2 R} \right) \left( f(x) - f(\xopt) \right).
\]
\end{proof}

Next, the class of matrices which are Laplacians plus a nonnegative diagonal admit an efficient $\oracle$.

\begin{lemma}[Theorem 5.11, \cite{CMTV17}]
Let $\mathcal{M}$ be the class of matrices which consist of a Laplacian matrix plus a nonnegative diagonal. Let $\ma \in \mathcal{M}$ be a matrix with $m$ nonzero entries. There is an algorithm which runs in $\tO(m)$ time and space and implements an $(r,O(\log n))$-oracle for $\ma$.%
\end{lemma}

 We complement this with a known semi-streaming spectral approximation for Laplacians.
\begin{lemma}[Section 2.2, \cite{McGregor14}]\label{lem:lapsparsify}
Let $G = (V,E,w)$ be a weighted undirected graph, given as an insertion-only stream. There is an algorithm which for any $\veps \in (0,1)$ takes one pass and returns a graph $H$ with $O(\veps^{-2} n \log^3(n))$ edges such that $\mlap_{H} \preceq \mlap_{G} \preceq (1+\veps) \mlap_{H}$ using $\tO(m)$ work and $O(\veps^{-2} n \log^3 n)$ space.
\end{lemma}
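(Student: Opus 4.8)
The plan is to prove this via the standard recursive (``merge-and-reduce'') sparsification scheme, using a near-linear-time static spectral sparsifier as a black box together with the composability of multiplicative spectral approximations; this is the route underlying the construction cited from \cite{McGregor14}. First I would record the composability facts: (i) if $H_1,H_2$ are $(1+\delta)$-spectral sparsifiers of $G_1,G_2$ on the same vertex set, then $H_1\cup H_2$ (with weights on shared edges added) is a $(1+\delta)$-spectral sparsifier of $G_1\cup G_2$, since $\mlap_{G_1\cup G_2}=\mlap_{G_1}+\mlap_{G_2}$ and the Loewner order is preserved under sums; (ii) if $H'$ is a $(1+\delta')$-sparsifier of $H$ and $H$ is a $(1+\delta)$-sparsifier of $G$, then $H'$ is a $(1+\delta)(1+\delta')$-sparsifier of $G$. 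Consequently, errors from a bounded-depth chain of re-sparsifications compound only multiplicatively. For the static routine I would invoke effective-resistance sampling (Spielman--Srivastava), implemented with Johnson--Lindenstrauss sketching of the resistance-weighted edge-vertex incidence matrix evaluated through a near-linear-time Laplacian solver: on an $m'$-edge graph it runs in $\tO(m')$ time and $\tO(m')$ space and, with high probability, outputs a $(1+\delta)$-spectral sparsifier with $N=O(\delta^{-2} n)$ edges (the logarithmic-factor-larger Spielman--Srivastava bound also suffices, costing one more $\log n$ below).

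Next I would describe the streaming algorithm, which uses that the stream is insertion-only. Fix $\delta := c\veps/\log m$ for a small absolute constant $c$ and let $L := \lceil \log_2(m/N)\rceil + 1 = O(\log n)$. Maintain buckets $B_0,\dots,B_L$, each either empty or holding a $(1+\delta)$-sparsifier, of some subset of the edges seen so far, with at most $N$ edges. Route each arriving edge into $B_0$; whenever $B_0$ accumulates $2N$ edges, sparsify it to $N$ edges with the static routine and ``carry'': while the target bucket is occupied, replace both occupants by a static sparsifier of their union and move up one level, else place the sparsifier there. When the stream ends, take the union of all occupied buckets — at most $(L+1)N$ edges — and run the static routine once more to produce the output $H$ (rescaled as needed).

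For the analysis: by the usual binary-counter invariant, bucket $B_j$ when occupied holds a sparsifier of at most $2^j N$ stream edges, each processed by at most $j\le L=O(\log n)$ applications of the static routine, and the final union-and-sparsify adds one more. Hence by composability the output satisfies $(1-\delta)^{L+1}\mlap_G \preceq \mlap_H \preceq (1+\delta)^{L+1}\mlap_G$, and choosing $c$ small enough that $(1+\delta)^{L+1}\le 1+\veps$ and $(1-\delta)^{L+1}\ge (1+\veps)^{-1}$, then rescaling $H$ by $1+\veps$, gives the stated one-sided bound $\mlap_H\preceq\mlap_G\preceq(1+\veps)\mlap_H$. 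The static routine is randomized and invoked $O(m/N+\log n)$ times; boosting each invocation's failure probability below an arbitrary inverse polynomial in $n$ and union bounding makes the whole construction succeed with high probability. For the resources: at any time we store at most $L+1$ buckets of at most $2N$ edges, i.e.\ $O(LN)=O(\veps^{-2}n\log^{3} n)$ edges with $\delta=\Theta(\veps/\log n)$ (since $m\le n^2$, so $\log m = O(\log n)$); and since each edge participates in $O(\log n)$ sparsification rounds, each costing near-linear time in the number of edges present, the total work telescopes to $\tO(m)$.

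The hard part is not this bookkeeping but the inner static sparsifier: producing a $(1+\delta)$-spectral sparsifier with near-optimally few edges in $\tO(m')$ time \emph{and} near-linear space simultaneously. This is the genuine content imported from the literature, because exact effective resistances require full Laplacian solves; the standard resolution combines the Spielman--Srivastava leverage-score sampling bound with Johnson--Lindenstrauss sketching of the resistance-weighted edge-vertex incidence matrix, evaluated through a fast low-space Laplacian solver, to estimate all resistances cheaply. A secondary, routine subtlety is calibrating $\delta=\Theta(\veps/\log m)$ against the tree depth and bucket thresholds so the multiplicative error telescopes to exactly $1+\veps$ while the stored size and total work meet the stated bounds; with composability and the static routine in hand, this is immediate.
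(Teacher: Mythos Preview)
The paper does not prove this lemma at all; it is stated with a citation to Section~2.2 of \cite{McGregor14} and used as a black box. Your merge-and-reduce sketch is the standard construction underlying that citation and is correct in outline, so there is nothing to compare against here beyond noting that your write-up fills in what the paper deliberately imports from the literature.
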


\newcommand{\BoxConstrainedVC}{\mathsf{BoxConstrainedVC}}
\begin{algorithm}
\DontPrintSemicolon
		\KwInput{Bipartite graph $G = (V,E)$ with vertex partition $V = L \cup R$ and oriented edge-incidence matrix $\tmb$ given as a stream, $\veps > 0$, $(O(\frac{\veps}{\log(n)}),k)$-oracle $\oracle$ for symmetric diagonal dominant matrices with nonpositive off diagonal}
		\KwOutput{$v$ fractional vertex cover for $G$}
		$\mathbf{R} \defeq \begin{bmatrix}
		\mathbf{I}_L & 0 \\
		0 & -\mathbf{I}_R
		\end{bmatrix} $\;
		$x_0 \gets \mathbf{R} \1$, $\mu = \frac{\veps}{4 \log(mn)}$, $T = O(\frac{\log \mu^{-1}}{\mu k})$\;
		$f_\mu(x) \defeq \1^\top x + \mu \left( \sum_{e \in E} e^{\frac{1}{\mu} (1 - [\mb x]_e )} + \sum_{i \in V} e^{-\frac{1}{\mu} (x_i + \frac \veps 4)}  \right)$\;
		$g_{\mu}(x) \defeq f_{\mu}\left( \mathbf{R}
		x\right)$\;
		\For{$0 \le t < T$ }{
		Compute $\nabla g_\mu(x_t)$ and $\mm$, a $2$-approximation of $\nabla^2 g_{\mu}(x_t)$ with Lemma~\ref{lem:lapsparsify}\;
		$x'_{t+1} = x_t + \frac{1}{k}\oracle\left(\frac{1}{k} \nabla g_\mu(x_t), \frac{2e}{k^2} \mm \right)$\;
		$x_{t+1} = \max \{ \min \{ x'_{t+1}, 2\cdot \1 \}, -2 \cdot \1 \}$ entrywise\;
		}
		\Return{$v = \mathbf{R} x_T + \frac{\veps}{2}\1$}
	\caption{$\BoxConstrainedVC(G, \veps, \oracle)$}	\label{alg:boxconstrained}
\end{algorithm}

We now assemble these claims to prove the correctness of our algorithm.

\begin{proposition}
\label{prop:boxnewton}
Let $G = (V,E)$ be an unweighted bipartite graph with bipartition $L, R$. Given access to a $(O(\frac{\veps}{\log n}), k)$-oracle $\oracle$ for $\nabla^2 g_\mu$, Algorithm~\ref{alg:boxconstrained} computes $v$, a feasible vertex cover of size $\arunsopt + \veps n$. For $y \defeq \exp\left( \frac{1}{\mu} (1 - \mb x) \right)$, there exists $w$ with $\norm{w}_1 \leq \mu n$ so $y-w$ is a feasible matching with $\1^\top y \geq \arunsopt - 5 \veps n$. Algorithm~\ref{alg:boxconstrained} requires $\tO(n)$ auxiliary space and $O(\frac{k \log\mu^{-1}}{\mu})$ passes, each requiring $\tO(m)$ work, plus the work and space required by one $\oracle$ call.
\end{proposition}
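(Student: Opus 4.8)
The plan is to prove Proposition~\ref{prop:boxnewton} by combining the structural facts already established about the regularized objective with the box-constrained Newton convergence lemma, and then extracting a primal matching from the dual iterate via an approximate complementary-slackness argument. First I would set up the change of variables: since $g_\mu(x) = f_\mu(\mathbf{R} x)$ and $\mathbf{R}$ is an involution, minimizers of $g_\mu$ correspond exactly to minimizers of $f_\mu$ over the appropriately reflected domain, and Lemma~\ref{lem:VC-derivatives} gives that $g_\mu$ is convex and $\tfrac{\mu}{2}$-second order robust with Hessian equal to a nonnegative diagonal plus a graph Laplacian. Thus $\oracle$ (which handles Laplacian-plus-nonnegative-diagonal matrices) applies to $\nabla^2 g_\mu(x_t)$, and Lemma~\ref{lem:lapsparsify} lets us compute the required $2$-spectral approximation $\mm$ in one semi-streaming pass; the gradient $\nabla g_\mu(x_t)$ is also computable in one pass by the formula in Lemma~\ref{lem:VC-derivatives}, with the $\tmb^\top y$ term handled analogously to Lemma~\ref{lem:implicitx}.

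Next I would run the convergence analysis. By Lemma~\ref{lem:VC-relaxed} (fourth bullet, after undoing the reflection) the truncation step $x_{t+1} = \max\{\min\{x'_{t+1}, 2\cdot\1\}, -2\cdot\1\}$ does not increase $g_\mu$, and the minimizer $\xopt$ of $g_\mu$ satisfies $\norm{\xopt}_\infty \le 2$; each iterate $x_t$ also satisfies $\norm{x_t}_\infty \le 2$ by construction. With $R = 2$ and $r = \tfrac{\mu}{2}$, Lemma~\ref{lem:boxnewton} gives geometric progress $g_\mu(x'_{t+1}) - g_\mu(\xopt) \le (1 - \Omega(\tfrac{\mu}{k}))(g_\mu(x_t) - g_\mu(\xopt))$, and combining with the monotonicity of truncation, after $T = O(\tfrac{\log\mu^{-1}}{\mu k})$ iterations we reach $g_\mu(x_T) - g_\mu(\xopt) \le \tfrac{\mu^3 n}{9e}$, using $g_\mu(x_0) \le 2n$ and $g_\mu(\xopt) \ge 0$. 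Then by Lemma~\ref{lem:VC-relaxed} (second and third bullets), $g_\mu(x_T) \le (1+\tfrac{2\veps}{3})\arunsopt + \tfrac{\mu^3 n}{9e} \le \arunsopt + \tfrac{\veps}{2} n$ since $\arunsopt \le \tfrac n2$ and $\mu \le \veps$; the first bullet then shows $v = \mathbf{R} x_T + \tfrac{\veps}{2}\1$ is a nonnegative vector with $\mb v \ge \1$ and $\1^\top v \le \arunsopt + \veps n$, giving the vertex cover claim.

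For the matching claim I would run the stationarity-based argument: for any perturbation $\delta$ with $\norm{\delta}_\infty \le \tfrac{\mu}{2}$, optimality of $\xopt$, second-order robustness of $g_\mu$, and the Hessian bound $\delta^\top \nabla^2 g_\mu(x) \delta \le \tfrac{2}{\mu}\delta^\top(\diag{\mb^\top y} + \diag z)\delta$ from Lemma~\ref{lem:VC-derivatives} together give that $\tfrac{\mu^3 n}{9e} \ge -\nabla g_\mu(x_T)^\top \delta - \tfrac{2e}{\mu}\delta^\top(\diag{\mb^\top y} + \diag z)\delta$. Choosing $\delta = -\tfrac{\mu^2}{4e}\,\textbf{sign}(\nabla g_\mu(x_T))$ (valid since $\norm{\delta}_\infty \le \tfrac{\mu}{2}$) and using the triangle inequality $\norm{\mb^\top y + z}_1 \le \norm{\nabla g_\mu(x_T)}_1 + n$ yields after rearrangement $\norm{\nabla g_\mu(x_T)}_1 \le \mu n$. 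Since $\nabla g_\mu(x_T) = \1 + w - (\mb^\top y + z)$ for some $w$ with $\norm{w}_1 \le \mu n$, and $z \ge 0$ entrywise, we get $\mb^\top y \le \1 + w$, so $y - w$ (restricted to its positive part) is a feasible fractional matching. Finally, to lower bound $\1^\top y$, I would take the inner product of the identity $\1 + w = \mb^\top y + z$ with $v' = \mathbf{R}^{-1}x_T - \mu\1$, rearrange into the form $g_\mu(x_T) - \mu n + w^\top v' = y^\top(\mb x - \1) + \1^\top y + (\text{terms in } z)$, and bound each piece: $w^\top v' \ge -2\mu n$; since $y_i = \exp(\tfrac1\mu(1 - [\mb x]_i))$, either $[\mb x]_i - 1 \le \tfrac\veps2$ or $y_i \le \tfrac{\veps}{2mn}$, giving $y^\top(\mb x - \1) \le \veps\1^\top y + \tfrac\veps2 \arunsopt$; and similarly the $z$-terms are $O(\veps n)$. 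Putting these together with $f_\mu(x_T) = g_\mu(x_T) \ge \arunsopt - \veps n$ (Lemma~\ref{lem:VC-relaxed}, third bullet) yields $\1^\top y \ge \arunsopt - 5\veps n$.

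The main obstacle I anticipate is the last step—carefully tracking the approximate complementary slackness to turn the small dual gradient norm into a matching lower bound. This requires choosing the right test vector $v'$, splitting the coordinates of $y$ into "nearly tight" versus "exponentially small" based on constraint slack, and controlling the contributions of the boundary-penalty vector $z$ and the error vector $w$; each of these introduces $O(\veps n)$ slack and one must check that the constants combine to give exactly the claimed $5\veps n$ bound. The pass/space accounting is comparatively routine: Lemma~\ref{lem:lapsparsify} produces an $\tO(n)$-edge sparsifier storable in $\tO(n)$ space, the gradient is one pass via the analogue of Lemma~\ref{lem:implicitx}, there are $O(\tfrac{k\log\mu^{-1}}{\mu})$ iterations each making one $\oracle$ call, and $\oracle$'s working space can be reused across iterations, so the total overhead is $\tO(n)$.
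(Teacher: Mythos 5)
Your proposal matches the paper's own proof essentially step for step: the same semi-streaming implementation of the gradient and sparsified Hessian, the same application of Lemma~\ref{lem:boxnewton} with $r = \tfrac{\mu}{2}$, $R=2$ and monotone truncation to reach suboptimality $\tfrac{\mu^3 n}{9e}$, the same vertex-cover extraction via Lemma~\ref{lem:VC-relaxed}, and the same stationarity argument (with $\delta = -\tfrac{\mu^2}{4e}\,\textbf{sign}(\nabla g_\mu(x_T))$, the triangle inequality, and the inner product with $v' = \mathbf{R}x_T - \mu\1$) to get $\norm{\nabla g_\mu(x_T)}_1 \le \mu n$ and $\1^\top y \ge \arunsopt - 5\veps n$. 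Aside from minor bookkeeping (the sign convention in defining $w$ from the reflected gradient, and writing $f_\mu(x_T)$ where $f_\mu(\mathbf{R}x_T)=g_\mu(x_T)$ is meant), this is the paper's argument.
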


As the space used by $\oracle$ can be reused between runs, the space overhead will be $\tO(n)$ throughout.
\begin{proof}
By Lemmas~\ref{lem:VC-relaxed} and \ref{lem:VC-derivatives}, $\oracle$ applies to a matrix family containing the Hessian $\nabla^2 g_\mu$. In addition, we see that $\nabla g_\mu$  may be computed in a single pass since it equals
\[
\nabla g_{\mu}(x) = \mathbf{R} \left( \1 - z \right) - \tmb^\top y
\]
for $z=  [z_L, z_R]$ and $z_L, z_R, y$ as in Lemma~\ref{lem:VC-derivatives}. The first term may be computed directly, while the second can be computed analogously to Lemma~\ref{lem:implicitx}. Further, we can obtain a $2$-spectral approximation of $\nabla^2 g_\mu(x)$ in semi-streaming fashion: we compute $z$ in one pass, and sparsify the Laplacian using Lemma~\ref{lem:lapsparsify} while computing $y$ coordinatewise in one pass. Thus in each iteration we perform one pass and one call to $\oracle$: as there are $O(\frac{k \log \mu^{-1}}{\mu })$ iterations: the claimed space, pass, and work bounds follow.

We now prove the correctness of the algorithm. Let $\xopt$ be the minimizer of $g_{\mu}$. By construction, $g_\mu(x_0)  = f_{\mu}(\mathbf{R} x_0) \le 2n \le m$ and the function is monotone decreasing. We note that the point $x_t$ has $\norm{x_t}_\infty \leq 2$ by construction for all $t$, and that $\norm{\xopt}_\infty \leq 2$ by the fourth condition of Lemma~\ref{lem:VC-relaxed}. Further, by Lemma~\ref{lem:VC-derivatives} we see that is $r$-second order robust with $r = \frac{\mu}{2}$. By Lemma~\ref{lem:boxnewton} we obtain
\[
g_{\mu}(x'_{t+1}) - g_{\mu}(\xopt) \leq \left(1 - \frac{\mu}{640 k} \right) \left( g_{\mu}(x_{t}) - g_{\mu}(\xopt)\right)
\]
for any $t$. As the fourth condition of Lemma~\ref{lem:VC-relaxed} implies $g_\mu(x_{t+1}) \leq g_{\mu}(x'_{t+1})$, the final $x_T$ has
\[
g_\mu(x_T) - g_\mu(\xopt) \leq \left(1 - \frac{\mu}{640 k} \right)^T \left( g_\mu(x_0) - g_\mu(\xopt) \right) \leq \frac{\mu^3 n}{9e},
\]
where the second inequality uses  $g_\mu(x_0) \leq 2n$ and $g_\mu(\xopt) \geq 0$. Thus if $\arunsopt$ is the size of the minimum vertex cover of $G$, we obtain by Lemma~\ref{lem:VC-relaxed} that $g_{\mu}(x_T) = f_{\mu}(\mathbf{R} x_T) \leq (1+\frac{2\veps}{3})\arunsopt + \frac{\mu^3 n}{9e}\leq \arunsopt + \frac{\veps}{2} n$, since $\arunsopt \leq \frac{n}{2}$ and $\mu \leq \veps$. To complete the proof, the first condition of Lemma~\ref{lem:VC-relaxed} implies $v = \mathbf{R} x + \frac{\veps}{2} \1 $ is nonnegative with $\mb v \geq 1$ and $\1^\top v \leq \arunsopt + \frac{\veps}{2} n+ \frac{\veps}{2} n = M^* + \veps n$: it is a feasible fractional matching as claimed. 

For the second claim, we observe that for any $\delta$ with $\norm{\delta}_\infty \leq \frac \mu 2$ and $\xhat = x_T + \delta$,
\begin{align*}
\frac{\mu^3 n}{9e}&\geq g_\mu(x_T) - g_\mu(\xopt) \geq g_{\mu}(x_T) -  g_{\mu}(\hat{x}) \\
&\geq - \nabla g_{\mu}(x_T)^\top \delta - e \delta^\top \nabla^2 g_\mu(x_T) \delta \\
&\geq - \nabla g_\mu(x_T)^\top \delta - \frac{2e}{\mu} \delta^\top \left(\diag{\mb^\top y} + \diag{z} \right) \delta;
\end{align*}
the first line used optimality of $\xopt$, the second used second order robustness of $g_\mu$, and the third used Lemma~\ref{lem:VC-derivatives}. We choose $\delta = - \frac{\mu^2}{4 e} \textbf{sign}\left(\nabla g_\mu (x_T) \right)$: this satisfies $\norm{\delta}_\infty \le \tfrac{\mu}{2}$ so
\[
\frac{\mu^2}{4e} \norm{\nabla g_\mu(x_T) }_1  - \frac{\mu^3}{8e} \norm{ \mb^\top y + z}_1 \leq \frac{\mu^3 n}{9e}.
\]
Now note 
\begin{align*}
\norm{ \mb^\top y + z}_1 &= \norm{ [\tmb^\top y]_L + z_L}_1 + \norm{ -[\tmb^\top y]_R + z_R}_1 \\
&\leq \norm{ [\tmb^\top y]_L + z_L -  \1_L}_1 + \norm{ -[\tmb^\top y]_R + z_R - \1_R}_1 + n = \norm{\nabla g_\mu(x_T)}_1 + n. 
\end{align*}
Plugging this in to the above expression and rearranging, we obtain
\[
\frac{\mu^2}{4e} \norm{\nabla g_\mu(x_T)}_1 \leq  \frac{\mu^3 n}{9 e} + \frac{\mu^3 n}{8e} + \frac{\mu^3}{8e} \norm{\nabla g_\mu(x_T)}_1,
\]
so $\norm{\nabla g_{\mu}(x_T)}_1 \leq \mu n$. This implies the existence of $w$ with $\norm{w}_1 \leq \mu n$ where $\1 + w=  \mb^\top y + z$. Since $z$ is nonnegative, $\mb^\top y \leq \1 + w$ with $\norm{w}_1 \leq \mu n$ as desired. We finally lower bound $\1^\top y$. Let $v' = \mathbf{R} x - \mu \1$. Taking the inner product of $\1 + w=  \mb^\top y + z$ with $v'$ and rearranging gives
\[
\1^\top x_L - \1^\top x_R - \mu n - \1^\top y + w^\top v' = y^\top (\tmb x - \1) - \mu \1^\top y + x^\top \mathbf{R} z - \mu \1^\top z,
\]
or 
\[
f_{\mu}(\mathbf{R}x) - \mu n + w^\top v' = y^\top (\tmb x - 1) + \1^\top y + x^\top \mathbf{R} z.
\]
Next, $w^\top v' \geq - \norm{w}_1 \norm{v'}_\infty \geq - 2 \mu n$. Further,  
since $y = \exp\left(\frac{1}{\mu} (1 - \tmb x)\right)$, for any $i$ either $[\tmb x]_i -1 \leq \frac{\veps}{2}$ or $y_i \leq \exp\left(-\frac{1}{\mu} \frac{\veps}{2}\right) = \frac{\veps}{2mn}$. Thus  $y^\top (\tmb x - 1) \leq \veps \1^\top y + \frac{\veps}{2} \arunsopt$. Similarly, we obtain $x^\top \mathbf{R} z \leq \veps n$. Plugging these in, and using $\mu \le \tfrac \veps 4$ and $M^* \le n$ yields
\[
f_{\mu}(\mathbf{R} x) - 3 \mu n -\veps n - \frac{\veps}{2} \arunsopt \leq (1+\veps) \1^\top y \implies \1^\top y \geq \frac{1}{1+\veps} f_{\mu}(\mathbf{R}x) - 3 \veps n.
\]
Here we used $\mu \leq \frac{\veps}{4}$ and $\arunsopt \leq n$. The claim follows from $f_{\mu}(\mathbf{R}x) \geq \arunsopt - \veps n$ via Lemma~\ref{lem:VC-relaxed}.
\end{proof}

\newcommand{\ytilde}{\widetilde{y}}
\begin{proof}[Proof of Theorem~\ref{thm:boxnewton_alg}]
Let $\arunsopt$ be the size of the maximum matching in $G$. We first preprocess the graph $G$ using Proposition~\ref{prop:vertexreduce} within the pass, space, and work budgets to reduce to $n = O(M^*\log(\eps^{-1}))$. Applying Algorithm~\ref{alg:boxconstrained} to $\widetilde{G}$ with $\veps = \frac{\eps}{12 \log(\eps^{-1})}$, by the second claim in Proposition~\ref{prop:boxnewton} we obtain $y  \in \R^{E}_{\ge 0}$ with $\1^\top y \geq  (1-\frac{\eps}{2})\arunsopt - 5 \veps n\geq  (1 - O(\eps)) M^*$, and
\[
\sum_{j \in V} \max \Par{ [\mb^\top y]_j - 1, 0 } \leq \mu n \le \frac{\eps}{12} \arunsopt.
\]
The conclusion follows upon applying Propositions~\ref{prop:cc} and \ref{lem:overflowremoval} to round the approximate matching $y$ to be sparse and feasible, and using the implementation of $\oracle$ from Lemma~\ref{lem:oracleexists}.
\end{proof}
}

\end{document}